\theoremstyle{plain}
\newtheorem{thm}{Theorem}[section]
\newtheorem{lem}[thm]{Lemma}
\newtheorem{obs}[thm]{Observation}
\theoremstyle{definition}
\crefname{section}{Section}{Sections}
\crefname{figure}{Figure}{Figures}
\crefname{lem}{Lemma}{Lemmas}
\crefname{thm}{Theorem}{Theorems}
\crefname{cor}{Corollary}{Corollaries}
\crefname{obs}{Observation}{Observations}
\crefname{prop}{property}{properties}
\newenvironment{problem}[1]{\begin{trivlist}\item\textsc{#1}.}{\end{trivlist}}
\tikzset{
  x=7mm,y=7mm,
  >={Triangle[width=0pt 4 0,length=0pt 6 0]},
  vertex/.style={circle,fill=black,inner sep=0pt,minimum size=3pt,transform shape=false},
  edge/.style={draw=black,thick},
  external edge/.style={edge,densely dotted},
  horizontal/.style={draw,ultra thin,decorate,decoration={zigzag,segment length=2pt,amplitude=0.5pt}},
  vertical/.style={edge,->},
  choice/.style={edge,ultra thick},
  label edge/.style={draw=black,very thin,shorten <=1pt,shorten >=1pt},
  region/.style={draw=black!30,fill=black!20,rounded corners=4pt,inner sep=0pt},
  wool/.style={miter limit=1,decorate,decoration={bumps,segment length=4mm,amplitude=1mm}}
}
\tikzset{
  sheep vertices/.pic={
    \path
    let \n1={0.75},
        \n2={\n1/2},
        \n3={\n2/sin(36)} in
    (90:\n3)   node [vertex] (-v-1)  {}
    +(90:0.75) node [vertex] (-root) {}
    (162:\n3)  node [vertex] (-v-2)  {}
    (18:\n3)   node [vertex] (-v-3)  {}
    (234:\n3)  node [vertex] (-in-1) {}
    (306:\n3)  node [vertex] (-in-2) {};
  },
  sheep edges/.pic={
    \path [edge]
    let \n1={0.75},
        \n2={\n1/2},
        \n3={\n2/sin(36)} in
    (0,0) circle [radius=\n3]
    (-v-1) -- (-root)
    (-v-2) to [bend right] (-v-3);
  },
  sheep/.pic={
    \pic {sheep vertices};
    \pic {sheep edges};
  },
  sheep boolean/.pic={
    \pic {sheep vertices};
    \path [vertical] (-root) -- (sheep-v-1);
    \path [vertical,-{>[bend]}]
    let \n1={0.75},
        \n2={\n1/2},
        \n3={\n2/sin(36)} in
    (-v-1) arc [radius=\n3,start angle=90,end angle=154];
    \path [vertical,-{>[bend]}]
    let \n1={0.75},
        \n2={\n1/2},
        \n3={\n2/sin(36)} in
    (-v-1) arc [radius=\n3,start angle=90,end angle=26];
    \path [vertical,-{>[bend]}]
    let \n1={0.75},
        \n2={\n1/2},
        \n3={\n2/sin(36)} in
    (-v-2) arc [radius=\n3,start angle=162,end angle=226];
    \path [vertical,-{>[bend]}]
    let \n1={0.75},
        \n2={\n1/2},
        \n3={\n2/sin(36)} in
    (-v-3) arc [radius=\n3,start angle=378,end angle=314];
    \path [horizontal]
    let \n1={0.75},
        \n2={\n1/2},
        \n3={\n2/sin(36)} in
    (-v-2)  to [bend right] (sheep-v-3)
    (-in-1) arc [radius=\n3,start angle=234,end angle=308];
  }
}
\tikzset{
  real sheep/.pic={
    \path [name path=-body]
    let \n1={0.75},
        \n2={\n1/2},
        \n3={\n2/sin(36)} in
    (1.5+\n3/2,-1) arc [radius=1,start angle=-70,end angle=70]
    to [out=160,in=20,looseness=0.75] ++(180:3+\n3) arc [radius=1,start angle=110,end angle=250]
    to [out=340,in=200,looseness=0.75] cycle;
    \path [name path=-legs-1]
    let \n1={0.75},
        \n2={\n1/2},
        \n3={\n2/sin(36)} in
    (-1-\n3/2,0)  -- +(270:1.5)
    (0-\n3/2,0)   -- +(270:1.5)
    (0.5+\n3/2,0) -- +(270:1.5);
    \path [name intersections={of=-body and -legs-1}]
    (intersection-1) node [vertex] (-rear-leg-2) {}
    (intersection-2) node [vertex] (-p-1)        {}
    (intersection-3) node [vertex] (-in-1)       {};
    \path [name path=-legs-2]
    (-rear-leg-2) ++(150:0.75) -- +(240:0.5)
    (-in-1)       ++(15:0.75) -- +(285:0.5);
    \path [name intersections={of=-body and -legs-2}]
    (intersection-1) node [vertex] (-rear-leg-1) {}
    (intersection-2) node [vertex] (-in-2)       {};
    \path [name path=-head-path-2]
    (1.5,0) -- +(320:1.5);
    \path [name intersections={of=-body and -head-path-2}]
    (intersection-1) node [vertex] (-head-2) {};
    \path [name path=-head-path-1]
    (-head-2) ++(90:0.75) -- +(0:0.5);
    \path [name intersections={of=-body and -head-path-1}]
    (intersection-1) node [vertex] (-head-1) {};
    \path [name path=-head-path-3]
    (-head-1) ++(110:0.375) +(200:0.5) -- +(20:0.5);
    \path [name intersections={of=-body and -head-path-3}]
    (intersection-1) node [vertex] (-head-3) {};
    \path (-head-1) +(335:1.5) coordinate (-head-tip);
    \path [edge]
    (-head-1) .. controls +(0:1) and +(80:0.25) .. node [pos=0.4,vertex] (-head-4) {} (-head-tip)
    .. controls +(260:0.5) and +(340:1) .. node [pos=0.65,vertex] (-beard-1) {} (-head-2);
    \path [edge,wool,decoration={segment length=3mm,amplitude=0.75mm}]
    (-head-3) to [out=0,in=100] (-head-4);
    \path [edge]
    let \n1={0.75},
        \n2={\n1/2},
        \n3={\n2/sin(36)} in
    (1.5+\n3/2,-1) arc [radius=1,start angle=-70,end angle=35]
    decorate [wool,decoration={mirror}] {
      arc [radius=1,start angle=35,end angle=70]
      to [out=160,in=20,looseness=0.75] ++(180:3+\n3)
      arc [radius=1,start angle=110,end angle=210]
    }
    arc [radius=1,start angle=210,end angle=250]
    to [out=340,in=200,looseness=0.75] cycle;
    \path [edge]
    (-rear-leg-1) -- +(240:0.75) node [vertex] (-rear-leg-3) {}
    (-rear-leg-2) -- +(240:0.75) node [vertex] (-rear-leg-4) {}
    (-in-1)       -- +(285:0.75) node [vertex] (-in-3)       {}
    (-in-2)       -- +(285:0.75) node [vertex] (-in-4)       {}
    (-p-1)        -- +(270:1.25) node [vertex] (-p-2)        {}
    (-beard-1)    -- +(260:0.75) node [vertex] (-beard-2)    {};
    \path [edge]
    let \n1={0.75},
        \n2={\n1/2},
        \n3={\n2/sin(36)} in
    (-p-2) ++(270:\n3) circle [radius=\n3]
    +(162:\n3) node [vertex] (-p-3) {}
    +(234:\n3) node [vertex] (-p-4) {}
    +(18:\n3)  node [vertex] (-p-5) {}
    +(306:\n3) node [vertex] (-p-6) {}
    (-p-3) -- +(162:0.75) node [vertex] (-p-7) {}
    (-p-6) -- +(306:0.75) node [vertex] (-p-8) {}
    (-p-4) to [bend left] (-p-5);
  }
}
\tikzset{
  one-way vertices/.pic={
    \path
    let \n1={0.75},
        \n2={\n1/2},
        \n3={\n2/sin(36)} in
    (90:\n3)    node [vertex] (-v-1)   {} 
    ++(90:0.75) node [vertex] (-out-1) {}
    ++(0:0.75)  node [vertex] (-out-2) {}
    +(315:0.75) node [vertex] (-leaf)  {}
    (162:\n3)   node [vertex] (-v-2)   {}
    (18:\n3)    node [vertex] (-v-3)   {}
    (234:\n3)   node [vertex] (-in-1)  {}
    (306:\n3)   node [vertex] (-in-2)  {};
  },
  one-way edges/.pic={
    \path [edge]
    let \n1={0.75},
        \n2={\n1/2},
        \n3={\n2/sin(36)} in
    (0,0) circle [radius=\n3]
    (-v-1) -- (-out-1) -- (-out-2) -- (-leaf)
    (-v-2) to [bend right] (-v-3);
  },
  one-way/.pic={
    \pic {one-way vertices};
    \pic {one-way edges};
  },
  one-way boolean/.pic={
    \pic {one-way vertices};
    \path [vertical] (-out-1) -- (-v-1);
    \path [vertical] (-out-2) -- (-leaf);
    \path [vertical,-{>[bend]}]
    let \n1={0.75},
        \n2={\n1/2},
        \n3={\n2/sin(36)} in
    (-v-1) arc [radius=\n3,start angle=90,end angle=154];
    \path [vertical,-{>[bend]}]
    let \n1={0.75},
        \n2={\n1/2},
        \n3={\n2/sin(36)} in
    (-v-1) arc [radius=\n3,start angle=90,end angle=26];
    \path [vertical,-{>[bend]}]
    let \n1={0.75},
        \n2={\n1/2},
        \n3={\n2/sin(36)} in
    (-v-2) arc [radius=\n3,start angle=162,end angle=226];
    \path [vertical,-{>[bend]}]
    let \n1={0.75},
        \n2={\n1/2},
        \n3={\n2/sin(36)} in
    (-v-3) arc [radius=\n3,start angle=378,end angle=314];
    \path [horizontal]
    let \n1={0.75},
        \n2={\n1/2},
        \n3={\n2/sin(36)} in
    (-out-1) --               (-out-2)
    (-v-2)   to  [bend right] (-v-3)
    (-in-1)  arc [radius=\n3,start angle=234,end angle=308];
  }
}
\tikzset{
    connector vertices/.pic={
      \path
      (0,0)       node [vertex] (-in-1)  {}
      +(0:0.75)   node [vertex] (-in-2)  {}
      ++(90:0.75) node [vertex] (-v-1)   {}
      +(0:0.75)   node [vertex] (-v-2)   {}
      ++(90:0.75) node [vertex] (-out-1) {}
      +(0:0.75)   node [vertex] (-out-2) {};
    },
    connector/.pic={
      \pic {connector vertices};
      \path [edge]
      (-in-1)  -- (-in-2)
      (-v-1)   -- (-v-2)
      (-out-1) -- (-out-2)
      (-in-1)  -- (-v-1) -- (-out-1)
      (-in-2)  -- (-v-2) -- (-out-2);
    },
    connector true/.pic={
      \pic {connector vertices};
      \path [vertical]   (-out-1) -- (-v-1);
      \path [vertical]   (-out-2) -- (-v-2);
      \path [vertical]   (-v-1)   -- (-in-1);
      \path [vertical]   (-v-2)   -- (-in-2);
      \path [horizontal] (-v-1)   -- (-v-2);
    },
    connector false/.pic={
      \pic {connector vertices};
      \path [vertical]   (-in-1) -- (-v-1);
      \path [vertical]   (-in-2) -- (-v-2);
      \path [vertical]   (-v-1)  -- (-out-1);
      \path [vertical]   (-v-2)  -- (-out-2);
      \path [horizontal] (-v-1)   -- (-v-2);
    }
}
\tikzset{
  choice vertices/.pic={
    \path let \n1={0.75},
              \n2={\n1/2},
              \n3={\n2/sin(22.5)},
              \n4={\n2/sin(36)} in
    (180:\n3)                   node [vertex] (-true-start)  {}
    (0:\n3)                     node [vertex] (-false-start) {}
    (135:\n3)                   node [vertex] (-split-1)     {}
    (45:\n3)                    node [vertex] (-split-2)     {}
    (90:\n3)                    node [vertex] (-split-3)     {}
    +(90:0.75)                  node [vertex] (-split-4)     {}
    (247.5:\n3)                 node [vertex] (-in-1)        {}
    (292.5:\n3)                 node [vertex] (-in-2)        {}
    (-true-start)  ++(180:0.75) node [vertex] (-true-pu)     {}
                    +(270:0.75) node [vertex] (-true-u)      {}
                   ++(180:0.75) node [vertex] (-true-pv)     {}
                    +(270:0.75) node [vertex] (-true-v)      {}
                   ++(180:0.75) node [vertex] (-true-1)      {}
                   ++(90:0.75)  node [vertex] (-true-2)      {}
                    +(90:0.75)  node [vertex] (-true-3)      {}
    (-false-start) ++(0:0.75)   node [vertex] (-false-pu)    {}
                    +(270:0.75) node [vertex] (-false-u)     {}
                   ++(0:0.75)   node [vertex] (-false-pv)    {}
                    +(270:0.75) node [vertex] (-false-v)     {}
                   ++(0:0.75)   node [vertex] (-false-1)     {}
                   ++(90:0.75)  node [vertex] (-false-2)     {}
                    +(90:0.75)  node [vertex] (-false-3)     {};
  },
  choice/.pic={
    \pic {choice vertices};
    \path [edge]
    (0,0) circle [radius=1]
    (-split-1)  arc [radius=1,start angle=225,end angle=315]
    (-split-3)     -- (-split-4)
    (-true-start)  -- (-true-pu)  -- (-true-u)
    (-true-v)      -- (-true-pv)  -- (-true-1)  -- (-true-2)  -- (-true-3)
    (-false-start) -- (-false-pu) -- (-false-u)
    (-false-v)     -- (-false-pv) -- (-false-1) -- (-false-2) -- (-false-3);
    \path [choice]
    (-true-pv)  to coordinate [pos=0.5] (-true-e)  (-true-pu)
    (-false-pv) to coordinate [pos=0.5] (-false-e) (-false-pu);
  },
  choice true/.pic={
    \pic {choice vertices};
    \path [vertical] (-true-2)     -- (-true-3);
    \path [vertical] (-true-1)     -- (-true-pv);
    \path [vertical] (-true-pv)    -- (-true-v);
    \path [vertical] (-true-pu)    -- (-true-u);
    \path [vertical] (-true-start) -- (-true-pu);
    \path [vertical] (-false-2)    -- (-false-3);
    \path [vertical] (-false-1)    -- (-false-pv);
    \path [vertical] (-false-pv)   -- (-false-v);
    \path [vertical] (-false-pv)   -- (-false-pu);
    \path [vertical] (-false-pu)   -- (-false-u);
    \path [vertical] (-false-pu)   -- (-false-start);
    \path [vertical] (-split-3)    -- (-split-4);
    \path [vertical,-{>[bend]}]
    (-split-2) to [bend left] (-split-1);
    \path [vertical,-{>[bend]}]
    let \n1={0.75},
        \n2={\n1/2},
        \n3={\n2/sin(22.5)} in
    (-false-start) arc [radius=\n3,start angle=0,end angle=40];
    \path [vertical,-{>[bend]}]
    let \n1={0.75},
        \n2={\n1/2},
        \n3={\n2/sin(22.5)} in
    (-split-2) arc [radius=\n3,start angle=45,end angle=85];
    \path [vertical,-{>[bend]}]
    let \n1={0.75},
        \n2={\n1/2},
        \n3={\n2/sin(22.5)} in
    (-split-1) arc [radius=\n3,start angle=135,end angle=175];
    \path [vertical,-{>[bend]}]
    let \n1={0.75},
        \n2={\n1/2},
        \n3={\n2/sin(22.5)} in
    (-false-start) arc [radius=\n3,start angle=360,end angle=297.5];
    \path [vertical,-{>[bend]}]
    let \n1={0.75},
        \n2={\n1/2},
        \n3={\n2/sin(22.5)} in
    (-true-start) arc [radius=\n3,start angle=180,end angle=242.5];
    \path [horizontal]
    let \n1={0.75},
        \n2={\n1/2},
        \n3={\n2/sin(22.5)},
        \n4={\n2/sin(36)} in
    (-true-1)  --  (-true-2)
    (-false-1) --  (-false-2)
    (-true-pu) --  (-true-pv)
    (-split-3) arc [radius=\n3,start angle=90,end angle=135]
    (-in-1)    arc [radius=\n3,start angle=247.5,end angle=292.5];
  },
  choice false/.pic={
    \pic {choice vertices};
    \path [vertical] (-true-2)      -- (-true-3);
    \path [vertical] (-true-1)      -- (-true-pv);
    \path [vertical] (-true-pv)     -- (-true-v);
    \path [vertical] (-true-pu)     -- (-true-u);
    \path [vertical] (-false-start) -- (-false-pu);
    \path [vertical] (-false-2)     -- (-false-3);
    \path [vertical] (-false-1)     -- (-false-pv);
    \path [vertical] (-false-pv)    -- (-false-v);
    \path [vertical] (-true-pv)     -- (-true-pu);
    \path [vertical] (-false-pu)    -- (-false-u);
    \path [vertical] (-true-pu)     -- (-true-start);
    \path [vertical] (-split-3)     -- (-split-4);
    \path [vertical,-{>[bend]}]
    (-split-1) to [bend right] (-split-2);
    \path [vertical,-{>[bend]}]
    let \n1={0.75},
        \n2={\n1/2},
        \n3={\n2/sin(22.5)} in
    (-true-start) arc [radius=\n3,start angle=180,end angle=140];
    \path [vertical,-{>[bend]}]
    let \n1={0.75},
        \n2={\n1/2},
        \n3={\n2/sin(22.5)} in
    (-split-1) arc [radius=\n3,start angle=135,end angle=95];
    \path [vertical,-{>[bend]}]
    let \n1={0.75},
        \n2={\n1/2},
        \n3={\n2/sin(22.5)} in
    (-split-2) arc [radius=\n3,start angle=45,end angle=5];
    \path [vertical,-{>[bend]}]
    let \n1={0.75},
        \n2={\n1/2},
        \n3={\n2/sin(22.5)} in
    (-false-start) arc [radius=\n3,start angle=360,end angle=297.5];
    \path [vertical,-{>[bend]}]
    let \n1={0.75},
        \n2={\n1/2},
        \n3={\n2/sin(22.5)} in
    (-true-start) arc [radius=\n3,start angle=180,end angle=242.5];
    \path [horizontal]
    let \n1={0.75},
        \n2={\n1/2},
        \n3={\n2/sin(22.5)},
        \n4={\n2/sin(36)} in
    (-true-1)   --               (-true-2)
    (-false-1)  --               (-false-2)
    (-false-pu) --               (-false-pv)
    (-split-3)  arc [radius=\n3,start angle=90,end angle=45]
    (-in-1)     arc [radius=\n3,start angle=247.5,end angle=292.5];
  }
}
\tikzset{
  and vertices/.pic={
    \path let \n1={0.75},
              \n2={\n1/2},
              \n3={\n2/sin(22.5)} in
    (247.5:\n3)  node [vertex] (-in-1-v) {}
    ++(135:0.75) node [vertex] (-in-1-2) {}
    ++(90:0.75)  node [vertex] (-in-1-1) {}
    ++(45:0.75)  node [vertex] (-out-1)  {}
    ++(0:0.75)   node [vertex] (-out-2)  {}
    ++(315:0.75) node [vertex] (-in-2-1) {}
    ++(270:0.75) node [vertex] (-in-2-2) {}
    ++(225:0.75) node [vertex] (-in-2-v) {};
  },
  and/.pic={
    \pic {and vertices};
    \path [edge]
    (-in-1-v) -- (-in-1-2) -- (-in-1-1) -- (-out-1) -- (-out-2) --
    (-in-2-1) -- (-in-2-2) -- (-in-2-v);
  },
  and boolean/.pic={
    \pic {and vertices};
    \path [vertical]   (-out-1)  -- (-in-1-1);
    \path [vertical]   (-in-1-1) -- (-in-1-2);
    \path [vertical]   (-in-1-2) -- (-in-1-v);
    \path [vertical]   (-out-2)  -- (-in-2-1);
    \path [vertical]   (-in-2-1) -- (-in-2-2);
    \path [vertical]   (-in-2-2) -- (-in-2-v);
    \path [horizontal] (-out-1)  -- (-out-2);
  }
}
\tikzset{
  replicator vertices/.pic={
    \path let \n1={0.75},
              \n2={\n1/2},
              \n3={\n2/sin(22.5)} in
    (112.5:\n3)  node [vertex] (-out-1-v) {}
    ++(225:0.75) node [vertex] (-out-1-1) {}
    ++(270:0.75) node [vertex] (-out-1-2) {}
    ++(315:0.75) node [vertex] (-in-1)    {}
    ++(0:0.75)   node [vertex] (-in-2)    {}
    ++(45:0.75)  node [vertex] (-out-2-2) {}
    ++(90:0.75)  node [vertex] (-out-2-1) {}
    ++(135:0.75) node [vertex] (-out-2-v) {};
  },
  replicator/.pic={
    \pic {replicator vertices};
    \path [edge]
    (-out-1-v) -- (-out-1-1) -- (-out-1-2) -- (-in-1) -- (-in-2) --
    (-out-2-2) -- (-out-2-1) -- (-out-2-v);
  },
  replicator true/.pic={
    \pic {replicator vertices};
    \path [vertical]   (-out-1-1) -- (-out-1-v);
    \path [vertical]   (-out-1-2) -- (-in-1);
    \path [vertical]   (-out-2-1) -- (-out-2-v);
    \path [vertical]   (-out-2-2) -- (-in-2);
    \path [horizontal] (-out-1-1) -- (-out-1-2);
    \path [horizontal] (-out-2-1) -- (-out-2-2);
    \path [horizontal] (-in-1)    -- (-in-2);
  },
  replicator false/.pic={
    \pic {replicator vertices};
    \path [vertical]   (-out-1-1) -- (-out-1-v);
    \path [vertical]   (-out-1-2) -- (-out-1-1);
    \path [vertical]   (-in-1)    -- (-out-1-2);
    \path [vertical]   (-out-2-1) -- (-out-2-v);
    \path [vertical]   (-out-2-2) -- (-out-2-1);
    \path [vertical]   (-in-2)    -- (-out-2-2);
    \path [horizontal] (-in-1)    -- (-in-2);
  },
  replicator escape/.pic={
    \pic {replicator vertices};
    \path [vertical]   (-out-1-1) -- (-out-1-v);
    \path [vertical]   (-out-1-2) -- (-in-1);
    \path [vertical]   (-out-2-1) -- (-out-2-v);
    \path [vertical]   (-in-2)    -- (-out-2-2);
    \path [vertical]   (-in-1)    -- (-in-2);
    \path [vertical]   (-out-2-2) -- (-out-2-1);
    \path [horizontal] (-out-1-1) -- (-out-1-2);
  }
}
\tikzset{
  or vertices/.pic={
    \path let \n1={0.75},
              \n2={\n1/2},
              \n3={\n2/sin(18)} in
    (108:\n3)    node [vertex] (-out-1)        {}
    (72:\n3)     node [vertex] (-out-2)        {}
    (144:\n3)    node [vertex] (-left-ear-1)   {}
    +(144:0.75)  node [vertex] (-left-ear-2)   {}
    (180:\n3)    node [vertex] (-left-cross)   {}
    ++(270:0.75) node [vertex] (-in-1-1)       {}
    ++(270:0.75) node [vertex] (-in-1-2)       {}
    ++(270:0.75) node [vertex] (-left-foot-1)  {}
    ++(270:0.75) node [vertex] (-left-foot-2)  {}
    (36:\n3)     node [vertex] (-right-ear-1)  {}
    +(36:0.75)   node [vertex] (-right-ear-2)  {}
    (0:\n3)      node [vertex] (-right-cross)  {}
    ++(270:0.75) node [vertex] (-in-2-1)       {}
    ++(270:0.75) node [vertex] (-in-2-2)       {}
    ++(270:0.75) node [vertex] (-right-foot-1) {}
    ++(270:0.75) node [vertex] (-right-foot-2) {};
  },
  or/.pic={
    \pic {or vertices};
    \path [edge] let \n1={0.75},
                     \n2={\n1/2},
                     \n3={\n2/sin(18)} in
    (-right-cross)  -- (-left-foot-1) -- (-in-1-2) -- (-in-1-1) --
    (-left-cross)   arc [radius=\n3,start angle=180,end angle=0] --
    (-in-2-1)       -- (-in-2-2) -- (-right-foot-1) -- (-left-cross)
    (-left-ear-1)   -- (-left-ear-2)
    (-right-ear-1)  -- (-right-ear-2)
    (-left-foot-1)  -- (-left-foot-2)
    (-right-foot-1) -- (-right-foot-2);
  }
}
\tikzset{
  clause vertices/.pic={
    \path let \n1={0.75},
              \n2={\n1/2},
              \n3={\n2/sin(18)} in
    (108:\n3)       node [vertex] (-out-1)               {}
    (72:\n3)        node [vertex] (-out-2)               {}
    (144:\n3)       node [vertex] (-top-left-ear-1)      {}
    +(144:0.75)     node [vertex] (-top-left-ear-2)      {}
    (180:\n3)       node [vertex] (-top-left-cross)      {}
    ++(270:1.5)     coordinate (-bend-ref)
    ++(270:1.5)     node [vertex] (-inter-1)             {}
    ++(270:0.75)    node [vertex] (-inter-3)             {}
    ++(0:0.75)      node [vertex] (-inter-4)             {}
    ++(90:0.75)     node [vertex] (-inter-2)             {}
    ++(90:0.75)     node [vertex] (-top-left-foot-1)     {}
    +(0:0.75)       node [vertex] (-top-left-foot-2)     {}
    (36:\n3)        node [vertex] (-top-right-ear-1)     {}
    +(36:0.75)      node [vertex] (-top-right-ear-2)     {}
    (0:\n3)         node [vertex] (-top-right-cross)     {}
    ++(270:0.75)    node [vertex] (-in-3-1)              {}
    ++(270:0.75)    node [vertex] (-in-3-2)              {}
    ++(270:0.75)    node [vertex] (-top-right-foot-1)    {}
    ++(270:0.75)    node [vertex] (-top-right-foot-2)    {}
    (-inter-3) ++(288:\n3)
    +(144:\n3)      node [vertex] (-bottom-left-ear-1)   {}
    +(144:\n3+0.75) node [vertex] (-bottom-left-ear-2)   {}
    +(36:\n3)       node [vertex] (-bottom-right-ear-1)  {}
    +(36:\n3+0.75)  node [vertex] (-bottom-right-ear-2)  {}
    +(180:\n3)      node [vertex] (-bottom-left-cross)   {}
    +(0:\n3)        node [vertex] (-bottom-right-cross)  {}
    (-bottom-left-cross)
    ++(270:0.75)    node [vertex] (-in-1-1)              {}
    ++(270:0.75)    node [vertex] (-in-1-2)              {}
    ++(270:0.75)    node [vertex] (-bottom-left-foot-1)  {}
    ++(270:0.75)    node [vertex] (-bottom-left-foot-2)  {}
    (-bottom-right-cross)
    ++(270:0.75)    node [vertex] (-in-2-1)              {}
    ++(270:0.75)    node [vertex] (-in-2-2)              {}
    ++(270:0.75)    node [vertex] (-bottom-right-foot-1) {}
    ++(270:0.75)    node [vertex] (-bottom-right-foot-2) {};
  },
  clause/.pic={
    \pic {clause vertices};
    \path [edge] let \n1={0.75},
                     \n2={\n1/2},
                     \n3={\n2/sin(18)} in
    (-top-left-cross)      arc [radius=\n3,start angle=180,end angle=0]
    (-inter-3)             -- (-inter-1) -- (-top-left-cross) -- (-top-right-foot-1) -- (-in-3-2) -- (-in-3-1) -- (-top-right-cross)
    (-bottom-left-cross)   arc [radius=\n3,start angle=180,end angle=0]
    (-bottom-left-foot-1)  -- (-in-1-2) -- (-in-1-1) -- (-bottom-left-cross) --
    (-bottom-right-foot-1) -- (-in-2-2) -- (-in-2-1) -- (-bottom-right-cross)  -- (-bottom-left-foot-1)
    (-top-left-ear-1)      -- (-top-left-ear-2)
    (-top-right-ear-1)     -- (-top-right-ear-2)
    (-top-right-foot-1)    -- (-top-right-foot-2)
    (-inter-4)             -- (-inter-2) -- (-top-left-foot-1) -- (-top-left-foot-2)
    (-inter-1)             -- (-inter-2)
    (-bottom-left-ear-1)   -- (-bottom-left-ear-2)
    (-bottom-right-ear-1)  -- (-bottom-right-ear-2)
    (-bottom-left-foot-1)  -- (-bottom-left-foot-2)
    (-bottom-right-foot-1) -- (-bottom-right-foot-2);
    \path [name path=bend-ref] (-top-right-cross) -- (-bend-ref);
    \path [name path=inter-up] (-top-left-foot-1) -- +(90:2);
    \path [name intersections={of=bend-ref and inter-up},edge]
    let \p1 = (intersection-1),
        \p2 = (-top-left-foot-1),
        \p3 = (-bend-ref),
        \p4 = (-top-right-cross),
        \n1 = {\y4 - \y3},
        \n2 = {\x4 - \x3},
        \n3 = {atan2(\n1,\n2)},
        \n4 = {45-\n3/2},
        \n5 = {\y1 - \y2},
        \n6 = {\n5/tan(\n4)} in
    (-top-left-foot-1) arc [radius=\n6,start angle=180,end angle={\n3+90}] -- (-top-right-cross);
  },
  clause boolean-1/.pic={
    \pic {clause vertices};
    \path [vertical] (-top-left-ear-1)      -- (-top-left-ear-2);
    \path [vertical] (-top-right-ear-1)     -- (-top-right-ear-2);
    \path [vertical] (-bottom-left-ear-1)   -- (-bottom-left-ear-2);
    \path [vertical] (-bottom-right-ear-1)  -- (-bottom-right-ear-2);
    \path [vertical] (-bottom-left-cross)   -- (-in-1-1);
    \path [vertical] (-bottom-right-cross)  -- (-bottom-left-foot-1);
    \path [vertical] (-bottom-left-foot-1)  -- (-bottom-left-foot-2);
    \path [vertical] (-bottom-left-foot-1)  -- (-in-1-2);
    \path [vertical] (-in-2-1)              -- (-bottom-right-cross);
    \path [vertical] (-in-2-2)              -- (-bottom-right-foot-1);
    \path [vertical] (-bottom-right-foot-1) -- (-bottom-left-cross);
    \path [vertical] (-bottom-right-foot-1) -- (-bottom-right-foot-2);
    \path [vertical] (-top-left-cross)      -- (-inter-1);
    \path [vertical] (-top-right-foot-1)    -- (-top-left-cross);
    \path [vertical] (-in-3-1)              -- (-top-right-cross);
    \path [vertical] (-in-3-2)              -- (-top-right-foot-1);
    \path [vertical] (-top-right-foot-1)    -- (-top-right-foot-2);
    \path [vertical] (-top-left-foot-1)     -- (-top-left-foot-2);
    \path [vertical] (-top-left-foot-1)     -- (-inter-2);
    \path [vertical] (-inter-1)             -- (-inter-3);
    \path [vertical] (-inter-2)             -- (-inter-4);
    \path [vertical]
    let \n1={0.75},
        \n2={\n1/2},
        \n3={\n2/sin(18)} in
    (-inter-3) arc [radius=\n3,start angle=108,end angle=140];
    \path [vertical]
    let \n1={0.75},
        \n2={\n1/2},
        \n3={\n2/sin(18)} in
    (-inter-4) arc [radius=\n3,start angle=72,end angle=40];
    \path [vertical]
    let \n1={0.75},
        \n2={\n1/2},
        \n3={\n2/sin(18)} in
    (-out-1) arc [radius=\n3,start angle=108,end angle=140];
    \path [vertical]
    let \n1={0.75},
        \n2={\n1/2},
        \n3={\n2/sin(18)} in
    (-out-2) arc [radius=\n3,start angle=72,end angle=40];
    \path [name path=bend-ref] (-top-right-cross) -- (-bend-ref);
    \path [name path=inter-up] (-top-left-foot-1) -- +(90:2);
    \path [name intersections={of=bend-ref and inter-up},vertical,-{>[bend]}]
    let \p1 = (intersection-1),
        \p2 = (-top-left-foot-1),
        \p3 = (-bend-ref),
        \p4 = (-top-right-cross),
        \n1 = {\y4 - \y3},
        \n2 = {\x4 - \x3},
        \n3 = {atan2(\n1,\n2)},
        \n4 = {45-\n3/2},
        \n5 = {\y1 - \y2},
        \n6 = {\n5/tan(\n4)} in
    (-top-left-foot-1) ++(0:\n6) ++({\n3+90}:\n6) coordinate (-inter-2-bend)
    (-top-right-cross) -- (-inter-2-bend)
    arc [radius=\n6,start angle={\n3+90},end angle=178.5];
    \path [horizontal]
    let \n1={0.75},
        \n2={\n1/2},
        \n3={\n2/sin(18)} in
    (-in-1-1)             --  (-in-1-2)
    (-in-2-1)             --  (-in-2-2)
    (-in-3-1)             --  (-in-3-2)
    (-inter-1)            --  (-inter-2)
    (-inter-3)            arc [radius=\n3,start angle=108,end angle=72]
    (-bottom-left-ear-1)  arc [radius=\n3,start angle=144,end angle=180]
    (-bottom-right-ear-1) arc [radius=\n3,start angle=36,end angle=0]
    (-out-1)              arc [radius=\n3,start angle=108,end angle=72]
    (-top-left-ear-1)     arc [radius=\n3,start angle=144,end angle=180]
    (-top-right-ear-1)    arc [radius=\n3,start angle=36,end angle=0];
  },
  clause boolean-2/.pic={
    \pic {clause vertices};
    \path [vertical] (-top-left-ear-1)      -- (-top-left-ear-2);
    \path [vertical] (-top-right-ear-1)     -- (-top-right-ear-2);
    \path [vertical] (-bottom-left-ear-1)   -- (-bottom-left-ear-2);
    \path [vertical] (-bottom-right-ear-1)  -- (-bottom-right-ear-2);
    \path [vertical] (-in-1-1)              -- (-bottom-left-cross);
    \path [vertical] (-bottom-left-foot-1)  -- (-bottom-right-cross);
    \path [vertical] (-bottom-left-foot-1)  -- (-bottom-left-foot-2);
    \path [vertical] (-in-1-2)              -- (-bottom-left-foot-1);
    \path [vertical] (-bottom-right-cross)  -- (-in-2-1);
    \path [vertical] (-bottom-right-foot-1) -- (-in-2-2);
    \path [vertical] (-bottom-left-cross)   -- (-bottom-right-foot-1);
    \path [vertical] (-bottom-right-foot-1) -- (-bottom-right-foot-2);
    \path [vertical] (-top-left-cross)      -- (-inter-1);
    \path [vertical] (-top-right-foot-1)    -- (-top-left-cross);
    \path [vertical] (-in-3-1)              -- (-top-right-cross);
    \path [vertical] (-in-3-2)              -- (-top-right-foot-1);
    \path [vertical] (-top-right-foot-1)    -- (-top-right-foot-2);
    \path [vertical] (-top-left-foot-1)     -- (-top-left-foot-2);
    \path [vertical] (-top-left-foot-1)     -- (-inter-2);
    \path [vertical] (-inter-1)             -- (-inter-3);
    \path [vertical] (-inter-2)             -- (-inter-4);
    \path [vertical]
    let \n1={0.75},
        \n2={\n1/2},
        \n3={\n2/sin(18)} in
    (-inter-3) arc [radius=\n3,start angle=108,end angle=140];
    \path [vertical]
    let \n1={0.75},
        \n2={\n1/2},
        \n3={\n2/sin(18)} in
    (-inter-4) arc [radius=\n3,start angle=72,end angle=40];
    \path [vertical]
    let \n1={0.75},
        \n2={\n1/2},
        \n3={\n2/sin(18)} in
    (-out-1) arc [radius=\n3,start angle=108,end angle=140];
    \path [vertical]
    let \n1={0.75},
        \n2={\n1/2},
        \n3={\n2/sin(18)} in
    (-out-2) arc [radius=\n3,start angle=72,end angle=40];
    \path [name path=bend-ref] (-top-right-cross) -- (-bend-ref);
    \path [name path=inter-up] (-top-left-foot-1) -- +(90:2);
    \path [name intersections={of=bend-ref and inter-up},vertical,-{>[bend]}]
    let \p1 = (intersection-1),
        \p2 = (-top-left-foot-1),
        \p3 = (-bend-ref),
        \p4 = (-top-right-cross),
        \n1 = {\y4 - \y3},
        \n2 = {\x4 - \x3},
        \n3 = {atan2(\n1,\n2)},
        \n4 = {45-\n3/2},
        \n5 = {\y1 - \y2},
        \n6 = {\n5/tan(\n4)} in
    (-top-left-foot-1) ++(0:\n6) ++({\n3+90}:\n6) coordinate (-inter-2-bend)
    (-top-right-cross) -- (-inter-2-bend)
    arc [radius=\n6,start angle={\n3+90},end angle=178.5];
    \path [horizontal]
    let \n1={0.75},
        \n2={\n1/2},
        \n3={\n2/sin(18)} in
    (-in-1-1)             --  (-in-1-2)
    (-in-2-1)             --  (-in-2-2)
    (-in-3-1)             --  (-in-3-2)
    (-inter-1)            --  (-inter-2)
    (-inter-3)            arc [radius=\n3,start angle=108,end angle=72]
    (-bottom-left-ear-1)  arc [radius=\n3,start angle=144,end angle=180]
    (-bottom-right-ear-1) arc [radius=\n3,start angle=36,end angle=0]
    (-out-1)              arc [radius=\n3,start angle=108,end angle=72]
    (-top-left-ear-1)     arc [radius=\n3,start angle=144,end angle=180]
    (-top-right-ear-1)    arc [radius=\n3,start angle=36,end angle=0];
  },
  clause boolean-3/.pic={
    \pic {clause vertices};
    \path [vertical] (-top-left-ear-1)      -- (-top-left-ear-2);
    \path [vertical] (-top-right-ear-1)     -- (-top-right-ear-2);
    \path [vertical] (-bottom-left-ear-1)   -- (-bottom-left-ear-2);
    \path [vertical] (-bottom-right-ear-1)  -- (-bottom-right-ear-2);
    \path [vertical] (-bottom-left-foot-1)  -- (-bottom-left-foot-2);
    \path [vertical] (-in-1-1)              -- (-bottom-left-cross);
    \path [vertical] (-in-1-2)              -- (-bottom-left-foot-1);
    \path [vertical] (-in-2-1)              -- (-bottom-right-cross);
    \path [vertical] (-in-2-2)              -- (-bottom-right-foot-1);
    \path [vertical] (-bottom-right-foot-1) -- (-bottom-right-foot-2);
    \path [vertical] (-inter-1)             -- (-top-left-cross);
    \path [vertical] (-top-left-cross)      -- (-top-right-foot-1);
    \path [vertical] (-top-right-cross)     -- (-in-3-1);
    \path [vertical] (-top-right-foot-1)    -- (-in-3-2);
    \path [vertical] (-top-right-foot-1)    -- (-top-right-foot-2);
    \path [vertical] (-top-left-foot-1)     -- (-top-left-foot-2);
    \path [vertical] (-inter-2)             -- (-top-left-foot-1);
    \path [vertical] (-inter-3)             -- (-inter-1);
    \path [vertical] (-inter-4)             -- (-inter-2);
    \path [vertical]
    let \n1={0.75},
        \n2={\n1/2},
        \n3={\n2/sin(18)} in
    (-bottom-left-ear-1) arc [radius=\n3,start angle=144,end angle=112];
    \path [vertical]
    let \n1={0.75},
        \n2={\n1/2},
        \n3={\n2/sin(18)} in
    (-bottom-right-ear-1) arc [radius=\n3,start angle=36,end angle=68];
    \path [vertical]
    let \n1={0.75},
        \n2={\n1/2},
        \n3={\n2/sin(18)} in
    (-out-1) arc [radius=\n3,start angle=108,end angle=140];
    \path [vertical]
    let \n1={0.75},
        \n2={\n1/2},
        \n3={\n2/sin(18)} in
    (-out-2) arc [radius=\n3,start angle=72,end angle=40];
    \path [vertical]
    let \n1={0.75},
        \n2={\n1/2},
        \n3={\n2/sin(18)} in
    (-bottom-left-cross) arc [radius=\n3,start angle=180,end angle=148];
    \path [vertical]
    let \n1={0.75},
        \n2={\n1/2},
        \n3={\n2/sin(18)} in
    (-bottom-right-cross) arc [radius=\n3,start angle=0,end angle=32];
    \path [name path=bend-ref] (-top-right-cross) -- (-bend-ref);
    \path [name path=inter-up] (-top-left-foot-1) -- +(90:2);
    \path [name intersections={of=bend-ref and inter-up},vertical]
    let \p1 = (intersection-1),
        \p2 = (-top-left-foot-1),
        \p3 = (-bend-ref),
        \p4 = (-top-right-cross),
        \n1 = {\y4 - \y3},
        \n2 = {\x4 - \x3},
        \n3 = {atan2(\n1,\n2)},
        \n4 = {45-\n3/2},
        \n5 = {\y1 - \y2},
        \n6 = {\n5/tan(\n4)} in
    (-top-left-foot-1) arc [radius=\n6,start angle=180,end angle={\n3+90}] -- (-top-right-cross);
    \path [horizontal]
    let \n1={0.75},
        \n2={\n1/2},
        \n3={\n2/sin(18)} in
    (-in-1-1)             --  (-in-1-2)
    (-in-2-1)             --  (-in-2-2)
    (-in-3-1)             --  (-in-3-2)
    (-bottom-left-cross)  --  (-bottom-right-foot-1)
    (-bottom-right-cross) --  (-bottom-left-foot-1)
    (-inter-1)            --  (-inter-2)
    (-inter-3)            arc [radius=\n3,start angle=108,end angle=72]
    (-out-1)              arc [radius=\n3,start angle=108,end angle=72]
    (-top-left-ear-1)     arc [radius=\n3,start angle=144,end angle=180]
    (-top-right-ear-1)    arc [radius=\n3,start angle=36,end angle=0];
  },
  clause boolean-1-2/.pic={
    \pic {clause vertices};
    \path [vertical] (-top-left-ear-1)      -- (-top-left-ear-2);
    \path [vertical] (-top-right-ear-1)     -- (-top-right-ear-2);
    \path [vertical] (-bottom-left-ear-1)   -- (-bottom-left-ear-2);
    \path [vertical] (-bottom-right-ear-1)  -- (-bottom-right-ear-2);
    \path [vertical] (-bottom-left-cross)   -- (-in-1-1);
    \path [vertical] (-bottom-right-cross)  -- (-bottom-left-foot-1);
    \path [vertical] (-bottom-left-foot-1)  -- (-bottom-left-foot-2);
    \path [vertical] (-bottom-left-foot-1)  -- (-in-1-2);
    \path [vertical] (-bottom-right-cross)  -- (-in-2-1);
    \path [vertical] (-bottom-right-foot-1) -- (-in-2-2);
    \path [vertical] (-bottom-left-cross)   -- (-bottom-right-foot-1);
    \path [vertical] (-bottom-right-foot-1) -- (-bottom-right-foot-2);
    \path [vertical] (-top-left-cross)      -- (-inter-1);
    \path [vertical] (-top-right-foot-1)    -- (-top-left-cross);
    \path [vertical] (-in-3-1)              -- (-top-right-cross);
    \path [vertical] (-in-3-2)              -- (-top-right-foot-1);
    \path [vertical] (-top-right-foot-1)    -- (-top-right-foot-2);
    \path [vertical] (-top-left-foot-1)     -- (-top-left-foot-2);
    \path [vertical] (-top-left-foot-1)     -- (-inter-2);
    \path [vertical] (-inter-1)             -- (-inter-3);
    \path [vertical] (-inter-2)             -- (-inter-4);
    \path [vertical]
    let \n1={0.75},
        \n2={\n1/2},
        \n3={\n2/sin(18)} in
    (-inter-3) arc [radius=\n3,start angle=108,end angle=140];
    \path [vertical]
    let \n1={0.75},
        \n2={\n1/2},
        \n3={\n2/sin(18)} in
    (-inter-4) arc [radius=\n3,start angle=72,end angle=40];
    \path [vertical]
    let \n1={0.75},
        \n2={\n1/2},
        \n3={\n2/sin(18)} in
    (-out-1) arc [radius=\n3,start angle=108,end angle=140];
    \path [vertical]
    let \n1={0.75},
        \n2={\n1/2},
        \n3={\n2/sin(18)} in
    (-out-2) arc [radius=\n3,start angle=72,end angle=40];
    \path [vertical]
    let \n1={0.75},
        \n2={\n1/2},
        \n3={\n2/sin(18)} in
    (-bottom-left-ear-1) arc [radius=\n3,start angle=144,end angle=176];
    \path [vertical]
    let \n1={0.75},
        \n2={\n1/2},
        \n3={\n2/sin(18)} in
    (-bottom-right-ear-1) arc [radius=\n3,start angle=36,end angle=4];
    \path [name path=bend-ref] (-top-right-cross) -- (-bend-ref);
    \path [name path=inter-up] (-top-left-foot-1) -- +(90:2);
    \path [name intersections={of=bend-ref and inter-up},vertical,-{>[bend]}]
    let \p1 = (intersection-1),
        \p2 = (-top-left-foot-1),
        \p3 = (-bend-ref),
        \p4 = (-top-right-cross),
        \n1 = {\y4 - \y3},
        \n2 = {\x4 - \x3},
        \n3 = {atan2(\n1,\n2)},
        \n4 = {45-\n3/2},
        \n5 = {\y1 - \y2},
        \n6 = {\n5/tan(\n4)} in
    (-top-left-foot-1) ++(0:\n6) ++({\n3+90}:\n6) coordinate (-inter-2-bend)
    (-top-right-cross) -- (-inter-2-bend)
    arc [radius=\n6,start angle={\n3+90},end angle=178.5];
    \path [horizontal]
    let \n1={0.75},
        \n2={\n1/2},
        \n3={\n2/sin(18)} in
    (-in-1-1)          --  (-in-1-2)
    (-in-2-1)          --  (-in-2-2)
    (-in-3-1)          --  (-in-3-2)
    (-inter-1)         --  (-inter-2)
    (-inter-3)         arc [radius=\n3,start angle=108,end angle=72]
    (-out-1)           arc [radius=\n3,start angle=108,end angle=72]
    (-top-left-ear-1)  arc [radius=\n3,start angle=144,end angle=180]
    (-top-right-ear-1) arc [radius=\n3,start angle=36,end angle=0];
  },
  clause boolean-1-3/.pic={
    \pic {clause vertices};
    \path [vertical] (-top-left-ear-1)      -- (-top-left-ear-2);
    \path [vertical] (-top-right-ear-1)     -- (-top-right-ear-2);
    \path [vertical] (-bottom-left-ear-1)   -- (-bottom-left-ear-2);
    \path [vertical] (-bottom-right-ear-1)  -- (-bottom-right-ear-2);
    \path [vertical] (-bottom-left-cross)   -- (-in-1-1);
    \path [vertical] (-bottom-right-cross)  -- (-bottom-left-foot-1);
    \path [vertical] (-bottom-left-foot-1)  -- (-bottom-left-foot-2);
    \path [vertical] (-bottom-left-foot-1)  -- (-in-1-2);
    \path [vertical] (-in-2-1)              -- (-bottom-right-cross);
    \path [vertical] (-in-2-2)              -- (-bottom-right-foot-1);
    \path [vertical] (-bottom-right-foot-1) -- (-bottom-left-cross);
    \path [vertical] (-bottom-right-foot-1) -- (-bottom-right-foot-2);
    \path [vertical] (-top-left-cross)      -- (-inter-1);
    \path [vertical] (-top-left-cross)      -- (-top-right-foot-1);
    \path [vertical] (-top-right-cross)     -- (-in-3-1);
    \path [vertical] (-top-right-foot-1)    -- (-in-3-2);
    \path [vertical] (-top-right-foot-1)    -- (-top-right-foot-2);
    \path [vertical] (-top-left-foot-1)     -- (-top-left-foot-2);
    \path [vertical] (-top-left-foot-1)     -- (-inter-2);
    \path [vertical] (-inter-1)             -- (-inter-3);
    \path [vertical] (-inter-2)             -- (-inter-4);
    \path [vertical]
    let \n1={0.75},
        \n2={\n1/2},
        \n3={\n2/sin(18)} in
    (-inter-3) arc [radius=\n3,start angle=108,end angle=140];
    \path [vertical]
    let \n1={0.75},
        \n2={\n1/2},
        \n3={\n2/sin(18)} in
    (-inter-4) arc [radius=\n3,start angle=72,end angle=40];
    \path [vertical]
    let \n1={0.75},
        \n2={\n1/2},
        \n3={\n2/sin(18)} in
    (-out-1) arc [radius=\n3,start angle=108,end angle=140];
    \path [vertical]
    let \n1={0.75},
        \n2={\n1/2},
        \n3={\n2/sin(18)} in
    (-out-2) arc [radius=\n3,start angle=72,end angle=40];
    \path [vertical]
    let \n1={0.75},
        \n2={\n1/2},
        \n3={\n2/sin(18)} in
    (-top-left-ear-1) arc [radius=\n3,start angle=144,end angle=176];
    \path [vertical]
    let \n1={0.75},
        \n2={\n1/2},
        \n3={\n2/sin(18)} in
    (-top-right-ear-1) arc [radius=\n3,start angle=36,end angle=4];
    \path [name path=bend-ref] (-top-right-cross) -- (-bend-ref);
    \path [name path=inter-up] (-top-left-foot-1) -- +(90:2);
    \path [name intersections={of=bend-ref and inter-up},vertical,-{>[bend]}]
    let \p1 = (intersection-1),
        \p2 = (-top-left-foot-1),
        \p3 = (-bend-ref),
        \p4 = (-top-right-cross),
        \n1 = {\y4 - \y3},
        \n2 = {\x4 - \x3},
        \n3 = {atan2(\n1,\n2)},
        \n4 = {45-\n3/2},
        \n5 = {\y1 - \y2},
        \n6 = {\n5/tan(\n4)} in
    (-top-left-foot-1) ++(0:\n6) ++({\n3+90}:\n6) coordinate (-inter-2-bend)
    (-top-right-cross) -- (-inter-2-bend)
    arc [radius=\n6,start angle={\n3+90},end angle=178.5];
    \path [horizontal]
    let \n1={0.75},
        \n2={\n1/2},
        \n3={\n2/sin(18)} in
    (-in-1-1)             --  (-in-1-2)
    (-in-2-1)             --  (-in-2-2)
    (-in-3-1)             --  (-in-3-2)
    (-inter-1)            --  (-inter-2)
    (-inter-3)            arc [radius=\n3,start angle=108,end angle=72]
    (-out-1)              arc [radius=\n3,start angle=108,end angle=72]
    (-bottom-left-ear-1)  arc [radius=\n3,start angle=144,end angle=180]
    (-bottom-right-ear-1) arc [radius=\n3,start angle=36,end angle=0];
  },
  clause boolean-2-3/.pic={
    \pic {clause vertices};
    \path [vertical] (-top-left-ear-1)      -- (-top-left-ear-2);
    \path [vertical] (-top-right-ear-1)     -- (-top-right-ear-2);
    \path [vertical] (-bottom-left-ear-1)   -- (-bottom-left-ear-2);
    \path [vertical] (-bottom-right-ear-1)  -- (-bottom-right-ear-2);
    \path [vertical] (-in-1-1)              -- (-bottom-left-cross);
    \path [vertical] (-bottom-left-foot-1)  -- (-bottom-right-cross);
    \path [vertical] (-bottom-left-foot-1)  -- (-bottom-left-foot-2);
    \path [vertical] (-in-1-2)              -- (-bottom-left-foot-1);
    \path [vertical] (-bottom-right-cross)  -- (-in-2-1);
    \path [vertical] (-bottom-right-foot-1) -- (-in-2-2);
    \path [vertical] (-bottom-left-cross)   -- (-bottom-right-foot-1);
    \path [vertical] (-bottom-right-foot-1) -- (-bottom-right-foot-2);
    \path [vertical] (-top-left-cross)      -- (-inter-1);
    \path [vertical] (-top-left-cross)      -- (-top-right-foot-1);
    \path [vertical] (-top-right-cross)     -- (-in-3-1);
    \path [vertical] (-top-right-foot-1)    -- (-in-3-2);
    \path [vertical] (-top-right-foot-1)    -- (-top-right-foot-2);
    \path [vertical] (-top-left-foot-1)     -- (-top-left-foot-2);
    \path [vertical] (-top-left-foot-1)     -- (-inter-2);
    \path [vertical] (-inter-1)             -- (-inter-3);
    \path [vertical] (-inter-2)             -- (-inter-4);
    \path [vertical]
    let \n1={0.75},
        \n2={\n1/2},
        \n3={\n2/sin(18)} in
    (-inter-3) arc [radius=\n3,start angle=108,end angle=140];
    \path [vertical]
    let \n1={0.75},
        \n2={\n1/2},
        \n3={\n2/sin(18)} in
    (-inter-4) arc [radius=\n3,start angle=72,end angle=40];
    \path [vertical]
    let \n1={0.75},
        \n2={\n1/2},
        \n3={\n2/sin(18)} in
    (-out-1) arc [radius=\n3,start angle=108,end angle=140];
    \path [vertical]
    let \n1={0.75},
        \n2={\n1/2},
        \n3={\n2/sin(18)} in
    (-out-2) arc [radius=\n3,start angle=72,end angle=40];
    \path [vertical]
    let \n1={0.75},
        \n2={\n1/2},
        \n3={\n2/sin(18)} in
    (-top-left-ear-1) arc [radius=\n3,start angle=144,end angle=176];
    \path [vertical]
    let \n1={0.75},
        \n2={\n1/2},
        \n3={\n2/sin(18)} in
    (-top-right-ear-1) arc [radius=\n3,start angle=36,end angle=4];
    \path [name path=bend-ref] (-top-right-cross) -- (-bend-ref);
    \path [name path=inter-up] (-top-left-foot-1) -- +(90:2);
    \path [name intersections={of=bend-ref and inter-up},vertical,-{>[bend]}]
    let \p1 = (intersection-1),
        \p2 = (-top-left-foot-1),
        \p3 = (-bend-ref),
        \p4 = (-top-right-cross),
        \n1 = {\y4 - \y3},
        \n2 = {\x4 - \x3},
        \n3 = {atan2(\n1,\n2)},
        \n4 = {45-\n3/2},
        \n5 = {\y1 - \y2},
        \n6 = {\n5/tan(\n4)} in
    (-top-left-foot-1) ++(0:\n6) ++({\n3+90}:\n6) coordinate (-inter-2-bend)
    (-top-right-cross) -- (-inter-2-bend)
    arc [radius=\n6,start angle={\n3+90},end angle=178.5];
    \path [horizontal]
    let \n1={0.75},
        \n2={\n1/2},
        \n3={\n2/sin(18)} in
    (-in-1-1)             --  (-in-1-2)
    (-in-2-1)             --  (-in-2-2)
    (-in-3-1)             --  (-in-3-2)
    (-inter-1)            --  (-inter-2)
    (-inter-3)            arc [radius=\n3,start angle=108,end angle=72]
    (-out-1)              arc [radius=\n3,start angle=108,end angle=72]
    (-bottom-left-ear-1)  arc [radius=\n3,start angle=144,end angle=180]
    (-bottom-right-ear-1) arc [radius=\n3,start angle=36,end angle=0];
  },
  clause boolean-1-2-3/.pic={
    \pic {clause vertices};
    \path [vertical] (-top-left-ear-1)      -- (-top-left-ear-2);
    \path [vertical] (-top-right-ear-1)     -- (-top-right-ear-2);
    \path [vertical] (-bottom-left-ear-1)   -- (-bottom-left-ear-2);
    \path [vertical] (-bottom-right-ear-1)  -- (-bottom-right-ear-2);
    \path [vertical] (-bottom-left-cross)   -- (-in-1-1);
    \path [vertical] (-bottom-right-cross)  -- (-bottom-left-foot-1);
    \path [vertical] (-bottom-left-foot-1)  -- (-bottom-left-foot-2);
    \path [vertical] (-bottom-left-foot-1)  -- (-in-1-2);
    \path [vertical] (-bottom-right-cross)  -- (-in-2-1);
    \path [vertical] (-bottom-right-foot-1) -- (-in-2-2);
    \path [vertical] (-bottom-left-cross)   -- (-bottom-right-foot-1);
    \path [vertical] (-bottom-right-foot-1) -- (-bottom-right-foot-2);
    \path [vertical] (-top-left-cross)      -- (-inter-1);
    \path [vertical] (-top-left-cross)      -- (-top-right-foot-1);
    \path [vertical] (-top-right-cross)     -- (-in-3-1);
    \path [vertical] (-top-right-foot-1)    -- (-in-3-2);
    \path [vertical] (-top-right-foot-1)    -- (-top-right-foot-2);
    \path [vertical] (-top-left-foot-1)     -- (-top-left-foot-2);
    \path [vertical] (-top-left-foot-1)     -- (-inter-2);
    \path [vertical] (-inter-1)             -- (-inter-3);
    \path [vertical] (-inter-2)             -- (-inter-4);
    \path [vertical]
    let \n1={0.75},
        \n2={\n1/2},
        \n3={\n2/sin(18)} in
    (-inter-3) arc [radius=\n3,start angle=108,end angle=140];
    \path [vertical]
    let \n1={0.75},
        \n2={\n1/2},
        \n3={\n2/sin(18)} in
    (-inter-4) arc [radius=\n3,start angle=72,end angle=40];
    \path [vertical]
    let \n1={0.75},
        \n2={\n1/2},
        \n3={\n2/sin(18)} in
    (-out-1) arc [radius=\n3,start angle=108,end angle=140];
    \path [vertical]
    let \n1={0.75},
        \n2={\n1/2},
        \n3={\n2/sin(18)} in
    (-out-2) arc [radius=\n3,start angle=72,end angle=40];
    \path [vertical]
    let \n1={0.75},
        \n2={\n1/2},
        \n3={\n2/sin(18)} in
    (-top-left-ear-1) arc [radius=\n3,start angle=144,end angle=176];
    \path [vertical]
    let \n1={0.75},
        \n2={\n1/2},
        \n3={\n2/sin(18)} in
    (-bottom-right-ear-1) arc [radius=\n3,start angle=36,end angle=4];
    \path [vertical]
    let \n1={0.75},
        \n2={\n1/2},
        \n3={\n2/sin(18)} in
    (-bottom-left-ear-1) arc [radius=\n3,start angle=144,end angle=176];
    \path [vertical]
    let \n1={0.75},
        \n2={\n1/2},
        \n3={\n2/sin(18)} in
    (-top-right-ear-1) arc [radius=\n3,start angle=36,end angle=4];
    \path [name path=bend-ref] (-top-right-cross) -- (-bend-ref);
    \path [name path=inter-up] (-top-left-foot-1) -- +(90:2);
    \path [name intersections={of=bend-ref and inter-up},vertical,-{>[bend]}]
    let \p1 = (intersection-1),
        \p2 = (-top-left-foot-1),
        \p3 = (-bend-ref),
        \p4 = (-top-right-cross),
        \n1 = {\y4 - \y3},
        \n2 = {\x4 - \x3},
        \n3 = {atan2(\n1,\n2)},
        \n4 = {45-\n3/2},
        \n5 = {\y1 - \y2},
        \n6 = {\n5/tan(\n4)} in
    (-top-left-foot-1) ++(0:\n6) ++({\n3+90}:\n6) coordinate (-inter-2-bend)
    (-top-right-cross) -- (-inter-2-bend)
    arc [radius=\n6,start angle={\n3+90},end angle=178.5];
    \path [horizontal]
    let \n1={0.75},
        \n2={\n1/2},
        \n3={\n2/sin(18)} in
    (-in-1-1)  --  (-in-1-2)
    (-in-2-1)  --  (-in-2-2)
    (-in-3-1)  --  (-in-3-2)
    (-inter-1) --  (-inter-2)
    (-inter-3) arc [radius=\n3,start angle=108,end angle=72]
    (-out-1)   arc [radius=\n3,start angle=108,end angle=72];
  }
}
\newcommand*{\C}{\mathcal{C}}
\newcommand*{\NN}{\mathbb{N}}
\newcommand*{\dvec}[1]{\vec{\vec{#1}}}
\newcommand*{\dprime}{^{\prime\mkern-2mu\prime}}
\newcommand*{\true}{\textsc{True}}
\newcommand*{\false}{\textsc{False}}
\newcommand*{\pre}[2]{#1[\mathnormal{:}#2]} % Prefix
\newcommand*{\suf}[2]{#1[#2\mathnormal{:}]} % Suffix
\newcommand*{\e}[1]{\{#1\}}                 % Edge
\renewcommand*{\a}[1]{(#1)}                 % Arc
\newcommand*{\p}[1]{\langle #1 \rangle}     % Pair
\newcommand*{\s}[1]{[#1]}                   % Sequence
\newcommand*{\red}[2]{#1 - #2}              % Reduce
\newif\ifnotes
  \newcommand{\leo}[1]{{\color{black} #1}}
  \newcommand{\nz}[1]{{\color{black} #1}}
  \newcommand{\nzdel}[1]{{\color{purple} 
  \sout{#1}}}
  \newcommand{\yuki}[1]{{\color{black} #1}}
  \newcommand\todo[2][]{}
  \newcommand{\leo}[1]{#1}
  \newcommand{\nz}[1]{#1}
  \newcommand{\nzdel}[1]{}
  \newcommand{\yuki}[1]{#1}
\begin{document}

% ------------------------------------------------------------------------------

\title{A Wild Sheep Chase Through an Orchard
% \footnote{This paper received
% funding from the Netherlands Organisation for Scientific Research (NWO) under
% projects OCENW.M.21.306 and OCENW.KLEIN.125, and from the Natural Sciences and Engineering Research Council of Canada under grant RGPIN/05435-2018.}
}
% \author{Jordan Dempsey\and Leo van Iersel\and Mark Jones\and Yukihiro Murakami\and Norbert Zeh}
\author[1]{Jordan Dempsey\footnote{Research of Jordan Dempsey and Norbert Zeh was supported by the Natural Sciences and Engineering Research Council of Canada under grant RGPIN/05435-2018.}}
\author[2]{Leo van Iersel\footnote{Research of Leo van Iersel and Mark Jones was partially funded by Netherlands Organization for Scientific Research (NWO) under projects OCENW.M.21.306 and OCENW.KLEIN.125.}}
\author[2]{Mark Jones}
\author[2]{Yukihiro Murakami\footnote{Corresponding author, email: y.murakami@tudelft.nl}}
\author[1]{Norbert Zeh}
\affil[1]{Faculty of Computer Science, Dalhousie University, Halifax, Canada}
\affil[2]{Delft Institute of Applied Mathematics, Delft University of Technology, The Netherlands}
\date\today
\maketitle

% ------------------------------------------------------------------------------

\begin{abstract}\noindent
  Orchards are a biologically relevant class of phylogenetic networks as they
  can %, roughly speaking, 
  describe treelike evolutionary histories augmented with horizontal transfer
  events. Moreover, the class has attractive mathematical characterizations that
  can be exploited algorithmically. On the other hand, undirected orchard
  networks have hardly been studied yet. Here, we prove that deciding whether
  an \leo{undirected, binary} phylogenetic network is an orchard---or
  equivalently, whether it has an orientation that makes it a rooted
  orchard---is NP-hard.  \leo{For this, we introduce a new characterization of
  undirected orchards which could be useful for proving positive results.}
\end{abstract}

% ------------------------------------------------------------------------------
\section{Introduction}\label{sec:intro}
% ------------------------------------------------------------------------------

\emph{Phylogenetic networks} are graphs describing evolutionary
relationships~\cite{huson2010phylogenetic,bapteste2013networks}. Here we focus
on explicit phylogenetic networks, which describe a hypothesis of the
evolutionary history of a given set~$X$ of taxa (taxonomic units). Such networks
can be either directed or undirected. Directed phylogenetic networks are
directed acyclic graphs with a unique source vertex (the \emph{root}) and with
sink vertices (the \emph{leaves}) corresponding to the elements of~$X$. Vertices
with more than one incoming arc are used to model reticulate evolutionary events
such as hybridization or lateral gene transfer (LGT). Such vertices are called
\emph{reticulations}. A phylogenetic network without reticulation vertices is a
\emph{phylogenetic tree}. While phylogenetic trees are too limited to model all
evolutionary histories, the space of all phylogenetic networks is much too
large, for computational reasons as well as practical reasons such as
interpretability and identifiability. Hence, almost all research on phylogenetic
networks considers restricted classes of networks.

% NZ: I personally prefer not to use "orchard" as an adjective.  Thus, I wrote
% "a network is *an* orchard" instead of "a network is orchard" throughout the
% NP-hardness reduction section.  I've also added "an" before "orchard"
% throughout the introduction in the current draft.  If any of you feels
% strongly about using "orchard" as an adjective, then we can undo those
% changes, but we should then also use this terminology consistently in the
% NP-hardness reduction section.
% Leo: I'm fine with "an orchard"
The class of directed \emph{orchard} networks \nz{(\emph{orchards} for short)}
was introduced in~\cite{erdHos2019class} as a class of directed phylogenetic
networks that can be uniquely reconstructed from so-called ancestral profiles.
Simultaneously,~\cite{janssen2021cherry} defined exactly the same class under
the name \emph{cherry picking networks}.  Precise definitions will be given in
Section~\ref{sec:prelim}, but the rough idea is that a directed network is
\nz{an} \emph{orchard} if it can be reduced to a graph with a single leaf by a
sequence of operations reducing cherries (two leaves with a common parent) and
reticulated cherries (cherries subdivided by a reticulation), see
Figure~\ref{fig:intro}. The motivation of both papers introducing directed
\nz{orchards} was mostly mathematical. \nz{Orchards} are a natural
generalisation of the intensively studied class of tree-child
networks~\cite{cardona2009comparison}. Many proof techniques that can be used
for tree-child networks can also be applied to \nz{orchards} (see
e.g.~\cite{semple2021trinets,landry2022defining,bernardini2023constructing,huber2024orienting}).
Indeed, the idea behind directed \nz{orchards} is based on \emph{cherry picking
sequences}, which were introduced in~\cite{linz2019attaching} to characterize
the tree-child hybridization number of a set of trees.  Later, a static
characterization in terms of so-called ``acyclic cherry covers" was
found~\cite{van2020unifying}, establishing a clear link to tree-based networks,
which are basically networks that can \nz{be} obtained from a rooted tree by
inserting linking arcs~\cite{francis2015phylogenetic,jetten2018nonbinary}. Not
much later, a second characterization of directed \nz{orchards} was presented,
which further strengthened the link to tree-based networks and furthermore
provided a clear biological motivation for \nz{orchards}~\cite{van2022orchard}.
This characterization says that a directed network is \nz{an} orchard precisely
if it can be obtained from (a drawing of) a rooted tree by adding
\emph{horizontal} linking arcs. This means that \nz{orchards} exactly correspond
to evolutionary histories consisting of a species tree with lateral gene
transfer (LGT) events between branches of the tree (i.e., no LGT from the
dead~\cite{szollHosi2013lateral}).

Undirected phylogenetic networks are especially relevant when the available data
is not sufficient to identify the location of the root and/or to identify which
vertices correspond to reticulation events~\cite{morrison2005networks}. The
relationship between undirected (unrooted) and directed (rooted) phylogenetic
networks has recently started to gain
attention~\cite{garvardt2023finding,huber2024orienting,docker2024existence,bulteau2023turning,urata2024orientability}.

Undirected \nz{orchards} were introduced
in~\cite[Section~8.3.3.]{murakami2021phylogenetic} but not yet studied in
detail. Roughly speaking, an undirected network is \nz{an} \emph{orchard} if it
can be reduced to a network with a single vertex by a sequence of operations
reducing cherries (two leaves with a common neighbour) and 2-chains (two leaves
with their neighbours connected by an edge that is not a cut edge), see
Figure~\ref{fig:intro}. This PhD thesis also gave an example showing that
arbitrarily reducing cherries and 2-chains does not work to decide whether an
undirected network is \nz{an} orchard. \nz{The thesis further} conjectured that
an undirected network is \nz{an} orchard if and only if it can be oriented as a
directed network that is \nz{an} orchard \nz{and provided a proof sketch} for
this equivalence. \nz{We provide a formal proof of this claim in
\cref{sec:prelim}. One of the questions posed was whether the following problem
can be solved in polynomial time:}

% ------------------------------------------------------------------------------
\begin{problem}{Orchard Recognition}
  Decide whether an unrooted network is an orchard.
\end{problem}
% ------------------------------------------------------------------------------

For directed networks, this is possible because in that case, it does work to
repeatedly reduce an arbitrary cherry or reticulated
cherry~\cite{erdHos2019class,janssen2021cherry}. As mentioned above, a similar
approach does not work for the undirected case.

The problem of orienting an undirected network to a directed orchard network has
been considered in~\cite{huber2024orienting}. \yuki{By orienting, we mean to subdivide a single edge, add a vertex~$\rho$, add an arc from $\rho$ to the subdivision vertex, and then orient all edges}:
% \todo{check! I think in this paper we do not subdivide.}

% ------------------------------------------------------------------------------
\begin{problem}{Orchard Orientation}
  Decide whether an unrooted network has an orientation that is an orchard.
\end{problem}
% ------------------------------------------------------------------------------

%They showed
In~\cite{huber2024orienting}, it was shown that this problem is fixed-parameter tractable \nz{when
parameterized by} the \emph{level} of the network, which is a measure for
\nz{the tree-likeness of a network} and can be defined as the maximum number of
edges that need to be deleted per biconnected component to turn the network into
a tree.
% NZ: I felt that given the discussion before, where it was "safe" to reduce
% cherries in an arbitrary order, the concept of "safe" it terms of reduction
% rules was confusing here, and it doesn't add much to the subject matter of the
% current paper.  So I suggest to remove this sentence.  If you feel it should
% stay, then it needs to be expanded to say something like "replacing any chain
% of length 3 with a chain of length 3 preserves orientability".  Also note that
% the term "chain" isn't defined here.
% Yuki: Agreed. We can remove this sentence.
% \nzdel{They also showed that it is safe to reduce chains to length~$3$.}
However, %they left
the computational complexity of the problem was left open. The paper did
show NP-hardness for the related problem of deciding whether an undirected network can
be oriented as a directed tree-based network. However, despite the close link
between \nz{orchards} and tree-based networks, there does not seem to be an easy
reduction between the two problems, or an easy way to adapt their reduction to
\nz{orchards}.

In this paper, we prove that both \textsc{Orchard Recognition} and
\textsc{Orchard Orientation} are NP-hard, solving the open problem
from~\cite[Section~8.3.3.]{murakami2021phylogenetic}.  \nz{This result is
somewhat surprising since \textsc{Orchard Recognition} \emph{can} be solved in
polynomial time for directed networks.}  Our proof focuses on showing that
\textsc{Orchard Orientation} is NP-hard.  By Murakami's conjecture, which we
prove in \cref{sec:prelim}, this immediately implies that that \textsc{Orchard
Recognition} is also NP-hard.  Our NP-hardness proof for \textsc{Orchard
Orientation} utilizes a subnetwork of the example
from~\cite[Section~8.3.3.]{murakami2021phylogenetic} which somewhat resembles a
sheep. We use its property that a network containing this sheep can only be
oriented as an orchard 
% \nzdel{network} 
if the root is inside the sheep.  Hence,
a sequence of cherry reductions and 2-chain reductions must end in the sheep;
\nz{testing whether a network is an orchard becomes a chase through the network
that ends at the sheep if the network is an orchard, or fails to catch the sheep
if it is not.  Hence, the title of our paper.  To prove the NP-hardness of
deciding whether an unrooted network is an orchard, we provide a reduction from
\textsc{3-SAT} such that a given formula in 3-CNF is satisfiable if and only if the
corresponding network is an orchard.}

% ------------------------------------------------------------------------------
\section{Preliminaries}
% ------------------------------------------------------------------------------

\label{sec:prelim}

%\begin{itemize}
%	\item Rooting problems are important. Tree-based version known to be 
%	NP-hard. The orchard orientation problem is known to be FPT in level. Also, 
%	it's known to be \emph{3-chain reducible} \cite{huber2024orienting}.
%	\item {\sc Orchard Membership} is equivalent to {\sc Orchard Orientation}, 
%	and they're both NP-hard problems.
%\end{itemize}

% ------------------------------------------------------------------------------
\paragraph{Sets of integers and sequences.}
% ------------------------------------------------------------------------------

\nz{Throughout this paper, we use $[n]$ to denote the set $\{1, \ldots, n\}$,
where $n$ is a non-negative integer.  If $n = 0$, then $[n] = \emptyset$.  For a
sequence $S$ of length $n$, we use $S_i$ to denote the $i$th entry in $S$, that
is $S = \s{S_1, \ldots, S_n}$.  We use $\pre{S}{i}$ to refer to the subsequence
containing the first $i$ elements in $S$, $\pre{S}{i} = \s{S_1, \ldots, S_i}$.
We use $\suf{S}{i}$ to denote the subsequence containing all but the first $i -
1$ elements in $S$, $\suf{S}{i} = \s{S_i, \ldots, S_n}$.  $\pre{S}{0}$ denotes
the empty sequence $\s{}$, as does $\suf{S}{i}$, for $i > n$.}

% ------------------------------------------------------------------------------
\paragraph{Unrooted phylogenetic networks and trees.}
% ------------------------------------------------------------------------------

\leo{A \emph{leaf} of \nz{an undirected} graph is a vertex of degree at
most~$1$.} % This is needed because otherwise the singleton graph is not a
% network and its vertex not a leaf
An \emph{unrooted binary phylogenetic network} on~$X$ is a simple,
\leo{connected}, \nz{undirected} graph in which leaves are labelled bijectively
with the elements of~$X$ and all other vertices are of degree-$3$ \nz{and
unlabelled.}  We refer to these \nz{simply} as \emph{networks}, and sometimes as
\emph{undirected networks}. An \emph{\nz{unrooted} tree} is an \nz{unrooted}
network without cycles.

% ------------------------------------------------------------------------------
\paragraph{Rooted phylogenetic networks.}
% ------------------------------------------------------------------------------

A \emph{directed binary phylogenetic network} is a directed acyclic graph with a
single source (the \emph{root}) of out-degree~1 and whose non-root vertices have
in-degree~1 and out-degree~0 (leaves), in-degree~1 and out-degree~2 (tree
\nz{vertices}) or in-degree~2 and out-degree~1 (reticulations). \nz{We refer to
such networks simply as \emph{directed networks}.  Throughout this paper, we
reserve the term \emph{edge} to refer to an edge in an undirected network, and
the term \emph{arc} to refer to an edge in a directed network.  We use the
notation $\e{u, v}$ to refer to an edge with endpoints $u$ and $v$, and $\a{u,
v}$ to refer to an arc from $u$ to $v$.}

% ------------------------------------------------------------------------------
\paragraph{Suppressing vertices and subdividing edges.}
% ------------------------------------------------------------------------------

\nz{If $v$ is a degree-$2$ vertex in an undirected graph $G$, with neighbours
$u$ and $w$, then to \emph{suppress} $v$ is to remove $v$ and its incident edges
from $G$ and to reconnect $u$ and $w$ with a new edge $\e{u, w}$ unless this
edge already exists in $G$.  The inverse operation is to \emph{subdivide} an
edge $\e{u, w}$ with a new vertex $v$, which removes the edge $\e{u, w}$ from
$G$ and adds the vertex $v$ and the two edges $\e{u, v}$ and $\e{v, w}$ to $G$.

These operations can also be defined for directed graphs, where subdividing an
arc $\a{u, w}$ with a new vertex $v$ removes $\a{u, w}$ and adds the vertex $v$
and two arcs $\a{u, v}$ and $\a{v, w}$ to $G$.  A vertex $v$ can be suppressed
in a directed graph if it has in-degree $1$ and out-degree $1$.  In this case,
suppressing $v$ removes $v$, its in-arc $\a{u, v}$, and its out-arc $\a{v, w}$
from $G$, and adds a new arc $\a{u, w}$ to $G$, again provided that this arc
does not already exist in $G$.}

% ------------------------------------------------------------------------------
\paragraph{Reducing unrooted networks.}
% ------------------------------------------------------------------------------

Let~$x$ and~$y$ be two leaves in an \nz{unrooted} network~$N$.  \nz{The ordered
pair $\p{x, y}$ is a \emph{cherry}} if~$x$ and~$y$ share a common neighbour.
%, or if~$x$ and~$y$ are adjacent.
%Observe that if~$x$ and~$y$ are adjacent, then these are the only 
%vertices of the network, with~$xy$ the only edge in the network.
\leo{A \emph{cut edge} is an edge whose removal disconnects the network.}
\nz{The pair $\p{x, y}$ is} a \emph{$2$-chain} if the neighbours of~$x$ and~$y$
are connected by an edge that is not a cut edge of~$N$. \nz{A~\emph{reducible
pair} in a network is a pair of leaves % since otherwise each edge is a reducible pair
$\p{x, y}$ that is either  \yuki{an edge $\e{x,y}$}, a cherry, or a $2$-chain.}
\nz{For such a pair}, \emph{reducing $N$ by $\p{x, y}$} is the action of
\begin{itemize}[noitemsep]
  \item Deleting the leaf $x$ from $N$ if \yuki{$\e{x,y}$ is an edge or if} $\p{x, y}$ is a cherry of $N$ or
  \item Deleting the edge between the neighbours of $x$ and $y$ if $\p{x, y}$ is
  a $2$-chain of $N$
\end{itemize}
and suppressing any degree-$2$ vertices to ensure the result is once again a
network.
\yuki{Note that an edge $\e{x,y}$ is a reducible pair of a network if and only if the network contains exactly one edge between the two leaves~$x$ and~$y$.
This reduction can take place only when the network contains exactly one edge.}

We denote the network produced by reducing $N$ by $\p{x, y}$ as $\red{N}{\p{x,
y}}$.  \leo{If~$\p{x,y}$ is not a reducible pair of~$N$ then $\red{N}{\p{x,y}} =
N$.} \nz{Note that $\red{N}{\p{x,y}} \ne \red{N}{\p{y,x}}$ if $\p{x, y}$ is a
cherry, but $\red{N}{\p{x,y}} = \red{N}{\p{y,x}}$ for a 2-chain $\p{x, y}$.}

\nz{Reducing a network $N$ by a \emph{sequence} of ordered pairs $S$ produces
the network $\red{N}{S}$ defined as $\red{N}{S} = N$ if $S$ is empty, and
as $\red{N}{S} = \red{(\red{N}{S_1})}{\suf{S}{2}}$ if $S$ is non-empty.}

% ------------------------------------------------------------------------------
\paragraph{Reducing rooted networks.}
% ------------------------------------------------------------------------------

A \emph{cherry} in a directed network is an ordered pair of leaves $\p{x,y}$
with a common parent.  A \emph{reticulated cherry} is a pair of leaves $\p{x,y}$
such that the parent $p$ of $x$ is a reticulation and~$p$ and~$y$ have a common
parent.  \nz{A \emph{reducible pair} in a rooted network is a pair $\p{x, y}$
that is either a cherry or a reticulated cherry.  \nz{For such a pair,}
\emph{reducing $N$ by $\p{x, y}$} is the action of
\begin{itemize}[noitemsep]
  \item Deleting the leaf $x$ from $N$ if $\p{x, y}$ is a cherry of $N$ or
  \item Deleting the arc from $y$'s parent to $x$'s parent if $\p{x, y}$ is a
reticulated cherry of $N$
\end{itemize}
and suppressing any degree-$2$ vertices to ensure the result is once again a
rooted network.

As for unrooted networks, $\red{N}{\p{x, y}}$ denotes the network produced by
reducing $N$ by $\p{x, y}$, and $\red{N}{\p{x, y}} = N$ if $\p{x, y}$ is not a
reducible pair of $N$.  The network $\red{N}{S}$ obtained by reducing $N$ by a
sequence of ordered pairs $S$ is also defined as in the unrooted case.}

% ------------------------------------------------------------------------------
\paragraph{Orchards.}
% ------------------------------------------------------------------------------

% ------------------------------------------------------------------------------
\begin{figure}
  \hspace{\stretch{1}}%
  \begin{tikzpicture}
    \path     node [vertex] (1) {}
    +(90:1)   node [vertex] (r) {}
    ++(240:1) node [vertex] (2) {}
    ++(240:2) node [vertex] (3) {}
    +(0:2)    node [vertex] (4) {}
    ++(300:1) node [vertex] (5) {}
    +(270:1)  node [vertex] (b) {}
    ++(0:1)   node [vertex] (6) {}
    +(270:1)  node [vertex] (c) {}
    ++(0:1)   node [vertex] (7) {}
    +(270:1)  node [vertex] (d) {}
    +(0:1)    node [vertex] (8) {};
    \begin{scope}[overlay]
      \path [name path=down] (3) -- +(240:3);
      \path [name path=left] (b) -- +(180:2);
      \path [name intersections={of=down and left}] (intersection-1) coordinate (a);
      \path [name path=down] (8) -- +(300:2);
      \path [name path=right] (d) -- +(0:2);
      \path [name intersections={of=down and right}] (intersection-1) coordinate (e);
    \end{scope}
    \path
    (a) node [vertex] (a) {}
    (e) node [vertex] (e) {};
    \path [vertical] (r) -- (1);
    \path [vertical] (1) -- (2);
    \path [vertical] (1) -- (8);
    \path [vertical] (2) -- (3);
    \path [vertical] (2) -- (4);
    \path [vertical] (3) -- (a);
    \path [vertical] (3) -- (5);
    \path [vertical] (4) -- (6);
    \path [vertical] (4) -- (7);
    \path [vertical] (5) -- (b);
    \path [vertical] (5) -- (6);
    \path [vertical] (6) -- (c);
    \path [vertical] (7) -- (d);
    \path [vertical] (7) -- (8);
    \path [vertical] (8) -- (e);
    \path
    (a) node [anchor=north,yshift=-1pt] {\vphantom{$bd$}$a$}
    (b) node [anchor=north,yshift=-1pt] {\vphantom{$bd$}$b$}
    (c) node [anchor=north,yshift=-1pt] {\vphantom{$bd$}$c$}
    (d) node [anchor=north,yshift=-1pt] {\vphantom{$bd$}$d$}
    (e) node [anchor=north,yshift=-1pt] {\vphantom{$bd$}$e$};
    \path (current bounding box.south) node [anchor=north] {$N_d$};
  \end{tikzpicture}%
  \hspace{\stretch{2}}%
  \begin{tikzpicture}
    \path       node [vertex] (b) {}
    ++(0:1)     node [vertex] (1) {}
    ++(54:1)    node [vertex] (2) {}
    +(108:1)    node [vertex] (a) {}
    ++(342:1)   node [vertex] (3) {}
    +(0:1)      node [vertex] (6) {}
    ++(270:1)   node [vertex] (4) {}
    ++(198:1)   node [vertex] (5) {}
    +(252:1)    node [vertex] (c) {}
    (6) +(45:1) node [vertex] (e) {}
    ++(270:1)   node [vertex] (7) {}
    +(315:1)    node [vertex] (d) {};
    \path [edge]
    (1) -- (2) -- (3) -- (4) -- (5) -- (1) -- (b)
    (2) -- (a)
    (3) -- (6) -- (7) -- (4)
    (5) -- (c)
    (6) -- (e)
    (7) -- (d);
    \path
    (a) node [anchor=south,,xshift=-2.5pt,yshift=1pt] {$a$}
    (b) node [anchor=east,xshift=-1pt]                {$b$}
    (c) node [anchor=north,xshift=-2.5pt,yshift=-1pt] {$c$}
    (d) node [anchor=north west]                      {$d$}
    (e) node [anchor=south west]                      {$e$};
    \path (current bounding box.south) node [anchor=north,yshift=-18pt] {$N_u$};
  \end{tikzpicture}%
  \hspace{\stretch{2}}%
  \begin{tikzpicture}
    \path     node [vertex] (1) {}
    +(90:1)   node [vertex] (r) {}
    ++(240:3) ++(300:1) +(270:1) coordinate (b);
    \begin{scope}[overlay]
      \path [name path=down] (1) -- +(240:6);
      \path [name path=left] (b) -- +(180:2);
      \path [name intersections={of=down and left}] (intersection-1) coordinate (a);
      \path [name path=down] (8) -- +(300:6);
      \path [name path=right] (a) -- +(0:6);
      \path [name intersections={of=down and right}] (intersection-1) coordinate (e);
    \end{scope}
    \path
    (a)                              node [vertex] (a) {}
    (e)                              node [vertex] (e) {}
    (barycentric cs:a=0.75,e=0.25)   node [vertex] (b) {}
    (barycentric cs:a=0.5,e=0.5)     node [vertex] (c) {}
    (barycentric cs:a=0.25,e=0.75)   node [vertex] (d) {}
    (barycentric cs:1=0.75,a=0.25)   node [vertex] (2) {}
    (barycentric cs:1=0.25,a=0.75)   node [vertex] (3) {}
    (barycentric cs:2=0.333,d=0.667) node [vertex] (4) {};
    \path [vertical] (r) -- (1);
    \path [vertical] (1) -- (e);
    \path [vertical] (1) -- (2);
    \path [vertical] (2) -- (3);
    \path [vertical] (2) -- (4);
    \path [vertical] (3) -- (a);
    \path [vertical] (3) -- (b);
    \path [vertical] (4) -- (c);
    \path [vertical] (4) -- (d);
    \path
    (a) node [anchor=north,yshift=-1pt] {\vphantom{$bd$}$a$}
    (b) node [anchor=north,yshift=-1pt] {\vphantom{$bd$}$b$}
    (c) node [anchor=north,yshift=-1pt] {\vphantom{$bd$}$c$}
    (d) node [anchor=north,yshift=-1pt] {\vphantom{$bd$}$d$}
    (e) node [anchor=north,yshift=-1pt] {\vphantom{$bd$}$e$};
    \path (current bounding box.south) node [anchor=north] {\vphantom{$N_d$}$T$};
  \end{tikzpicture}%
  \hspace{\stretch{1}}
  \caption{A directed orchard~$N_d$, an undirected
  orchard~$N_u$, and a rooted tree~$T$. $N_d$ is \nz{an} orchard since it can be obtained from~$T$ by
  inserting horizontal arcs. It then follows that $N_u$ is \nz{also an} orchard since~$N_d$ can be obtained from~$N_u$
  by inserting a root and orienting all edges.  Alternatively, it can be seen
  that~$N_d$ is \nz{an} orchard since we can, for example, reduce reticulated
  cherries on~$\p{c, b}$ and~$\p{e, d}$ \nz{to} obtain \nz{the} tree~$T$, which
  can \nz{then} be fully reduced by reducing cherries.  Similarly,~$N_u$ is
  \nz{an} orchard since we can reduce 2-chains on~$\p{b,c}$ and~$\p{d,e}$ and
  subsequently reduce cherries in the resulting tree.}
  \label{fig:intro}
\end{figure}
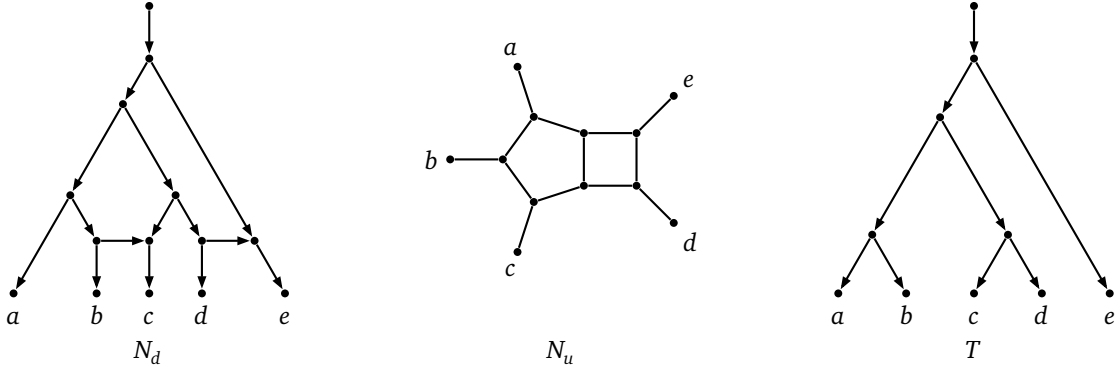
% ------------------------------------------------------------------------------

\nz{A sequence of ordered pairs $S$ is said to \emph{reduce} an unrooted network
$N$ if $\red{N}{S}$ has a single \leo{vertex}.  If $N$ is a rooted network, then
$S$ reduces $N$ if $\red{N}{S}$ has a single arc, which connects its root to its
only leaf.} 	\leo{A \nz{rooted or unrooted} network~$N$ is \nz{an}
\emph{orchard} if there exists a sequence of ordered pairs~$S$ that reduces it.}
\nz{A sequence $S$ that reduces an orchard $N$ is said to be \emph{minimal} if
$\red{N}{\pre{S}{i}} \ne \red{N}{\pre{S}{i-1}}$, for all $i \in [|S|]$.}

For example, the network~$N_u$ \nz{in} \cref{fig:intro} is \nz{an} orchard
since it can be reduced by the sequence $\s{\p{b, c}, \p{d, e}, \p{a, b},\break
\p{b, e}, \p{d, e}, \p{c, e}}$ or, alternatively, by the sequence 
% $\s{\p{e, d},\p{e, d}, \p{b, a}, \p{a, d}, \p{b, c}, \p{c, d}}$.
$\s{\p{e, d},\p{e, d}, \p{a, e}, \p{b, e}, \p{c, d}, \p{e, d}}$.
\nz{Neither of these two
sequences reduces the rooted network $N_d$ in \cref{fig:intro}.  A sequence that
reduces both $N_d$ and $N_u$ is $\s{\p{c, b}, \p{e, d}, \p{a, b}, \break \p{d,
c}, \p{c, b}, \p{b, e}}$.}

% ------------------------------------------------------------------------------
\paragraph{Orientations of undirected networks.}
% ------------------------------------------------------------------------------

Given an undirected network~$N_u$ on~$X$, a directed network~$N_d$ on~$X$
is an \emph{orientation} of~$N_u$ if it can be obtained from~$N_u$ as follows
\begin{itemize}[noitemsep]
  \item If~$N_u$ has at least one edge, subdivide an edge of~$N_u$ with a new
  vertex~$v$, add another new vertex~$\rho$ (the root), add an arc~$\a{\rho,
  v}$, and replace each edge with an arc with the same endpoints.
  \item Otherwise, adding a new vertex~$\rho$ (the root) and add an
  arc~$\a{\rho, v}$, with~$v$ the only vertex of~$N_u$.
\end{itemize}
An \emph{orchard orientation} of an undirected network~$N_u$ is an
orientation~$N_d$ that is \nz{a (rooted)} orchard.

% ------------------------------------------------------------------------------
\paragraph{Equivalence between orchard recognition and orchard orientation.}
% ------------------------------------------------------------------------------

\nz{The following theorem formalizes a proof sketch by Murakami
\cite{murakami2021phylogenetic}.}% that an unrooted network is an orchard if and
%only if it has an orchard orientation.}
% LEO: deleted because it is repeated below in exactly the same words.

% ------------------------------------------------------------------------------
\begin{thm}
  \label{thm:orchard-orientation}
  An undirected network is \nz{an} orchard if and only if it has an orchard
  orientation.
\end{thm}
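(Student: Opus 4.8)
The plan is to prove the two directions separately, with the substantive work lying in showing that an orchard (undirected) network admits an orchard orientation. For both directions, the central idea is to translate between a reducing sequence for the undirected network and a reducing sequence for a rooted orientation, keeping track of \emph{where the root sits} in the orientation as the reductions proceed.

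\paragraph{Orchard orientation $\Rightarrow$ undirected orchard.} This is the easier direction. Suppose $N_d$ is an orchard orientation of $N_u$, reduced by a sequence $S$. I would argue by induction on $|S|$ that $N_u$ is an orchard, with the key observation being a local correspondence between reductions: if $\p{x,y}$ is a cherry of $N_d$, then $\p{x,y}$ is a cherry of $N_u$; if $\p{x,y}$ is a reticulated cherry of $N_d$ (with $x$'s parent a reticulation $p$, and $p,y$ having a common parent $q$), then the edge $\{p,q\}$ in $N_u$ is not a cut edge (because removing it leaves $x$ still connected to the rest through $p$'s other in-arc), so $\p{x,y}$ is a $2$-chain of $N_u$. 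The one wrinkle is the root: a reduction in $N_d$ might involve the arc $\a{\rho,v}$ incident to the root, or $\rho$ itself might become a degree-$2$ vertex that gets suppressed, changing which undirected network the current rooted network orients. I would handle this by maintaining the invariant that at each step the current rooted network is an orientation of the current undirected network (possibly with the root sitting on a different edge), checking the handful of cases where $\rho$ is near the reduced pair. When $N_d$ is reduced to a single arc $\a{\rho,\ell}$, the corresponding $N_u$ is a single vertex, so $N_u$ is an orchard.

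\paragraph{Undirected orchard $\Rightarrow$ orchard orientation.} This is the main direction and the one I expect to be delicate. Suppose $S$ reduces $N_u$; I want to produce an orientation $N_d$ of $N_u$ that an adapted sequence reduces. The natural approach is to run the reduction on $N_u$ down to a single vertex, then \emph{reverse} the process, reconstructing a rooted network step by step: at the end we have a single vertex, which we orient trivially by attaching a root; then we undo each reduction (un-suppressing degree-$2$ vertices, re-adding a deleted leaf or a deleted $2$-chain edge) while simultaneously choosing an orientation for the newly (re)introduced edges so that the resulting rooted network remains an orchard and remains an orientation of the corresponding undirected network. Concretely, reversing a cherry reduction means adding back a leaf $x$ as a sibling of $y$ — orient the new arcs away from the common parent; reversing a $2$-chain reduction means re-adding an edge between the neighbours of $x$ and $y$ — this is where a choice is forced, and the edge should be oriented so that one endpoint becomes a reticulation, turning the restored structure into a reticulated cherry. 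The invariant to carry backwards is: the current rooted network is an orchard, it is an orientation of the current undirected network, and the pair just restored is a reducible pair (cherry or reticulated cherry) of it. The base case (single vertex) is immediate, and at the top we recover an orientation of $N_u$ that is reduced by the reversed-and-translated sequence, hence an orchard.

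\paragraph{Main obstacle.} The hard part will be the bookkeeping around the root in the backward construction for the second direction: when we reverse a $2$-chain reduction, the edge we re-insert may need to be oriented in a way that is incompatible with the current placement of $\rho$ (e.g.\ it would create a second source, or it would force an in-degree-$3$ vertex), so we may need to \emph{move} the root to a new edge as we go — and we must check that such a move is always possible and preserves the orchard property. I would isolate this as a lemma of the form ``if $N$ is a rooted orchard and $\p{x,y}$ is such that reversing a reduction on it yields a graph that orients a given undirected reducible pair, then the root can be placed so that the un-reduced graph is still an orchard.'' Establishing that lemma — essentially showing the orchard property is robust under re-rooting along the relevant edges — is the technical core, and it is plausible the authors prove it via the horizontal-arc / tree-based characterization of directed orchards mentioned in the introduction, since that characterization makes re-rooting arguments cleaner than the reduction definition does.
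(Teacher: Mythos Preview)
Your overall strategy matches the paper's: both directions proceed by induction on the length of a minimal reducing sequence, and for the direction ``undirected orchard $\Rightarrow$ orchard orientation'' the paper does exactly your backward construction---reduce $N_u$ to a point, orient trivially, then undo each reduction while extending the orientation (subdivide the parent arc of $y$ and attach $x$ for a cherry; subdivide the parent arcs of both $x$ and $y$ and add a single arc between the new vertices for a $2$-chain, making one of them a reticulation).

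Where you diverge from the paper is in your assessment of the difficulty. The ``main obstacle'' you anticipate---that reversing a $2$-chain reduction might force an orientation incompatible with the current root placement, requiring a re-rooting lemma or an appeal to the horizontal-arc characterization---does not in fact arise. The point is that $x$ and $y$ are \emph{leaves} of $N_d'$, so each has a unique parent arc; subdividing those arcs with fresh vertices $p,q$ and adding the arc $\a{q,p}$ always yields a valid binary rooted network (with $p$ a reticulation, $q$ a tree vertex) in which $\p{x,y}$ is a reticulated cherry, and $\red{N_d}{\p{x,y}} = N_d'$ on the nose. No in-degree-$3$ vertex or second source can appear, and the root never needs to move. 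So the induction goes through without any auxiliary re-rooting lemma; the paper's proof of this direction is about six lines.

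Similarly, in the other direction your worry about the root being ``near the reduced pair'' is overstated: once $|S|>1$ the rooted network has more than three arcs, a cherry of $N_d$ is automatically a cherry of $N_u$, a reticulated cherry is automatically a $2$-chain (your cut-edge argument is correct and is the only thing to check), and $\red{N_d}{S_1}$ is directly an orientation of $\red{N_u}{S_1}$. In short, your plan is right, but the proof is shorter and more mechanical than you expect; you can drop the re-rooting lemma entirely.
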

% ------------------------------------------------------------------------------

\begin{proof}
  \nz{First assume that $N_u$ is an orchard and that $S$ is a minimal sequence
  that reduces it.  We use induction on $|S|$ to prove that $N_u$ has an orchard
  orientation $N_d$, which is also reduced by $S$ and thus is an orchard
  orientation of $N_u$.  If $|S| = 0$, then $N_u$ has a single leaf, so $S$ also
  reduces its unique orientation $N_d$.  If $|S| = 1$, then $S_1 = \p{x, y}$,
  where $x$ and $y$ are the only two leaves of $N_u$, which are connected by an
  edge.  In the unique orientation $N_d$ of $N_u$, $x$ and $y$ are the only two
  leaves and form a cherry.  Thus, $S$ reduces $N_d$.  Finally, if $|S| > 1$,
  then $S_1 = \p{x, y}$ is a cherry or $2$-path in $N_u$.  By the induction
  hypothesis, $\red{N_u}{S_1}$ has an orientation $N_d'$ that is reduced by
  $\suf{S}{2}$.  If $\p{x, y}$ is a cherry of $N_u$, then we obtain an
  orientation $N_d$ of $N_u$ by subdividing the parent arc of $y$ in $N_d'$ with
  a new vertex $p$ and adding $x$ and the arc $\a{p, x}$ to $N_d'$.  Then $N_d'
  = \red{N_d}{S_1}$.  Thus, since $\suf{S}{2}$ reduces $N_d'$, $S$ reduces
  $N_d$.  If $\p{x, y}$ is a $2$-path of $N_u$, then we obtain an orientation
  $N_d$ of $N_u$ by subdividing the parent arcs of $x$ and $y$ in $N_d'$ with
  two new vertices $p$ and $q$, respectively, and adding the arc $\a{q, p}$ to
  $N_d'$.  This makes $S_1 = \p{x, y}$ a reticulated cherry in~$N_d$, so
  $\red{N_d}{S_1} = N_d'$.  Therefore, since $\suf{S}{2}$ reduces $N_d'$, $S$
  reduces $N_d$.

  Now assume that $N_u$ has an orchard orientation $N_d$, and that $S$ is a
  minimal sequence that reduces $N_d$.  We use induction on $|S|$ to prove that
  $S$ also reduces $N_u$, so $N_u$ is an orchard.  If $|S| = 0$, then $N_d$ has
  a single leaf, so $N_u$ also has a single leaf, and $S$ reduces it.  If $|S| =
  1$, then $N_d$ has two leaves $x$ and $y$, which form a cherry $S_1 = \p{x,
  y}$.  In this case, $N_u$ also has $x$ and $y$ as its only leaves, which are
  connected by an edge $\e{x, y}$.  Thus, once again, $S$ reduces $N_u$.  So
  assume that $|S| > 1$.  Then $S_1 = \p{x, y}$ is a cherry or reticulated
  cherry of $N_d$.  Since $S_1$ does not reduce $N_d$, $N_d$ has more than $3$
  arcs.  Thus, since $N_d$ is an orientation of $N_u$, $N_u$ has more than one
  edge.  Therefore, if $\p{x, y}$ is a cherry of $N_d$, it is also a cherry of
  $N_u$.  If $\p{x, y}$ is a reticulated cherry of $N_d$, then it is a $2$-path
  of $N_u$.  In either case, $\red{N_d}{S_1}$ is an orientation of
  $\red{N_u}{S_1}$.  Since $\suf{S}{2}$ reduces $\red{N_d}{S_1}$, it also
  reduces $\red{N_u}{S_1}$, by the induction hypothesis.  Thus, $S$ reduces
  $N_u$.}
  % \todo{NZ: Didn't exactly shorten it :) but made the proof that $N_u$
  % and $N_d$ are orchards more explicit, by explicitly proving that $S$ reduces
  % one if it reduces the other.
  % YM: Checked for correctness! I'm happy with this proof.}
\end{proof}

% ------------------------------------------------------------------------------
\section{\nz{Partial Orientations}}
% ------------------------------------------------------------------------------

\nz{To prove the hardness of \textsc{Orchard Orientation}, it is helpful to have
an easily checkable characterization of rooted orchards.  HGT-consistent
labellings \cite{van2022orchard} provide such a characterization.  Building on
HGT-consistent labellings, we define \emph{HGT-consistent partial orientations}
(HPOs) of unrooted networks as an essentially equivalent concept in this
section.  We prove that an unrooted network has an HPO if and only if it has an
orientation with an HGT-consistent labelling (i.e., an orchard orientation).  We
find HPOs more convenient to work with, for two reasons:  First, an HPO has the
same vertex set as the given undirected network $N$, and there exists a
bijection between the edges of the undirected network and the corresponding
edges or arcs of the partial orientation.  In contrast, the edge of $N$
subdivided to produce an orientation is not an arc of the orientation, and the
orientation has a few additional vertices and arcs.  Second, and more
importantly, an HGT-consistent labelling is an assignment of integer labels to
the vertices of an orientation that needs to satisfy a number of properties in
the neighbourhood of every vertex.  It is easy to define such a labelling of the
vertices in any subgraph of an orientation, but it takes a non-trivial amount of
care to ensure that the labellings defined for different subgraphs all ``fit
together'' to produce an HGT-consistent labelling (or to argue that this cannot
be done for a given network because it is not an orchard).  In contrast, HPOs are
defined only in terms of conditions on the in-degrees and out-degrees of
vertices, and it is rather easy to verify whether partial orientations of
subgraphs that satisfy these conditions fit together to produce a partial
orientation of the whole network that satisfies the same conditions and, thus,
is an HPO.}

An \emph{HGT-consistent labelling} \cite{van2022orchard} of a \nz{rooted binary}
network $N = (V, E)$ is a labelling $\tau : V \rightarrow \NN$ such that
\begin{enumerate}[label=(R\arabic{*}),widest=4,leftmargin=*,noitemsep]
  \item $\tau(u) \le \tau(v)$ for every \nz{arc} $\a{u, v} \in
  E$,\label[prop]{prop:respect-arcs}
  \item Every internal \nz{vertex} $u$ has \leo{at least one} child $v$ with $\tau(u) <
  \tau(v)$,\label[prop]{prop:vertical-out-arc}
  \item If $v$ is a leaf or tree \nz{vertex} with parent $u$, then $\tau(u) < \tau(v)$,
  and\label[prop]{prop:vertical-in-arc}
  \item If $w$ is a reticulation with parents $u$ and $v$, then exactly one of
  $\tau(u) = \tau(w)$ and $\tau(v) = \tau(w)$
  holds.\label[prop]{prop:horizontal-reticulation-arc}
\end{enumerate}
The labelling $\tau$ can be seen as organizing the \nz{vertices} of $N$ into
levels so that all \nz{arcs} are either directed downwards (if numbering levels
from top to bottom) or horizontally.  Accordingly, we call an \nz{arc} $\a{u,
v}$ with $\tau(u) < \tau(v)$ a \emph{vertical \nz{arc}}, and an \nz{arc} $\a{u,
v}$ with $\tau(u) = \tau(v)$ a \emph{horizontal \nz{arc}}.

% ------------------------------------------------------------------------------
\begin{thm}[Van Iersel et al.\ {\cite[Theorem~1]{van2022orchard}}]
  \label{thm:drawing}
  A \nz{rooted} network is \nz{an} orchard if and only if it admits an
  HGT-consistent labelling.
\end{thm}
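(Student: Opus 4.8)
The plan is to prove the two implications separately, each by induction, exploiting the fact that the reduction of a rooted network and the existence of an HGT-consistent labelling interact cleanly.

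\textbf{Orchard $\Rightarrow$ HGT-consistent labelling.} Let $S$ be a minimal sequence that reduces $N$; I would induct on $|S|$. If $|S| = 0$, then $N$ is a single arc $\a{\rho, \ell}$ and $\tau(\rho) = 0$, $\tau(\ell) = 1$ works. Otherwise $\red{N}{S_1}$ is reduced by $\suf{S}{2}$, hence by the induction hypothesis admits an HGT-consistent labelling $\tau'$, and I would recover $N$ from $\red{N}{S_1}$ and extend $\tau'$. If $S_1 = \p{x, y}$ is a cherry, $N$ is obtained from $\red{N}{S_1}$ by subdividing the in-arc of the leaf $y$ with a new tree vertex $p$ and re-attaching $x$ via $\a{p, x}$; if $S_1$ is a reticulated cherry, $N$ is obtained by subdividing two arcs with a new tree vertex $p$ and a new reticulation $r$ and adding $\a{p, r}$. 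In either case I would first replace $\tau'$ by $2\tau'$ to create gaps, then, using that $x$ and $y$ are leaves and hence have no descendants, raise $\tau(x)$ and $\tau(y)$ freely until there is an integer strictly between the labels of the relevant parents and those of $x$ and $y$, and finally place $p$ in that gap and $r$ on the level of its intended horizontal parent $p$ (strictly below its vertical parent). Verifying \cref{prop:respect-arcs,prop:vertical-out-arc,prop:vertical-in-arc,prop:horizontal-reticulation-arc} at the few affected vertices is then routine, the strict inequalities of \cref{prop:vertical-in-arc} being exactly what the construction keeps true.

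\textbf{HGT-consistent labelling $\Rightarrow$ orchard.} I would induct on the number of arcs of $N$ via the claim that \emph{every network with at least two leaves that admits an HGT-consistent labelling has a cherry or a reticulated cherry}. Granting the claim, for such a pair $\p{x, y}$ the network $\red{N}{\p{x, y}}$ still admits an HGT-consistent labelling: one restricts $\tau$, and it is straightforward to check that the one or two arcs created by suppression point downwards and keep the new parent relations consistent, the slack again coming from the strictness in \cref{prop:vertical-in-arc}. By induction $\red{N}{\p{x, y}}$ is reduced by some sequence, and prepending $\p{x, y}$ reduces $N$. To prove the claim, assume for contradiction that $N$ has neither a cherry nor a reticulated cherry and fix an HGT-consistent labelling $\tau$. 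Since a rooted binary tree with at least two leaves has a cherry, $N$ has a reticulation; let $w$ have the maximum label among reticulations, and let $v$ and $u$ be its vertical and horizontal parent, so $\tau(v) < \tau(w) = \tau(u)$. The sub-steps are: (i) the child of $w$ is not a reticulation --- otherwise, by \cref{prop:respect-arcs} every vertex along the chain of children has label $\tau(w)$ and is again a reticulation (a leaf or tree vertex there would violate \cref{prop:vertical-in-arc}), producing an infinite directed path in a finite acyclic graph; (ii) hence the child $c$ of $w$ is a leaf or a tree vertex, but if $c$ were a tree vertex then all of its descendants would have label exceeding $\tau(w)$, hence be tree vertices or leaves forming a rooted binary tree on at least two leaves, which contains a cherry --- so $c$ is a leaf; (iii) $u$ is not the root (\cref{prop:vertical-out-arc} would fail there) and not a tree vertex (by \cref{prop:vertical-out-arc} its other child $b$ would have label exceeding $\tau(w)$, hence be a leaf --- giving the reticulated cherry $\p{c, b}$ --- or a tree vertex --- giving a cherry among its descendants as in (ii)), so $u$ is a reticulation; (iv) but then $u$ is a maximum-label reticulation whose child $w$ is a reticulation, contradicting (i).

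I expect the step needing the most care is (iii)--(iv): forcing the existence of \emph{some} reducible pair. Working from a maximum-label leaf instead looks tempting but leaves gaps, because the sibling and ancestor structure of such a leaf is not pinned down; routing everything through the maximum-label reticulation, where \cref{prop:respect-arcs,prop:vertical-out-arc,prop:vertical-in-arc,prop:horizontal-reticulation-arc} rigidly constrain the ``tree part'' of $N$, is what lets the case analysis close. A shorter but less self-contained route would be to invoke the equivalence, also established in \cite{van2022orchard}, between orchards and networks obtainable from a rooted tree by adding horizontal arcs, together with the observation that an HGT-consistent labelling is precisely a numerical encoding of the levels in such a drawing; but this only relocates the same inductive work.
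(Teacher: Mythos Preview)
The paper does not give a proof of this theorem; it is simply quoted from \cite{van2022orchard}. So there is nothing in the present paper to compare your argument against, and I can only assess the proposal on its own merits.

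Your overall strategy is sound and standard: both directions by induction, the forward direction by extending a labelling of $\red{N}{S_1}$ back to $N$, and the backward direction by showing that any HGT-consistently labelled network with at least two leaves has a reducible pair and that reducing it preserves HGT-consistency. The details in the forward direction and in steps (ii)--(iv) of the backward direction check out.

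There is, however, a genuine gap in step (i). You write that if the child $c$ of a maximum-label reticulation $w$ were a reticulation, then ``by \labelcref{prop:respect-arcs} every vertex along the chain of children has label $\tau(w)$ and is again a reticulation (a leaf or tree vertex there would violate \labelcref{prop:vertical-in-arc})''. This is circular: \labelcref{prop:respect-arcs} only gives $\tau(c)\ge\tau(w)$, and to get $\tau(c)\le\tau(w)$ from maximality you must already know $c$ is a reticulation; your parenthetical then uses $\tau(c)=\tau(w)$ to rule out $c$ being a leaf or tree vertex, which is exactly what you were trying to establish. The chain can perfectly well terminate at a leaf or tree vertex with label strictly above $\tau(w)$ without violating anything, so no infinite-path contradiction arises.

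The fix is immediate and uses the property you have not yet invoked here: by \labelcref{prop:vertical-out-arc}, every internal vertex has a child with strictly larger label, and a reticulation has exactly one child, so $\tau(c)>\tau(w)$; if $c$ were a reticulation this would contradict maximality of $\tau(w)$. This one-line argument replaces your chain reasoning, and with it step (iv) goes through verbatim (apply the corrected (i) to $u$).
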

% ------------------------------------------------------------------------------

\nz{%
% ------------------------------------------------------------------------------
\begin{lem}
  \label{lem:vertical-path}
  If $N$ is a rooted network with root $\rho$ and $\tau$ is an HGT-consistent
  labelling of $N$, then every vertex $v \ne \rho$ in $N$ has a path $P$ from
  $\rho$ to $v$ in $N$ with $\tau(x) < \tau(v)$, for all $x \ne v$ in $P$.
\end{lem}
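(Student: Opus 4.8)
The plan is to prove a slightly stronger, more structural fact first: \emph{every non-root vertex $v$ of $N$ has an incoming arc $\a{w, v}$ that is vertical}, i.e.\ with $\tau(w) < \tau(v)$. This is immediate from the definition of an HGT-consistent labelling. If $v$ is a leaf or a tree vertex, then its unique parent $w$ satisfies $\tau(w) < \tau(v)$ by \cref{prop:vertical-in-arc}. If $v$ is a reticulation with parents $u$ and $w$, then \cref{prop:respect-arcs} gives $\tau(u) \le \tau(v)$ and $\tau(w) \le \tau(v)$, while \cref{prop:horizontal-reticulation-arc} says exactly one of these two inequalities is an equality; hence the other parent is joined to $v$ by a vertical arc. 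Since $\rho$ is the only source of $N$, it is also the only vertex with no vertical in-arc.

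Given this observation, I would construct the required path by walking backwards from $v$ along vertical in-arcs. Set $u_0 = v$; having reached a vertex $u_i \ne \rho$, choose a vertical in-arc $\a{u_{i+1}, u_i}$, which exists by the observation, so that $\tau(u_{i+1}) < \tau(u_i)$. Because $\tau$ takes values in $\NN$, the sequence $\tau(u_0) > \tau(u_1) > \tau(u_2) > \cdots$ is strictly decreasing and therefore finite, so the walk terminates at some vertex $u_k$. The strict decrease also forces all the $u_i$ to have pairwise distinct labels, hence to be distinct vertices, so the walk traverses a simple directed path in reverse. Finally, the walk can only stop at a vertex with no vertical in-arc, and since every non-root vertex has one, we must have $u_k = \rho$.

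Reversing the walk yields a path $P = (\rho = u_k, u_{k-1}, \dots, u_1, u_0 = v)$ from $\rho$ to $v$ in $N$. For any vertex $x = u_i$ on $P$ with $x \ne v$, i.e.\ with $i \ge 1$, monotonicity of the labels along the walk gives $\tau(x) = \tau(u_i) \le \tau(u_1) < \tau(u_0) = \tau(v)$, which is exactly the claimed inequality. (Equivalently, one could phrase the whole argument as an induction on $\tau(v)$, using that the globally minimum-label vertex must be $\rho$ since any other vertex has a strictly smaller-labelled parent; but the direct backward-walk formulation seems cleanest.)

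I do not anticipate a real obstacle here. The only points that need a little care are that the backward walk cannot get stuck at a non-root vertex and cannot revisit a vertex, and both are handled by the ``every non-root vertex has a vertical in-arc'' observation together with the strict decrease of $\tau$ along the walk.
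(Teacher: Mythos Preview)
Your proof is correct, but it takes a different route from the paper's. Both arguments begin with the same observation: every non-root vertex has a parent with strictly smaller label (by \cref{prop:vertical-in-arc} for leaves/tree vertices and by \cref{prop:respect-arcs,prop:horizontal-reticulation-arc} for reticulations). You then \emph{iterate} this observation, walking backward along vertical in-arcs until the strictly decreasing label sequence forces you to arrive at $\rho$. The paper instead applies the observation only once, to obtain a parent $y$ of $v$ with $\tau(y) < \tau(v)$, and then takes \emph{any} directed path $P'$ from $\rho$ to $y$ (which exists because $\rho$ is the unique source of the DAG); \cref{prop:respect-arcs} alone then guarantees that every vertex on $P'$ has label at most $\tau(y) < \tau(v)$. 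The paper's argument is a bit shorter, since it avoids the termination and distinctness bookkeeping, while your construction is more explicit and in fact yields the slightly stronger conclusion that there is a path from $\rho$ to $v$ along which the labels are strictly increasing at every step, not merely all below $\tau(v)$.
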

% ------------------------------------------------------------------------------

\begin{proof}
  By
  \cref{prop:respect-arcs,prop:vertical-in-arc,prop:horizontal-reticulation-arc},
  $v$ has a parent $y$ with $\tau(y) < \tau(v)$.  Since $\rho$ is the only root
  of $\vec{N}$, there exists a path $P'$ from $\rho$ to $y$ in $\vec{N}$.  By
  \cref{prop:respect-arcs}, every vertex $x \in P'$ satisfies $\tau(x) \le
  \tau(y) < \tau(v)$.  $P = P' \circ \s{v}$ is a path from $\rho$ to $v$ and
  every vertex $x \ne v$ in $P$ belongs to $P'$ and, therefore, satisfies
  $\tau(x) < \tau(v)$.
\end{proof}

% ------------------------------------------------------------------------------
\begin{lem}
  \label{lem:no-reticulations-near-root}
  Let $\vec{N}$ be an orchard orientation of an unrooted network $N$, let $\rho$
  be its root, let $u$ and $v$ be the two grandchildren of $\rho$ in $\vec{N}$,
  and let $\tau$ be an HGT-consistent labelling of $\vec{N}$.  If $\tau(u) \le
  \tau(v)$, then $u$ has no neighbour $z$ in $\vec{N}$ with $\tau(u) = \tau(z)$.
\end{lem}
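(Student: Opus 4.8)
The plan is to derive everything from \cref{lem:vertical-path} together with two easy structural facts about the top of an orientation. Write $r$ for the unique child of $\rho$ in $\vec{N}$; since $\rho$ has two grandchildren, $r$ has two children, which are precisely $u$ and $v$. By the definition of an orientation, $r$ is the vertex obtained by subdividing some edge $e$ of $N$ (it is the only vertex besides $\rho$ that is not already in $N$), and since the two neighbours of $r$ other than $\rho$ are $u$ and $v$, that edge is $e = \{u, v\}$. As $N$ is simple, subdividing $\{u, v\}$ destroys the only adjacency between $u$ and $v$, so \emph{$u$ and $v$ are not adjacent in $\vec{N}$}. Moreover, because $\rho$ has out-degree $1$, every directed path from $\rho$ in $\vec{N}$ begins $\rho, r, w$ with $w$ a child of $r$, i.e.\ $w \in \{u, v\}$. (If $N$ has no edge, then $\rho$ has no grandchildren and the statement is vacuous, so throughout we may assume $N$ has an edge.)

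First I would record that $\tau(r) < \tau(u)$. Applying \cref{lem:vertical-path} to $u$ (which is not $\rho$) gives a directed path $P$ from $\rho$ to $u$ whose vertices other than $u$ all have $\tau$-value strictly less than $\tau(u)$; since $r \ne u$ is the second vertex of $P$, this yields $\tau(r) < \tau(u)$. In particular the neighbour $r$ of $u$ does not violate the conclusion.

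Now suppose for contradiction that $u$ has a neighbour $z$ in $\vec{N}$ with $\tau(z) = \tau(u)$. Then $z \ne r$ (since $\tau(r) < \tau(u) = \tau(z)$) and $z \ne \rho$ (the only neighbour of $\rho$ is $r$, and $r \ne u$). Applying \cref{lem:vertical-path} to $z$ yields a directed path $P'$ from $\rho$ to $z$ such that $\tau(x) < \tau(z) = \tau(u)$ for every vertex $x \ne z$ of $P'$. Since $z \notin \{\rho, r\}$, the path $P'$ has length at least $2$, so its third vertex $w$ exists; it is a child of $r$, hence $w = u$ or $w = v$. We cannot have $w = u$: as $z$ is a neighbour of $u$ we have $z \ne u$, so $u = w$ would be a vertex of $P'$ distinct from $z$, forcing $\tau(u) < \tau(z) = \tau(u)$. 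Hence $w = v$. But $v$ is not adjacent to $u$ in $\vec{N}$ whereas $z$ is, so $v \ne z$; thus $v = w$ is a vertex of $P'$ distinct from $z$, giving $\tau(v) < \tau(z) = \tau(u)$, which contradicts $\tau(u) \le \tau(v)$. Therefore no such $z$ exists.

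I do not anticipate a genuine obstacle: the argument is a short application of \cref{lem:vertical-path}. The only points that require a little care are the two structural facts pinned down in the first paragraph — that the subdivided edge is exactly $\{u, v\}$, so $u$ and $v$ are non-adjacent in $\vec{N}$, and that every directed path leaving $\rho$ is forced through $r$ and then into $u$ or $v$ — together with the bookkeeping of which vertices on the paths supplied by \cref{lem:vertical-path} are non-terminal (and hence carry a strictly smaller label). The hypothesis $\tau(u) \le \tau(v)$ is used exactly once, in the final line, to turn ``$P'$ passes through $v$'' into a contradiction.
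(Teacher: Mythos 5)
Your proof is correct and takes essentially the same approach as the paper's: both apply \cref{lem:vertical-path} to a vertex near the top, exploit that every directed path from $\rho$ is forced through $p=r$ into $\{u,v\}$, and use the fact that $u$ and $v$ are non-adjacent in $\vec{N}$. The only difference is bookkeeping: the paper examines the predecessor $y$ of $z$ on the path and splits into cases $y=\rho$ and $y=p$, whereas you rule out $z=r$ up front via $\tau(r)<\tau(u)$ and then examine the third vertex of the path directly.
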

% ------------------------------------------------------------------------------

\begin{proof}
  Assume the contrary.  Then by \cref{lem:vertical-path}, there exists a path
  $P$ from $\rho$ to $z$ such that every vertex $x \ne z$ in~$P$ satisfies
  $\tau(x) < \tau(z) = \tau(u) \le \tau(v)$.  Thus, $u, v \notin P'$, where $P'$
  is the subpath of $P$ from $\rho$ to $z$'s predecessor $y$ in~$P$.  \yuki{Letting~$p$ be the child of~$\rho$,} this
  implies that $y \in \{\rho, p\}$ because $u$ and $v$ are the only two
  grandchildren of~$\rho$, which implies that every path from $\rho$ to a vertex
  other than $\rho$ or $p$ must contain $u$ or $v$.  If $y = \rho$, then $z = p$
  because $p$ is the only child of $\rho$.  Thus, $\tau(p) = \tau(u)$.  By
  \cref{lem:vertical-path}, there exists a path $P\dprime$ from $\rho$ to $u$
  such that every vertex $x \ne u$ in $P\dprime$ satisfies $\tau(x) < \tau(u)$.
  Any such path $P\dprime$ must include $p$.  Since $p \ne u$, this implies that
  $\tau(p) < \tau(u)$, a contradiction.  Therefore, $y \ne \rho$, so $y = p$
  and, since $z$ is a neighbour of both $u$ and $y$, $z = v$.  This, however, is
  also impossible because the edge $\e{u, v}$ of $N$ is replaced by the two arcs
  $\a{p, u}$ and $\a{p, v}$ in $\vec{N}$, that is, $u$ and $v$ are not
  neighbours in $\vec{N}$.
\end{proof}

We call an HGT-consistent labelling of a rooted network $N$ \emph{asymmetric} if
$\tau(u) \ne \tau(v)$, for the two grandchildren $u$ and $v$ of the root
$\rho$ of $N$.

% ------------------------------------------------------------------------------
\begin{lem}
  \label{lem:asymmetric-drawing}
  If $\vec{N}$ is an orientation of an unrooted network $N$, then it has an
  asymmetric HGT-consistent labelling if and only if it has an HGT-consistent
  labelling.
\end{lem}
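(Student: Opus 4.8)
The ``only if'' direction needs no work: an asymmetric HGT-consistent labelling is in particular an HGT-consistent labelling. For the converse, suppose $\tau$ is an HGT-consistent labelling of $\vec N$ with root $\rho$. If $\rho$ has fewer than two grandchildren in $\vec N$ -- which happens exactly when $N$ has no edge -- then the asymmetry condition is vacuous and $\tau$ is already asymmetric, so I would assume $\rho$ has two grandchildren $u$ and $v$ with common parent $p$ (the unique child of $\rho$), and, relabelling $u$ and $v$ if necessary, that $\tau(u) \le \tau(v)$. The plan is to build from $\tau$ a new HGT-consistent labelling $\tau'$ that doubles every level and then decreases $\tau(u)$ by one: the doubling creates room and the decrement forces $u$ and $v$ onto distinct levels.

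First I would invoke \cref{lem:no-reticulations-near-root}: since $\tau(u) \le \tau(v)$, no neighbour $z$ of $u$ in $\vec N$ has $\tau(z) = \tau(u)$. Combined with \cref{prop:horizontal-reticulation-arc}, this shows $u$ is not a reticulation (a reticulation must share its level with exactly one of its two parents, both of which are neighbours), so $u$ is a leaf or tree vertex, its only in-arc is $\a{p, u}$, and every other neighbour of $u$ is a child $w$. By \cref{prop:respect-arcs} applied to $\a{p, u}$, $\tau(p) \le \tau(u)$, hence $\tau(p) < \tau(u)$ and $\tau(u) \ge 1$; by \cref{prop:respect-arcs} applied to each $\a{u, w}$, $\tau(u) \le \tau(w)$. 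I would then set $\tau'(x) = 2\tau(x)$ for every $x \ne u$ and $\tau'(u) = 2\tau(u) - 1$. This lands in $\NN$ because $\tau(u) \ge 1$, and it is automatically asymmetric, since $\tau'(u)$ is odd while $\tau'(v) = 2\tau(v)$ is even.

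It then remains to check \cref{prop:respect-arcs,prop:vertical-out-arc,prop:vertical-in-arc,prop:horizontal-reticulation-arc} for $\tau'$. Doubling all labels preserves all four, so I would examine only the objects affected by the ``$-1$'': the arc $\a{p, u}$, the arcs $\a{u, w}$, and the local conditions at $p$, at $u$, and at the children of $u$. From $\tau(p) < \tau(u)$ I get $\tau'(p) = 2\tau(p) \le 2\tau(u) - 2 < 2\tau(u) - 1 = \tau'(u)$, and from $\tau(u) \le \tau(w)$ I get $\tau'(u) = 2\tau(u) - 1 < 2\tau(u) \le 2\tau(w) = \tau'(w)$. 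These inequalities give \cref{prop:respect-arcs} for $\a{p, u}$ and each $\a{u, w}$, \cref{prop:vertical-in-arc} at $u$ (a leaf or tree vertex whose parent is $p$) and at each child of $u$ that is a leaf or tree vertex, and \cref{prop:vertical-out-arc} at $u$ (if $u$ is a tree vertex); \cref{prop:vertical-out-arc} at $p$ is witnessed by $v$, as $\tau'(v) = 2\tau(v) \ge 2\tau(u) > 2\tau(p) = \tau'(p)$. Lastly, \cref{prop:horizontal-reticulation-arc} is affected only at a reticulation child $w$ of $u$ with other parent $q$: for $\tau$, exactly one of the equalities $\tau(u) = \tau(w)$ and $\tau(q) = \tau(w)$ holds, and it must be the second since $w$ is a neighbour of $u$; after the change $\tau'(q) = 2\tau(q) = 2\tau(w) = \tau'(w)$ still holds, while $\tau'(u) = 2\tau(u) - 1$ is odd and so cannot equal $\tau'(w)$, leaving exactly one equality. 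Hence $\tau'$ is an asymmetric HGT-consistent labelling of $\vec N$.

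The step I expect to be the main obstacle is the structural one: deducing from \cref{lem:no-reticulations-near-root} together with \cref{prop:horizontal-reticulation-arc} that $u$ is not a reticulation, so that $\a{p, u}$ is its only in-arc and all its remaining neighbours are children satisfying $\tau(u) \le \tau(w)$. Once that is in place, the factor-of-two rescaling is exactly what makes the decrement safe -- a naive $\tau'(u) = \tau(u) - 1$ would break monotonicity whenever $\tau(p) = \tau(u) - 1$ -- and the rest is a short, finite case check in which the parity of $\tau'(u)$ does the work both for asymmetry and for preserving \cref{prop:horizontal-reticulation-arc} at reticulation children of $u$.
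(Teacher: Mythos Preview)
Your proof is correct and takes a genuinely different route from the paper. The paper leaves $\rho$, $p$, $u$ fixed and adds $1$ to every other label; you instead double every label and then subtract $1$ from $\tau(u)$. Both hinge on the same structural step---using \cref{lem:no-reticulations-near-root} to rule out $u$ being a reticulation and to force $\tau(p) < \tau(u)$---and both then perform a short local check. Your doubling-plus-decrement has the pleasant feature that parity alone certifies both the asymmetry $\tau'(u) \ne \tau'(v)$ and the preservation of \cref{prop:horizontal-reticulation-arc} at reticulation children of $u$, whereas the paper's shift argument has to trace which arcs cross the boundary of $\{\rho, p, u\}$ and argue by contradiction. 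One small expository point: the clause ``hence $\tau(p) < \tau(u)$'' would read more cleanly if you cited \cref{prop:vertical-in-arc} directly (since you have just established that $u$ is a leaf or tree vertex), or else made explicit that $p$ is a neighbour of $u$ and so falls under the conclusion of \cref{lem:no-reticulations-near-root}.
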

% ------------------------------------------------------------------------------

\begin{proof}
  The ``only if'' direction holds trivially.  To prove the ``if'' direction, let
  $\vec{N}$ be an orientation of an unrooted network $N$, let $\tau$ be an
  HGT-consistent labelling of $\vec{N}$, and let $u$ and $v$ be the two
  grandchildren of the root $\rho$ of $\vec{N}$.  If $\tau(u) \ne \tau(v)$, then
  $\tau$ is asymmetric, and the lemma holds.  So assume that $\tau(u) =
  \tau(v)$.  Then by \cref{lem:no-reticulations-near-root}, neither $u$ nor $v$
  can have a neighbour $z$ with $\tau(z) = \tau(u) = \tau(v)$.  In particular,
  neither $u$ nor $v$ is a reticulation of~$\vec{N}$, by
  \cref{prop:horizontal-reticulation-arc}.  Let $p$ be the child of $\rho$,
  which is the parent of $u$ and $v$.  We define a new labelling $\tau'$ of
  $\vec{N}$ as
  \begin{equation*}
    \tau'(x) = \begin{cases}
      \tau(x)     & \text{if } x \in \{\rho, p, u\}\\
      \tau(x) + 1 & \text{otherwise.}
    \end{cases}
  \end{equation*}
  We prove that $\tau'$ is HGT-consistent.  Since $\tau'(u) \ne \tau'(v)$, it is
  asymmetric, so the lemma holds.
  
  Every arc $\a{y, z}$ of $\vec{N}$ satisfies $y \in \{\rho, p, u\}$ or $z
  \notin \{\rho, p, u\}$ because $\rho$ has no in-arcs, the only in-arc of $p$
  is $\a{\rho, p}$, and the only in-arc of $u$ is $\a{p, u}$ (because $u$ is not
  a reticulation).  Thus, $\tau'(z) - \tau'(y) \ge \tau(z) - \tau(y)$, for every
  arc $\a{y, z}$ of $\vec{N}$.  In particular, $\tau(y) < \tau(z)$ implies that
  $\tau'(y) < \tau'(z)$, and $\tau(y) \le \tau(z)$ implies that $\tau'(y) \le
  \tau'(z)$.  Since $\tau$ satisfies
  \cref{prop:respect-arcs,prop:vertical-out-arc,prop:vertical-in-arc}, this
  shows that so does $\tau'$.
  
  If $z$ is a reticulation with parents $x$ and $y$, then by
  \cref{prop:respect-arcs,prop:horizontal-reticulation-arc}, we have w.l.o.g.\
  that $\tau(x) < \tau(z)$ and $\tau(y) = \tau(z)$.  As just argued, the former
  implies that $\tau'(x) < \tau'(z)$.  Thus, $\tau'$ satisfies
  \cref{prop:horizontal-reticulation-arc} if $\tau'(y) = \tau'(z)$.  Assume the
  contrary.  Then, since $\tau'$ satisfies \cref{prop:respect-arcs}, we have
  $\tau'(y) < \tau'(z)$.  Since $\tau(y) = \tau(z)$, we have $\tau'(y) =
  \tau(y)$ and $\tau'(z) = \tau(z) + 1$.  Therefore, $y \in \{\rho, p, u\}$ and
  $z \notin \{\rho, p, u\}$.  Since $p$ is the only child of $\rho$, $u$ and $v$
  are the only children of $p$, and neither $u$ nor $v$ is a reticulation, we
  must have $y = u$.  We proved, however, that $u$ has no neighbour $z$ with
  $\tau(z) = \tau(u)$.  Thus, we obtain a contradiction, and $\tau'$ satisfies
  \cref{prop:horizontal-reticulation-arc}.
\end{proof}}

A \emph{partial orientation} (PO) $\vec{N}$ of an undirected network $N$ chooses
directions for a subset of the edges of $N$ and leaves the remaining edges
undirected.  \nz{In contrast to an orientation, it does not subdivide any edges
of $N$ nor add any vertices.}  We call a PO \emph{HGT-consistent} (short, an
HPO) if there exists an orientation $\dvec{N}$ of $N$ and an \nz{asymmetric}
HGT-consistent labelling of $\dvec{N}$ such that $\tau(x) < \tau(y)$ for every
\nz{arc} $\a{x, y}$ of $\vec{N}$, and $\tau(x) = \tau(y)$, for every edge $\e{x,
y}$ of $\vec{N}$.  In other words, $\vec{N}$ and $\dvec{N}$ agree on the
directions of vertical \nz{arcs} while \nz{every edge of $\vec{N}$ corresponds
to a horizontal arc of $\dvec{N}$.  The only exception is the edge $\e{u, v}$ of
$N$ that is subdivided to introduce the child $p$ of the root $\rho$ of
$\dvec{N}$.  Since $\tau$ is asymmetric, this edge is an arc in $\vec{N}$ and
corresponds to the two arcs $\a{p, u}$ and $\a{p, v}$ in $\dvec{N}$.}

By definition, any network that has an HPO also \nz{has an orchard orientation}.
Conversely, if $\dvec{N}$ is an orchard orientation of $N$, \nz{then $\dvec{N}$
has an asymmetric HGT-consistent labelling $\tau$, by
\cref{lem:asymmetric-drawing}.}  We obtain an HPO $\vec{N}$ of $N$ by
\nz{replacing each edge $\e{x, y}$ of $N$ with an arc $\a{x, y}$ if $\tau(x) <
\tau(y)$, and with an arc $\a{y, x}$ if $\tau(x) > \tau(y)$; if $\tau(x) =
\tau(y)$, then $\e{x, y}$ is an edge also in $\vec{N}$.} Thus, we have the
following observation.

% ------------------------------------------------------------------------------
\begin{obs}
  \label{obs:HGT-consistent}
  An unrooted binary network $N$ \nz{has an orchard orientation} if and only if
  it has an HPO.
\end{obs}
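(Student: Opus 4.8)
The plan is to prove the two implications separately; both reduce quickly to \cref{thm:drawing} and \cref{lem:asymmetric-drawing}, so the only real work is a careful translation between the arcs of an orientation and the directed edges of a partial orientation.

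For the ``only if'' direction, suppose $N$ has an orchard orientation $\dvec{N}$. By \cref{thm:drawing}, $\dvec{N}$ admits an HGT-consistent labelling, and hence by \cref{lem:asymmetric-drawing} it admits an \emph{asymmetric} HGT-consistent labelling $\tau$. I would build a PO $\vec{N}$ of $N$ directly from $\tau$: orient each edge $\e{x, y}$ of $N$ towards its higher-$\tau$ endpoint, i.e.\ as $\a{x, y}$ if $\tau(x) < \tau(y)$ and as $\a{y, x}$ if $\tau(y) < \tau(x)$, and leave $\e{x, y}$ undirected if $\tau(x) = \tau(y)$. To see that $\vec{N}$ is an HPO, I would simply present $(\dvec{N}, \tau)$ itself as the required witness: by construction every arc of $\vec{N}$ points from the smaller to the larger $\tau$-value, and every remaining edge of $\vec{N}$ has equal $\tau$-values at its ends, which is exactly the defining condition of an HPO.

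The ``if'' direction is essentially immediate: if $N$ has an HPO $\vec{N}$, then by definition there is an orientation $\dvec{N}$ of $N$ carrying an asymmetric HGT-consistent labelling; since $\dvec{N}$ has an HGT-consistent labelling, \cref{thm:drawing} makes $\dvec{N}$ a rooted orchard, and being an orientation of $N$, it is an orchard orientation of $N$.

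The one place I expect to need care is the edge $\e{u, v}$ of $N$ that gets subdivided by the child $p$ of the root $\rho$ when forming $\dvec{N}$: this edge has no single corresponding arc in $\dvec{N}$ (it becomes the path from $u$ to $v$ through $p$), so I must check that the recipe above still treats it correctly. Asymmetry of $\tau$ is exactly what resolves this --- $\tau(u) \ne \tau(v)$, so $\e{u, v}$ becomes an arc of $\vec{N}$ rather than an undirected edge, directed towards whichever of $u, v$ has the larger label, which is consistent with the clause in the definition of an HPO that singles out this edge as the lone arc of $\vec{N}$ corresponding to a pair of arcs $\a{p, u}, \a{p, v}$ of $\dvec{N}$. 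Beyond this bookkeeping, the argument is routine.
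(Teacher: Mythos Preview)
Your proposal is correct and follows essentially the same approach as the paper: both directions are proved exactly as you outline, invoking \cref{thm:drawing} and \cref{lem:asymmetric-drawing} and then defining $\vec{N}$ by orienting each edge of $N$ toward the endpoint of larger $\tau$-value (leaving it undirected on ties). Your explicit treatment of the subdivided edge $\e{u,v}$ is a detail the paper leaves implicit in its definition of HPO, but the reasoning is the same.
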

% ------------------------------------------------------------------------------

The next lemma characterizes HPOs of unrooted binary networks \nz{without
reference to an orientation or HGT-consistent labelling.}  A \emph{semi-directed
path} in a partially directed network $\vec{N}$ is a sequence of vertices
$\s{x_0, \ldots, x_n}$ such that for all $1 \le i \le n$, either $\a{x_{i-1},
x_i}$ is \nz{an arc} of $\vec{N}$ or $\e{x_{i-1}, x_i}$ is an edge of $\vec{N}$.
A \emph{semi-directed cycle} is a semi-directed path $\s{x_0, \ldots, x_n}$
\nz{with} $x_0 = x_n$.  In a partially directed network $\vec{N}$, the
\emph{in-degree} of a vertex $v$ is the number of \nz{arcs} $\a{u, v}$ \nz{with
head $v$} in $\vec{N}$, and the \emph{out-degree} of $v$ is the number of
\nz{arcs} $\a{v, w}$ \nz{with tail $v$} in $\vec{N}$.  \nz{Edges incident to $v$
are not counted.}

% ------------------------------------------------------------------------------
\begin{lem}
  \label{lem:HGT-consistent}
  A PO $\vec{N}$ of an unrooted binary network $N$ is an HPO if and only if
  \begin{enumerate}[label=(O\arabic{*}),widest=3,leftmargin=*,noitemsep]
    \item There exists \nz{exactly one vertex $r$ with only out-arcs and no
    incident edges in $\vec{N}$}, called the \nz{\emph{root}}
    of~$\vec{N}$,\label[prop]{prop:unique-root}
    \item Every vertex $v \ne r$ has in-degree $1$ \nz{in
    $\vec{N}$},\label[prop]{prop:in-degree}
    \item Every non-leaf vertex $v$ of $N$ has out-degree at least $1$ in
    $\vec{N}$, and\label[prop]{prop:out-degree}
    \item $\vec{N}$ contains no semi-directed
    cycle.\label[prop]{prop:cycle-free}
  \end{enumerate}
\end{lem}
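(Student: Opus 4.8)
The plan is to prove both implications by passing between the partial orientation $\vec{N}$ and a full orchard orientation $\dvec{N}$ carrying an asymmetric HGT-consistent labelling $\tau$, exploiting that, by definition of an HPO, the arcs of $\vec{N}$ are exactly the vertical arcs of $\dvec{N}$ and the edges of $\vec{N}$ are exactly the horizontal arcs of $\dvec{N}$ (apart from the one edge of $N$ subdivided to create the child $p$ of the root $\rho$).

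For the ``only if'' direction, let $\dvec{N}$, $\tau$, $\rho$, $p$ and the grandchildren $u,v$ of $\rho$ be as in the definition of an HPO, with $\tau(u)<\tau(v)$ by asymmetry, so that $\e{u,v}$ is the subdivided edge of $N$. By \cref{lem:no-reticulations-near-root}, $u$ has no neighbour $z$ in $\dvec{N}$ with $\tau(z)=\tau(u)$; together with \cref{prop:respect-arcs} this forces $u$ to be a tree vertex or leaf whose every incident arc in $\dvec{N}$ is vertical with $u$ at the lower end, and hence in $\vec{N}$ every edge of $N$ at $u$ (including $\e{u,v}$, which becomes $\a{u,v}$) is an out-arc of $u$, so $u$ has only out-arcs and no incident edge. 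Conversely, every other vertex $w$ has an in-arc $\a{a,w}$ in $\dvec{N}$ ($\rho$ being the only source, with $\rho,p\notin V(N)$, and using $\a{u,v}$ for $w=v$), and by \cref{prop:respect-arcs} this becomes an in-arc of $w$ in $\vec{N}$ if $\tau(a)<\tau(w)$ and an incident edge of $w$ if $\tau(a)=\tau(w)$; so $r:=u$ is the unique root, giving \cref{prop:unique-root}. For \cref{prop:in-degree}, the in-arcs of any $w\ne r$ in $\vec{N}$ correspond exactly to the $\dvec{N}$-parents $a$ of $w$ with $\tau(a)<\tau(w)$, and \cref{prop:vertical-in-arc,prop:horizontal-reticulation-arc} show there is exactly one such parent whether $w$ is a leaf, tree vertex, or reticulation (for $w=v$ one checks that its parent $p$ is vertical because $\tau(p)\le\tau(u)<\tau(v)$). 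For \cref{prop:out-degree}, \cref{prop:vertical-out-arc} hands each internal $w$ a child $c$ with $\tau(w)<\tau(c)$, whose edge of $N$ is an out-arc of $w$ in $\vec{N}$. Finally, for \cref{prop:cycle-free}: along any semi-directed cycle $\tau$ is non-decreasing, hence constant, so such a cycle uses only edges of $\vec{N}$, i.e.\ only horizontal arcs of $\dvec{N}$; taking the cycle vertex that is last in a topological order of the DAG $\dvec{N}$, both incident cycle-arcs point into it, making it a reticulation with two horizontal in-arcs, contradicting \cref{prop:horizontal-reticulation-arc}.

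For the ``if'' direction, assume \cref{prop:unique-root,prop:in-degree,prop:out-degree,prop:cycle-free} hold; the case $|V(N)|=1$ is trivial, so assume $|V(N)|\ge2$. The crucial structural observation is that the undirected edges of $\vec{N}$ form a partial matching all of whose endpoints are internal vertices of $N$: by \cref{prop:unique-root} the root $r$ has no incident edge, by \cref{prop:in-degree} a non-root leaf's unique edge is an in-arc, and an internal non-root vertex has in-degree $1$ by \cref{prop:in-degree} and out-degree $\ge1$ by \cref{prop:out-degree}, leaving at most one of its three incident edges undirected. Next, by \cref{prop:cycle-free} the digraph obtained from $\vec{N}$ by contracting each undirected edge is acyclic (a directed cycle there lifts to a semi-directed cycle), so a topological order yields $\tau:V(N)\to\{2,3,\dots\}$ strictly increasing along arcs of $\vec{N}$ and constant along its edges. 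To build $\dvec{N}$, pick an out-arc $\a{r,v_1}$ of $r$ (which exists since $r$ has degree $\ge1$ in $N$ and all its incident edges are out-arcs), subdivide $\e{r,v_1}$ with a new vertex $p$, add a root $\rho$ with arc $\a{\rho,p}$, set $\tau(\rho)=0$ and $\tau(p)=1$, orient the two new edges as $\a{p,r}$ and $\a{p,v_1}$, keep every other edge that is an arc in $\vec{N}$ with its direction, and orient each undirected edge of $\vec{N}$ arbitrarily.

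It then remains to verify that $\dvec{N}$ is a valid directed binary network with $\tau$ an asymmetric HGT-consistent labelling. Every vertex of $N$ has in-degree $1$ in $\dvec{N}$, except that the head of an oriented matched edge acquires a second, horizontal in-arc and becomes a reticulation; the out-degrees then work out to $0$, $1$ or $2$ as required, and $\dvec{N}$ is acyclic because $\tau$ is non-decreasing along every arc and is constant only within the singleton or single-edge contracted classes. \cref{prop:respect-arcs} holds by construction of $\tau$; \cref{prop:vertical-out-arc} holds since each internal vertex retains a vertical out-arc from $\vec{N}$ (and $p$ has the vertical out-arc $\a{p,r}$, using $\tau(p)=1<\tau(r)$); \cref{prop:vertical-in-arc} holds since the parent of each leaf or tree vertex is either $p$ (and $\tau(p)=1$ is below every label on $V(N)$) or the tail of a vertical arc of $\vec{N}$; and \cref{prop:horizontal-reticulation-arc} holds since each reticulation has one parent via a vertical arc of $\vec{N}$ or via $\a{p,v_1}$ and the other via a matched edge with equal $\tau$. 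The labelling is asymmetric because the grandchildren $r$ and $v_1$ of $\rho$ are joined by an arc of $\vec{N}$, so $\tau(r)<\tau(v_1)$. I expect the ``if'' direction to be the harder one, and within it the main obstacle to be handling the undirected edges: one first has to notice that \cref{prop:in-degree,prop:out-degree} force them to form a matching on internal vertices, and then check that orienting such an edge in either direction preserves all in-degrees, out-degrees and \cref{prop:vertical-out-arc,prop:horizontal-reticulation-arc}, each oriented edge turning exactly one of its endpoints into a reticulation; getting the root gadget right---subdividing an \emph{out}-arc of $r$ so that $\rho$'s two grandchildren receive distinct labels---is the other point that needs care.
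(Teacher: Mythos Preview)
Your proposal is correct and follows essentially the same approach as the paper: in both directions you pass between $\vec{N}$ and an orientation $\dvec{N}$ with an asymmetric HGT-consistent labelling, and in the ``if'' direction you observe that the undirected edges form a matching, contract them, topologically sort, and subdivide an out-arc of $r$ to install the root. The only notable difference is your argument for \cref{prop:cycle-free} in the ``only if'' direction: the paper derives a contradiction from the matching observation (a cycle of undirected edges violates the matching property), whereas you argue directly in $\dvec{N}$ that the last cycle vertex in a topological order would be a reticulation with two horizontal in-arcs, contradicting \cref{prop:horizontal-reticulation-arc}; both arguments are valid and of comparable length.
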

% ------------------------------------------------------------------------------

\begin{proof}
  First assume that $\vec{N}$ is an HPO.  Then there exist an orientation
  $\dvec{N}$ of $N$ and an \nz{asymmetric} HGT-consistent labelling $\tau$ of
  $\dvec{N}$ such that $\tau(x) < \tau(y)$, for every \nz{arc} $\a{x, y}$ of
  $\vec{N}$, and $\tau(x) = \tau(y)$, for every edge $\e{x, y}$ of $\vec{N}$.

  \nz{Let $\rho$ be the root of $\dvec{N}$, and let $u$ and $v$ be its
  grandchildren.  Then $\e{u, v}$ is an edge of $N$.  Since $\tau$ is
  asymmetric, we have w.l.o.g.\ that $\tau(u) < \tau(v)$.  By
  \cref{prop:respect-arcs,prop:vertical-in-arc,prop:horizontal-reticulation-arc},
  every vertex $x \in \dvec{N}$ other than $\rho$ has a parent $y$ with $\tau(y)
  < \tau(x)$.  If $x$ is a vertex of $\vec{N}$ and $x \notin \{u, v\}$, then
  $\e{y, x}$ is an edge of $N$.  Since $\tau(y) < \tau(x)$, $\a{y, x}$ is an arc
  of $\vec{N}$, so $x$ has in-degree at least $1$ in $\vec{N}$.  Since $\tau(u)
  < \tau(v)$, $\a{u, v}$ is an arc of $\vec{N}$.  Thus, $v$ also has in-degree
  at least $1$ in $\vec{N}$.  This shows that the only vertex of $\vec{N}$ that
  can have in-degree $0$ is $u$.  If $u$ does not have in-degree $0$, then
  $\vec{N}$ contains a directed cycle, which is impossible because $\tau(x) <
  \tau(y)$, for every arc $\a{x, y}$ in $\vec{N}$.  Thus, $u$ does not have any
  in-arcs.  Since $\tau(u) < \tau(v)$, $\a{u, v}$ is an arc of $\vec{N}$.  Any
  other edge of $N$ incident to $u$ is an arc of $\dvec{N}$.  By
  \cref{lem:no-reticulations-near-root}, $u$ has no neighbour $z$ in $\dvec{N}$
  with $\tau(u) = \tau(z)$, so $u$ has no incident edges in $\vec{N}$.  This
  shows that all edges of $N$ incident to $u$ are out-arcs of $u$ in $\vec{N}$,
  that is, $r = u$ satisfies \cref{prop:unique-root}.}

  We just argued that any vertex $x \ne r$ has in-degree at least $1$ \nz{in
  $\vec{N}$.}  If $x$ is a leaf or tree node of $\dvec{N}$, then its parent $y$
  \nz{in $\dvec{N}$} satisfies $\tau(y) < \tau(x)$, \nz{by
  \cref{prop:vertical-in-arc}}, while every child $z$ of $x$ \nz{in $\dvec{N}$}
  (if $x$ is a tree node) satisfies $\tau(x) \le \tau(z)$, by
  \cref{prop:respect-arcs}.  Thus, $x$ has in-degree \nz{exactly $1$ in
  $\vec{N}$.}  Similarly, if $x$ is a reticulation \nz{in $\dvec{N}$}, then its
  child $z$ \nz{in $\dvec{N}$} satisfies $\tau(x) \le \tau(z)$, by
  \cref{prop:respect-arcs}.  For its two parents $y_1$ and $y_2$ \nz{in
  $\dvec{N}$,} we have $\tau(y_1) = \tau(x)$ or $\tau(y_2) = \tau(x)$, by
  \cref{prop:horizontal-reticulation-arc}.  Thus, $x$ has in-degree at most $1$
  \nz{in $\vec{N}$.}  Since $x$ has in-degree at least $1$, its in-degree is
  \nz{exactly $1$ in $\vec{N}$.} This proves that $\vec{N}$ satisfies
  \cref{prop:in-degree}.

  By \cref{prop:vertical-out-arc}, every non-leaf vertex \nz{$x \ne u$ of
  $\dvec{N}$} has a child $z$ with $\tau(x) < \tau(z)$.  This vertex is an
  out-neighbour of $x$ in $\vec{N}$.  Thus, $x$ has out-degree at least $1$
  \nz{in $\vec{N}$.  We already argued that $u$ has only out-arcs, so $u$ also
  has out-degree at least $1$.  This shows that} \cref{prop:out-degree} holds.

  \nz{To prove that $\vec{N}$ contains no semi-directed cycle, we use the
  following observation:

  % ----------------------------------------------------------------------------
  \begin{obs}
    \label{obs:undirected-matching}
    If $\vec{N}$ is a partial orientation of a binary network that satisfies
    \cref{prop:unique-root,prop:in-degree,prop:out-degree}, then the edges
    of $\vec{N}$ form a matching.
  \end{obs}
  % ----------------------------------------------------------------------------

  \begin{proof}
    Consider any vertex $x$ of $\vec{N}$.  If $x$ is the vertex $r$ in
    \cref{prop:unique-root}, then $x$ has no incident edges.  If $x$ is a leaf
    of $N$, then it has one incident edge in $N$ and, thus, also at most one
    incident edge in $\vec{N}$.  If $x \ne r$ and $x$ is not a leaf of $N$, then
    by \cref{prop:in-degree,prop:out-degree}, $x$ has an in-arc and an out-arc
    in $\vec{N}$.  Since $N$ is binary, this implies that $x$ has at most one
    incident edge in $\vec{N}$.
  \end{proof}}

  Now assume that $\vec{N}$ has a semi-directed cycle $C = \s{x_0, \ldots,
  x_n}$.  Then $\tau(x_{i-1}) \le \tau(x_i)$, for all $1 \le i \le n$.
  \nz{Since $x_0 = x_n$, this implies that $\tau(x_0) = \cdots = \tau(x_n)$,
  that is, $C$ consists entirely of edges, so the edges in $\vec{N}$ do not form
  a matching, contradicting \cref{obs:undirected-matching}.  Thus,
  \cref{prop:cycle-free} holds.}

  Now the other direction.  Assume that $\vec{N}$ satisfies
  \crefrange{prop:unique-root}{prop:cycle-free}.  \nz{Let $G$ be the directed
  graph obtained by contracting all edges in $\vec{N}$.  Then $G$ is acyclic.
  Indeed, by \cref{obs:undirected-matching}, every vertex $x \in G$ corresponds
  to a pair of vertices $x', x\dprime \in \vec{N}$ such that either $x' =
  x\dprime$ or $\e{x', x\dprime}$ is an edge of $\vec{N}$.  Thus, if $G$
  contains a cycle $C = \s{x_0, \ldots, x_n}$, then $\vec{N}$ contains a
  semi-directed cycle over some subset of the vertex set $\{x_1', x_1\dprime,
  \ldots, x_n', x_n\dprime\}$.  Since $\vec{N}$ contains no such cycle, by
  \cref{prop:cycle-free}, $G$ is acyclic.  By
  \cref{prop:unique-root,prop:in-degree}, $r$ is the only source of $G$.}

  Since $G$ is acyclic, we can topologically sort it and number its vertices in
  order.  Let $\tau'$ be the resulting labelling of the vertices in $G$.  For
  every vertex $x \in G$ and its two corresponding vertices $x', x\dprime \in
  \vec{N}$, we define $\tau(x') = \tau(x\dprime) = \tau'(x)$.  \nz{This ensures
  that $\tau$ satisfies the condition that $\tau(x) < \tau(y)$, for every arc
  $\a{x, y}$ of $\vec{N}$ (because $\a{x, y}$ is also an arc of $G$), and that
  $\tau(x) = \tau(y)$, for every edge $\e{x, y}$ of $\vec{N}$ (because $x$ and
  $y$ correspond to the same vertex of $G$).  We have to prove that there exists
  an orientation $\dvec{N}$ of $N$ and an extension of $\tau$ to the vertex set
  of $\dvec{N}$ that is an asymmetric HGT-consistent labelling of $\dvec{N}$.

  Let $u = r$, and let $v$ be the out-neighbour of $u$ with minimum label
  $\tau(v)$.  Note that $\tau(v) > \tau(u)$ because $u$ is the source of $G$.
  Then we obtain $\dvec{N}$ as follows: We subdivide the edge $\e{u, v}$ of $N$
  with a new vertex $p$ and create a root vertex $\rho$ whose child in
  $\dvec{N}$ is $p$.  In $\dvec{N}$, $u$ and $v$ are children of $p$.  For any
  edge $\e{x, y} \ne \e{u, v}$ of $N$, $\dvec{N}$~contains the arc $\a{x, y}$ if
  $\tau(x) < \tau(y)$, the arc $\a{y, x}$ if $\tau(x) > \tau(y)$, and one of
  these two arcs chosen arbitarily if $\tau(x) = \tau(y)$.  To extend $\tau$ to
  the vertex set of $\dvec{N}$, we set $\tau(p) = \tau(u) - 1$ and $\tau(\rho) =
  \tau(p) - 1$.  We need to verify that $\dvec{N}$ is an orientation of $N$ and
  that $\tau$ is an asymmetric HGT-consistent labelling of $\dvec{N}$.
  
  To see that $\dvec{N}$ is an orientation of $N$, note first that $\dvec{N}$
  is acyclic.  Indeed, if $\dvec{N}$ contains a cycle $C$, then $C$ cannot
  contain $\rho$, $p$ or $u$ because $\rho$ has no in-arcs, $p$'s only
  in-neighbour is $\rho$, and $u$'s only in-neighbour is $p$.  This implies that
  every arc $\a{x, y}$ in $C$ corresponds to the arc $\a{x, y}$ or to the edge
  $\e{x, y}$ in $\vec{N}$.  This makes $C$ a semi-directed cycle in $\vec{N}$,
  contradicting \cref{prop:cycle-free}.
  
  Next note that every vertex $x \ne u$ has in-degree at least $1$ in $\vec{N}$,
  by \cref{prop:in-degree}.  Thus, it also has an in-neighbour in $\dvec{N}$.
  The vertices $u$ and $p$ have $p$ and $\rho$ as in-neighbours, respectively.
  This makes $\rho$ the only root of $\dvec{N}$.  Finally, we already observed
  that $u$ has $p$ as an in-neighbour.  Thus, it is a leaf of $\dvec{N}$ if it
  is a leaf of $N$.  Any other leaf $x \ne u$ of $N$ has in-degree $1$ in
  $\vec{N}$, by \cref{prop:in-degree}, so it also has in-degree $1$ in
  $\dvec{N}$.  In particular, it has no out-neighbours.  This shows that every
  leaf of $N$ is a leaf of $\dvec{N}$, and all internal vertices of $\dvec{N}$
  other than $\rho$ are tree vertices or reticulations.  Thus, $\dvec{N}$ is an
  orientation of $N$.
  
  To verify that $\tau$ is an HGT-consistent labelling of $\dvec{N}$, observe
  that every arc $\a{x, y} \in \{\a{\rho, p}, \a{p, u}, \a{p, v}\}$ satisfies
  $\tau(x) < \tau(y)$ because we explicitly ensure that $\tau(\rho) < \tau(p) <
  \tau(u)$ and, as observed above, $\tau(u) < \tau(v)$.  Any other arc $\a{x,
  y}$ of $\dvec{N}$ is either an arc $\a{x, y}$ or an edge $\e{x, y}$ of
  $\vec{N}$.  Thus, $\tau(x) \le \tau(y)$.  This shows that $\tau$ satisfies
  \cref{prop:respect-arcs}.
  
  The vertices $\rho$, $p$, and $u$ satisfy \cref{prop:vertical-out-arc} because
  $\tau(\rho) < \tau(p) < \tau(u) < \tau(v)$.  Any other internal vertex $x$ of
  $\dvec{N}$ is a vertex of $\vec{N}$ and is not a leaf of $N$.  Thus, by
  \cref{prop:out-degree}, there exists an out-arc $\a{x, y}$ of $x$ in
  $\vec{N}$.  This implies that $\tau(x) < \tau(y)$, that is, $x$ has an
  out-neighbour $y$ in $\dvec{N}$ with $\tau(x) < \tau(y)$.  This shows that
  $\tau$ satisfies \cref{prop:vertical-out-arc}.

  Similarly, if $x$ is a leaf or a tree vertex of $\dvec{N}$, then $x \ne \rho$
  and $x$ has a unique in-arc $\a{y, x}$ in $\dvec{N}$.  If $x = p$, then $y =
  \rho$ and $\tau(\rho) < \tau(p)$.  If $x \in \{u, v\}$, then $p$ is the parent
  of $x$ because $\tau(p) < \tau(u) < \tau(v)$ and $\tau$ satisfies
  \cref{prop:respect-arcs}.  For any other leaf or tree vertex $x$ of
  $\dvec{N}$, every out-arc $\a{x, z}$ of $x$ corresponds to the arc $\a{x, z}$
  or the edge $\e{x, z}$ in $\vec{N}$.  Neither contributes to the in-degree of
  $x$ in $\vec{N}$.  Since $x$ has in-degree $1$ in $\vec{N}$, by
  \cref{prop:in-degree}, $\a{y, x}$ must therefore be an in-arc of $x$ in
  $\vec{N}$, which implies that $\tau(y) < \tau(x)$.  This shows that
  \cref{prop:vertical-in-arc} holds.
  
  Finally, if $x$ is a reticulation of $\dvec{N}$ with parents $y_1$ and $y_2$,
  then $\tau(y_1) \le \tau(x)$ and $\tau(y_2) \le \tau(x)$, by
  \cref{prop:respect-arcs}.  We cannot have both $\tau(y_1) < \tau(x)$ and
  $\tau(y_2) < \tau(x)$ because this would imply that $x$ has in-degree greater
  than $1$ in $\vec{N}$, contradicting \cref{prop:in-degree}.  Thus, $\tau(y_1)
  = \tau(x)$ or $\tau(y_2) = \tau(x)$.  If $\tau(y_1) = \tau(y_2) = \tau(x)$,
  then both $\e{x, y_1}$ and $\e{x, y_2}$ are edges of $\vec{N}$.  This
  contradicts \cref{obs:undirected-matching}.  Thus, $\tau(y_1) = \tau(x)$ or
  $\tau(y_2) = \tau(x)$, but not both, and
  \cref{prop:horizontal-reticulation-arc} holds.}
\end{proof}

\nz{In \cref{sec:reduction}, we will use \cref{lem:HGT-consistent} to prove that
a formula $F$ in 3-CNF is satisfiable if and only if a network $N(F)$ we
construct from it has an HPO.} We will also frequently use the following two
lemmas.

% ------------------------------------------------------------------------------
\begin{lem}
  \label{lem:no-semi-directed-paths}
  Let $\e{x, y}$ be an edge of an unrooted binary network $N$, and let $\vec{N}$
  be an HPO of $N$.  Then $\vec{N}$ contains no semi-directed path \nz{of length
  at least $2$}
  % I'm not sure my insistence on using edge = undirected and arc = directed is
  % working out too well.  For example, it prevented the use of Leo's
  % terminology of a path "consisting of at least two edges" here, because the
  % elements of the path can also be arcs.  Like a dog with a bone, I'm plowing
  % ahead for now.
  from $x$ to $y$
  % otherwise you may get a semi-directed path by orienting the edge \{x,y\} to (x,y)
  \nz{and which ends in an arc.}
\end{lem}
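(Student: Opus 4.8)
The plan is to argue by contradiction: suppose $\vec N$ contains a semi-directed path $P = \s{x_0, \dots, x_n}$ with $x_0 = x$, $x_n = y$, $n \ge 2$, and whose final step $\a{x_{n-1}, x_n}$ is an arc of $\vec N$, and derive a contradiction using only the combinatorial characterisation of HPOs in \cref{lem:HGT-consistent}. Since $\vec N$ is a partial orientation of $N$ and does not subdivide edges, the edge $\e{x, y}$ of $N$ occurs in $\vec N$ in exactly one of three ways: as the undirected edge $\e{x, y}$, as the arc $\a{x, y}$, or as the arc $\a{y, x}$. I would split into these cases.

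Suppose first that $\e{x, y}$ appears in $\vec N$ as the undirected edge $\e{x, y}$ or as the arc $\a{y, x}$. Then $\s{x_0, \dots, x_n, x}$ is again a semi-directed path, since its last step from $x_n = y$ to $x$ traverses the edge $\e{y, x}$ (respectively the arc $\a{y, x}$) of $\vec N$. As $P$ is a path, the vertices $x_0, \dots, x_n$ are pairwise distinct, and the appended vertex equals $x_0 = x$, so $\s{x_0, \dots, x_n, x}$ is a semi-directed cycle; it has length $n + 1 \ge 3$ because $n \ge 2$. This contradicts \cref{prop:cycle-free}.

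The remaining case is that $\e{x, y}$ appears in $\vec N$ as the arc $\a{x, y}$. Then $y$ receives the in-arc $\a{x, y}$, and the last step of $P$ gives $y$ the in-arc $\a{x_{n-1}, x_n} = \a{x_{n-1}, y}$. Because $P$ is a path with $n \ge 2$, its vertices are pairwise distinct, so $x_{n-1} \ne x_0 = x$ and these two in-arcs of $y$ are distinct; hence $y$ has in-degree at least $2$ in $\vec N$. But $y$ has an incoming arc, so $y$ is not the root $r$ of $\vec N$ (which has no in-arcs, by \cref{prop:unique-root}), and therefore \cref{prop:in-degree} forces the in-degree of $y$ to be exactly $1$ --- a contradiction. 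Since both cases are impossible, no such path $P$ exists.

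The argument is essentially bookkeeping; the one point requiring care is the convention that a semi-directed \emph{path} has pairwise distinct vertices (so that a semi-directed cycle, obtained by letting the first and last vertex coincide, has length at least $3$, consistently with how \cref{prop:cycle-free} is used in the proof of \cref{lem:HGT-consistent}). This is exactly what makes the appended walk in the first case a genuine cycle, and what guarantees $x_{n-1} \ne x$ in the second case. If one instead allowed repeated vertices, one would additionally have to rule out the degenerate possibility $x_{n-1} = x$, in which case the prefix $\s{x_0, \dots, x_{n-1}}$ is a short closed semi-directed walk through $x$; this can be dispatched using \cref{prop:cycle-free} together with \cref{obs:undirected-matching} (or, in the very shortest case, by recalling that $N$ is simple). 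I expect no real obstacle beyond handling these boundary cases cleanly.
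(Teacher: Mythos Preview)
Your proof is correct and follows essentially the same approach as the paper: both arguments observe that the final arc $\a{x_{n-1},y}$ forces the edge $\e{x,y}$ of $N$ to appear in $\vec N$ as the edge $\e{x,y}$ or the arc $\a{y,x}$ (otherwise $y$ would have in-degree $\ge 2$, violating \cref{prop:in-degree}), and then close up $P$ into a semi-directed cycle contradicting \cref{prop:cycle-free}. The only difference is cosmetic ordering---the paper first invokes \cref{prop:in-degree} to eliminate the $\a{x,y}$ case and then builds the cycle, whereas you build the cycle first and treat $\a{x,y}$ separately; your explicit handling of the $y=r$ subcase and the vertex-distinctness convention is more careful than the paper's terse version.
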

% ------------------------------------------------------------------------------

\begin{proof}
  Assume that there exists such a path $P$.  \nz{Then the final arc in $P$ is an
  arc $\a{z, y}$ with 
  % $x \ne y$.
  \yuki{$x\ne z$.}
  Thus, by \cref{prop:in-degree}, $\vec{N}$ does
  not contain the arc $\a{x, y}$, so $\vec{N}$ contains the arc $\a{y, x}$ or
  the edge $\e{x, y}$.}  This \nz{arc or} edge together with $P$ forms a
  semi-directed cycle in $\vec{N}$, a contradiction to \cref{prop:cycle-free}.
\end{proof}

% ------------------------------------------------------------------------------
\begin{lem}
  \label{lem:out-edge-pair}
  Let $N$ be an unrooted binary network, let $\vec{N}$ be an HPO of $N$, and let
  $C = \s{v_0, v_1, \ldots, v_n = v_0}$ be a chordless cycle in $N$.  For all $1
  \le i \le n$, let $u_i$ be the \nz{unique} neighbour of $v_i$ \nz{not in $C$.}
  Then there exists an index $i \in [n]$ such that  \nz{$\vec{N}$ contains the
  edge $\e{v_{i-1}, v_i}$ and the two arcs $\a{v_{i-1}, u_{i-1}}$ and $\a{v_i,
  u_i}$.}
\end{lem}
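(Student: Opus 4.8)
The plan is to exploit the HGT-consistent labelling underlying the HPO $\vec{N}$ and to find the promised edge together with its two flanking arcs at a place where this labelling is maximised along $C$. By definition of an HPO there is an orientation $\dvec{N}$ of $N$ and an asymmetric HGT-consistent labelling $\tau$ of $\dvec{N}$ such that $\tau(x) < \tau(y)$ for every arc $\a{x, y}$ of $\vec{N}$ and $\tau(x) = \tau(y)$ for every edge $\e{x, y}$ of $\vec{N}$; in particular, if $\e{x, y}$ is an edge of $N$ with $\tau(x) < \tau(y)$, then $\vec{N}$ must contain the arc $\a{x, y}$, since leaving it undirected would force $\tau(x) = \tau(y)$ and orienting it the other way would force $\tau(y) < \tau(x)$. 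Since $\vec{N}$ is an HPO, \cref{lem:HGT-consistent} says it satisfies \cref{prop:unique-root,prop:in-degree,prop:out-degree,prop:cycle-free}, and by \cref{obs:undirected-matching} its edges form a matching; in particular every vertex other than its root $r$ has in-degree exactly $1$, and no vertex is incident to two edges of $\vec{N}$. Because $N$ is binary and $C$ is chordless, each $v_i$ has exactly three incident edges in $N$, namely the two cycle edges $\e{v_{i-1}, v_i}$, $\e{v_i, v_{i+1}}$ and the edge $\e{v_i, u_i}$.

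Next I would locate a short plateau at the top of $\tau$ on $C$. Let $M = \max_i \tau(v_i)$, which is attained since $C$ is finite, and let $v_j, v_{j+1}, \ldots, v_k$ (cyclic indices) be a maximal run of consecutive cycle vertices all carrying the label $M$. I claim this run consists of exactly two vertices. It cannot be all of $C$, since $C$ has at least three edges, so an all-equal labelling would make every cycle edge undirected and some vertex incident to two of them, contradicting \cref{obs:undirected-matching}. It cannot have length $1$: a lone vertex $v_j$ with label $M$ would have both cycle-neighbours of smaller label, forcing $\a{v_{j-1}, v_j}$ and $\a{v_{j+1}, v_j}$ to be arcs of $\vec{N}$ and giving $v_j$ in-degree at least $2$, contradicting \cref{prop:in-degree,prop:unique-root}. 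And it cannot have length $3$ or more, because consecutive vertices of the run have equal label $M$, so the cycle edges between them are undirected in $\vec{N}$, and three or more plateau vertices would produce two undirected edges sharing a vertex, again contradicting \cref{obs:undirected-matching}. So the plateau is a pair of consecutive cycle vertices; after reindexing, call them $v_{i-1}$ and $v_i$, so $\tau(v_{i-1}) = \tau(v_i) = M$ while $\tau(v_{i-2}) < M$ and $\tau(v_{i+1}) < M$.

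From here the conclusion is bookkeeping. The edge $\e{v_{i-1}, v_i}$ is undirected in $\vec{N}$ because its endpoints share the label $M$; in particular $v_{i-1} \ne r \ne v_i$, since $r$ has no incident edges. Since $\tau(v_{i-2}) < \tau(v_{i-1})$, the arc $\a{v_{i-2}, v_{i-1}}$ lies in $\vec{N}$ and is an in-arc of $v_{i-1}$; as $v_{i-1}$ already carries the undirected edge $\e{v_{i-1}, v_i}$, the matching property forces its remaining incident edge $\e{v_{i-1}, u_{i-1}}$ to be an arc, and because $v_{i-1}$ has in-degree exactly $1$ this arc cannot be a second in-arc, so it is $\a{v_{i-1}, u_{i-1}}$. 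Symmetrically, $\tau(v_{i+1}) < \tau(v_i)$ makes $\a{v_{i+1}, v_i}$ the unique in-arc of $v_i$, so $\e{v_i, u_i}$ must be the arc $\a{v_i, u_i}$. Hence $i$ is an index with the claimed property.

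I expect the main obstacle to be the plateau-length argument in the second paragraph: ruling out a strict local maximum of $\tau$ along $C$, which is exactly where \cref{prop:in-degree} does its work, and bounding the plateau size from above via \cref{obs:undirected-matching}. Once the plateau is known to have exactly two vertices, the rest follows directly from the in-degree-$1$ and matching conditions, and one only has to double-check the harmless indexing convention ($u_0 := u_n$, since $v_0 = v_n$) so that the produced index genuinely lies in $[n]$.
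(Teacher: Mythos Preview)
Your proof is correct, and it takes a genuinely different route from the paper's.

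The paper argues entirely within the combinatorial characterization \cref{prop:unique-root}--\cref{prop:cycle-free}: it first shows some cycle edge must be undirected (else in-degrees or acyclicity fail), then lists all undirected cycle edges and, assuming none of them has both external arcs pointing outward, threads together directed subpaths between consecutive undirected edges to produce a semi-directed cycle, contradicting \cref{prop:cycle-free}. Your argument instead drops back to the underlying HGT-consistent labelling $\tau$ and uses a local-maximum trick: the set of cycle vertices attaining $\max_i \tau(v_i)$ is a run of consecutive vertices whose length you pin to exactly two by combining \cref{prop:in-degree} (ruling out a strict peak) and \cref{obs:undirected-matching} (ruling out three or more). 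Once the plateau has length two, the flanking cycle edges are forced to be in-arcs and the external edges are forced to be out-arcs. Your approach is shorter and constructive---it tells you exactly where the desired pair sits---whereas the paper's argument is a contradiction proof but has the virtue of never leaving the degree-and-acyclicity axioms, in keeping with the paper's stated preference for reasoning about HPOs without reference to any labelling.
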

% ------------------------------------------------------------------------------

\begin{proof}
  First we prove that there exists at least one index $i \in [n]$ such that
  $\e{v_{i-1}, v_i}$ \nz{is an edge of $\vec{N}$.}  Assume the contrary and
  consider the restriction $\vec{C}$ of $\vec{N}$ to $C$.  Since the total
  in-degree of all vertices in $\vec{C}$ equals their total out-degree, this
  implies that we either have a vertex of in-degree $2$ in $\vec{C}$ or every
  vertex has in-degree $1$ and out-degree $1$ in $\vec{C}$.  The former case
  contradicts \cref{prop:in-degree}; the latter, \cref{prop:cycle-free}.

  Now let $i_1 < \cdots < i_t$ be all indices in $[n]$ such that $\e{v_{i_h -
  1}, v_{i_h}}$ \nz{is an edge of $\vec{N}$.}  Assume that there is no index $h$
  such that \nz{both $\a{v_{i_h - 1}, u_{i_h - 1}}$ and $\a{v_{i_h}, u_{i_h}}$
  are arcs of $\vec{N}$.}  Since $\e{v_{i_1 - 1}, v_{i_1}}$ \nz{is an edge of
  $\vec{N}$,} either \nz{$\a{v_{i_1 - 1}, u_{i_1 - 1}}$ or $\a{u_{i_1 - 1},
  v_{i_1 - 1}}$ is an arc of $\vec{N}$}, and either \nz{$\a{v_{i_1}, u_{i_1}}$
  or $\a{u_{i_1}, v_{i_1}}$ is an arc of $\vec{N}$,} both by
  \cref{obs:undirected-matching}.  Since $\a{v_{i_1 - 1}, u_{i_1 - 1}}$ and
  $\a{v_{i_1}, u_{i_1}}$ aren't both \nz{arcs of} $\vec{N}$, we can assume
  w.l.o.g.\ that $\a{u_{i_1}, v_{i_1}}$ \nz{is an arc of $\vec{N}$}.  By
  \cref{prop:in-degree}, this implies that $\s{v_{i_1}, \ldots, v_{i_2 - 1}}$ is
  a directed path in $\vec{N}$.  By \cref{prop:out-degree}, this implies that
  $\a{v_{i_2 - 1}, u_{i_2 - 1}}$ is \nz{an arc}.  Since $\a{v_{i_2 - 1}, u_{i_2
  - 1}}$ and $\a{v_{i_2}, u_{i_2}}$ aren't both \nz{arcs of} $\vec{N}$, and
  since $\e{v_{i_2 - 1}, v_{i_2}}$ \nz{is an edge of} $\vec{N}$, this implies
  that $\a{u_{i_2}, v_{i_2}}$ \nz{is an arc of} $\vec{N}$.

  By repeating this argument for $2 \le h \le t$, we conclude that for all $1
  \le h \le t$, $\s{v_{i_h}, \ldots, v_{i_{h+1} - 1}}$ is a directed path in
  $\vec{N}$.  Together, with the edges $\e{v_{i_1 - 1}, v_{i_1}}, \ldots,
  \e{v_{i_t - 1},v_{i_t}}$, this makes $C$ a semi-directed cycle in $\vec{N}$,
  a~contradiction to \cref{prop:cycle-free}.  Thus, there must exist an index
  $h$ such that $\a{v_{i_h - 1}, u_{i_h - 1}}$ and $\a{v_{i_h}, u_{i_h}}$
  \nz{are both arcs of} $\vec{N}$.
\end{proof}

% ------------------------------------------------------------------------------
\section{NP-Hardness of Unrooted Orchard \nz{Orientation}}
% ------------------------------------------------------------------------------

\nz{We are ready to prove that \textsc{Orchard Orientation} is NP-hard even for
binary networks.}  We prove this by constructing a \nz{binary} network $N =
N(F)$ from any Boolean formula $F$ in 3-CNF such that $F$ is satisfiable if and
only \nz{if} $N$ \nz{has an HPO}. Since \textsc{3-SAT} is NP-hard, this proves that so is
\nz{\textsc{Orchard Orientation}}.

\nz{We construct $N$ from certain subgraphs, called \emph{widgets}, that
represent vertices and clauses in $F$, and the interactions between them.  We
describe these widgets in \cref{sec:widgets}.  \cref{sec:reduction} shows how to
combine them to obtain $N$ and proves that $N$ has an HPO if and only if $F$ is
satisfiable.  This proof relies on a number of properties of the widgets, which
are shown in \cref{sec:widgets}.}

% ------------------------------------------------------------------------------
\subsection{Widgets}
% ------------------------------------------------------------------------------

\label{sec:widgets}

We construct the network $N$ from six types of widgets, \nz{five types of
\emph{gate widgets} that mimic logical gates and a \emph{wire widget} used to
mimic the wires correcting these gates to form a circuit}.  Every widget $W$ has
one or more edges $\e{x_1, x_2}$ such that in $N$, $x_1$ and $x_2$ have
neighbours $x_1'$ and $x_2'$ not in $W$.  We call $x_1'$ and $x_2'$ the
\emph{external neighbours} of $x_1$ and $x_2$, respectively.  \nz{If $W$ mimics
a gate, then} the two edges $\e{x_1, x_1'}$ and $\e{x_2, x_2'}$ together mimic
an input or output wire of this gate.  Accordingly, we call $\e{x_1, x_2}$
\nz{an} \emph{input} or \emph{output} of $W$.  \nz{The network $N$ will be
constructed by identifying each input of a gate widget with the output of some
wire widget, and each output of a gate widget with the input of some wire
widget.}

We can view an HPO $\vec{N}$ of $N$ as assigning states to \nz{the} inputs and
outputs of \nz{gate} widgets.  Specifically, we consider an input $\e{x_1, x_2}$
to be {\true} if $\a{x_1, x_1'}$ and $\a{x_2, x_2'}$ \nz{are arcs of} $\vec{N}$,
{\false} if $\a{x_1', x_1}$ and $\a{x_2', x_2}$ \nz{are arcs of} $\vec{N}$, and
undefined otherwise.  Similarly, we consider an output $\e{x_1, x_2}$ to be
{\true} if $\a{x_1', x_1}$ and $\a{x_2', x_2}$ \nz{are arcs of} $\vec{N}$,
{\false} if $\a{x_1, x_1'}$ and $\a{x_2, x_2'}$ \nz{are arcs of} $\vec{N}$, and
undefined otherwise.  Note the opposite orientation of the edges \nz{incident
to} {\true} and {\false} inputs and outputs.  This ensures that when connecting
the output of one widget $W_1$ to the input of another widget $W_2$, we
interpret the edges that form this connection as the same truth value whether we
look at them as output edges of $W_1$ or as input edges of $W_2$.

The six types of widgets from which we build $N$ are:
\begin{itemize}
  \item A \emph{choice widget} represents a Boolean variable $x$ in $F$ and
  forces us to choose whether $x$ is {\true} or {\false} by choosing to
  \nz{leave} one of two edges in this widget undirected in any HPO of $N$.
  \item A \emph{clause widget} represents a clause.  It has three inputs
  representing the literals in this clause and one output representing their
  disjunction.  If the output is {\true}, then at least one of the inputs must
  be {\true} in any HPO of $N$.
  \item An \emph{and-widget} has a number of inputs and one output representing
  the conjunction of the inputs.  
  % \nzdel{Specifically,} 
  If the output is
  {\true}, then all inputs must be {\true} in any HPO of $N$.
  \item A \emph{replicator widget} is an upside-down and-widget.  It has one
  input and a number of outputs.  Intuitively, it is used to replicate the input
  value across multiple outputs.  The replicator \nz{actually} ensures only that
  if one of its outputs is {\true}, its input must also be {\true}, which is
  sufficient for our purposes.
  \item A \emph{root widget}, the sheep, is a widget that must contain the root
  of any HPO of $N$.
  \item A \emph{wire widget} has one input and one output and is used as a
  ``wire'' to connect the output of one widget to the input of another widget.
\end{itemize}

For each widget type, we define a set of POs that we call \emph{Boolean POs}
(BPOs).  We use these BPOs in the proof that $N$ \nz{has an HPO} if $F$ is
satisfiable.  Specifically, we will show that the combination of these BPOs of
the widgets is an HPO of $N$ if $F$ is satisfiable.

% ------------------------------------------------------------------------------
\subsubsection{Root Widget (Sheep)}
% ------------------------------------------------------------------------------

The root widget is shown in \cref{fig:root-widget}.  The BPO of this widget is
shown in \cref{fig:root-widget-labelling}.  The following observation can be
verified by inspection of \cref{fig:root-widget-labelling}.

% ------------------------------------------------------------------------------
\begin{figure}[ht]
  \centering
  \subcaptionbox{\label{fig:root-widget}}{\begin{tikzpicture}
    \pic (sheep) {sheep};
    \path [external edge] 
    (sheep-in-1) -- +(270:1) node [vertex] (i1) {}
    (sheep-in-2) -- +(270:1) node [vertex] (i2) {};
    \path
    (sheep-root) node [anchor=south]                       (r-l)  {$r$}
    (sheep-v-1)  node [anchor=north]                              {$u$}
    (sheep-v-2)  node [anchor=east]                        (v-l)  {$v$}
    (sheep-v-3)  node [anchor=west]                        (w-l)  {$w$}
    (sheep-in-1) node [anchor=east,xshift=1pt,yshift=-3pt] (i1-l) {$i_1$}
    (sheep-in-2) node [anchor=west,xshift=1pt,yshift=-3pt] (i2-l) {$i_2$}
    (barycentric cs:sheep-in-1=0.5,i1=0.5) coordinate (bot)
    (barycentric cs:i1=0.5,i2=0.5) node [anchor=north,yshift=-2pt] {\shortstack{Input from\\other widgets}};
    \begin{scope}[on background layer]
      \node [region,fit=(r-l) (v-l) (w-l) (i1-l) (i2-l) (bot)] {};
    \end{scope}
  \end{tikzpicture}}%
  \hspace{1in}%
  \subcaptionbox{\label{fig:root-widget-labelling}}{\begin{tikzpicture}
    \pic (sheep) {sheep boolean};
    \path
    (sheep-in-1) +(270:1) node [vertex] (i1) {}
    (sheep-in-2) +(270:1) node [vertex] (i2) {};
    \path [external edge,->] (sheep-in-1) -- (i1);
    \path [external edge,->] (sheep-in-2) -- (i2);
    \path
    (sheep-root) node [anchor=south]                       (r-l)  {\phantom{$r$}}
    (sheep-v-1)  node [anchor=north]                              {\phantom{$u$}}
    (sheep-v-2)  node [anchor=east]                        (v-l)  {\phantom{$v$}}
    (sheep-v-3)  node [anchor=west]                        (w-l)  {\phantom{$v$}}
    (sheep-in-1) node [anchor=east,xshift=1pt,yshift=-3pt] (i1-l) {\phantom{$i_1$}}
    (sheep-in-2) node [anchor=west,xshift=1pt,yshift=-3pt] (i2-l) {\phantom{$i_2$}}
    (barycentric cs:sheep-in-1=0.5,i1=0.5) coordinate (bot)
    (barycentric cs:i1=0.5,i2=0.5) node [anchor=north,yshift=-2pt] {\phantom{\shortstack{Input from\\other widgets}}};
    \begin{scope}[on background layer]
      \node [region,fit=(r-l) (v-l) (w-l) (i1-l) (i2-l) (bot)] {};
    \end{scope}
  \end{tikzpicture}}
  \caption{(a) The root widget (the sheep). \nz{The widget consists of the
  vertices and edges in the shaded region.  The dashed edges are connections to
  vertices in other widgets.}  (b) Its BPO.  \nz{Arcs are smooth lines; edges
  are zigzag lines}.}
\end{figure}
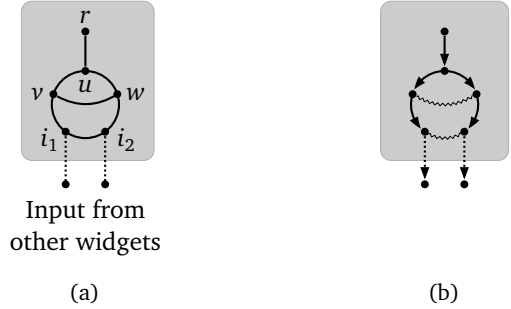
% ------------------------------------------------------------------------------

% ------------------------------------------------------------------------------
\begin{obs}
  \label{obs:root-widget-boolean}
  Let $N$ be an unrooted binary network that contains a root widget $R$, and let
  $\vec{N}$ be a PO of $N$ whose restriction to $R$ is the BPO of $R$ and such
  that the input $\e{i_1, i_2}$ of $R$ is {\true}.  Then all vertices in $R$
  satisfy \cref{prop:in-degree,prop:out-degree}, any semi-directed cycle in
  $\vec{N}$ must include a vertex not in $R$, \nz{and $\rho$ is a root of
  $\vec{N}$}.
\end{obs}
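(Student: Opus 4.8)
The plan is to verify all three assertions by direct inspection of the BPO drawn in \cref{fig:root-widget-labelling}, the only point that genuinely uses the stated hypothesis being the orientation of the two edges of $N$ that leave $R$. First I would read off the local structure. Writing $r,u,v,w$ for the vertices of $R$ labelled accordingly in \cref{fig:root-widget} (with $r$ the vertex the statement denotes $\rho$) and $i_1',i_2'$ for the external neighbours of $i_1,i_2$: the restriction of $\vec{N}$ to $R$ consists of the arcs $\a{r, u}$, $\a{u, v}$, $\a{u, w}$, $\a{v, i_1}$, $\a{w, i_2}$ and the two edges $\e{v, w}$ and $\e{i_1, i_2}$; the vertex $r$ has degree $1$ in $N$ and is therefore a leaf of $N$, while $u, v, w, i_1, i_2$ have degree $3$; and because the input $\e{i_1, i_2}$ is {\true}, the edges $\e{i_1, i_1'}$ and $\e{i_2, i_2'}$ are oriented in $\vec{N}$ as $\a{i_1, i_1'}$ and $\a{i_2, i_2'}$.

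The degree conditions are then immediate from this list. Each of $u, v, w, i_1, i_2$ has exactly one in-arc in $\vec{N}$, namely $\a{r, u}$, $\a{u, v}$, $\a{u, w}$, $\a{v, i_1}$, $\a{w, i_2}$ respectively; here the hypothesis that the input is {\true} is precisely what prevents the external edges at $i_1$ and $i_2$ from contributing a second in-arc. Hence every vertex of $R$ other than $r$ has in-degree $1$, so \cref{prop:in-degree} holds. For \cref{prop:out-degree}, the non-leaf vertices of $N$ lying in $R$ are $u, v, w, i_1, i_2$, and each has an out-arc: $u$ has $\a{u, v}$ and $\a{u, w}$, $v$ has $\a{v, i_1}$, $w$ has $\a{w, i_2}$, and $i_1, i_2$ have $\a{i_1, i_1'}$ and $\a{i_2, i_2'}$. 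Finally $r$ has the single out-arc $\a{r, u}$, no in-arc, and no incident edge in $\vec{N}$, so $r = \rho$ has only out-arcs and no incident edges, i.e.\ it is a root of $\vec{N}$.

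It remains to rule out a semi-directed cycle of $\vec{N}$ all of whose vertices lie in $R$; such a cycle can only use the arcs and edges of the restriction above, since the external out-arcs at $i_1,i_2$ leave $R$. I would use a short potential argument, mirroring the one in the proof of \cref{lem:HGT-consistent}: assign levels $\tau(r)=0$, $\tau(u)=1$, $\tau(v)=\tau(w)=2$, $\tau(i_1)=\tau(i_2)=3$. Every arc of the restriction of $\vec{N}$ to $R$ goes from a strictly smaller level to a larger one, and every edge of this restriction joins two vertices on the same level. Along a semi-directed cycle confined to $R$ the level is therefore non-decreasing and, as the cycle is closed, constant; hence such a cycle uses only edges. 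But the only edges inside $R$, $\e{v,w}$ and $\e{i_1,i_2}$, are vertex-disjoint and so contain no cycle, a contradiction. Thus every semi-directed cycle of $\vec{N}$ contains a vertex outside $R$.

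I do not expect a real obstacle: once the local picture is correctly read off \cref{fig:root-widget-labelling}, all three claims are finite checks, essentially the ``inspection'' the text alludes to. The one place where care is needed, and where the hypothesis is indispensable, is the in-degree of $i_1$ and $i_2$: were the edges to $i_1',i_2'$ oriented into $R$, these vertices would have in-degree $2$ and \cref{prop:in-degree} would fail, which is exactly why the statement insists that the input of $R$ be {\true}.
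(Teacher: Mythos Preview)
Your proposal is correct and is precisely the ``inspection'' the paper alludes to; the paper gives no proof beyond pointing at \cref{fig:root-widget-labelling}, and you have faithfully read off the arcs $\a{r,u},\a{u,v},\a{u,w},\a{v,i_1},\a{w,i_2}$ and edges $\e{v,w},\e{i_1,i_2}$ from that figure. Your level function for ruling out internal semi-directed cycles is a clean way to formalize the acyclicity check, and you correctly isolate the one place the {\true}-input hypothesis is needed (the in-degree of $i_1,i_2$).
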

% ------------------------------------------------------------------------------

% ------------------------------------------------------------------------------
\begin{lem}
  \label{lem:root-in-sheep}
  Let $N$ be an unrooted binary network that contains a root widget $R$, and let
  $\vec{N}$ be an HPO of $N$. \nz{Then the root of $\vec{N}$ is one of the
  vertices $r$, $u$, $v$, and $w$ in $R$.}
\end{lem}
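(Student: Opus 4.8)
The plan is to exploit the special structure of the sheep widget~$R$, which contains a small chordless cycle (the five vertices of the pentagon: $u$, $v$, $w$, and the two vertices $i_1$, $i_2$ through which $R$ connects to the rest of~$N$) together with two pendant edges $\e{v,w}$ drawn as a chord-like connection and the ``horn'' path hanging off~$u$. I would argue directly from \cref{lem:HGT-consistent} together with the two cycle lemmas, \cref{lem:no-semi-directed-paths} and \cref{lem:out-edge-pair}, rather than going back to HGT-consistent labellings. The key point is that $R$ is attached to the rest of $N$ only through the two edges incident to $i_1$ and $i_2$; in fact those two vertices have their external neighbours outside $R$, so any semi-directed path that enters or leaves $R$ must pass through $i_1$ or $i_2$.

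First I would set up the combinatorics of $R$: identify the chordless cycles inside $R$ (there is the pentagon $C = \s{u, v, i_1, i_2, w, u}$ and the triangle-like short cycle using the edge $\e{v, w}$) and record, for each vertex of $R$, its degree and its neighbours, distinguishing internal neighbours from external ones. Then, assuming $\vec N$ is an HPO of $N$, I would apply \cref{lem:out-edge-pair} to a suitable chordless cycle of $R$ to locate an edge $\e{v_{i-1}, v_i}$ of $\vec N$ whose two endpoints send arcs \emph{out} of the cycle. Combined with \cref{prop:in-degree} (every non-root vertex has in-degree exactly~$1$) and \cref{prop:out-degree} (every internal vertex has out-degree $\ge 1$), this pins down the orientation of almost all edges inside $R$: the vertices on the short cycle through $\e{v,w}$ are forced, and then tracing in-degree-$1$ constraints around the pentagon propagates the orientation outward. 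The upshot I am aiming for is that, if the root of $\vec N$ lay \emph{outside} $R$, then $R$ would have to receive its unique in-arc through one of $i_1$, $i_2$, and then in-degree-$1$ accounting around the cycles would either force a second in-arc at some vertex of $R$ (contradicting \cref{prop:in-degree}) or leave some internal vertex of $R$ with out-degree $0$ (contradicting \cref{prop:out-degree}), or create a semi-directed cycle (contradicting \cref{prop:cycle-free} / \cref{lem:no-semi-directed-paths} applied to the chord-like edge $\e{v,w}$). Hence the root must be a vertex of $R$.

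To then sharpen ``the root is in $R$'' to ``the root is one of $r, u, v, w$'', I would eliminate the remaining vertices of $R$ — the horn vertices on the path off $u$ and the vertices $i_1$, $i_2$ — as candidates. The vertex $r$ (the widget's distinguished root vertex, drawn with a pendant horn) is a natural root candidate; $i_1$ and $i_2$ cannot be the root because a root has no incident edges and only out-arcs (\cref{prop:unique-root}), yet $i_1$, $i_2$ each have an external neighbour and the connection pattern forces an incident edge or in-arc there in any HPO (this is where I would invoke \cref{lem:no-semi-directed-paths} on $\e{v,w}$ and the local structure). The internal horn vertices between $r$ and $u$, and the vertices $i_1,i_2$, each have degree~$3$ with two neighbours inside $R$; being the root would force both internal neighbours to be out-neighbours with no incident edge, which — by chasing in-degrees along the pentagon — again produces either an in-degree-$2$ vertex or a semi-directed cycle inside $R\cup\{$external bits$\}$. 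Checking the handful of remaining cases pins the root to $\{r,u,v,w\}$.

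The main obstacle I anticipate is the bookkeeping in the middle step: there are several chordless cycles inside the sheep (the pentagon and the short cycle through $\e{v,w}$, plus any cycle created by the horn if it is a loop-like attachment), and \cref{lem:out-edge-pair} must be applied to the right one, with the in-degree/out-degree propagation carried out carefully so that every vertex of $R$ is accounted for exactly once. I expect the proof will, in effect, reconstruct the BPO of \cref{fig:root-widget-labelling} as the \emph{only} orientation of $R$ consistent with \crefrange{prop:unique-root}{prop:cycle-free} when the input $\e{i_1,i_2}$ is set (which is essentially the content of \cref{obs:root-widget-boolean} read in reverse); the delicate part is handling the case where the input is \emph{not} set and showing the root still cannot escape through $i_1$ or $i_2$, which is exactly where \cref{lem:no-semi-directed-paths} applied to the internal chord edge $\e{v,w}$ does the decisive work.
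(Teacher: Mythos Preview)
Your plan has the right toolkit but rests on two factual errors about the widget that would derail the execution. First, the pentagon $\s{u,v,i_1,i_2,w}$ is \emph{not} chordless: the sheep carries the chord $\e{v,w}$, so \cref{lem:out-edge-pair} does not apply to it. The chordless cycles in $R$ are the triangle $\s{u,v,w}$ and the $4$-cycle $\s{v,i_1,i_2,w}$. Second, there are no ``horn vertices between $r$ and $u$'': $r$ is a leaf attached directly to $u$ by a single edge, so $R$ has exactly the six vertices $r,u,v,w,i_1,i_2$, and the elimination step you sketch for intermediate horn vertices is vacuous.

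The paper's proof is considerably shorter and bypasses both the pentagon and your two-stage plan (first confine the root to $R$, then rule out $i_1,i_2$). It works directly with the triangle $T=\s{u,v,w}$. From \cref{obs:undirected-matching} and \cref{lem:no-semi-directed-paths} one extracts that some vertex $x\in T$ has out-arcs to the other two triangle vertices, with the opposite side of $T$ left undirected. If $x=u$, the only remaining edge at $u$ is $\e{r,u}$; it cannot stay undirected (else both $r$ and $u$ have in-degree $0$ together with an incident undirected edge, contradicting \cref{prop:unique-root,prop:in-degree}), and whichever orientation it takes, one of $r,u$ ends up with only out-arcs and is the root. If $x=v$ (the case $x=w$ is symmetric), then \cref{prop:out-degree} forces $\a{w,i_2}$, and \cref{lem:no-semi-directed-paths} applied to the edge $\e{i_1,i_2}$ via the path $\s{i_1,v,w,i_2}$ forces $\a{v,i_1}$; so $v$ has three out-arcs and is the root. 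No separate argument that the root cannot lie outside $R$ is needed: the triangle analysis already pins it to $\{r,u,v,w\}$ directly.
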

% ------------------------------------------------------------------------------

\begin{proof}
  Consider the triangle $T = \s{u, v, w}$ in $R$.  By
  \cref{obs:undirected-matching,lem:no-semi-directed-paths}, there must exist a
  vertex $x \in T$ such that \nz{$\vec{N}$ contains the arcs $\a{x, y}$ and
  $\a{x, z}$, and the edge $\e{y, z}$,} where $y$ and $z$ are the other two
  vertices in $T$.
  
  If 
  % $z = u$,
  \yuki{$x=u$,}
  then \nz{$\a{r,u}$ or $\a{u, r}$ is an arc of $\vec{N}$ because
  otherwise, both $r$ and $u$ would have in-degree $0$ (and an incident edge),
  contradicting \cref{prop:unique-root,prop:in-degree} of an HPO.} This implies
  that $r$ \nz{or $u$} has in-degree $0$ and no incident edges \nz{in
  $\vec{N}$.}  Thus, it is the root of $\vec{N}$.

  If 
  % $z \ne u$,
  \yuki{$x\ne u$,}
  we can assume w.l.o.g.\ that 
  % $z = v$
  \yuki{$x=v$}
  because the root widget is
  symmetric.  In this case, \nz{$\vec{N}$ contains the arcs $\a{v, u}$ and
  $\a{v, w}$, and the edge $\e{u,w}$.}  By \cref{prop:out-degree}, this implies
  that $\a{w, i_2}$ \nz{is also an arc of} $\vec{N}$.  \nz{If $\vec{N}$ contains
  the arc $\a{i_1, v}$ or the edge $\e{i_1, v}$, then $\s{i_1, v, w, i_2}$ is a
  semi-directed path that ends in an arc, contradicting
  \cref{lem:no-semi-directed-paths}.  Thus, $\vec{N}$ contains the arc $\a{v,
  i_1}$.  This shows that $v$ has only out-arcs and thus must be the root
  of~$\vec{N}$.}
\end{proof}

% ------------------------------------------------------------------------------
\subsubsection{\nz{Wire} Widget}
% ------------------------------------------------------------------------------

The \nz{wire} widget is shown in \cref{fig:connector-widget}.  Its purpose is to
ensure that either its input is {\true} or its output is {\false}.

% ------------------------------------------------------------------------------
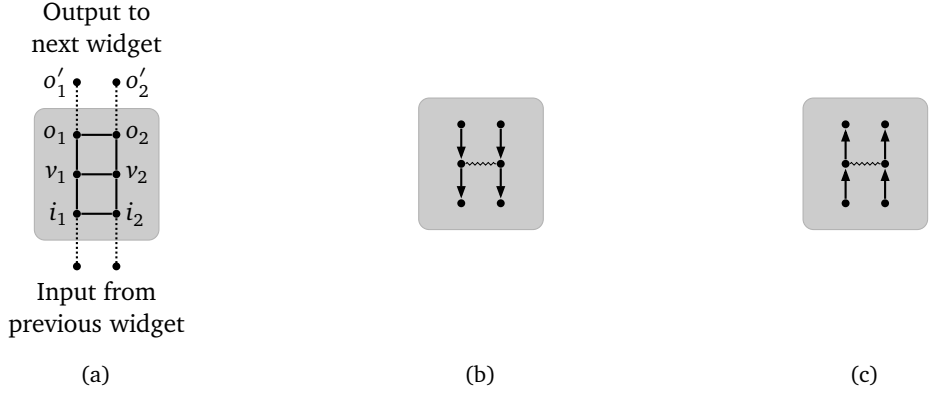
\begin{figure}[ht]
  \centering
  \subcaptionbox{\label{fig:connector-widget}}{\begin{tikzpicture}
    \pic (connector) {connector};
    \path [external edge] 
    (connector-in-1)  -- +(270:1) node [vertex] (i1) {}
    (connector-in-2)  -- +(270:1) node [vertex] (i2) {}
    (connector-out-1) -- +(90:1)  node [vertex] (o1) {}
    (connector-out-2) -- +(90:1)  node [vertex] (o2) {}
    (barycentric cs:connector-in-1=0.5,i1=0.5)  coordinate (bot)
    (barycentric cs:connector-out-1=0.5,o1=0.5) coordinate (top);
    \path
    (connector-v-1)   node [anchor=east] (v1-l) {$v_1$}
    (connector-v-2)   node [anchor=west] (v2-l) {$v_2$}
    (connector-in-1)  node [anchor=east] (i1-l) {$i_1$}
    (connector-in-2)  node [anchor=west] (i2-l) {$i_2$}
    (connector-out-1) node [anchor=east] (o1-l) {$o_1$}
    (connector-out-2) node [anchor=west] (o2-l) {$o_2$}
    (o1)              node [anchor=east]        {$o_1'$}
    (o2)              node [anchor=west]        {$o_2'$}
    (barycentric cs:i1=0.5,i2=0.5) node [anchor=north,yshift=-2pt] {\shortstack{Input from\\previous widget}}
    (barycentric cs:o1=0.5,o2=0.5) node [anchor=south,yshift=6pt]  {\shortstack{Output to\\next widget}};
    \begin{scope}[on background layer]
      \node [region,fit=(i1-l) (i2-l) (o1-l) (o2-l) (v1-l) (v2-l) (bot) (top)] {};
    \end{scope}
  \end{tikzpicture}}%
  \hspace{1in}%
  \subcaptionbox{\label{fig:connector-true-labelling}}{\begin{tikzpicture}
    \pic (connector) {connector true};
    \path
    (connector-in-1) +(270:1) coordinate (i1)
    (connector-in-2) +(270:1) coordinate (i2)
    (barycentric cs:connector-in-1=0.5,i1=0.5)  coordinate (bot)
    (barycentric cs:connector-out-1=0.5,o1=0.5) coordinate (top);
    \path
    (connector-v-1)   node [anchor=east] (v1-l) {\phantom{$v_1$}}
    (connector-v-2)   node [anchor=west] (v2-l) {\phantom{$v_1$}}
    (connector-in-1)  node [anchor=east] (i1-l) {\phantom{$i_1$}}
    (connector-in-2)  node [anchor=west] (i2-l) {\phantom{$i_1$}}
    (connector-out-1) node [anchor=east] (o1-l) {\phantom{$o_1$}}
    (connector-out-2) node [anchor=west] (o2-l) {\phantom{$o_1$}}
    (barycentric cs:i1=0.5,i2=0.5) node [anchor=north,yshift=-6pt] {\phantom{\shortstack{Input from\\previous widget}}};
    \begin{scope}[on background layer]
      \node [region,fit=(i1-l) (i2-l) (o1-l) (o2-l) (v1-l) (v2-l) (bot) (top)] {};
    \end{scope}
  \end{tikzpicture}}%
  \hspace{1in}%
  \subcaptionbox{\label{fig:connector-false-labelling}}{\begin{tikzpicture}
    \pic (connector) {connector false};
    \path
    (connector-in-1) +(270:1) coordinate (i1)
    (connector-in-2) +(270:1) coordinate (i2)
    (barycentric cs:connector-in-1=0.5,i1=0.5)  coordinate (bot)
    (barycentric cs:connector-out-1=0.5,o1=0.5) coordinate (top);
    \path
    (connector-v-1)   node [anchor=east] (v1-l) {\phantom{$v_1$}}
    (connector-v-2)   node [anchor=west] (v2-l) {\phantom{$v_1$}}
    (connector-in-1)  node [anchor=east] (i1-l) {\phantom{$i_1$}}
    (connector-in-2)  node [anchor=west] (i2-l) {\phantom{$i_1$}}
    (connector-out-1) node [anchor=east] (o1-l) {\phantom{$o_1$}}
    (connector-out-2) node [anchor=west] (o2-l) {\phantom{$o_1$}}
    (barycentric cs:i1=0.5,i2=0.5) node [anchor=north,yshift=-6pt] {\phantom{\shortstack{Input from\\previous widget}}};
    \begin{scope}[on background layer]
      \node [region,fit=(i1-l) (i2-l) (o1-l) (o2-l) (v1-l) (v2-l) (bot) (top)] {};
    \end{scope}
  \end{tikzpicture}}  
  \caption{(a) A \nz{wire} widget.  \nz{The widget consists of the vertices and
  edges in the shaded region.  The dashed edges are connections to vertices in
  other widgets.}  (b) its \true-BPO.  (c) Its \false-BPO.  In figures (b) and
  (c), \nz{arcs are smooth lines; edges are zigzag lines.} The orientations of
  the edges $\e{o_1, o_2}$ and $\e{i_1, i_2}$ are not shown \nz{in figures (b)
  and (c)}, as they are determined by the other widgets containing these edges.}
\end{figure}
% ------------------------------------------------------------------------------

% ------------------------------------------------------------------------------
\begin{lem}
  \label{lem:connector-widget}
  Let $N$ be an unrooted binary network that contains \nz{a root widget $R$ and
  a wire} widget $L$, \nz{let $\e{i_1, i_2}$ be the input of $L$, let $\e{o_1,
  o_2}$ be the output of $L$, let $v_1$ and $v_2$ be the two interior vertices
  of $L$,} let $o_1'$ and $o_2'$ be the external neighbours of $o_1$ and $o_2$,
  respectively \nz{(shown in \cref{fig:connector-widget}),} and let $\vec{N}$ be
  an HPO of $N$.  If $\a{o_i', o_i}$ \nz{is an arc of $\vec{N}$, for some $i \in
  \{1, 2\}$,} then $\a{v_1, i_1}$ and $\a{v_2, i_2}$ \nz{are arcs of} $\vec{N}$
  and there exist semi-directed paths from $o_i$ to both $i_1$ and $i_2$ in
  $\vec{N}$.
\end{lem}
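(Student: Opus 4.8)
The plan is to argue entirely from the combinatorial characterisation of HPOs, using \cref{lem:HGT-consistent,obs:undirected-matching,lem:out-edge-pair}. The wire widget is symmetric under swapping its two rails (exchanging $i_1\leftrightarrow i_2$, $v_1\leftrightarrow v_2$, $o_1\leftrightarrow o_2$, $i_1'\leftrightarrow i_2'$, $o_1'\leftrightarrow o_2'$, which fixes the rung edges $\e{i_1,i_2}$, $\e{v_1,v_2}$, $\e{o_1,o_2}$), so I would assume without loss of generality that $\a{o_1',o_1}$ is an arc of $\vec N$. By \cref{prop:in-degree} this is the \emph{unique} in-arc of $o_1$, so none of $\a{v_1,o_1}$, $\a{o_2,o_1}$, $\a{o_1,o_1'}$ is an arc of $\vec N$; in particular the edge $\e{o_1,v_1}$ of $N$ is present in $\vec N$ either as the undirected edge $\e{o_1,v_1}$ or as the arc $\a{o_1,v_1}$. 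Apply \cref{lem:out-edge-pair} to the chordless $4$-cycle $C=\s{o_1,v_1,v_2,o_2,o_1}$, whose off-cycle neighbours are $o_1',i_1,i_2,o_2'$. It produces a consecutive pair on $C$ that is an edge of $\vec N$, both of whose endpoints have an out-arc to their off-cycle neighbour. The pairs $\e{o_2,o_1}$ and $\e{o_1,v_1}$ are impossible, as each would require $\a{o_1,o_1'}$ to be an arc. So either \textbf{(A)} $\e{v_1,v_2}$ is an edge of $\vec N$ and $\a{v_1,i_1},\a{v_2,i_2}$ are arcs of $\vec N$, or \textbf{(B)} $\e{v_2,o_2}$ is an edge of $\vec N$ and $\a{v_2,i_2},\a{o_2,o_2'}$ are arcs of $\vec N$; in case (A) the two asserted arcs are already in hand.

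For case (B), I would apply \cref{lem:out-edge-pair} a second time, to the other chordless $4$-cycle $C'=\s{i_1,v_1,v_2,i_2,i_1}$ (off-cycle neighbours $i_1',o_1,o_2,i_2'$). The pairs $\e{i_1,v_1}$ and $\e{v_1,v_2}$ are impossible because each requires the non-arc $\a{v_1,o_1}$, and $\e{v_2,i_2}$ is impossible because $\e{v_2,o_2}$ being an edge of $\vec N$ precludes the arc $\a{v_2,o_2}$; hence the pair is $\e{i_2,i_1}$, so $\e{i_1,i_2}$ is an edge of $\vec N$ and $\a{i_1,i_1'},\a{i_2,i_2'}$ are arcs. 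A short propagation now finishes case (B): since $i_1$ is incident to the edge $\e{i_1,i_2}$, \cref{obs:undirected-matching} forces $\e{v_1,i_1}$ and $\e{i_1,i_1'}$ to be oriented in $\vec N$, and since $\a{i_1,i_1'}$ is an out-arc, the unique in-arc of $i_1$ is $\a{v_1,i_1}$; likewise, $v_2$ being incident to the edge $\e{v_2,o_2}$ forces $\e{v_1,v_2}$ to be oriented, and it must be $\a{v_1,v_2}$ since $\a{v_2,i_2}$ is the out-arc of $v_2$; and then $v_1$, now having out-arcs $\a{v_1,i_1}$ and $\a{v_1,v_2}$, has $\a{o_1,v_1}$ as its in-arc.

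In both cases $\a{v_1,i_1}$ and $\a{v_2,i_2}$ are arcs of $\vec N$; moreover $o_1$ and $v_1$ are joined in $\vec N$ by the arc $\a{o_1,v_1}$ or by the undirected edge $\e{o_1,v_1}$, and $v_1$ and $v_2$ are joined by the undirected edge $\e{v_1,v_2}$ in case (A) or by the arc $\a{v_1,v_2}$ in case (B). Concatenating, $\s{o_1,v_1,i_1}$ and $\s{o_1,v_1,v_2,i_2}$ are semi-directed paths from $o_i=o_1$ to $i_1$ and to $i_2$, as required; the case in which $\a{o_2',o_2}$ is the given arc is obtained by interchanging the subscripts $1$ and $2$ throughout. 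I expect case (B) to be the main obstacle: a single application of \cref{lem:out-edge-pair} to $C$ does not resolve it, and the key move is to apply \cref{lem:out-edge-pair} again on the \emph{other} $4$-cycle and then propagate arc directions through the matching formed by the undirected edges.
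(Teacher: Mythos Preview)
Your argument is correct and follows the same overall scaffold as the paper's proof: apply \cref{lem:out-edge-pair} to the $4$-cycle $\s{o_1,o_2,v_2,v_1}$, rule out the two pairs containing $o_1$ because $\a{o_1,o_1'}$ cannot be an arc, and then handle the remaining two cases.  Case~(A) is identical to the paper.  In case~(B) the paper takes a different route: rather than reapplying \cref{lem:out-edge-pair} to the second cycle $\s{i_1,v_1,v_2,i_2}$, it first reads off $\a{v_1,v_2}$ directly from \cref{prop:in-degree} (since $v_2$ is already incident to the edge $\e{v_2,o_2}$ and the out-arc $\a{v_2,i_2}$), and then uses \cref{lem:no-semi-directed-paths} on the path $\s{i_1,v_1,v_2,i_2}$ to rule out $\a{i_1,v_1}$ and on $\s{i_2,i_1}$ via \cref{prop:in-degree} to force $\e{i_1,i_2}$ to be undirected, whence $\a{v_1,i_1}$ by \cref{obs:undirected-matching}.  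Your second invocation of \cref{lem:out-edge-pair} is a clean substitute that avoids \cref{lem:no-semi-directed-paths} altogether and is arguably more uniform; the paper's route is a touch more direct once one notices the path.

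One point worth tightening: the paper opens by citing \cref{lem:root-in-sheep} to conclude that the root of $\vec N$ lies in $R$, so every vertex of $L$ genuinely satisfies \cref{prop:in-degree,prop:out-degree}.  You never invoke this, instead relying on ad hoc reasoning at each step (e.g.\ $o_1$ has an in-arc so is not the root; $i_1$ is incident to an undirected edge so is not the root).  That reasoning does go through here, but it is implicit---in particular, your final propagation step ``$v_1$ \ldots\ has $\a{o_1,v_1}$ as its in-arc'' needs $v_1\neq r$, which you would justify by noting that $v_1=r$ would force $\a{v_1,o_1}$, contradicting step~2.  A one-line appeal to \cref{lem:root-in-sheep} at the outset, as the paper does, removes all of these case checks at once.
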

% ------------------------------------------------------------------------------

\begin{proof}
  \nz{Note that all vertices of $L$ must satisfy
  \cref{prop:in-degree,prop:out-degree} because the root of $\vec{N}$ is a
  vertex of $R$ not contained in any other widget, by \cref{lem:root-in-sheep}.}

  Assume w.l.o.g.\ that $\a{o_1', o_1}$ \nz{is an arc of} $\vec{N}$.  By
  \cref{lem:out-edge-pair} applied to the cycle $\s{o_1, o_2, v_2, v_1}$,
  \nz{$\vec{N}$ contains the edge $\e{v_1, v_2}$ and the arcs $\a{v_1, i_1}$ and
  $\a{v_2, i_2}$, or the edge $\e{v_2, o_2}$ and the arcs $\a{o_2, o_2'}$ and
  $\a{v_2, i_2}$.}
  
  In the former case, $\a{o_1, v_1}$ \nz{is also an arc of} $\vec{N}$, by
  \cref{prop:in-degree}.  Thus, we have semi-directed paths $\s{o_1, v_1, i_1}$
  and $\s{o_1, v_1, v_2, i_2}$ in $\vec{N}$, and the lemma holds in this case.

  In the latter case, $\a{o_1, o_2}$ and $\a{v_1, v_2}$ \nz{are also arcs of}
  $\vec{N}$, and $\a{v_1, o_1}$ \nz{is not an arc of} $\vec{N}$, by
  \cref{prop:in-degree}.  The latter implies that either $\e{o_1, v_1}$ \nz{is
  an edge of} $\vec{N}$ or $\a{o_1, v_1}$ \nz{is an arc of} $\vec{N}$.

  Since $\a{v_1, v_2}$ and $\a{v_2, i_2}$ \nz{are arcs of} $\vec{N}$,
  \nz{neither $\a{i_1, v_1}$ nor $\a{i_2, i_1}$ can be an arc of} $\vec{N}$, by
  \cref{lem:no-semi-directed-paths}.  By \cref{prop:in-degree}, $\a{i_1, i_2}$
  \nz{is not an arc of} $\vec{N}$.  Thus, $\e{i_1, i_2}$ \nz{is an edge of}
  $\vec{N}$ and, by \cref{obs:undirected-matching}, $\e{v_1, i_1}$ \nz{is not an
  edge of} $\vec{N}$.  Thus, $\a{v_1, i_1}$ \nz{is an arc of} $\vec{N}$ and once
  again, $\s{o_1, v_1, i_1}$ and $\s{o_1, v_1, v_2, i_2}$ are semi-directed
  paths in~$\vec{N}$.  This shows that the lemma holds also in this case.
\end{proof}

From here on, when we say that we \emph{join} an output of some widget $W_1$ to
the input of another widget $W_2$, then we mean that we identify the output of
$W_1$ with the input of a \nz{wire} widget $L$, and the input of $W_2$ with the
output of $L$.

Since we view \nz{wire} widgets as the wires connecting logical gates
represented by the other widgets, we have two BPOs of the \nz{wire} widget
representing a {\true} wire or a {\false} wire.  We call these BPOs the
\true-BPO and the \false-BPO of the \nz{wire} widget, respectively.  They are
shown in \cref{fig:connector-true-labelling,fig:connector-false-labelling}.  The
following observation is easily verified by inspection of
\cref{fig:connector-true-labelling,fig:connector-false-labelling}.

% ------------------------------------------------------------------------------
\begin{obs}
  \label{obs:connector-boolean}
  Let $N$ be an unrooted binary network that contains a \nz{wire} widget $L$,
  and let $\vec{N}$ be a PO of $N$ whose restriction to $L$ is one of the two
  BPOs of $L$.  Then \nz{the two interior vertices} $v_1$ and $v_2$ \nz{of $L$}
  satisfy \cref{prop:in-degree,prop:out-degree}, any semi-directed cycle in
  $\vec{N}$ includes a vertex contained in a different widget, and \nz{neither
  $v_1$ nor $v_2$ is} the root of $\vec{N}$.
\end{obs}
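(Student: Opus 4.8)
The plan is to verify the three conclusions by direct inspection of the wire widget $L$ and its two BPOs (\cref{fig:connector-true-labelling,fig:connector-false-labelling}); this is the claim's promised ``easy verification,'' so the proof is an inspection rather than a new argument.

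First I would read off the in- and out-degrees of $v_1$ and $v_2$. In $N$, the three edges incident to $v_1$ are $\e{i_1, v_1}$, $\e{v_1, v_2}$, and $\e{v_1, o_1}$, all of which belong to $L$, so the BPO fixes the orientation of every edge incident to $v_1$ (and likewise for $v_2$). In the \true-BPO, $v_1$ has the single in-arc $\a{o_1, v_1}$, the out-arc $\a{v_1, i_1}$, and the incident edge $\e{v_1, v_2}$; in the \false-BPO it has the single in-arc $\a{i_1, v_1}$, the out-arc $\a{v_1, o_1}$, and the incident edge $\e{v_1, v_2}$; the situation for $v_2$ is symmetric. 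In either case $v_1$ and $v_2$ have in-degree exactly $1$ and out-degree at least $1$ in $\vec{N}$, and since $v_1$ and $v_2$ are degree-$3$ (hence non-leaf) vertices of $N$, they satisfy \cref{prop:in-degree,prop:out-degree}. Also, in both BPOs $v_1$ and $v_2$ have an incident edge (indeed also an in-arc), so neither can be the root of $\vec{N}$, which by \cref{prop:unique-root} has only out-arcs and no incident edges.

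It remains to show that no semi-directed cycle of $\vec{N}$ lies entirely inside $V(L) = \{i_1, i_2, v_1, v_2, o_1, o_2\}$; equivalently, every semi-directed cycle of $\vec{N}$ uses a vertex of another widget. The one point needing a little care is that the BPO of $L$ does not orient the two ``boundary'' edges $\e{i_1, i_2}$ and $\e{o_1, o_2}$ (these are shared with the adjacent widgets), so the claim must hold however those two edges appear in $\vec{N}$. I would handle this with a height potential: assign $i_1, i_2$ height $0$, assign $v_1, v_2$ height $1$, and assign $o_1, o_2$ height $2$. Then in the \true-BPO every arc strictly decreases the height and in the \false-BPO every arc strictly increases it, while $\e{v_1, v_2}$ and whichever of $\e{i_1, i_2}$, $\e{o_1, o_2}$ remain edges (or are oriented in either direction) connect equal heights. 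Hence along any semi-directed path through $V(L)$ the height is monotone, so a semi-directed cycle inside $V(L)$ must be constant in height and thus confined to one of the two-vertex levels $\{i_1, i_2\}$, $\{v_1, v_2\}$, $\{o_1, o_2\}$; since each level spans only a single edge, no such cycle exists. This completes the verification; all of it is routine, the height argument being the only step that goes beyond literally reading the figures.
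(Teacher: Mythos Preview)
Your proposal is correct and matches the paper's approach: the paper offers no proof beyond ``easily verified by inspection of \cref{fig:connector-true-labelling,fig:connector-false-labelling}'', and you carry out exactly that inspection. Your height-potential argument for the absence of semi-directed cycles inside $V(L)$ is a clean way to make the inspection rigorous, and it correctly accounts for the fact that the BPO leaves the boundary edges $\e{i_1,i_2}$ and $\e{o_1,o_2}$ unspecified.
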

% ------------------------------------------------------------------------------

% ------------------------------------------------------------------------------
\subsubsection{Choice Widget}
% ------------------------------------------------------------------------------

The \emph{choice widget} corresponding to a variable $x$ is shown in
\cref{fig:choice-widget}.  The two bold edges $\e{d_x, e_x}$ and $\e{d_{\bar x},
e_{\bar x}}$ represent the literals $x$ and $\bar x$.  As the following lemma
shows, only one of these two edges can be undirected in any HPO of $N$.
Choosing $\e{d_x, e_x}$ to be undirected corresponds to setting $x$ to {\true},
while choosing $\e{d_{\bar x}, e_{\bar x}}$ to be undirected corresponds to setting
$x$ to {\false}.

% ------------------------------------------------------------------------------
\begin{figure}[t]
  \centering
  \subcaptionbox{\label{fig:choice-widget}}{\begin{tikzpicture}
    \pic (choice) {choice};
    \path [external edge]
    (choice-true-1)  -- ++(180:1) node [vertex] (t1) {} -- +(180:1) node [vertex] (it1) {}
    (choice-true-2)  -- ++(180:1) node [vertex] (t2) {} -- +(180:1) node [vertex] (it2) {}
    (choice-false-1) -- ++(0:1)   node [vertex] (f1) {} -- +(0:1)   node [vertex] (if1) {}
    (choice-false-2) -- ++(0:1)   node [vertex] (f2) {} -- +(0:1)   node [vertex] (if2) {}
    (choice-in-1)    -- +(270:1)  node [vertex] (v1) {}
    (choice-in-2)    -- +(270:1)  node [vertex] (v2) {}
    (v1)             -- +(270:1)  node [vertex] (i1) {}
    (v2)             -- +(270:1)  node [vertex] (i2) {}
    (v1)             -- (v2)
    (i1)             -- (i2)
    (t1)             -- (t2)
    (it1)            -- (it2)
    (f1)             -- (f2)
    (if1)            -- (if2)
    (barycentric cs:choice-true-1=0.5,t1=0.5)  coordinate (left)
    (barycentric cs:choice-false-1=0.5,f1=0.5) coordinate (right)
    (barycentric cs:choice-in-1=0.5,v1=0.5)    coordinate (bot);
    \path
    (barycentric cs:it1=0.5,it2=0.5) node [anchor=east,xshift=-2pt]                    {Output $x$}
    (barycentric cs:if1=0.5,if2=0.5) node [anchor=west,xshift=2pt]                     {Output $\bar x$}
    (barycentric cs:i1=0.5,i2=0.5)   node [anchor=north,yshift=-6pt]                   {Feedback input}
    (choice-true-3)                  node [anchor=south]                       (ax-l)  {$a_x$}
    (choice-true-2)                  node [anchor=south west]                          {$b_x$}
    (choice-true-1)                  node [anchor=north]                       (cx-l)  {$c_x$}
    (choice-true-pv)                 node [anchor=south]                               {$d_x$}
    (choice-true-pu)                 node [anchor=south]                               {$e_x$}
    (choice-true-v)                  node [anchor=north]                       (fx-l)  {$f_x$}
    (choice-true-u)                  node [anchor=north]                       (gx-l)  {\vphantom{$f_x$}$g_x$}
    (choice-false-3)                 node [anchor=south]                       (nax-l) {$a_{\bar x}$}
    (choice-false-2)                 node [anchor=south east]                          {$b_{\bar x}$}
    (choice-false-1)                 node [anchor=north]                       (ncx-l) {$c_{\bar x}$}
    (choice-false-pv)                node [anchor=south]                               {$d_{\bar x}$}
    (choice-false-pu)                node [anchor=south]                               {$e_{\bar x}$}
    (choice-false-v)                 node [anchor=north]                       (nfx-l) {$f_{\bar x}$}
    (choice-false-u)                 node [anchor=north]                       (ngx-l) {\vphantom{$f_{\bar x}$}$g_{\bar x}$}
    (choice-true-start)              node [anchor=west]                                {$h_x$}
    (choice-false-start)             node [anchor=east]                                {$h_{\bar x}$}
    (choice-in-1)                    node [anchor=east,xshift=1pt,yshift=-5pt] (i1-l)  {$i_1$}
    (choice-in-2)                    node [anchor=west,xshift=1pt,yshift=-5pt] (i2-l)  {$i_2$}
    (choice-split-1)                 node [anchor=south east]                          {$w$}
    (choice-split-2)                 node [anchor=south west]                          {$y$}
    (choice-split-3)                 node [anchor=south east]                          {$v$}
    (choice-split-4)                 node [anchor=south]                       (u-l)   {$u$}
    (v1)                             node [anchor=east,yshift=-2pt]                    {$v_1'$}
    (v2)                             node [anchor=west,yshift=-2pt]                    {$v_2'$}
    (i1)                             node [anchor=east]                                {$i_1'$}
    (i2)                             node [anchor=west]                                {$i_2'$};
    \begin{scope}[on background layer]
      \node [region,fit=(ax-l) (cx-l) (fx-l) (gx-l) (nax-l) (ncx-l) (nfx-l) (ngx-l) (i1-l) (i2-l) (u-l) (left) (right) (bot)] {};
    \end{scope}
  \end{tikzpicture}}\\[\bigskipamount]
  \subcaptionbox{\label{fig:choice-widget-true}}{\begin{tikzpicture}
    \pic (choice) {choice true};
    \path (choice-true-1)  +(180:1) node [vertex] (t1) {};
    \path (choice-true-2)  +(180:1) node [vertex] (t2) {};
    \path (choice-false-1) +(0:1)   node [vertex] (f1) {};
    \path (choice-false-2) +(0:1)   node [vertex] (f2) {};
    \path (choice-in-1)    +(270:1) node [vertex] (i1) {};
    \path (choice-in-2)    +(270:1) node [vertex] (i2) {};
    \path [external edge,<-] (choice-true-1)  -- (t1);
    \path [external edge,<-] (choice-true-2)  -- (t2);
    \path [external edge,<-] (choice-false-1) -- (f1);
    \path [external edge,<-] (choice-false-2) -- (f2);
    \path [external edge,->] (choice-in-1)    -- (i1);
    \path [external edge,->] (choice-in-2)    -- (i2);
    \path
    (choice-split-4) +(90:0.5)  coordinate (top)
    (choice-true-1)  +(180:0.5) coordinate (left)
    (choice-false-1) +(0:0.5)   coordinate (right)
    (choice-in-1)    +(270:0.5) coordinate (bot);
    \begin{scope}[on background layer]
      \node [region,fit=(top) (bot) (left) (right)] {};
    \end{scope}
  \end{tikzpicture}}\\[\bigskipamount]
  \subcaptionbox{\label{fig:choice-widget-false}}{\begin{tikzpicture}
    \pic (choice) {choice false};
    \path (choice-true-1)  +(180:1) node [vertex] (t1) {};
    \path (choice-true-2)  +(180:1) node [vertex] (t2) {};
    \path (choice-false-1) +(0:1)   node [vertex] (f1) {};
    \path (choice-false-2) +(0:1)   node [vertex] (f2) {};
    \path (choice-in-1)    +(270:1) node [vertex] (i1) {};
    \path (choice-in-2)    +(270:1) node [vertex] (i2) {};
    \path [external edge,<-] (choice-true-1)  -- (t1);
    \path [external edge,<-] (choice-true-2)  -- (t2);
    \path [external edge,<-] (choice-false-1) -- (f1);
    \path [external edge,<-] (choice-false-2) -- (f2);
    \path [external edge,->] (choice-in-1)    -- (i1);
    \path [external edge,->] (choice-in-2)    -- (i2);
    \path
    (choice-split-4) +(90:0.5)  coordinate (top)
    (choice-true-1)  +(180:0.5) coordinate (left)
    (choice-false-1) +(0:0.5)   coordinate (right)
    (choice-in-1)    +(270:0.5) coordinate (bot);
    \begin{scope}[on background layer]
      \node [region,fit=(top) (bot) (left) (right)] {};
    \end{scope}
  \end{tikzpicture}}
  \caption{(a) A choice widget.  \nz{The widget consists of the vertices and
  edges in the shaded region.  The dashed edges represent wire widgets joining
  its inputs and outputs to other widgets.}  (b) Its \true-BPO.  (c) Its
  \false-BPO.  In figures (b) and (c), \nz{arcs are smooth lines; edges are
  zigzag lines.}}
\end{figure}
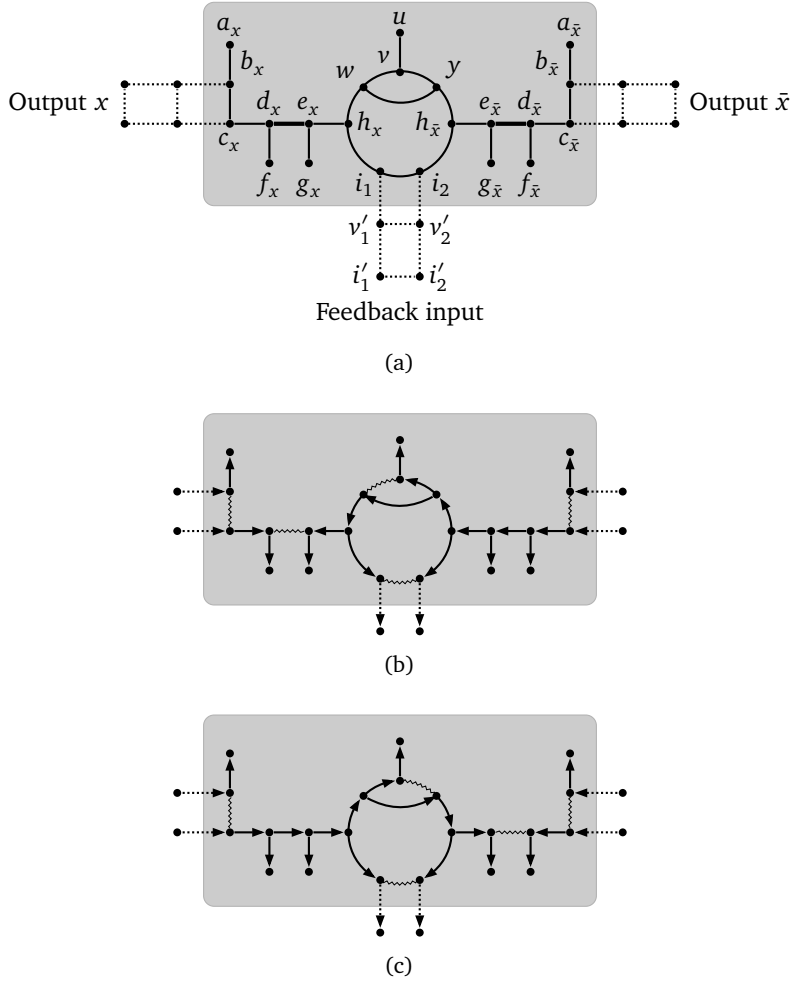
% ------------------------------------------------------------------------------

% ------------------------------------------------------------------------------
\begin{lem}
  \label{lem:choice-widget}
  Let $N$ be an unrooted binary network that contains a root widget $R$ and a
  choice widget $V$ whose input is joined to the output of another widget using
  some \nz{wire} widget $L$.  Let $v_1'$ and $v_2'$ be the two interior vertices
  of $L$, and let $\e{i_1', i_2'}$ be the input of $L$ \nz{(shown in
  \cref{fig:choice-widget})}.  Then any HPO $\vec{N}$ of $N$ contains at least
  one of the two \nz{arcs} $\a{d_x, e_x}$ and $\a{d_{\bar x}, e_{\bar x}}$ and
  must contain the two \nz{arcs} $\a{v_1', i_1'}$ and $\a{v_2', i_2'}$.
  Moreover, there exist semi-directed paths from $c_x$ to $i_1'$ and $i_2'$ or
  from $c_{\bar x}$ to $i_1'$ and $i_2'$ in $\vec{N}$.
\end{lem}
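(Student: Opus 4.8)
The plan is to reduce everything to one application of \cref{lem:out-edge-pair} to the central triangle of the choice widget, followed by a case analysis controlled by \cref{lem:no-semi-directed-paths,lem:connector-widget}. By \cref{lem:root-in-sheep} the root of $\vec{N}$ lies in $R$, so no vertex of $V$ or $L$ is the root, and hence every vertex of $V\cup L$ satisfies \cref{prop:in-degree,prop:out-degree}. In particular the degree-$1$ vertices $u,f_x,g_x,a_x$ of $V$ and their $\bar x$-counterparts are leaves of $N$ whose unique incident edge points into them, so $\vec{N}$ contains $\a{v,u}$, $\a{d_x,f_x}$, $\a{e_x,g_x}$, $\a{b_x,a_x}$ and the mirror arcs. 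I record two local facts for $\ell\in\{x,\bar x\}$: (i) if $\a{d_\ell,e_\ell}\notin\vec{N}$, then, since $\a{e_\ell,g_\ell}$ is an out-arc of $e_\ell$ and the edge $\e{d_\ell,e_\ell}$ is then not directed into $e_\ell$, the unique in-arc of $e_\ell$ (\cref{prop:in-degree}) is $\a{h_\ell,e_\ell}$; and (ii) if $\a{d_\ell,e_\ell}\in\vec{N}$, then $d_\ell$ has out-arcs to $f_\ell$ and $e_\ell$, so its unique in-arc is $\a{c_\ell,d_\ell}$.

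Now apply \cref{lem:out-edge-pair} to the chordless triangle $\e{w,v,y}$ of $V$, whose vertices have external neighbours $h_x$ (of $w$), $u$ (of $v$), and $h_{\bar x}$ (of $y$). The ``middle'' index is impossible: it would give $\a{w,h_x}$, $\a{y,h_{\bar x}}$ and the edge $\e{w,y}$, hence (by \cref{obs:undirected-matching,prop:in-degree}) $\a{v,w}$ and $\a{v,y}$, making $v$ a second source. Using the $x\leftrightarrow\bar x$ symmetry of the widget I may therefore assume that $\vec{N}$ contains $\a{w,h_x}$ and the edge $\e{v,w}$; chasing in-degrees then forces $\a{y,w}$, $\a{y,v}$, and $\a{h_{\bar x},y}$, so $\s{h_{\bar x},y,w,h_x}$ is a directed path and $\e{h_x,i_1}$ is either undirected or the arc $\a{h_x,i_1}$. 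To prove the first assertion, suppose neither $\a{d_x,e_x}$ nor $\a{d_{\bar x},e_{\bar x}}$ lies in $\vec{N}$. Fact (i) gives $\a{h_x,e_x}$ and $\a{h_{\bar x},e_{\bar x}}$; as $h_{\bar x}$ now has out-arcs to $y$ and $e_{\bar x}$, its in-arc is $\a{i_2,h_{\bar x}}$. Thus $\s{i_2,h_{\bar x},y,w,h_x,i_1}$ is a semi-directed path of length $5$. If $\e{h_x,i_1}$ is the arc $\a{h_x,i_1}$, this contradicts \cref{lem:no-semi-directed-paths} applied to the edge $\e{i_1,i_2}$; if $\e{h_x,i_1}$ is undirected, then $\e{i_1,i_2}$ and $\e{i_1,v_1'}$ are arcs (\cref{obs:undirected-matching}), and according to the in-arc of $i_1$ we either get $\a{i_1,i_2}$, closing the semi-directed cycle $\s{i_2,h_{\bar x},y,w,h_x,i_1,i_2}$, or we get $\a{i_2,i_1}$, which forces $\a{i_1,v_1'}$, $\a{v_2',i_2}$ and (through $L$) a semi-directed cycle $\s{i_1,v_1',v_2',i_2,i_1}$; both contradict \cref{prop:cycle-free}. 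Hence $\a{d_x,e_x}$ or $\a{d_{\bar x},e_{\bar x}}$ lies in $\vec{N}$.

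For the last two assertions, take a literal $\ell$ with $\a{d_\ell,e_\ell}\in\vec{N}$. If $\ell=x$, then $e_x$ has in-arc $\a{d_x,e_x}$ and $h_x$ has in-arc $\a{w,h_x}$, so the edge $\e{e_x,h_x}$ is undirected; \cref{prop:out-degree} then forces $\a{h_x,i_1}\in\vec{N}$, and \cref{lem:connector-widget} (with $h_x$ the external neighbour of $i_1$ with respect to $L$) gives $\a{v_1',i_1'}$, $\a{v_2',i_2'}$ and semi-directed paths from $i_1$ to $i_1'$ and to $i_2'$; prepending $\s{c_x,d_x,e_x,h_x,i_1}$ (valid by fact (ii)) yields the required paths out of $c_x$. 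If $\ell=\bar x$, then $h_{\bar x}$ has out-arc $\a{h_{\bar x},y}$ and $\e{e_{\bar x},h_{\bar x}}$ is not directed into $e_{\bar x}$, and I would rule out ``$\a{i_2,h_{\bar x}}\in\vec{N}$'' and ``$\e{h_{\bar x},i_2}$ undirected'' by the same cycle-through-$L$ argument as in the previous paragraph (each again makes $\s{i_2,h_{\bar x},y,w,h_x,i_1}$ extend to a semi-directed cycle), leaving $\a{h_{\bar x},i_2}\in\vec{N}$; \cref{lem:connector-widget} then gives $\a{v_1',i_1'}$, $\a{v_2',i_2'}$ and semi-directed paths from $i_2$ to $i_1'$ and $i_2'$, and prepending $\s{c_{\bar x},d_{\bar x},e_{\bar x},h_{\bar x},i_2}$ finishes the proof. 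The main obstacle is the bookkeeping in these cycle arguments: one must track the orientations of the few edges $\e{h_x,i_1}$, $\e{i_1,i_2}$, $\e{h_{\bar x},i_2}$, and decide in each sub-case whether the closing detour runs directly along $\e{i_1,i_2}$ or through the wire $L$ and produces a semi-directed cycle; everything else is forced by \cref{prop:in-degree,prop:out-degree,obs:undirected-matching}.
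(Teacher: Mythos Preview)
Your argument is correct and rests on the same local ingredients as the paper's proof (the triangle inside the choice widget, in/out-degree chasing, and cycle/forbidden-path contradictions through $\e{i_1,i_2}$ and the wire $L$), but you organise them in the opposite order. The paper first does a case analysis on $\e{h_{\bar x},i_2}$ and, in every surviving case, derives the arc $\a{h_x,i_1}$ directly (analysing $L$ by hand rather than invoking \cref{lem:connector-widget}); only then does it chain in-degrees back through $e_x,d_x$ to obtain $\a{d_x,e_x}$ and the $c_x$-paths. You instead first prove the ``at least one of $\a{d_x,e_x},\a{d_{\bar x},e_{\bar x}}$'' claim by contradiction, and then, for whichever literal $\ell$ realises it, push forward to $\a{h_\ell,i_\ell}$ and finish via \cref{lem:connector-widget}. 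Your use of \cref{lem:out-edge-pair} on the triangle and of \cref{lem:connector-widget} for the wire are cleaner than the paper's ad-hoc triangle argument and manual $L$-analysis; on the other hand, your $\ell=\bar x$ branch needs the extra sub-case ``$\e{h_{\bar x},i_2}$ undirected'' that does not literally appear in your earlier paragraph, so ``the same cycle-through-$L$ argument'' is accurate in spirit but should be spelled out (it works exactly as you sketch: either close the cycle along $\e{i_1,i_2}$ or detour through $v_1',v_2'$). Note also that in your chosen triangle orientation the case ``only $\a{d_x,e_x}$'' is in fact vacuous, so your $\ell=x$ branch is doing less work than it appears; this does not affect correctness.
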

% ------------------------------------------------------------------------------

\begin{proof}
  \nz{Note that all vertices of $L$ must satisfy
  \cref{prop:in-degree,prop:out-degree} because the root of $\vec{N}$ is a
  vertex of $R$ not contained in any other widget, by \cref{lem:root-in-sheep}.}

  Consider the triangle $T = \s{v, w, y}$ in $V$.  By
  \cref{obs:undirected-matching,lem:no-semi-directed-paths}, there must exist a
  vertex $x \in T$ such that $\vec{N}$ \nz{contains the arcs $\a{x, y}$ and $\a{x,
  z}$, and the edge $\e{y, z}$,} where $y$ and $z$ are the other two vertices in
  $T$.  If $x = v$, \nz{then no matter the orientation of the edge $\e{u, v}$,
  at least one of $u$ and $v$ has no in-arcs in $\vec{N}$, contradicting
  \cref{prop:in-degree}.}  Thus, $x \ne v$.  Since the choice widget is
  symmetric, we can assume w.l.o.g.\ that \nz{$x = w$}, that is,
  $\vec{N}$~\nz{contains the arcs $\a{w, v}$ and $\a{w, y}$, and the edge $\e{v,
  y}$.}

  This implies that $\a{h_x, w}$ \nz{is an arc of} $\vec{N}$, by
  \cref{prop:in-degree}, and that $\a{y, h_{\bar x}}$ \nz{is an arc of}
  $\vec{N}$, by \cref{prop:out-degree}.  \nz{The latter implies that $\a{i_2,
  h_{\bar x}}$ is not an arc of} $\vec{N}$, by \cref{prop:in-degree}.

  If $\e{i_2, h_{\bar x}}$ \nz{is an edge of} $\vec{N}$, then by
  \cref{prop:in-degree,prop:out-degree}, $\vec{N}$ \nz{either contains the arcs
  $\a{v_2', i_2}$ and $\a{i_2, i_1}$, or the arcs $\a{i_1, i_2}$ and $\a{i_2,
  v_2'}$.}

  In the former case, $\vec{N}$ contains the semi-directed path $\s{h_x, w, y,
  h_{\bar x}, i_2, i_1}$, and $\a{i_2, i_1}$ is \nz{an arc}.  By
  \cref{lem:no-semi-directed-paths}, no such path can exist.  Thus, $\a{i_1,
  i_2}$ and $\a{i_2, v_2'}$ \nz{are arcs of} $\vec{N}$.

  This implies that $\vec{N}$ \nz{does not contain the arc $\a{v_1', i_1}$ nor
  the edge $\e{v_1', i_1}$,} because this would make \nz{$\s{v_1', i_1, i_2,
  v_2'}$} a semi-directed path in $\vec{N}$ \nz{that ends in an arc,
  contradicting \cref{lem:no-semi-directed-paths} again.}  Thus, $\a{i_1, v_1'}$
  \nz{is an arc of} $\vec{N}$.

  By \cref{prop:in-degree}, this implies that $\e{v_1', v_2'}$ \nz{is an edge
  of} $\vec{N}$ and, therefore, by \cref{prop:out-degree}, $\a{v_1', i_1'}$ and
  $\a{v_2', i_2'}$ \nz{are arcs of} $\vec{N}$.  \nz{By \cref{prop:in-degree},
  $\a{h_x, i_1}$ is an arc of $\vec{N}$.  Thus, we have semi-directed paths}
  $\s{h_x, i_1, v_1', i_1'}$ and $\s{h_x, i_1, v_1', v_2', i_2'}$ from $h_x$ to
  $i_1'$ and $i_2'$.

  This followed from the assumption that $\e{i_2, h_{\bar x}}$ \nz{is an edge
  of} $\vec{N}$.  Now consider the second case, when $\a{h_{\bar x}, i_2}$
  \nz{is an arc of} $\vec{N}$.  \nz{This implies that $\vec{N}$ contains neither
  the edge $\e{i_1, h_x}$ nor the arc $\a{i_1, h_x}$} because in either case, we
  would obtain a semi-directed path $\s{i_1, h_x, w, y, h_{\bar x}, i_2}$ in
  $\vec{N}$ \nz{that ends in an arc, contradicting}
  \cref{lem:no-semi-directed-paths}.  Thus, $\a{h_x, i_1}$ \nz{is an arc of}
  $\vec{N}$.

  Since $\vec{N}$ \nz{contains both} $\a{h_x, i_1}$ and $\a{h_{\bar x}, i_2}$,
  \nz{it must also contain the edge} $\e{i_1, i_2}$, by \cref{prop:in-degree},
  and therefore, \nz{the arcs} $\a{i_1, v_1'}$ and $\a{i_2, v_2'}$, by
  \cref{prop:out-degree}.  
  % Once more.  
  By \cref{prop:in-degree}, this implies
  that $\vec{N}$ \nz{contains the edge} $\e{v_1', v_2'}$ and therefore, by
  \cref{prop:out-degree}, \nz{the arcs} $\a{v_1', i_1'}$ and $\a{v_2', i_2'}$.
  Moreover, $\s{h_x, i_1, v_1', i_1'}$ and $\s{h_x, i_1, v_1', v_2', i_2'}$ are
  semi-directed paths from $h_x$ to $i_1'$ and $i_2'$ also in this case.

  We have shown that $\a{h_x, w}$, $\a{h_x, i_1}$, $\a{v_1', i_1'}$, and
  $\a{v_2', i_2'}$ \nz{are arcs of} $\vec{N}$ and that $\vec{N}$ contains the two
  semi-directed paths $\s{h_x, i_1, v_1', i_1'}$ and $\s{h_x, i_1, v_1', v_2',
  i_2'}$ no matter whether $\vec{N}$ \nz{contains the edge} $\e{h_{\bar x},
  i_2}$ or \nz{the arc} $\a{h_{\bar x}, i_2}$.

  \nz{By \cref{prop:in-degree}, $\a{e_x, g_x}$ and $\a{d_x, f_x}$ must be arcs
  of $\vec{N}$.}  Thus, since $\a{h_x, w}$ and $\a{h_x, i_1}$ \nz{are arcs of}
  $\vec{N}$, \nz{the same property implies that} $\a{e_x, h_x}$, $\a{d_x, e_x}$,
  and $\a{c_x, d_x}$ \nz{are also arcs of} $\vec{N}$.  \nz{This makes} $\s{c_x,
  d_x, e_x, h_x}$ a directed path in $\vec{N}$.  Together with the two
  semi-directed paths $\s{h_x, i_1, v_1', i_1'}$ and $\s{h_x, i_1, v_1', v_2',
  i_2'}$, we obtain semi-directed paths from $c_x$ to both $i_1'$ and $i_2'$
  in~$\vec{N}$.
  \yuki{At the start of the proof, we set~$x=w$ when setting up the orientation of the triangle. 
  A symmetric analysis using~$x=y$ will yield the result that there is a semi-directed path from~$c_{\bar x}$ to $i_1'$ and $i_2'$ in $\vec{N}$.}
\end{proof}

There are two BPOs of the choice widget that represent whether we set the
variable $x$ to {\true} or {\false}.  Accordingly, we call these partial
orientations the \emph{\true-BPO} and the \emph{\false-BPO} of the choice
widget.  They are shown in
\cref{fig:choice-widget-true,fig:choice-widget-false}.  The following
observation is easily verified by inspection of
\nz{these two figures.}

% ------------------------------------------------------------------------------
\begin{obs}
  \label{obs:choice-widget-boolean}
  Let $N$ be an unrooted binary network that contains a choice widget $V$, and
  let $\vec{N}$ be a PO of $N$ whose restriction to $V$ is one of the two BPOs
  of $V$ and such that the input and the two outputs of $V$ are {\true}.  Then
  all vertices in $V$ satisfy \cref{prop:in-degree,prop:out-degree}, any
  semi-directed cycle in $\vec{N}$ must include a vertex not in $V$, and the
  root of $\vec{N}$ does not belong to $V$.
\end{obs}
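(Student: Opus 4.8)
The plan is to verify all three assertions by inspecting the two BPOs of the choice widget shown in \cref{fig:choice-widget-true,fig:choice-widget-false}, exactly as \cref{obs:root-widget-boolean,obs:connector-boolean} are verified for the root and wire widgets. First I fix one of the two BPOs, say the \true-BPO. Together with the hypothesis that the input and the two outputs of $V$ are \true, this pins down the orientation (or the ``undirected'' status) of every edge of $N$ that has an endpoint in $V$: the BPO fixes all edges internal to $V$, and the \true-ness of the input and of the two outputs fixes the directions of the edges joining $V$ to the wire widgets attached to its input and its two outputs.

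With these orientations fixed, \cref{prop:in-degree,prop:out-degree} become purely local checks: every vertex of $V$ is internal in $N$, and one reads off directly from the figure that each such vertex has exactly one in-arc and at least one out-arc in $\vec{N}$. In particular, every vertex of $V$ has in-degree $1$, whereas the root $r$ of $\vec{N}$ has in-degree $0$ by \cref{prop:unique-root}; hence no vertex of $V$ is the root of $\vec{N}$, which is the third claim.

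For the remaining claim I would argue that the restriction of $\vec{N}$ to $V$ contains no semi-directed cycle, so that every semi-directed cycle of $\vec{N}$ must pass through a vertex outside $V$. Concretely, by the degree checks above and \cref{obs:undirected-matching}, the undirected edges of this restriction form a matching; contracting them turns the restriction into a directed graph whose ``level structure'' is exactly the one depicted in \cref{fig:choice-widget-true,fig:choice-widget-false}, and a glance at the figure shows that this directed graph is acyclic. The \false-BPO is the image of the \true-BPO under the reflection symmetry of the widget that interchanges the $x$- and $\bar{x}$-vertices (and interchanges $i_1$ with $i_2$), so the entire verification for the \false-BPO follows from that for the \true-BPO.

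Among these steps, only the semi-directed acyclicity check is not purely local, and it is there that I expect the (mild) bulk of the work to lie: one has to trace the arcs along the two arms $h_x, e_x, d_x, c_x, \ldots$ and $h_{\bar{x}}, e_{\bar{x}}, d_{\bar{x}}, c_{\bar{x}}, \ldots$ and through the triangle on $v, w, y$, rather than inspect each vertex in isolation. The widget is small, though, and the symmetry just mentioned halves the work, so I do not anticipate any genuine obstacle.
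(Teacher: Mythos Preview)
Your approach is correct and matches the paper's: the paper simply states that the observation ``is easily verified by inspection of'' \cref{fig:choice-widget-true,fig:choice-widget-false} and gives no further argument, so your more detailed plan---check degrees locally, deduce non-rootness from in-degree $1$, and verify acyclicity by contracting the matching of undirected edges---is exactly the intended verification spelled out.

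One small factual slip to fix: not every vertex of $V$ is internal in $N$. The vertices $u$, $a_x$, $a_{\bar x}$, $f_x$, $g_x$, $f_{\bar x}$, $g_{\bar x}$ are leaves of $N$ (they have degree~$1$ in the widget and no external neighbours). This does not break your argument---\cref{prop:out-degree} only constrains non-leaf vertices, and these leaves each receive exactly one in-arc in the BPO, so \cref{prop:in-degree} and the ``not the root'' conclusion still hold---but you should adjust the wording accordingly.
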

% ------------------------------------------------------------------------------

% ------------------------------------------------------------------------------
\subsubsection{And-Widget}
% ------------------------------------------------------------------------------

An and-widget is shown in \cref{fig:basic-and-widget}.  This widget has two
inputs representing two literals $\alpha$ and $\beta$, and an output
representing their conjunction $\alpha \wedge \beta$.  We call such a 2-input
and-widget a \emph{basic and-widget} to distinguish it from the following more
general type of and-widget with $k \ge 2$ inputs: If $k = 2$, an and-widget with
$k$ inputs is just a basic and-widget.  If $k > 2$, we first construct an
and-widget with $k - 1$ inputs and then add a basic and-widget whose output is
joined to one of the inputs of the $(k-1)$-input and-widget using a \nz{wire}
widget.  An and-widget with $5$ inputs is shown in \cref{fig:and-widget}.

% ------------------------------------------------------------------------------
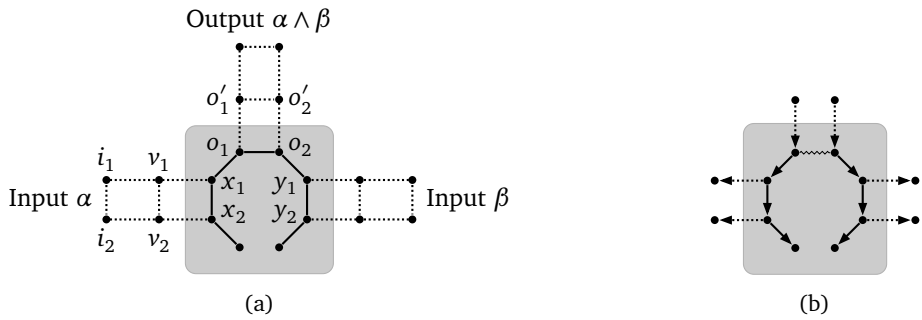
\begin{figure}[b]
  \centering
  \subcaptionbox{\label{fig:basic-and-widget}}{\begin{tikzpicture}
    \pic (and) {and};
    \path [external edge]
    (and-in-1-1) -- ++(180:1) node [vertex] (l1) {} -- +(180:1) node [vertex] (il1) {}
    (and-in-1-2) -- ++(180:1) node [vertex] (l2) {} -- +(180:1) node [vertex] (il2) {}
    (and-in-2-1) -- ++(0:1)   node [vertex] (r1) {} -- +(0:1)   node [vertex] (ir1) {}
    (and-in-2-2) -- ++(0:1)   node [vertex] (r2) {} -- +(0:1)   node [vertex] (ir2) {}
    (and-out-1)  -- ++(90:1)  node [vertex] (o1) {} -- +(90:1)  node [vertex] (oo1) {}
    (and-out-2)  -- ++(90:1)  node [vertex] (o2) {} -- +(90:1)  node [vertex] (oo2) {}
    (l1)  -- (l2)
    (il1) -- (il2)
    (r1)  -- (r2)
    (ir1) -- (ir2)
    (o1)  -- (o2)
    (oo1) -- (oo2)
    (barycentric cs:and-in-1-1=0.5,l1=0.5) coordinate (left)
    (barycentric cs:and-in-2-1=0.5,r1=0.5) coordinate (right)
    (barycentric cs:and-out-1=0.5,o1=0.5) coordinate (top)
    (and-in-1-v) +(270:0.5) coordinate (bot);
    \path
    (barycentric cs:il1=0.5,il2=0.5) node [anchor=east,xshift=-2pt] {Input $\alpha$}
    (barycentric cs:ir1=0.5,ir2=0.5) node [anchor=west,xshift=2pt]  {Input $\beta$}
    (barycentric cs:oo1=0.5,oo2=0.5) node [anchor=south,yshift=2pt] {Output $\alpha \wedge \beta$}
    (and-in-1-1)                     node [anchor=west,yshift=-2pt] {$x_1$}
    (and-in-1-2)                     node [anchor=west,yshift=2pt]  {$x_2$}
    (and-in-2-1)                     node [anchor=east,yshift=-2pt] {$y_1$}
    (and-in-2-2)                     node [anchor=east,yshift=2pt]  {$y_2$}
    (and-out-1)                      node [anchor=east,yshift=2pt]  {$o_1$}
    (and-out-2)                      node [anchor=west,yshift=2pt]  {$o_2$}
    (o1)                             node [anchor=east]             {$o_1'$}
    (o2)                             node [anchor=west]             {$o_2'$}
    (l1)                             node [anchor=south]            {$v_1$}
    (l2)                             node [anchor=north]            {\vphantom{$i_2$}$v_2$}
    (il1)                            node [anchor=south]            {$i_1$}
    (il2)                            node [anchor=north]            {$i_2$};
    \begin{scope}[on background layer]
      \node [region,fit=(left) (right) (top) (bot)] {};
    \end{scope}
  \end{tikzpicture}}%
  \hspace{1in}%
  \subcaptionbox{\label{fig:basic-and-widget-labelling}}{\begin{tikzpicture}
    \pic (and) {and boolean};
    \path
    (and-in-1-1) +(180:1)                  node [vertex] (l1) {}
    (and-in-1-2) +(180:1)                  node [vertex] (l2) {}
    (and-in-2-1) +(0:1)                    node [vertex] (r1) {}
    (and-in-2-2) +(0:1)                    node [vertex] (r2) {}
    (and-out-1)  +(90:1)                   node [vertex] (o1) {}
    (and-out-2)  +(90:1)                   node [vertex] (o2) {}
    (barycentric cs:and-in-1-1=0.6,l1=0.5) coordinate (left)
    (barycentric cs:and-in-2-1=0.6,r1=0.5) coordinate (right)
    (barycentric cs:and-out-1=0.4,o1=0.5)  coordinate (top)
    (and-in-1-v) +(270:0.5)                coordinate (bot);
    \path [external edge,->] (and-in-1-1) -- (l1);
    \path [external edge,->] (and-in-1-2) -- (l2);
    \path [external edge,->] (and-in-2-1) -- (r1);
    \path [external edge,->] (and-in-2-2) -- (r2);
    \path [external edge,<-] (and-out-1)  -- (o1);
    \path [external edge,<-] (and-out-2)  -- (o2);
    \begin{scope}[on background layer]
      \node [region,fit=(left) (right) (top) (bot)] {};
    \end{scope}
  \end{tikzpicture}}
  \caption{(a) A basic and-widget.  \nz{The widget consists of the vertices and
  edges in the shaded region.  The dashed edges represent wire widgets joining
  its inputs and outputs to other widgets.}  (b) Its BPO.  \nz{Arcs are smooth
  lines; edges are zigzag lines.}}
\end{figure}
% ------------------------------------------------------------------------------
\begin{figure}[b]
  \centering
  \subcaptionbox{\label{fig:and-widget}}{\begin{tikzpicture}[scale=0.5,transform shape]
    \path let \n1={0.75},
              \n2={\n1/2},
              \n3={\n2/sin(22.5)},
              \n4={\n3*cos(22.5)} in
    (0,0)                       pic (and-1) {and}
    ++(45:2*\n4+0.75)  ++(2,1)  pic (and-2) {and}
    ++(315:2*\n4+0.75) ++(2,-1) pic (and-3) {and}
    ++(315:2*\n4+0.75) ++(2,-1) pic (and-4) {and}
    (and-2-in-1-1 -| {barycentric cs:and-2-in-1-1=0.5,and-1-out-2=0.5}) node [vertex] (conn-1-2-1) {}
    (and-2-in-1-2 -| {barycentric cs:and-2-in-1-1=0.5,and-1-out-2=0.5}) node [vertex] (conn-1-2-2) {}
    (and-2-in-2-1 -| {barycentric cs:and-2-in-2-1=0.5,and-3-out-1=0.5}) node [vertex] (conn-2-3-1) {}
    (and-2-in-2-2 -| {barycentric cs:and-2-in-2-1=0.5,and-3-out-1=0.5}) node [vertex] (conn-2-3-2) {}
    (and-3-in-2-1 -| {barycentric cs:and-3-in-2-1=0.5,and-4-out-1=0.5}) node [vertex] (conn-3-4-1) {}
    (and-3-in-2-2 -| {barycentric cs:and-3-in-2-1=0.5,and-4-out-1=0.5}) node [vertex] (conn-3-4-2) {};
    \path
    (and-1-out-1 |- and-2-in-1-1) +(270:1.5)  coordinate (corner-1-2-1)
    (and-1-out-2 |- and-2-in-1-2) +(270:0.75) coordinate (corner-1-2-2)
    (and-3-out-2 |- and-2-in-2-1) +(270:1.5)  coordinate (corner-2-3-1)
    (and-3-out-1 |- and-2-in-2-2) +(270:0.75) coordinate (corner-2-3-2)
    (and-4-out-2 |- and-3-in-2-1) +(270:1.5)  coordinate (corner-3-4-1)
    (and-4-out-1 |- and-3-in-2-2) +(270:0.75) coordinate (corner-3-4-2)
    (and-4-in-1-v) +(270:0.5)  coordinate (bot)
    (and-2-out-1)  +(90:0.5)   coordinate (top)
    (and-1-in-1-1) +(180:2.75) coordinate (left)
    (and-4-in-2-1) +(0:2.75)   coordinate (right);
    \path [edge]
    (and-1-out-1)  -- (corner-1-2-1) arc [radius=1.5,start angle=180,end angle=90]  -- (and-2-in-1-1)
    (and-1-out-2)  -- (corner-1-2-2) arc [radius=0.75,start angle=180,end angle=90] -- (and-2-in-1-2)
    (and-3-out-1)  -- (corner-2-3-2) arc [radius=0.75,start angle=0,end angle=90]   -- (and-2-in-2-2)
    (and-3-out-2)  -- (corner-2-3-1) arc [radius=1.5,start angle=0,end angle=90]    -- (and-2-in-2-1)
    (and-4-out-1)  -- (corner-3-4-2) arc [radius=0.75,start angle=0,end angle=90]   -- (and-3-in-2-2)
    (and-4-out-2)  -- (corner-3-4-1) arc [radius=1.5,start angle=0,end angle=90]    -- (and-3-in-2-1)
    (conn-1-2-1)   -- (conn-1-2-2)
    (conn-2-3-1)   -- (conn-2-3-2)
    (conn-3-4-1)   -- (conn-3-4-2);
    \path [external edge]
    (and-2-out-1)  -- ++(90:1) node [vertex] (o1) {} -- +(90:1) node [vertex] (oo1) {}
    (and-2-out-2)  -- ++(90:1) node [vertex] (o2) {} -- +(90:1) node [vertex] (oo2) {}
    (and-4-in-1-1) -- +(180:0.75) arc [radius=1.5,start angle=90,end angle=180]  to (\tikztostart |- bot) -- ++(270:0.5) node [vertex] (d1) {} -- +(270:1) node [vertex] (id1) {}
    (and-4-in-1-2) -- +(180:0.75) arc [radius=0.75,start angle=90,end angle=180] to (\tikztostart |- bot) -- ++(270:0.5) node [vertex] (d2) {} -- +(270:1) node [vertex] (id2) {}
    (and-4-in-2-1) -- +(0:0.75)   arc [radius=1.5,start angle=90,end angle=0]    to (\tikztostart |- bot) -- ++(270:0.5) node [vertex] (e1) {} -- +(270:1) node [vertex] (ie1) {}
    (and-4-in-2-2) -- +(0:0.75)   arc [radius=0.75,start angle=90,end angle=0]   to (\tikztostart |- bot) -- ++(270:0.5) node [vertex] (e2) {} -- +(270:1) node [vertex] (ie2) {}
    (and-1-in-1-1) -- +(180:0.75) arc [radius=1.5,start angle=90,end angle=180]  to (\tikztostart |- bot) -- ++(270:0.5) node [vertex] (a1) {} -- +(270:1) node [vertex] (ia1) {}
    (and-1-in-1-2) -- +(180:0.75) arc [radius=0.75,start angle=90,end angle=180] to (\tikztostart |- bot) -- ++(270:0.5) node [vertex] (a2) {} -- +(270:1) node [vertex] (ia2) {}
    (and-1-in-2-1) -- +(0:0.75)   arc [radius=1.5,start angle=90,end angle=0]    to (\tikztostart |- bot) -- ++(270:0.5) node [vertex] (b1) {} -- +(270:1) node [vertex] (ib1) {}
    (and-1-in-2-2) -- +(0:0.75)   arc [radius=0.75,start angle=90,end angle=0]   to (\tikztostart |- bot) -- ++(270:0.5) node [vertex] (b2) {} -- +(270:1) node [vertex] (ib2) {}
    (and-3-in-1-1) -- +(180:0.75) arc [radius=1.5,start angle=90,end angle=180]  to (\tikztostart |- bot) -- ++(270:0.5) node [vertex] (c1) {} -- +(270:1) node [vertex] (ic1) {}
    (and-3-in-1-2) -- +(180:0.75) arc [radius=0.75,start angle=90,end angle=180] to (\tikztostart |- bot) -- ++(270:0.5) node [vertex] (c2) {} -- +(270:1) node [vertex] (ic2) {}
    foreach \v in {a,b,c,d,e} {
      (\v1)  -- (\v2)
      (i\v1) -- (i\v2)
    }
    (o1)  -- (o2)
    (oo1) -- (oo2);
    \path
    (barycentric cs:ia1=0.5,ia2=0.5) node [transform shape=false,anchor=north,yshift=-2pt] {\vphantom{$\beta$}$\alpha$}
    (barycentric cs:ib1=0.5,ib2=0.5) node [transform shape=false,anchor=north,yshift=-2pt] {\vphantom{$\beta$}$\beta$}
    (barycentric cs:ic1=0.5,ic2=0.5) node [transform shape=false,anchor=north,yshift=-2pt] {\vphantom{$\beta$}$\gamma$}
    (barycentric cs:id1=0.5,id2=0.5) node [transform shape=false,anchor=north,yshift=-2pt] {\vphantom{$\beta$}$\delta$}
    (barycentric cs:ie1=0.5,ie2=0.5) node [transform shape=false,anchor=north,yshift=-2pt] {\vphantom{$\beta$}$\varepsilon$}
    (barycentric cs:oo1=0.5,oo2=0.5) node [transform shape=false,anchor=south,yshift=2pt]  {\vphantom{$\beta$}$\alpha \wedge \beta \wedge \gamma \wedge \delta \wedge \varepsilon$};
    \begin{scope}[on background layer]
      \node [region,fit=(top) (left) (right) (bot)] {};
    \end{scope}
  \end{tikzpicture}}%
  \hspace{1in}%
  \subcaptionbox{\label{fig:and-widget-labelling}}{\begin{tikzpicture}[scale=0.5,transform shape,>={Triangle[width=0pt 3 0,length=0pt 3 0]}]
    \path let \n1={0.75},
              \n2={\n1/2},
              \n3={\n2/sin(22.5)},
              \n4={\n3*cos(22.5)} in
    (0,0)                       pic (and-1) {and boolean}
    ++(45:2*\n4+0.75)  ++(2,1)  pic (and-2) {and boolean}
    ++(315:2*\n4+0.75) ++(2,-1) pic (and-3) {and boolean}
    ++(315:2*\n4+0.75) ++(2,-1) pic (and-4) {and boolean}
    (and-2-in-1-1 -| {barycentric cs:and-2-in-1-1=0.5,and-1-out-2=0.5}) node [vertex] (conn-1-2-1) {}
    (and-2-in-1-2 -| {barycentric cs:and-2-in-1-1=0.5,and-1-out-2=0.5}) node [vertex] (conn-1-2-2) {}
    (and-2-in-2-1 -| {barycentric cs:and-2-in-2-1=0.5,and-3-out-1=0.5}) node [vertex] (conn-2-3-1) {}
    (and-2-in-2-2 -| {barycentric cs:and-2-in-2-1=0.5,and-3-out-1=0.5}) node [vertex] (conn-2-3-2) {}
    (and-3-in-2-1 -| {barycentric cs:and-3-in-2-1=0.5,and-4-out-1=0.5}) node [vertex] (conn-3-4-1) {}
    (and-3-in-2-2 -| {barycentric cs:and-3-in-2-1=0.5,and-4-out-1=0.5}) node [vertex] (conn-3-4-2) {};
    \path
    (and-1-out-1 |- and-2-in-1-1) +(270:1.5)  coordinate (corner-1-2-1)
    (and-1-out-2 |- and-2-in-1-2) +(270:0.75) coordinate (corner-1-2-2)
    (and-3-out-2 |- and-2-in-2-1) +(270:1.5)  coordinate (corner-2-3-1)
    (and-3-out-1 |- and-2-in-2-2) +(270:0.75) coordinate (corner-2-3-2)
    (and-4-out-2 |- and-3-in-2-1) +(270:1.5)  coordinate (corner-3-4-1)
    (and-4-out-1 |- and-3-in-2-2) +(270:0.75) coordinate (corner-3-4-2)
    (and-4-in-1-v)                +(270:0.5)  coordinate (bot) +(270:1) coordinate (end)
    (and-2-out-1)  +(90:0.5)                  coordinate (top)
    (and-1-in-1-1) +(180:2.75)                coordinate (left)
    (and-4-in-2-1) +(0:2.75)                  coordinate (right);
    \path [edge,<-]          (and-4-out-1) -- (corner-3-4-2) arc [radius=0.75,start angle=0,end angle=90]    -- (conn-3-4-2);
    \path [edge,<-]          (and-4-out-2) -- (corner-3-4-1) arc [radius=1.5,start angle=0,end angle=90]     -- (conn-3-4-1);
    \path [edge,<-]          (and-1-out-1)  -- (corner-1-2-1) arc [radius=1.5,start angle=180,end angle=90]  -- (conn-1-2-1);
    \path [edge,<-]          (and-1-out-2)  -- (corner-1-2-2) arc [radius=0.75,start angle=180,end angle=90] -- (conn-1-2-2);
    \path [edge,<-]          (and-3-out-1)  -- (corner-2-3-2) arc [radius=0.75,start angle=0,end angle=90]   -- (conn-2-3-2);
    \path [edge,<-]          (and-3-out-2)  -- (corner-2-3-1) arc [radius=1.5,start angle=0,end angle=90]    -- (conn-2-3-1);
    \path [vertical]         (and-2-in-1-1) -- (conn-1-2-1);
    \path [vertical]         (and-2-in-1-2) -- (conn-1-2-2);
    \path [vertical]         (and-2-in-2-1) -- (conn-2-3-1);
    \path [vertical]         (and-2-in-2-2) -- (conn-2-3-2);
    \path [vertical]         (and-3-in-2-1) -- (conn-3-4-1);
    \path [vertical]         (and-3-in-2-2) -- (conn-3-4-2);
    \path [external edge,<-] (and-2-out-1) -- +(90:1) coordinate (o1);
    \path [external edge,<-] (and-2-out-2) -- +(90:1) coordinate (o2);
    \path [external edge,->] (and-4-in-1-1) -- +(180:0.75) arc [radius=1.5,start angle=90,end angle=180]  to (\tikztostart |- end) coordinate (d1);
    \path [external edge,->] (and-4-in-1-2) -- +(180:0.75) arc [radius=0.75,start angle=90,end angle=180] to (\tikztostart |- end) coordinate (d2);
    \path [external edge,->] (and-4-in-2-1) -- +(0:0.75)   arc [radius=1.5,start angle=90,end angle=0]    to (\tikztostart |- end) coordinate (e1);
    \path [external edge,->] (and-4-in-2-2) -- +(0:0.75)   arc [radius=0.75,start angle=90,end angle=0]   to (\tikztostart |- end) coordinate (e2);
    \path [external edge,->] (and-1-in-1-1) -- +(180:0.75) arc [radius=1.5,start angle=90,end angle=180]  to (\tikztostart |- end) coordinate (a1);
    \path [external edge,->] (and-1-in-1-2) -- +(180:0.75) arc [radius=0.75,start angle=90,end angle=180] to (\tikztostart |- end) coordinate (a2);
    \path [external edge,->] (and-1-in-2-1) -- +(0:0.75)   arc [radius=1.5,start angle=90,end angle=0]    to (\tikztostart |- end) coordinate (b1);
    \path [external edge,->] (and-1-in-2-2) -- +(0:0.75)   arc [radius=0.75,start angle=90,end angle=0]   to (\tikztostart |- end) coordinate (b2);
    \path [external edge,->] (and-3-in-1-1) -- +(180:0.75) arc [radius=1.5,start angle=90,end angle=180]  to (\tikztostart |- end) coordinate (c1);
    \path [external edge,->] (and-3-in-1-2) -- +(180:0.75) arc [radius=0.75,start angle=90,end angle=180] to (\tikztostart |- end) coordinate (c2);
    \path [horizontal]
    (conn-1-2-1) -- (conn-1-2-2)
    (conn-2-3-1) -- (conn-2-3-2)
    (conn-3-4-1) -- (conn-3-4-2);
    \path
    (barycentric cs:a1=0.5,a2=0.5) +(270:1) node [transform shape=false,anchor=north,yshift=-2pt] {\phantom{$\beta$}}
    (barycentric cs:b1=0.5,b2=0.5) +(270:1) node [transform shape=false,anchor=north,yshift=-2pt] {\phantom{$\beta$}}
    (barycentric cs:c1=0.5,c2=0.5) +(270:1) node [transform shape=false,anchor=north,yshift=-2pt] {\phantom{$\beta$}}
    (barycentric cs:d1=0.5,d2=0.5) +(270:1) node [transform shape=false,anchor=north,yshift=-2pt] {\phantom{$\beta$}}
    (barycentric cs:e1=0.5,e2=0.5) +(270:1) node [transform shape=false,anchor=north,yshift=-2pt] {\phantom{$\beta$}};
    \begin{scope}[on background layer]
      \node [region,fit=(top) (left) (right) (bot)] {};
    \end{scope}
  \end{tikzpicture}}
  \caption{(a) An and-widget with 5 inputs. \nz{The widget consists of the
  vertices and edges in the shaded region.  The dashed edges represent wire
  widgets joining its inputs and outputs to other widgets.} (b) Its BPO.
  \nz{Arcs are smooth lines; edges are zigzag lines.}}
\end{figure}
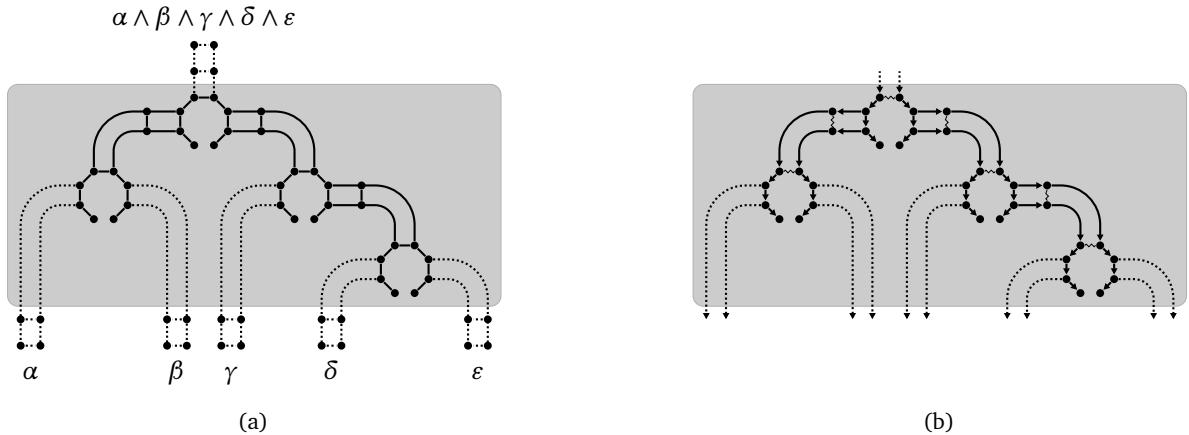
% ------------------------------------------------------------------------------

As the following lemma shows, the output of an and-widget can be {\true} only if
it all its inputs are {\true}.

% ------------------------------------------------------------------------------
\begin{lem}
  \label{lem:and-widget}
  Let $N$ be an unrooted binary network that contains \nz{a root widget $R$ and}
  an and-widget $A$, assume that every input of $A$ is joined to the output of
  another widget using a \nz{wire} widget, and let $\vec{N}$ be an HPO of $N$
  that makes the output \nz{$\e{o_1, o_2}$} of $A$ {\true}.  Consider the
  \nz{wire} widget $L$ used to join any input of $A$ to the output of another
  widget, let $v_1$ and $v_2$ be the two interior vertices of $L$, and let
  $\e{i_1, i_2}$ be the input of $L$.  Then $\a{v_1, i_1}$ and $\a{v_2, i_2}$
  \nz{are arcs of} $\vec{N}$.  Moreover, there exist semi-directed paths from
  $o_1$ and $o_2$ to $i_1$ and $i_2$ in $\vec{N}$.
\end{lem}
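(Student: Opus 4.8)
The plan is to prove the lemma by induction on the number $k \ge 2$ of inputs of the and-widget $A$, pushing essentially all of the work onto \cref{lem:connector-widget}.

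For the base case I would take $A$ to be a basic and-widget, so that $A$ is a path through $x_2, x_1, o_1, o_2, y_1, y_2$ (plus two endpoint vertices) and, in $N$, the output vertex $o_1$ has neighbours $o_1', o_2, x_1$ while $o_2$ has neighbours $o_2', o_1, y_1$, where $o_1', o_2'$ are the external neighbours of the output. By \cref{lem:root-in-sheep} the root of $\vec{N}$ lies in $R$, so every vertex of $A$ and of each wire widget joined to it satisfies \cref{prop:in-degree,prop:out-degree}. Since $\vec{N}$ makes $\e{o_1, o_2}$ true, $\a{o_1', o_1}$ and $\a{o_2', o_2}$ are arcs of $\vec{N}$; then $\a{o_1, o_2}$ cannot be an arc (together with $\a{o_2', o_2}$ it would give $o_2$ in-degree $2$, contradicting \cref{prop:in-degree}) and, symmetrically, neither can $\a{o_2, o_1}$, so $\e{o_1, o_2}$ is undirected in $\vec{N}$ and, by \cref{prop:out-degree}, the out-arcs of $o_1$ and $o_2$ must be $\a{o_1, x_1}$ and $\a{o_2, y_1}$. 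Now, for the input $\e{x_1, x_2}$ joined to another widget by a wire widget $L$, the edge $\e{x_1, x_2}$ is the output of $L$ and the external neighbour of $x_1$ with respect to $L$ is $o_1$; since $\a{o_1, x_1}$ is an arc, \cref{lem:connector-widget} applies and yields that $\a{v_1, i_1}$ and $\a{v_2, i_2}$ are arcs of $\vec{N}$, together with semi-directed paths from $x_1$ to $i_1$ and to $i_2$, which I would extend to semi-directed paths from $o_1$ and from $o_2$ by prepending the arc $\a{o_1, x_1}$ and the edge $\e{o_1, o_2}$. The input $\e{y_1, y_2}$ is handled identically, using $\a{o_2, y_1}$.

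For the inductive step ($k > 2$) I would use that $A$ is built from a $(k-1)$-input and-widget $A'$ by joining, through a wire widget $L'$, a basic and-widget $B$ whose output is attached to one input of $A'$; the output of $A$ coincides with that of $A'$, and the inputs of $A$ are the $k-2$ unused inputs of $A'$ together with the two inputs of $B$. Since the output of $A'$ equals that of $A$ and is true, the induction hypothesis applies to $A'$; applying it to the wire widget $L'$ — whose input edge is the output $\e{o_1^B, o_2^B}$ of $B$ and whose interior vertices are the external neighbours of $o_1^B$ and $o_2^B$ — gives exactly the two arcs pointing into $o_1^B$ and $o_2^B$ that say the output of $B$ is true, so the base case applies to $B$. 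For each of the $k-2$ inputs of $A$ inherited from $A'$, the induction hypothesis already gives the conclusion directly; for each of the two inputs of $A$ that are inputs of $B$, I would combine the base case for $B$ (which supplies the two required arcs and semi-directed paths out of $o_1^B, o_2^B$ into the corresponding wire widget) with the semi-directed paths from $o_1, o_2$ to $o_1^B, o_2^B$ coming from the induction hypothesis for $L'$, concatenating them — legitimate since semi-directed paths need not be simple.

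I expect the genuine graph-theoretic content to be short: \cref{prop:in-degree,prop:out-degree} immediately force $\a{o_1, x_1}$ and $\a{o_2, y_1}$ once the output is true, and \cref{lem:connector-widget} does everything after that. The part needing the most care is the bookkeeping: reading the induction hypothesis for the wire widget $L'$ correctly as ``the output of $B$ is true'' (matching the input/output/interior-vertex roles of $L'$), keeping the local notation of \cref{lem:connector-widget} distinct from the vertex names of $A$, and checking that the external neighbour of $x_1$ with respect to $L$ is indeed $o_1$, so that the hypothesis of \cref{lem:connector-widget} is met by the already-established arc $\a{o_1, x_1}$.
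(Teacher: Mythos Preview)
Your proposal is correct and matches the paper's proof essentially step for step: induction on $k$, with the base case forcing $\e{o_1,o_2}$ to be an edge via \cref{prop:in-degree}, then $\a{o_1,x_1}$ (and $\a{o_2,y_1}$) via \cref{prop:out-degree}, and invoking \cref{lem:connector-widget} on the wire widget; the inductive step applies the hypothesis to the $(k-1)$-input sub-widget and to the internal wire $L'$, reads the conclusion for $L'$ as ``output of $B$ is \true'', and applies the base case to $B$. The only cosmetic difference is that the paper handles one input ``w.l.o.g.'' while you treat both symmetrically.
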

% ------------------------------------------------------------------------------

\begin{proof}
  \nz{Note that all vertices of $O$ must satisfy
  \cref{prop:in-degree,prop:out-degree} because the root of $\vec{N}$ is a
  vertex of $R$ not contained in any other widget, by \cref{lem:root-in-sheep}.}
  The proof is by induction on the number $k$ of inputs of $A$.

  If $k = 2$, then $A$ is a basic and-widget.  Assume w.l.o.g.\ that $L$ is the
  \nz{wire} widget used to join some widget to the input $\e{x_1, x_2}$ of $A$
  (see \cref{fig:basic-and-widget}).
  
  Let $o_1'$ and $o_2'$ be the two external neighbours of $o_1$ and $o_2$,
  respectively.  Since $\e{o_1, o_2}$ is {\true}, $\a{o_1', o_1}$ and $\a{o_2',
  o_2}$ \nz{are arcs of} $\vec{N}$.  \nz{Therefore, by \cref{prop:in-degree},
  $\e{o_1, o_2}$ is an edge of $\vec{N}$ and, by \cref{prop:out-degree},
  $\a{o_1, x_1}$ is an arc of $\vec{N}$.}  Now note that $\e{o_1, x_1}$ is one
  of the \nz{external edges connected to the} output of $L$.  Thus, by
  \cref{lem:connector-widget}, $\a{v_1, i_1}$ and $\a{v_2, i_2}$ \nz{must be
  arcs of} $\vec{N}$, and there must exist semi-directed paths from $x_1$ to
  $i_1$ and $i_2$ in $\vec{N}$.  Together with the semi-directed paths $\s{o_1,
  x_1}$ and $\s{o_2, o_1, x_1}$, we obtain semi-directed paths from $o_1$ and
  $o_2$ to $i_1$ and $i_2$ in~$\vec{N}$.

  If $k > 2$, then $A$ is composed of an and-widget $A_1$ with $k - 1$ inputs, a
  basic and-widget $A_2$, and a \nz{wire} widget $L'$ joining the output of
  $A_2$ to one of the inputs of $A_1$.  The edge $\e{o_1, o_2}$ is the output of
  $A_1$.  If \nz{$L \ne L'$ and} $L$ is a \nz{wire} widget used to join one of
  the inputs of $A_1$ to the output of another widget, then by the induction
  hypothesis, $\a{v_1, i_1}$ and $\a{v_2, i_2}$ \nz{are arcs of} $\vec{N}$ and
  there exist semi-directed paths from $o_1$ and $o_2$ to $i_1$ and $i_2$ in
  $\vec{N}$.

  \nz{Now let $v_1'$ and $v_2'$ be the two interior vertices of $L'$, and let
  $\e{i_1', i_2'}$ be the input of~$L'$. Then,} since $L'$ is itself a \nz{wire}
  widget used to join the output of $A_2$ to an input of $A_1$, $\a{v_1', i_1'}$
  and $\a{v_2', i_2'}$ \nz{are arcs of} $\vec{N}$, and there exist semi-directed
  paths from $o_1$ and $o_2$ to $i_1'$ and $i_2'$ in $\vec{N}$.
  
  Since \nz{$\e{i_1', i_2'}$ is the output of $A_2$ and} $\e{v_1', i_1'}$ and
  $\e{v_2', i_2'}$ are \nz{the two external edges connected to the output
  of}~$A_2$, the base case applied to $A_2$ now shows that $\a{v_1, i_1}$ and
  $\a{v_2, i_2}$ \nz{are arcs of} $\vec{N}$ and that $\vec{N}$ contains
  semi-directed paths from $i_1'$ and $i_2'$ to $i_1$ and $i_2$ also for any
  \nz{wire} widget $L$ used to join one of the inputs of $A_2$ to the output of
  another widget.  Together with the semi-directed paths from $o_1$ and $o_2$ to
  $i_1'$ and $i_2'$, we obtain semi-directed paths from $o_1$ and $o_2$ to $i_1$
  and $i_2$ in $\vec{N}$.
\end{proof}

The BPO of a basic and-widget is shown in \cref{fig:basic-and-widget-labelling}.
To obtain the BPO of a $k$-input and-widget, we combine the BPOs of the basic
and-widgets and the \true-BPOs of the \nz{wire} widgets of which the and-widget
is composed.  The BPO of the 5-input and-widget in \cref{fig:and-widget} is
shown in \cref{fig:and-widget-labelling}.  The following observation is easily
verified by inspection of \cref{fig:and-widget-labelling}.

% ------------------------------------------------------------------------------
\begin{obs}
  \label{obs:and-widget-boolean}
  Let $N$ be a Boolean network that contains an and-widget $A$, let $\vec{N}$ be
  a PO of $N$ whose restriction to $A$ is the BPO of $A$ and which makes all
  inputs and outputs of $A$ {\true}.  Then all vertices in $A$ satisfy
  \cref{prop:in-degree,prop:out-degree}, any semi-directed cycle in $\vec{N}$
  must include a vertex not in $A$, and the root of $\vec{N}$ does not belong to
  $A$.
\end{obs}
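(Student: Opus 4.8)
The plan is to prove \cref{obs:and-widget-boolean} by induction on the number $k$ of inputs of the and-widget $A$, mirroring the recursive definition of and-widgets.  Throughout, I would work with $\vec{N}$ restricted to $A$ together with the orientations of the $k+1$ edges joining $A$'s inputs and output to the rest of $N$; by hypothesis this restriction is exactly the BPO of $A$ with all of these interface edges set to {\true}.

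For the base case $k=2$, $A$ is a basic and-widget, and the relevant partial orientation is the one drawn in \cref{fig:basic-and-widget-labelling}.  From that figure one reads off directly that every vertex of $A$ has in-degree exactly $1$ and out-degree at least $1$, which gives \cref{prop:in-degree,prop:out-degree}; in particular no vertex of $A$ has only out-arcs, so by \cref{prop:unique-root} none of them is the root of $\vec{N}$.  One also reads off that the edges of the BPO of $A$ form a matching and that $A$ admits a \emph{level function} $\ell$ --- a numbering of its vertices that strictly increases along every arc and is constant along every edge of the BPO.  Such an $\ell$ rules out a semi-directed cycle confined to $A$: along such a cycle $\ell$ would be non-decreasing, hence constant, so the cycle would consist of edges only, contradicting the matching property.

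For the inductive step $k>2$, $A$ is assembled from a $(k-1)$-input and-widget $A_1$, a basic and-widget $A_2$, and a wire widget $L'$ whose input is the output of $A_2$ and whose output is one input of $A_1$.  I would first verify that, inside the BPO of $A$, the restriction to $A_1$ is the BPO of $A_1$ with all inputs and output of $A_1$ set to {\true}, the restriction to $A_2$ is the BPO of a basic and-widget with all inputs and output set to {\true}, and the restriction to $L'$ is the \true-BPO of $L'$; the only point that needs care is that the \true-BPO of $L'$ orients the interface edge at $A_2$'s output ``into'' $A_2$ and the interface edge at $A_1$'s fed input ``out of'' $A_1$, which is precisely what makes $A_2$'s output and that input of $A_1$ read as {\true}.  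Granting this, the induction hypothesis applied to $A_1$, the base case applied to $A_2$, and \cref{obs:connector-boolean} applied to $L'$ give \cref{prop:in-degree,prop:out-degree} for every vertex of $A_1$, of $A_2$, and for the two interior vertices of $L'$; since every vertex of $A$ is one of these and each vertex's incident edges and arcs are determined identically by the BPO of $A$ and by the piece covering it, \cref{prop:in-degree,prop:out-degree} hold for all of $A$, and --- again by \cref{prop:unique-root} --- the root of $\vec{N}$ lies in none of $A_1$, $A_2$, and the interior of $L'$, hence not in $A$.  Finally, the argument in the proof of \cref{obs:undirected-matching} shows that the edges of the BPO of $A$ form a matching, and I would obtain a level function $\ell$ on $A$ by concatenating level functions of $A_1$, $A_2$, and $L'$ (each of which exists since the corresponding piece has no confined semi-directed cycle and matching BPO-edges) in a topological order of the tree whose nodes are the basic and-widgets and wire widgets out of which $A$ is built: at every interface the two shared vertices occupy the extreme levels of both incident pieces --- the bottom of the ``upper'' piece and the top of the ``lower'' piece --- so the two level functions can always be rescaled to agree there.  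As in the base case, this $\ell$ forbids a semi-directed cycle confined to $A$.

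The step I expect to be the main obstacle is this last one: making the level functions of the constituent widgets genuinely fit together at the interfaces, especially at the interfaces that are arcs rather than edges, where one must track the orientation of the interface arc on both sides and confirm that the ``flow'' through each widget is consistently directed from its output side towards its input side(s).  Everything else reduces to inspecting \cref{fig:basic-and-widget-labelling} and unwinding the recursive construction of and-widgets together with the definition of {\true} inputs and outputs.
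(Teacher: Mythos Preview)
Your inductive argument is sound and works, but it is considerably more than the paper provides: the paper states this as an observation ``easily verified by inspection of \cref{fig:and-widget-labelling}'' and gives no proof at all.  The intended verification is simply to look at the BPO of the $5$-input and-widget (which already exhibits the full recursive structure), observe that every arc points from the output side towards an input side, that the edges form a matching, and that every vertex visibly has in-degree~$1$ and out-degree at least~$1$; the absence of semi-directed cycles is then immediate from the picture.

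What your approach buys is an argument that scales to arbitrary $k$ without relying on the reader extrapolating from a single figure, and your identification of the gluing of level functions as the delicate step is apt.  One small correction: your claim that the shared interface vertices always sit at the \emph{extreme} levels of both pieces is not literally true for $A_1$ once $k-1>2$, since $A_1$ has several inputs at potentially different depths; what you actually need (and have) is only that the tree of basic and-widgets and wire widgets has no cycles, so their local level functions can be combined by processing the pieces in a topological order and shifting each new piece's levels above or below those already assigned.  With that adjustment your proof goes through.
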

% ------------------------------------------------------------------------------

% ------------------------------------------------------------------------------
\subsubsection{Replicator Widget}
% ------------------------------------------------------------------------------

A replicator widget is shown in \cref{fig:basic-replicator-widget}.  This widget
has one input representing a Boolean value $\lambda$, and two outputs
representing copies of $\lambda$.  We call such a 2-output replicator widget a
\emph{basic replicator widget} to distinguish it from the following more general
type of replicator widget with $k \ge 2$ outputs: If $k = 2$, a replicator
widget with $k$ outputs is just a basic replicator widget.  If $k > 2$, we first
construct a replicator widget with $k - 1$ outputs and then add a basic
replicator widget whose input is joined to one of the outputs of the
$(k-1)$-output replicator widget using a \nz{wire} widget.  A replicator widget
with $5$ outputs is shown in \cref{fig:replicator-widget}.

% ------------------------------------------------------------------------------
\begin{figure}[p]
  \centering
  \subcaptionbox{\label{fig:basic-replicator-widget}}{% [inline block 0: 7 envs, 20999 chars -> data_tex | \begin{tikzpicture}     \pic (repl) {replicator};...]
}
  \caption{(a) A replicator widget with 5 outputs. \nz{The widget consists of
  the vertices and edges in the shaded region.  The dashed edges represent wire
  widgets joining its inputs and outputs to other widgets.}  (b) Its \true-BPO.
  (c) Its \false-BPO.  \nz{In figures (b) and (c), arcs are smooth lines; edges
  are zigzag lines.}}
\end{figure}
% ------------------------------------------------------------------------------

As the following lemma shows, an output of a replicator widget can be {\true}
only if the input of the replicator widget is {\true}.

% ------------------------------------------------------------------------------
\begin{lem}
  \label{lem:replicator-widget}
  Let $N$ be an unrooted binary network that contains \nz{a root widget $R$ and}
  a replicator widget $M$, assume that the input of $M$ is joined to the output
  of another widget using a \nz{wire} widget $L$, let $v_1$ and $v_2$ be the two
  interior vertices of $L$, let \nz{$\e{i_1', i_2'}$ be the input} of $L$, and
  let $\vec{N}$ be an HPO of $N$.  Assume that there exists an output $\{o_1,
  o_2\}$ of $M$ that is made {\true} by~$\vec{N}$.  Then $\a{v_1, i_1'}$ and
  $\a{v_2, i_2'}$ \nz{are arcs of} $\vec{N}$ and there exist semi-directed paths
  from $o_1$ and $o_2$ to $i_1'$ and $i_2'$ in $\vec{N}$.
\end{lem}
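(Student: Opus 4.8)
The plan is to prove this by induction on the number $k$ of outputs of $M$, mirroring the proof of \cref{lem:and-widget} with the roles of inputs and outputs reversed, since a replicator widget is essentially an upside-down and-widget. As in the other widget lemmas, I would first observe that every vertex of $M$ and of $L$ (and of every basic replicator and wire widget out of which $M$ is assembled) satisfies \cref{prop:in-degree,prop:out-degree}: the root of $\vec{N}$ is a vertex of $R$ and hence lies in no other widget, by \cref{lem:root-in-sheep}. In particular, \cref{obs:undirected-matching} applies throughout $M$ and $L$.

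For the base case $k = 2$, $M$ is a basic replicator widget with input vertices $i_1, i_2$ and outputs $\e{x_1, x_2}$ and $\e{y_1, y_2}$ (notation of \cref{fig:basic-replicator-widget}); by the reflective symmetry of the widget I may assume the {\true} output is $\e{x_1, x_2}$, so $\a{x_1', x_1}$ and $\a{x_2', x_2}$ are arcs of $\vec{N}$, where $x_1', x_2'$ are the external neighbours of $x_1, x_2$. Applying \cref{prop:in-degree} to $x_1$ and to $x_2$, the edge $\e{x_1, x_2}$ cannot be oriented in either direction, so it is an (undirected) edge of $\vec{N}$; then \cref{obs:undirected-matching} forces $\e{x_2, i_1}$ to be an arc, and \cref{prop:in-degree} (as $x_2$ already has the in-arc $\a{x_2', x_2}$) forces this arc to be $\a{x_2, i_1}$. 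Now $\e{i_1, i_2}$ is the input of $M$, which is the output of $L$, so $\e{x_2, i_1}$ is one of the external edges attached to the output of $L$; thus \cref{lem:connector-widget} yields that $\a{v_1, i_1'}$ and $\a{v_2, i_2'}$ are arcs of $\vec{N}$ and that there are semi-directed paths from $i_1$ to $i_1'$ and to $i_2'$. Since $\s{x_1, x_2, i_1}$ and $\s{x_2, i_1}$ are semi-directed, concatenating them with these paths gives semi-directed paths from $o_1 = x_1$ and $o_2 = x_2$ to $i_1'$ and $i_2'$, completing the base case; the case in which $\e{y_1, y_2}$ is the {\true} output is symmetric.

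For the inductive step $k > 2$, $M$ is built from a replicator $M_1$ with $k - 1$ outputs, a basic replicator $M_2$, and a wire widget $L'$ joining the input of $M_2$ to one of the outputs of $M_1$; the input of $M$ is the input of $M_1$, and the outputs of $M$ are the $k - 2$ unused outputs of $M_1$ together with the two outputs of $M_2$. If the {\true} output $\e{o_1, o_2}$ of $M$ is one of the unused outputs of $M_1$, then it is a {\true} output of $M_1$, and the induction hypothesis applied to $M_1$ directly gives the conclusion. Otherwise $\e{o_1, o_2}$ is an output of $M_2$, and the base case applied to $M_2$ shows that the output of $M_1$ incident to $L'$ is {\true} (its external neighbours are the interior vertices of $L'$, toward which the relevant two edges are oriented), and it provides semi-directed paths from that output of $M_1$ to $o_1$ and $o_2$. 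The induction hypothesis applied to $M_1$, with this output of $M_1$ as the {\true} one, then gives the arcs $\a{v_1, i_1'}$ and $\a{v_2, i_2'}$ together with semi-directed paths from that output of $M_1$ to $i_1'$ and $i_2'$; concatenating with the previous paths gives semi-directed paths from $o_1$ and $o_2$ to $i_1'$ and $i_2'$.

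I expect the only real obstacle to be the bookkeeping in the base case — carefully tracking, edge by edge around $x_1$, $x_2$ and $i_1$, which incident edges become arcs and in which direction, and in particular matching the derived arc $\a{x_2, i_1}$ to the precise hypothesis of \cref{lem:connector-widget}, whose internal $i$/$o$ labelling is reversed relative to that of $M$ (the output of $L$ is the input of $M$). Everything else is a routine mirror image of \cref{lem:and-widget}.
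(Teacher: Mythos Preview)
Your proposal is correct and follows essentially the same induction as the paper's proof: the base case uses \cref{prop:in-degree}/\cref{prop:out-degree} to force $\a{x_2,i_1}$ and then invokes \cref{lem:connector-widget}, and the inductive step splits on whether the \textsc{True} output lies in $M_1$ or $M_2$. One slip: in the $M_2$ case the base case gives semi-directed paths \emph{from} $o_1,o_2$ \emph{to} the output of $M_1$ (the input of $L'$), not the other way around---your concatenation in the next sentence already presupposes this correct direction, so just fix the wording.
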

% ------------------------------------------------------------------------------

\begin{proof}
  \nz{Note that all vertices of $L$ must satisfy
  \cref{prop:in-degree,prop:out-degree} because the root of $\vec{N}$ is a
  vertex of $R$ not contained in any other widget, by \cref{lem:root-in-sheep}.}
  The proof is by induction on the number $k$ of outputs of $M$.

  If $k = 2$, then $M$ is a basic replicator widget.  Assume w.l.o.g.\ that
  $\e{o_1, o_2} = \e{x_1, x_2}$.  Let $x_1'$ and $x_2'$ be the two external
  neighbours of $x_1$ and $x_2$, respectively \nz{(see
  \cref{fig:basic-replicator-widget}).}
  
  Since $\e{x_1, x_2}$ is {\true}, $\a{x_1',x_1}$ and $\a{x_2', x_2}$ \nz{are
  arcs of} $\vec{N}$.  \nz{Therefore, by \cref{prop:in-degree}, $\e{x_1, x_2}$
  is an edge of $\vec{N}$ and, by \cref{prop:out-degree}, $\a{x_2, i_1}$ is an
  arc of $\vec{N}$.}  Since \nz{$\e{i_1, i_2}$ is the output of $L$ and}
  $\e{x_2, i_1}$ is one of the \nz{external edges connected to the output of}
  $L$, \nz{\cref{lem:connector-widget} shows that} $\a{v_1, i_1'}$ and $\a{v_2,
  i_2'}$ \nz{are arcs of} $\vec{N}$ and that $\vec{N}$ contains semi-directed
  paths from $i_1$ to $i_1'$ and $i_2'$.  Together with the the semi-directed
  paths $\s{x_2, i_1}$ and $\s{x_1, x_2, i_1}$, we obtain semi-directed paths
  from $x_1$ and $x_2$ to $i_1'$ and $i_2'$.

  If $k > 2$, then $M$ is composed of a replicator widget $M_1$ with $k - 1$
  outputs, a basic replicator widget $M_2$, and a \nz{wire} widget $L'$ joining
  one of the outputs of $M_1$ to the input of $M_2$.  The input of $M$ is the
  input of $M_1$.

  If $\e{o_1, o_2}$ is an output of $M_1$, then by the induction hypothesis,
  $\a{v_1, i_1'}$ and $\a{v_2, i_2'}$ \nz{are arcs of} $\vec{N}$ and there exist
  semi-directed paths from $o_1$ and $o_2$ to $i_1'$ and $i_2'$.
  
  If $\e{o_1, o_2}$ is an output of $M_2$, then \nz{let $v\dprime_1$ and
  $v\dprime_2$ be the two interior vertices of $L'$, and let $\e{i\dprime_1,
  i_2\dprime}$ be the input of $L'$.}  By applying the base case to $M_2$, we
  conclude that $\a{v\dprime_1, i\dprime_1}$ and $\a{v\dprime_2, i\dprime_2}$
  \nz{are arcs of} $\vec{N}$ and that there exist semi-directed paths from $o_1$
  and $o_2$ to $i\dprime_1$ and $i\dprime_2$, 

  Since $\e{i_1\dprime, i_2\dprime}$ is an output of $M_1$ that is made {\true}
  by the two \nz{arcs} $\a{v_1\dprime, i_1\dprime}$ and $\a{v_2\dprime,
  i_2\dprime}$, the induction hypothesis applied to $M_2$ shows that $\a{v_1,
  i_1'}$ and $\a{v_2, i_2'}$ \nz{are arcs of} $\vec{N}$ and that there exist
  semi-directed paths from $i_1\dprime$ and $i_2\dprime$ to $i_1'$ and $i_2'$
  in~$\vec{N}$.  Together with the semi-directed paths from $o_1$ and $o_2$ to
  $i_1\dprime$ and $i_2\dprime$, we obtain semi-directed paths from $o_1$ and
  $o_2$ to $i_1'$ and $i_2'$ in $\vec{N}$.
\end{proof}

For the basic replicator widget, we need three different BPOs.  First, when the
input value $\lambda$ is {\true}, we want both outputs to be {\true} as well.
We call this the \emph{\true-BPO} of the basic replicator widget, shown in
\cref{fig:basic-replicator-widget-true-labelling}.  The second BPO, the
\emph{\false-BPO} shown in \cref{fig:basic-replicator-widget-false-labelling},
is used to replicate a {\false} input value $\lambda$.  The final BPO is used to
output a {\true} and a {\false} value when the input is {\true}.  We call this
the \emph{escape BPO} of the basic replicator widget, as it will be used to
escape from the subgraph of $N(F)$ formed by choice \nz{widgets}, clause
widgets, and the \nz{wire and replicator widgets connecting} them, and to link
up with the root in the root widget.  This BPO is shown in
\cref{fig:basic-replicator-widget-escape-labelling}.

For the escape BPO, we have the following observation, which is easily verified
by inspection of \cref{fig:basic-replicator-widget-escape-labelling}.

% ------------------------------------------------------------------------------
\begin{obs}
  \label{obs:replicator-widget-escape}
  Let $N$ be an unrooted binary network that contains a basic replicator widget
  $M$, and let $\vec{N}$ be a PO of $N$ whose restriction to $M$ is the escape
  BPO of $M$.  Assume further that the input of $M$ is made {\true} by
  $\vec{N}$, that the output $\e{o_1, o_2}$ \nz{that is an edge of} $\vec{N}$ is
  made {\true} by $\vec{N}$, and that the output $\e{o_3, o_4}$ \nz{that is an
  arc of} $\vec{N}$ is made {\false} by $\vec{N}$.  Then all vertices in $M$
  satisfy \cref{prop:in-degree,prop:out-degree}, any semi-directed cycle in
  $\vec{N}$ must include a vertex not in $M$, and the root of $\vec{N}$ does not
  belong to $M$.
\end{obs}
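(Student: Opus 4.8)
The plan is a direct verification against Figure~\ref{fig:basic-replicator-widget-escape-labelling}, split into the three assertions of the observation. First I would fix notation following Figure~\ref{fig:basic-replicator-widget}: write $\e{i_1, i_2}$ for the input port of $M$, with external neighbours $v_1$ and $v_2$ (the two interior vertices of the wire widget attached to $M$'s input); $\e{x_1, x_2}$ and $\e{y_1, y_2}$ for the two output ports, with external neighbours $x_1', x_2'$ and $y_1', y_2'$; and $p_1, p_2$ for the two remaining vertices of $M$, adjacent to $x_1$ and $y_1$ respectively, which have degree $1$ in $N$ and hence are leaves. I would then read off from the escape BPO that the edges internal to $M$ carry the arcs $\a{x_1, p_1}$, $\a{x_2, i_1}$, $\a{i_1, i_2}$, $\a{i_2, y_2}$, $\a{y_2, y_1}$, $\a{y_1, p_2}$ together with the single undirected edge $\e{x_1, x_2}$; and from the three truth-value hypotheses (using the convention fixed earlier in this section that an input is {\true} when its arcs point out of the widget and an output is {\true} when its arcs point into the widget) that the external edges at the ports carry $\a{i_1, v_1}$ and $\a{i_2, v_2}$ (input {\true}), $\a{x_1', x_1}$ and $\a{x_2', x_2}$ (output $\e{o_1, o_2} = \e{x_1, x_2}$ {\true}, and an edge of $\vec{N}$), and $\a{y_1, y_1'}$ and $\a{y_2, y_2'}$ (output $\e{o_3, o_4} = \e{y_1, y_2}$ {\false}, and an arc of $\vec{N}$).

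For the first assertion I would simply tabulate, for each of the eight vertices $i_1, i_2, x_1, x_2, y_1, y_2, p_1, p_2$, its incident in- and out-arcs among the arcs listed above. One checks that each of the eight has in-degree exactly $1$, so \cref{prop:in-degree} holds; and that each of the six non-leaf vertices (all but $p_1$ and $p_2$) has at least one out-arc, so \cref{prop:out-degree} holds as well, being vacuous for the leaves $p_1, p_2$. The third assertion is then immediate: since every vertex of $M$ has in-degree $1$ in $\vec{N}$, none of them can serve as the root of $\vec{N}$, which by \cref{prop:unique-root} has in-degree $0$.

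For the second assertion, suppose a semi-directed cycle $C$ of $\vec{N}$ used only vertices of $M$. Then $C$ would use only edges and arcs with both endpoints in $M$, that is, only the seven edges internal to $M$ listed above, since every other edge of $N$ incident to a vertex of $M$ has its other endpoint outside $M$. But those seven edges span the eight vertices of $M$ and form a path $p_1 - x_1 - x_2 - i_1 - i_2 - y_2 - y_1 - p_2$, hence a tree, which contains no cycle whatsoever --- a contradiction. So every semi-directed cycle of $\vec{N}$ must contain a vertex not in $M$.

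I do not expect a genuine obstacle here; the statement is a finite case check. The only point requiring care is the bookkeeping in the opening step: correctly translating the three hypotheses on the truth values of $M$'s input and two outputs into the orientations of the six external (wire) edges at the port vertices, and recording that $p_1$ and $p_2$ are leaves of $N$ so that \cref{prop:out-degree} does not constrain them. I would double-check this translation by confirming that the orientations it produces are exactly the ones drawn in Figure~\ref{fig:basic-replicator-widget-escape-labelling}.
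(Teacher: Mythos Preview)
Your proposal is correct and takes essentially the same approach as the paper, which simply states that the observation ``is easily verified by inspection of \cref{fig:basic-replicator-widget-escape-labelling}'' and gives no further argument. You have made that inspection explicit, correctly reading off the arcs and the single undirected edge from the escape BPO, correctly translating the three truth-value hypotheses into the orientations of the six external edges, and correctly noting that the two degree-$1$ vertices of the widget are leaves so that \cref{prop:out-degree} is vacuous for them; the acyclicity argument via the observation that the seven internal edges form a path is clean and sufficient.
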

% ------------------------------------------------------------------------------

We use the {\true}- and {\false}-BPOs of the basic replicator widget to
construct {\true}- and {\false}-BPOs of larger replicator widgets.  In
particular, we obtain the {\true}-BPO of a replicator widget $M$ by combining
the {\true}-BPOs of all basic replicator widgets and \nz{wire} widgets of which
$M$ is composed.  We obtain the {\false}-BPO of $M$ by combining the
{\false}-BPOs of all basic replicator widgets and \nz{wire} widgets of which $M$
is composed.  These BPOs of the $5$-output replicator widget in
\cref{fig:replicator-widget} are shown in
\cref{fig:replicator-widget-true-labelling,fig:replicator-widget-false-labelling}.
The following two observations are easily verified by inspection of
\cref{fig:replicator-widget-true-labelling,fig:replicator-widget-false-labelling}.

% ------------------------------------------------------------------------------
\begin{obs}
  \label{obs:replicator-widget-true}
  Let $N$ be \nz{an unrooted binary} network that contains a replicator widget
  $M$, and let $\vec{N}$ be a PO whose restriction to $M$ is the {\true}-BPO of
  $M$.  Assume further that the input of $M$ and all outputs of $M$ are made
  {\true} by $\vec{N}$.  Then all vertices in $M$ satisfy
  \cref{prop:in-degree,prop:out-degree}, any semi-directed cycle in $\vec{N}$
  must include a vertex not in $M$, and the root of $\vec{N}$ does not belong to
  $M$.
\end{obs}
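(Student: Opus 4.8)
The plan is to establish all three conclusions simultaneously by induction on the number $k \ge 2$ of outputs of $M$, following the recursive definition of the replicator widget and of its \true-BPO. In the base case $k = 2$, $M$ is a basic replicator widget with the \true-BPO of \cref{fig:basic-replicator-widget-true-labelling}. Using that the input and both outputs are \true --- which fixes the arcs incident to the three boundary edges --- I would read off from the figure that each of the eight vertices of $M$ has in-degree exactly $1$ and out-degree at least $1$, giving \cref{prop:in-degree,prop:out-degree}; since no vertex of $M$ then has in-degree $0$, the root of $\vec{N}$ cannot lie in $M$. For the cycle condition I would contract the three undirected edges of the \true-BPO, note that the resulting digraph is acyclic, and conclude exactly as in the proof of \cref{lem:HGT-consistent} that $\vec{N}$ has no semi-directed cycle using only vertices of $M$; that is, every semi-directed cycle of $\vec{N}$ contains a vertex outside $M$.

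For $k > 2$, write $M = M_1 \cup L' \cup M_2$, where $M_1$ is a $(k-1)$-output replicator widget, $M_2$ a basic replicator widget, and $L'$ the wire widget joining one output of $M_1$ to the input of $M_2$; the \true-BPO of $M$ is obtained by combining the \true-BPOs of these three pieces, which pairwise share only the two boundary edges between $M_1$ and $L'$ and between $L'$ and $M_2$. I would first check that in this combined orientation the input and all outputs of $M_1$ are \true and the input and both outputs of $M_2$ are \true: the input and all but one output of $M_1$, and both outputs of $M_2$, are inputs/outputs of $M$ (true by hypothesis), while the joined output of $M_1$ and the input of $M_2$ are \true because the \true-BPO of $L'$ orients the external edges at those ports accordingly. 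Then the induction hypothesis applies to $M_1$, the base case to $M_2$, and \cref{obs:connector-boolean} to $L'$, and together they give \cref{prop:in-degree,prop:out-degree} for every vertex of $M$ (each lies in $M_1$, in $M_2$, or is an interior vertex of $L'$) and place the root of $\vec{N}$ outside all of $M_1$, $M_2$, and $L'$, hence outside $M$.

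The step I expect to be the real obstacle is the cycle condition in the inductive step, where I have only the analogous statements for $M_1$, $M_2$, and $L'$. Again I would argue by edge contraction. One checks --- inductively, since a join through a wire widget never creates a vertex incident to two undirected edges --- that the undirected edges of the \true-BPO of $M$ form a matching, so any semi-directed cycle inside $M$ must use an arc, and after contracting the undirected edges of the restriction of $\vec{N}$ to $M$ it yields a directed cycle in the contracted digraph $G_M$. But $G_M$ is simply the vertex-disjoint union of $G_{M_1}$ and $G_{M_2}$, acyclic by the induction hypothesis and the base case, plus one extra node $V$ obtained from the interior edge of $L'$ and only two further arcs --- the images of the four arcs of the \true-BPO of $L'$ --- namely one from the node of $G_{M_2}$ built from $M_2$'s input edge into $V$, and one from $V$ into the node of $G_{M_1}$ built from $M_1$'s joined output edge. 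As there is no arc directly between $G_{M_1}$ and $G_{M_2}$, and $V$'s only in-arc comes from the $M_2$ side while its only out-arc goes to the $M_1$ side, no directed cycle can pass through $V$; thus $G_M$ is acyclic and the claim follows as in \cref{lem:HGT-consistent}. The only other bookkeeping is to confirm that the pieces' BPOs are mutually consistent on the shared boundary edges, which they are.
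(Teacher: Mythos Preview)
Your proposal is correct. The paper does not actually prove this observation: it simply states that it ``is easily verified by inspection of \cref{fig:replicator-widget-true-labelling,fig:replicator-widget-false-labelling}'' and moves on. Your inductive argument along the recursive build-up of the replicator widget, invoking \cref{obs:connector-boolean} for the joining wire widget and reducing the cycle condition to acyclicity of the edge-contracted digraph as in the proof of \cref{lem:HGT-consistent}, is exactly the natural way to make that inspection rigorous. One cosmetic remark: after contraction there are in fact two parallel arcs from the $M_2$-side node into $V$ and two from $V$ into the $M_1$-side node (coming from the four non-edge arcs of the \true-BPO of $L'$), not one each; this does not affect your acyclicity argument.
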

% ------------------------------------------------------------------------------

% ------------------------------------------------------------------------------
\begin{obs}
  \label{obs:replicator-widget-false}
  Let $N$ be \nz{an unrooted binary} network that contains a replicator widget
  $M$, and let $\vec{N}$ be a PO whose restriction to $M$ is the {\false}-BPO of
  $M$.  Assume further that the input of $M$ and all outputs of $M$ are made
  {\false} by $\vec{N}$.  Then all vertices in $M$ satisfy
  \cref{prop:in-degree,prop:out-degree}, any semi-directed cycle in $\vec{N}$
  must include a vertex not in $M$, and the root of $\vec{N}$ does not belong to
  $M$.
\end{obs}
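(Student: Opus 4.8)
The plan is to prove the three claimed properties by induction on the number $k$ of outputs of the replicator widget $M$, following the recursive construction of $M$: a $2$-output replicator is a basic replicator widget, and a $k$-output replicator with $k>2$ consists of a $(k-1)$-output replicator $M_1$, a basic replicator $M_2$, and a wire widget $L'$ joining one output of $M_1$ to the input of $M_2$, with the \false-BPO of $M$ being the union of the \false-BPOs of $M_1$, $M_2$, and $L'$. Note that $\vec N|_M$ depends only on the \false-BPO, and that a semi-directed cycle of $\vec N$ using only vertices of $M$ uses only arcs and edges of $\vec N|_M$; hence all three properties --- \cref{prop:in-degree,prop:out-degree} at every vertex of $M$, absence of a semi-directed cycle inside $\vec N|_M$, and the root of $\vec N$ not lying in $M$ --- are purely properties of the \false-BPO of $M$.

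For the base case, $M$ is a basic replicator widget and $\vec N|_M$ is exactly the orientation drawn in \cref{fig:basic-replicator-widget-false-labelling}. I would check vertex by vertex that each vertex of $M$ has in-degree exactly $1$ and out-degree at least $1$; since no vertex of $M$ is a leaf of $N$, this gives \cref{prop:in-degree,prop:out-degree}. As every vertex of $M$ then has an in-arc while the root of $\vec N$ has in-degree $0$ by \cref{prop:unique-root}, no vertex of $M$ is the root. Reading the ``levels'' off the figure gives an integer labelling of the vertices of $M$ under which every arc of $\vec N|_M$ strictly decreases the level and the unique edge of $\vec N|_M$ keeps its two endpoints at the same level, which forbids a semi-directed cycle inside $\vec N|_M$. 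For the inductive step, the induction hypothesis gives the three properties for $M_1$, the base case gives them for $M_2$, and \cref{obs:connector-boolean} gives them for $L'$. Since the output of $M_1$ feeding $L'$ and the output of $L'$ feeding $M_2$ are both \false, the interface edges are oriented compatibly across the three pieces, so assembling the pieces yields \cref{prop:in-degree,prop:out-degree} at every vertex of $M$, and root-freeness again follows from \cref{prop:in-degree} and \cref{prop:unique-root}. For the absence of a semi-directed cycle, I would splice together the level labellings of $M_1$, $L'$, and $M_2$: the \false-BPO flows consistently from the single input of $M$ toward its outputs, so these labellings can be chosen (after shifting) to agree on the shared interface vertices, producing a level labelling of all of $M$ with arcs strictly decreasing and edges level-preserving; a semi-directed cycle in $\vec N|_M$ would then be level-constant, hence a cycle formed by edges of $\vec N|_M$ alone, which is impossible since these edges form a matching (immediate from the \false-BPO, as in the argument of \cref{obs:undirected-matching}).

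The step I expect to be the main obstacle is the last one: showing that no semi-directed cycle weaves among $M_1$, $L'$, and $M_2$. A union of semi-directed-acyclic partial orientations need not itself be semi-directed-acyclic, so this genuinely relies on the structure of the \false-BPO --- namely the consistent input-to-output orientation of its arcs (the level structure visible in \cref{fig:replicator-widget-false-labelling}) that extends compatibly over the recursive composition. Everything else is routine local verification on the figures.
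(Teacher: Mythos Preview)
The paper does not actually prove this observation; it simply states that it is ``easily verified by inspection of \cref{fig:replicator-widget-true-labelling,fig:replicator-widget-false-labelling}.'' Your inductive argument is a perfectly reasonable way to make that inspection rigorous, and the level-labelling idea for ruling out cross-piece semi-directed cycles is exactly the right mechanism (it mirrors how the paper handles acyclicity via \cref{lem:HGT-consistent}).

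One small inaccuracy: your claim that ``no vertex of $M$ is a leaf of $N$'' is false. In the basic replicator widget the two endpoints of the 8-vertex path (the vertices called $\text{out-1-v}$ and $\text{out-2-v}$ in the tikz code, sitting just beyond $x_1$ and $y_1$ in \cref{fig:basic-replicator-widget}) have degree~1 in $N$; they are leaves. This does not damage your argument, since \cref{prop:out-degree} is vacuous at leaves and both leaves have in-degree~1 in the \false-BPO, but you should adjust the wording of the base-case verification accordingly. With that fix, your proof is correct and strictly more detailed than what the paper provides.
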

% ------------------------------------------------------------------------------

% ------------------------------------------------------------------------------
\subsubsection{Or-Widget}
% ------------------------------------------------------------------------------

We build a clause widget by stacking two or-widgets.  \Cref{fig:or-widget} shows
an or-widget.  As the following lemma shows, the output of an or-widget is
{\true} only if at least one of its inputs is {\true}.  Thus, the output of an
or-widget is essentially the disjunction of the two inputs.

% ------------------------------------------------------------------------------
\begin{figure}[ht]
  \centering
  \begin{tikzpicture}
    \pic (or) {or};
    \path
    (or-in-1-1)                          node [anchor=south east]              (x1-l) {$x_1$}
    (or-in-1-2)                          node [anchor=north east]              (x2-l) {$x_2$}
    (or-in-2-1)                          node [anchor=south west]              (y1-l) {$y_1$}
    (or-in-2-2)                          node [anchor=north west]              (y2-l) {$y_2$}
    (or-left-cross)                      node [anchor=south east]              (u2-l) {$u_2$}
    (or-right-cross)                     node [anchor=south west]              (v2-l) {$v_2$}
    (or-out-1)                           node [anchor=south east,xshift=0.5mm] (o1-l) {$o_1$}
    (or-out-2)                           node [anchor=south west]              (o2-l) {$o_2$}
    (or-left-ear-1)                      node [anchor=north west]                     {$u_1$}
    (or-right-ear-1)                     node [anchor=north east,xshift=1mm]          {$v_1$}
    (or-left-foot-1)                     node [anchor=north west]                     {$w_1$}
    (or-right-foot-1)                    node [anchor=north east]                     {$z_1$}
    (or-left-foot-2)                     node [anchor=north]                   (w2-l) {$w_2$}
    (or-right-foot-2)                    node [anchor=north]                   (z2-l) {$z_2$}
    (or-in-1-1 -| x2-l.west) ++(180:0.5) node [vertex]                         (l1)   {} +(180:1) node [vertex] (il1) {}
    (or-in-1-2 -| x2-l.west) ++(180:0.5) node [vertex]                         (l2)   {} +(180:1) node [vertex] (il2) {}
    (or-in-2-1 -| y2-l.east) ++(0:0.5)   node [vertex]                         (r1)   {} +(0:1)   node [vertex] (ir1) {}
    (or-in-2-2 -| y2-l.east) ++(0:0.5)   node [vertex]                         (r2)   {} +(0:1)   node [vertex] (ir2) {}
    (or-out-1 |- o2-l.north) ++(90:0.5)  node [vertex]                         (o1)   {} +(90:1)  node [vertex] (oo1) {}
    (or-out-2 |- o2-l.north) ++(90:0.5)  node [vertex]                         (o2)   {} +(90:1)  node [vertex] (oo2) {}
    (barycentric cs:il1=0.5,il2=0.5)     node [anchor=east,xshift=-4pt]               {Input $\alpha$}
    (barycentric cs:ir1=0.5,ir2=0.5)     node [anchor=west,xshift=4pt]                {Input $\beta$}
    (barycentric cs:oo1=0.5,oo2=0.5)     node [anchor=south,yshift=6pt]               {Output $\alpha \vee \beta$}
    (o1)                                 node [anchor=east]                           {$o_1'$}
    (o2)                                 node [anchor=west]                           {$o_2'$}
    (oo1)                                node [anchor=east]                           {$o_1\dprime$}
    (oo2)                                node [anchor=west]                           {$o_2\dprime$}
    (l1)                                 node [anchor=south]                          {$x_1'$}
    (l2)                                 node [anchor=north]                          {$x_2'$}
    (il1)                                node [anchor=south]                          {$x_1\dprime$}
    (il2)                                node [anchor=north]                          {$x_2\dprime$}
    (r1)                                 node [anchor=south]                          {$y_1'$}
    (r2)                                 node [anchor=north]                          {$y_2'$}
    (ir1)                                node [anchor=south]                          {$y_1\dprime$}
    (ir2)                                node [anchor=north]                          {$y_2\dprime$};
    \path [external edge]
    (or-in-1-1) -- (l1) -- (il1)
    (or-in-1-2) -- (l2) -- (il2)
    (or-in-2-1) -- (r1) -- (ir1)
    (or-in-2-2) -- (r2) -- (ir2)
    (or-out-1)  -- (o1) -- (oo1)
    (or-out-2)  -- (o2) -- (oo2)
    (l1)        -- (l2)
    (r1)        -- (r2)
    (o1)        -- (o2);
    \begin{scope}[on background layer]
      \node [region,fit=(x1-l) (x2-l) (y1-l) (y2-l) (u2-l) (v2-l) (o1-l) (o2-l) (w2-l) (z2-l)] {};
    \end{scope}
  \end{tikzpicture}
  \caption{An or-widget.  \nz{The widget consists of the vertices and edges in
  the shaded region.  Dashed edges are connections to other widgets.  The
  endpoints of these edges may be part of a wire widget or of another or-widget.
  In particular, the edges $\e{x_1\dprime, x_2\dprime}$, $\e{y_1\dprime,
  y_2\dprime}$, and $\e{o_1\dprime, o_2\dprime}$ are not drawn because they may
  not exist.}}
  \label{fig:or-widget}
\end{figure}
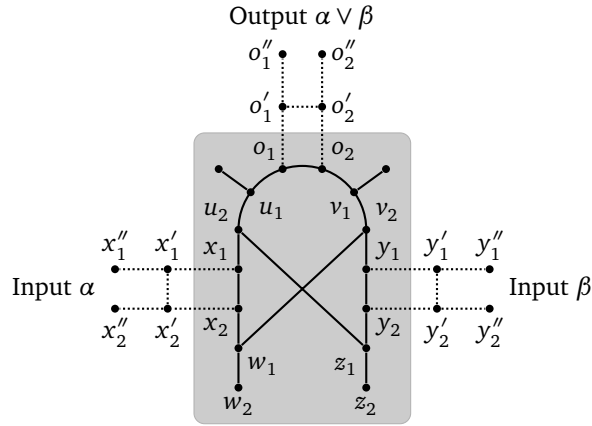
% ------------------------------------------------------------------------------

% ------------------------------------------------------------------------------
\begin{lem}
  \label{lem:or-widget}
  \nz{Let $N$ be an unrooted binary network $N$ that contains \nz{a root widget
  $R$ and} an or-widget $O$, let $\e{x_1, x_2}$ and $\e{y_1, y_2}$ be the two
  inputs of $O$, let $x_1'$, $x_2'$, $y_1'$, and $y_2'$ be the external
  neighbours of $x_1$, $x_2$, $y_1$, and $y_2$, respectively, assume that $N$
  contains the edges $\e{x_1', x_2'}$ and $\e{y_1', y_2'}$, and let
  $x_1\dprime$, $x_2\dprime$, $y_1\dprime$, and $y_2\dprime$ be the neighbours
  of $x_1'$, $x_2'$, $y_1'$, and $y_2'$, respectively, that are not in $\{x_1,
  x_2, y_1, y_2, x_1', x_2', y_1', y_2'\}$ (see \cref{fig:or-widget}).} If
  $\vec{N}$ is an HPO that makes the \nz{output} $\e{o_1, o_2}$ of $O$ {\true},
  then $\a{x_1', x_1\dprime}$ and $\a{x_2', x_2\dprime}$ \nz{are arcs of}
  $\vec{N}$, or $\a{y_1', y_1\dprime}$ and $\a{y_2', y_2\dprime}$ \nz{are arcs
  of} $\vec{N}$.  In the former case, there exist semi-directed paths from $o_1$
  and $o_2$ to $x_1\dprime$ and $x_2\dprime$ in $\vec{N}$.  In the latter case,
  there exist semi-directed paths from $o_1$ and $o_2$ to $y_1\dprime$ and
  $y_2\dprime$ in $\vec{N}$.
\end{lem}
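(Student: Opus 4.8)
The plan is to argue entirely locally, using \cref{lem:HGT-consistent,obs:undirected-matching,lem:out-edge-pair,lem:no-semi-directed-paths}. First I would note that, since $\vec N$ is an HPO, \cref{lem:HGT-consistent} gives \cref{prop:unique-root,prop:in-degree,prop:out-degree,prop:cycle-free}, and by \cref{lem:root-in-sheep} the root of $\vec N$ lies in $R$, hence outside $O$ and outside the input stubs; so every vertex of $O$, of the input edges, and of the paths $\s{x_i, x_i', x_i\dprime}$ and $\s{y_i, y_i', y_i\dprime}$ has in-degree $1$ and out-degree at least $1$, and by \cref{obs:undirected-matching} the edges of $\vec N$ form a matching, so each such vertex carries at most one incident (undirected) edge. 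Now, since the output $\e{o_1, o_2}$ is {\true}, $\a{o_1', o_1}$ and $\a{o_2', o_2}$ are arcs of $\vec N$, and these are the unique in-arcs of $o_1$ and $o_2$. Hence $\e{o_1, o_2}$ cannot be an arc into either endpoint, so it is an undirected edge of $\vec N$; by \cref{obs:undirected-matching} the remaining edges $\e{o_1, u_1}$ and $\e{o_2, v_1}$ are directed, and since they are not in-arcs of $o_1$ or $o_2$ they are $\a{o_1, u_1}$ and $\a{o_2, v_1}$. In particular $\s{o_1, u_1}$, $\s{o_2, o_1, u_1}$, $\s{o_1, o_2, v_1}$, $\s{o_2, v_1}$ are semi-directed paths. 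Moreover $\a{o_1, u_1}$ is the unique in-arc of $u_1$ and $\a{o_2, v_1}$ that of $v_1$, so $\e{u_1, u_2}$ is either the arc $\a{u_1, u_2}$ or an undirected edge — either way $\s{u_1, u_2}$ is a semi-directed (sub)path — and likewise $\s{v_1, v_2}$ is a semi-directed path.

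\textbf{Step 2 (locate a ``true'' input via the bottom cycle).} Next I would apply \cref{lem:out-edge-pair} to the chordless cycle $C = \s{u_2, x_1, x_2, w_1, v_2, y_1, y_2, z_1, u_2}$ of $N$, whose vertices' off-cycle neighbours are $u_1, x_1', x_2', w_2, v_1, y_1', y_2', z_2$. This yields an edge of $C$ that is undirected in $\vec N$ together with the two arcs from its endpoints to their off-cycle neighbours. An edge incident to $u_2$ would require the arc $\a{u_2, u_1}$, contradicting that $\a{o_1, u_1}$ is the unique in-arc of $u_1$; an edge incident to $v_2$ would require $\a{v_2, v_1}$, contradicting uniqueness of $\a{o_2, v_1}$. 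So the undirected edge is one of $\e{x_1, x_2}$ (with $\a{x_1, x_1'}, \a{x_2, x_2'}$), $\e{x_2, w_1}$ (with $\a{x_2, x_2'}, \a{w_1, w_2}$), $\e{y_1, y_2}$ (with $\a{y_1, y_1'}, \a{y_2, y_2'}$), or $\e{y_2, z_1}$ (with $\a{y_2, y_2'}, \a{z_1, z_2}$). The or-widget together with its two input stubs is invariant under the reflection swapping $u_i \leftrightarrow v_i$, $o_1 \leftrightarrow o_2$, $w_i \leftrightarrow z_i$, $x_i \leftrightarrow y_i$, $x_i' \leftrightarrow y_i'$, $x_i\dprime \leftrightarrow y_i\dprime$; this reflection fixes the hypothesis that $\e{o_1, o_2}$ is {\true} and maps the last two cases to the first two, so it suffices to treat the two cases in which the undirected edge lies on the left side and to conclude $\a{x_1', x_1\dprime}$ and $\a{x_2', x_2\dprime}$.

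\textbf{Step 3 (the two left-side cases).} If $\e{x_1, x_2}$ is undirected with $\a{x_1, x_1'}$ and $\a{x_2, x_2'}$: these are the in-arcs of $x_1'$ and $x_2'$, so $\e{x_1', x_2'}$ is undirected and, by \cref{prop:out-degree} and the matching property, $\a{x_1', x_1\dprime}$ and $\a{x_2', x_2\dprime}$ are arcs; and the in-arc of $x_1$ must be $\a{u_2, x_1}$, its other two incident edges being the undirected $\e{x_1, x_2}$ and the out-arc $\a{x_1, x_1'}$. If instead $\e{x_2, w_1}$ is undirected with $\a{x_2, x_2'}$ and $\a{w_1, w_2}$: then the in-arc of $x_2$ is $\a{x_1, x_2}$; applying \cref{lem:out-edge-pair} to the chordless $4$-cycle $\s{x_1, x_2, x_2', x_1', x_1}$, the directed edges $\e{x_1, x_2}$ and $\e{x_2, x_2'}$ are excluded, and $\e{x_1, x_1'}$ is excluded because the arc $\a{x_1, u_2}$ it would require, together with $\a{x_1, x_2}$ and an undirected $\e{x_1, x_1'}$, would leave $x_1$ with no in-arc; hence $\e{x_1', x_2'}$ is undirected with $\a{x_1', x_1\dprime}$ and $\a{x_2', x_2\dprime}$, so $\a{x_1, x_1'}$ is the in-arc of $x_1'$, whence as before $\a{u_2, x_1}$ is the in-arc of $x_1$. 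So in both cases we have the arcs $\a{o_1, u_1}$, $\a{u_2, x_1}$, $\a{x_1, x_1'}$, $\a{x_1', x_1\dprime}$, $\a{x_2', x_2\dprime}$, the undirected edges $\e{x_1', x_2'}$ and $\e{o_1, o_2}$, and the semi-directed path $\s{u_1, u_2}$.

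\textbf{Step 4 (assemble the semi-directed paths) and the main obstacle.} Concatenating the pieces of Step 3, $\s{o_1, u_1, u_2, x_1, x_1', x_1\dprime}$ and $\s{o_1, u_1, u_2, x_1, x_1', x_2', x_2\dprime}$ are semi-directed paths from $o_1$, and prepending $o_2$ via the undirected edge $\e{o_1, o_2}$ gives semi-directed paths from $o_2$; these are the required paths from $o_1$ and $o_2$ to $x_1\dprime$ and $x_2\dprime$, and the corresponding statement for $y_1\dprime, y_2\dprime$ follows from the reflection. I expect the main obstacle to be the bookkeeping in Step 3: keeping track of which incident edges are forced to be arcs and which are forced to be undirected, and verifying at each point that the only edge \cref{lem:out-edge-pair} can return is the intended one; the reflection symmetry of the widget is what prevents this case work from doubling.
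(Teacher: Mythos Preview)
Your approach closely mirrors the paper's: both begin by forcing $\e{o_1,o_2}$ to be undirected and $\a{o_1,u_1},\a{o_2,v_1}$ to be arcs, then apply \cref{lem:out-edge-pair} to the 8-cycle $C$ and rule out the edges incident to $u_2$ and $v_2$. Your Step~3 handling of the $\e{x_2,w_1}$ case is a clean variation: you invoke \cref{lem:out-edge-pair} on the 4-cycle $\s{x_1,x_2,x_2',x_1'}$ to pin down $\e{x_1',x_2'}$ as the undirected edge, whereas the paper uses \cref{lem:no-semi-directed-paths} on $\s{x_1',x_1,x_2,x_2'}$ to force $\a{x_1,x_1'}$. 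Your route has the pleasant side effect of yielding $\a{u_2,x_1}$ directly in both left-side cases, which lets you build all semi-directed paths through $u_2$; the paper does not get $\a{u_2,x_1}$ for free and instead argues separately that at least one of $\a{u_2,x_1}$ or $\a{w_1,x_2}$ holds, splitting into further subcases (routing through $v_2$ and $w_1$ in the second) to assemble the paths.

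There is, however, a subtle gap. Your case analysis establishes the disjunctive statement ``either ($x$-arcs and $x$-paths) or ($y$-arcs and $y$-paths)'', but the lemma as phrased---and as invoked in the proof of \cref{lem:clause-widget}---is the conditional: \emph{whenever} $\a{x_1',x_1\dprime}$ and $\a{x_2',x_2\dprime}$ are arcs of $\vec N$, the $x$-paths exist (and symmetrically for $y$). These differ: $\vec N$ could land in a right-side case of \cref{lem:out-edge-pair} (so you derive the $y$-conclusion) yet independently also have $\a{x_1',x_1\dprime}$ and $\a{x_2',x_2\dprime}$ as arcs, and your argument says nothing about $x$-paths in that scenario. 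The paper handles this by, after establishing the disjunction, restarting from the bare hypothesis ``$\a{x_1',x_1\dprime}$ and $\a{x_2',x_2\dprime}$ are arcs'' and deriving the paths without reference to which $\{a,b\}$ case occurred. Your path construction cannot simply be reused for this, since it relies on $\a{u_2,x_1}$, which you derived from the specific left-side case.
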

% ------------------------------------------------------------------------------

\begin{proof}
  \nz{Note that all vertices of $L$ must satisfy
  \cref{prop:in-degree,prop:out-degree} because the root of $\vec{N}$ is a
  vertex of $R$ not contained in any other widget, by \cref{lem:root-in-sheep}.}

  Since the output $\e{o_1, o_2}$ of $O$ is {\true}, $\a{o_1', o_1}$ and
  $\a{o_2', o_2}$ \nz{are arcs of} $\vec{N}$, where $o_1'$ and $o_2'$ are the
  two external neighbours of $o_1$ and $o_2$, respectively.  By
  \cref{prop:in-degree}, this implies that $\e{o_1, o_2}$ \nz{is an edge of}
  $\vec{N}$ and, thus, by \cref{prop:out-degree}, that $\a{o_1, u_1}$ and
  $\a{o_2, v_1}$ \nz{are arcs of} $\vec{N}$.  By \cref{prop:in-degree}, this
  implies that neither $\a{u_2, u_1}$ nor $\a{v_2, v_1}$ \nz{is an arc of}
  $\vec{N}$.  Therefore, $\vec{N}$ \nz{contains the arc $\a{u_1, u_2}$ or the
  edge $\e{u_1, u_2}$, and the arc $\a{v_1, v_2}$ or the edge $\e{v_1, v_2}$.}
  Whichever is the case, we have semi-directed paths from $o_1$ and $o_2$ to
  $u_2$ and $v_2$.

  Now consider the chordless cycle $C = \s{u_2, x_1, x_2, w_1, v_2, y_1, y_2,
  z_1}$.  By \cref{lem:out-edge-pair}, there must exist two consecutive vertices
  $a$ and $b$ in $C$ such that \nz{$\vec{N}$ contains the edge $\e{a, b}$ and
  the arcs $\a{a, a'}$ and $\a{b, b'}$,} where $a'$ and $b'$ are the two
  neighbours of $a$ and $b$, \nz{respectively,} that do not belong to $C$.
  Since \nz{$\vec{N}$ contains neither $\a{u_2, u_1}$ nor $\a{v_2, v_1}$,} we
  must have $\{a, b\} = \{x_1, x_2\}$, $\{a, b\} = \{y_1, y_2\}$, $\{a, b\} =
  \{x_2, w_1\}$ or $\{a, b\} = \{y_2, z_1\}$.

  If $\{a, b\} = \{x_1, x_2\}$,  then $\a{x_1, x_1'}$ and $\a{x_2, x_2'}$
  \nz{are arcs of} $\vec{N}$.

  If $\{a, b\} = \{x_2, w_1\}$, then $\e{x_2, w_1}$ \nz{is an edge and} $\a{x_2,
  x_2'}$ \nz{is an arc of} $\vec{N}$.  By \cref{prop:in-degree}, this implies
  that $\a{x_1, x_2}$ \nz{is also an arc of} $\vec{N}$.  If $\e{x_1', x_1}$
  \nz{is an edge or} $\a{x_1', x_1}$ \nz{is an arc of} $\vec{N}$, \nz{then} we
  obtain a semi-directed path $\s{x_1', x_1, x_2, x_2'}$ in $\vec{N}$ \nz{that
  ends in an arc, contradicting} \cref{lem:no-semi-directed-paths}.  Thus,
  $\a{x_1, x_1'}$ \nz{is an arc of} $\vec{N}$, that is, once again, $\a{x_1,
  x_1'}$ and $\a{x_2, x_2'}$ \nz{are arcs of} $\vec{N}$.

  \nz{Since} $\a{x_1, x_1'}$ and $\a{x_2, x_2'}$ \nz{are arcs of} $\vec{N}$,
  $\e{x_1', x_2'}$ \nz{is an edge of} $\vec{N}$, \nz{by \cref{prop:in-degree},}
  and therefore, by \cref{prop:out-degree}, $\a{x_1', x_1\dprime}$ and $\a{x_2',
  x_2\dprime}$ \nz{are arcs of} $\vec{N}$.
  
  This proves that $\a{x_1', x_1\dprime}$ and $\a{x_2', x_2\dprime}$ \nz{are
  arcs of} $\vec{N}$ if $\{a, b\} = \{x_1, x_2\}$ or $\{a, b\} = \{x_2, w_1\}$.
  An analogous argument shows that $\a{y_1', y_1\dprime}$ and $\a{y_2',
  y_2\dprime}$ \nz{are arcs of} $\vec{N}$ if $\{a, b\} = \{y_1, y_2\}$ or $\{a,
  b\} = \{y_2, z_1\}$.
  
  Now assume that $\a{o_1', o_1}$, $\a{o_2', o_2}$, $\a{x_1', x_1\dprime}$, and
  $\a{x_2', x_2\dprime}$ \nz{are arcs of} $\vec{N}$.  We already proved that we
  have semi-directed paths from $o_1$ and $o_2$ to $u_2$ and $v_2$ in this case.

  \nz{$\vec{N}$ must contain at least one of the arcs} $\a{u_2, x_1}$ and
  $\a{w_1, x_2}$.  Otherwise, the subgraph $H$ induced by $x_1$, $x_2$, $x_1'$,
  and $x_2'$ \nz{would have no in-arcs.  Since every vertex in $H$ has in-degree
  $1$, by \cref{prop:in-degree}, this would imply that $H$ contains a directed
  cycle, contradicting \cref{prop:cycle-free}.}

  By an analogous argument \nz{applied to the subgraph induced by $x_1'$ and
  $x_2'$}, at least one of $\a{x_1, x_1'}$ and $\a{x_2, x_2'}$ \nz{must be an
  arc of} $\vec{N}$.

  If $\a{u_2, x_1}$ \nz{is an arc of} $\vec{N}$, then $\a{x_2, x_1}$ \nz{is not
  an arc of} $\vec{N}$\nz{, by \cref{prop:in-degree}.}  Thus, $\a{x_1, x_2}$
  \nz{is an arc or} $\e{x_1, x_2}$ \nz{is an edge of} $\vec{N}$.  In either
  case, we have semi-directed paths from $u_2$ to $x_1$ and $x_2$ and, thus,
  semi-directed paths from $o_1$ and $o_2$ to $x_1$ and $x_2$ in $\vec{N}$.

  If $\a{w_1, x_2}$ \nz{is an arc of} $\vec{N}$, an analogous argument gives us
  semi-directed paths from $w_1$ to $x_1$ and $x_2$.  Also note that by
  \cref{prop:in-degree}, $\a{w_1, w_2}$ \nz{must be an arc of} $\vec{N}$.  Thus,
  by \cref{prop:in-degree} again, $\a{v_2, w_1}$ \nz{is an arc of} $\vec{N}$.
  Since we have semi-directed paths from $o_1$ and $o_2$ to $v_2$, we once again
  obtain semi-directed paths from $o_1$ and $o_2$ to $x_1$ and $x_2$ in
  $\vec{N}$.

  If $\a{x_1, x_1'}$ \nz{is an arc of} $\vec{N}$, then $\a{x_2', x_1'}$ \nz{is
  not}, by \cref{prop:in-degree}.  Thus, $\a{x_1', x_2'}$ \nz{is an arc} or
  $\e{x_1', x_2'}$ \nz{is an edge of} $\vec{N}$.  In either case, we have
  semi-directed paths from $x_1$ to $x_1'$ and $x_2'$ and, thus, since $\a{x_1',
  x_1\dprime}$ and $\a{x_2', x_2\dprime}$ \nz{are arcs of} $\vec{N}$, also
  semi-directed paths from $x_1$ to $x_1\dprime$ and $x_2\dprime$ in $\vec{N}$.

  If $\a{x_2, x_2'}$ \nz{is an arc of} $\vec{N}$, then an analogous argument
  shows that we have semi-directed paths from $x_2$ to $x_1\dprime$ and
  $x_2\dprime$ in~$\vec{N}$.

  Since we have already shown that there exist semi-directed paths from $o_1$
  and $o_2$ to both $x_1$ and $x_2$, we conclude that whether $\a{x_1, x_1'}$ or
  $\a{x_2, x_2'}$ \nz{is an arc of} $\vec{N}$, there exist semi-directed paths
  from $o_1$ and $o_2$ to $x_1\dprime$ and $x_2\dprime$ in $\vec{N}$.

  Analogous arguments prove that there exist semi-directed paths from $o_1$ and
  $o_2$ to $y_1\dprime$ and $y_2\dprime$ if $\a{o_1', o_1}$, $\a{o_2', o_2}$,
  $\a{y_1', y_1\dprime}$, and $\a{y_2', y_2\dprime}$ \nz{are arcs of} $\vec{N}$.
\end{proof}

% ------------------------------------------------------------------------------
\subsubsection{Clause Widget}
% ------------------------------------------------------------------------------

We obtain a clause widget from two or-widgets by adding edges $\e{u_1, v_1}$ and
$\e{u_2, v_2}$, where $\e{u_1, u_2}$ is one of the inputs of the first
or-widget, and $\e{v_1, v_2}$ is the output of the second or-widget.  This is
shown in \cref{fig:clause-widget}.  The purpose of a clause widget is to
generalize the or-widget so it takes three inputs $\alpha$, $\beta$, and
$\gamma$, and its output represents their disjunction $\alpha \vee \beta \vee
\gamma$.  As the following lemma shows, the clause widget has the desired
property that for its output to be {\true}, at least one of the inputs must be
{\true}.

% ------------------------------------------------------------------------------
\begin{figure}[b]
  \centering
  \begin{tikzpicture}
    \pic (clause) {clause};
    \path
    (clause-out-1)                 node [anchor=north,xshift=2pt]                   {$o_1$}
    (clause-out-2)                 node [anchor=north,xshift=-2pt]                  {$o_2$}
    (clause-in-1-1)                node [anchor=south east]                         {$x_1$}
    (clause-in-1-2)                node [anchor=north east]                         {$x_2$}
    (clause-in-2-1)                node [anchor=south west]                         {$y_1$}
    (clause-in-2-2)                node [anchor=north west]                         {$y_2$}
    (clause-in-3-1)                node [anchor=south west]                         {$z_1$}
    (clause-in-3-2)                node [anchor=north west]                         {$z_2$}
    (clause-inter-1)               node [anchor=east]                               {$u_1$}
    (clause-inter-2)               node [anchor=west]                               {$u_2$}
    (clause-inter-3)               node [anchor=north,xshift=2pt]                   {$v_1$}
    (clause-inter-4)               node [anchor=north,xshift=-2pt]                  {$v_2$}
    (clause-bottom-left-ear-1)     node [xshift=-3pt,yshift=-1pt,anchor=north west] {$w_1$}
    (clause-bottom-right-ear-1)    node [xshift=4pt,yshift=-1pt,anchor=north east]  {$w_2$}
    (clause-bottom-left-cross)     node [anchor=south east,yshift=-3pt] (w1-l)      {$w_1'$}
    (clause-bottom-right-cross)    node [anchor=south west,yshift=-3pt]             {$w_2'$};
    \path
    (clause-top-left-ear-2)     +(90:0.5)  coordinate (top)
    (clause-top-right-ear-2)    +(0:0.5)   coordinate (right)
    (clause-bottom-left-foot-2) +(270:0.5) coordinate (bot)
    (clause-bottom-left-ear-2)  +(180:0.5) coordinate (left);
    \path
    node [rectangle,inner sep=0pt,fit=(left) (w1-l)] (find-left) {}
    (find-left.west) coordinate (left);
    \path
    (clause-in-1-1 -| left)  ++(180:0.5) node [vertex] (x1) {} +(180:1) node [vertex] (ix1) {}
    (clause-in-1-2 -| left)  ++(180:0.5) node [vertex] (x2) {} +(180:1) node [vertex] (ix2) {}
    (clause-in-2-1 -| right) ++(0:0.5)   node [vertex] (y1) {} +(0:1)   node [vertex] (iy1) {}
    (clause-in-2-2 -| right) ++(0:0.5)   node [vertex] (y2) {} +(0:1)   node [vertex] (iy2) {}
    (clause-in-3-1 -| right) ++(0:0.5)   node [vertex] (z1) {} +(0:1)   node [vertex] (iz1) {}
    (clause-in-3-2 -| right) ++(0:0.5)   node [vertex] (z2) {} +(0:1)   node [vertex] (iz2) {}
    (clause-out-1  |- top)   ++(90:0.5)  node [vertex] (o1) {} +(90:1)  node [vertex] (oo1) {}
    (clause-out-2  |- top)   ++(90:0.5)  node [vertex] (o2) {} +(90:1)  node [vertex] (oo2) {};
    \path [external edge]
    (clause-in-1-1) -- (x1) -- (ix1)
    (clause-in-1-2) -- (x2) -- (ix2)
    (clause-in-2-1) -- (y1) -- (iy1)
    (clause-in-2-2) -- (y2) -- (iy2)
    (clause-in-3-1) -- (z1) -- (iz1)
    (clause-in-3-2) -- (z2) -- (iz2)
    (clause-out-1)  -- (o1) -- (oo1)
    (clause-out-2)  -- (o2) -- (oo2)
    (x1) -- (x2) (ix1) -- (ix2)
    (y1) -- (y2) (iy1) -- (iy2)
    (z1) -- (z2) (iz1) -- (iz2)
    (o1) -- (o2) (oo1) -- (oo2);
    \path
    (x1)                             node [anchor=south]            {$v_1'$}
    (x2)                             node [anchor=north]            {$v_1\dprime$}
    (ix1)                            node [anchor=south]            {$i_1'$}
    (ix2)                            node [anchor=north]            {$i_1\dprime$}
    (y1)                             node [anchor=south]            {$v_2'$}
    (y2)                             node [anchor=north]            {$v_2\dprime$}
    (iy1)                            node [anchor=south]            {$i_2'$}
    (iy2)                            node [anchor=north]            {$i_2\dprime$}
    (z1)                             node [anchor=south]            {$v_3'$}
    (z2)                             node [anchor=north]            {$v_3\dprime$}
    (iz1)                            node [anchor=south]            {$i_3'$}
    (iz2)                            node [anchor=north]            {$i_3\dprime$}
    (o1)                             node [anchor=east]             {$o_1'$}
    (o2)                             node [anchor=west]             {$o_2'$}
    (barycentric cs:ix1=0.5,ix2=0.5) node [anchor=east,xshift=-3pt] {Input $\alpha$}
    (barycentric cs:iy1=0.5,iy2=0.5) node [anchor=west,xshift=3pt]  {Input $\beta$}
    (barycentric cs:iz1=0.5,iz2=0.5) node [anchor=west,xshift=3pt]  {Input $\gamma$}
    (barycentric cs:oo1=0.5,oo2=0.5) node [anchor=south,yshift=2pt] {Output $\alpha \vee \beta \vee \gamma$};
    \begin{scope}[on background layer]
      \node [region,fit=(top) (bot) (left) (right)] {};
    \end{scope}
  \end{tikzpicture}
  \caption{A clause widget.  \nz{The widget consists of the vertices and edges
   in the shaded region.  Dashed edges represent wire widgets joining its inputs
   and outputs to other widgets.}}
  \label{fig:clause-widget}
\end{figure}
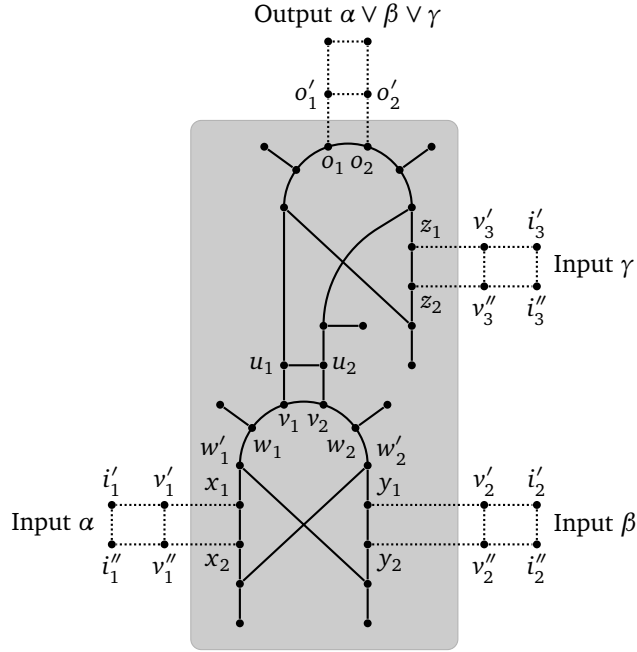
% ------------------------------------------------------------------------------
\begin{figure}[p]
  \hspace{\stretch{1}}%
  \subcaptionbox{\label{fig:clause-widget-1-labelling}}{\begin{tikzpicture}
    \pic (clause) {clause boolean-1};
    \path
    (clause-top-left-ear-2)     +(90:0.5)  coordinate (top)   +(90:1)  coordinate (top-end)
    (clause-top-right-ear-2)    +(0:0.5)   coordinate (right) +(0:1)   coordinate (right-end)
    (clause-bottom-left-foot-2) +(270:0.5) coordinate (bot)
    (clause-bottom-left-ear-2)  +(180:0.5) coordinate (left)  +(180:1) coordinate (left-end);
    \path [external edge,<-] (clause-out-1)  -- (clause-out-1  |- top-end);
    \path [external edge,<-] (clause-out-2)  -- (clause-out-2  |- top-end);
    \path [external edge,->] (clause-in-1-1) -- (clause-in-1-1 -| left-end);
    \path [external edge,->] (clause-in-1-2) -- (clause-in-1-2 -| left-end);
    \path [external edge,<-] (clause-in-2-1) -- (clause-in-2-1 -| right-end);
    \path [external edge,<-] (clause-in-2-2) -- (clause-in-2-2 -| right-end);
    \path [external edge,<-] (clause-in-3-1) -- (clause-in-3-1 -| right-end);
    \path [external edge,<-] (clause-in-3-2) -- (clause-in-3-2 -| right-end);
    \begin{scope}[on background layer]
      \node [region,fit=(top) (bot) (left) (right)] {};
    \end{scope}
  \end{tikzpicture}}%
  \hspace*{\stretch{1}}%
  \subcaptionbox{\label{fig:clause-widget-3-labelling}}{\begin{tikzpicture}
    \pic (clause) {clause boolean-3};
    \path
    (clause-top-left-ear-2)     +(90:0.5)  coordinate (top)   +(90:1)  coordinate (top-end)
    (clause-top-right-ear-2)    +(0:0.5)   coordinate (right) +(0:1)   coordinate (right-end)
    (clause-bottom-left-foot-2) +(270:0.5) coordinate (bot)
    (clause-bottom-left-ear-2)  +(180:0.5) coordinate (left)  +(180:1) coordinate (left-end);
    \path [external edge,<-] (clause-out-1)  -- (clause-out-1  |- top-end);
    \path [external edge,<-] (clause-out-2)  -- (clause-out-2  |- top-end);
    \path [external edge,<-] (clause-in-1-1) -- (clause-in-1-1 -| left-end);
    \path [external edge,<-] (clause-in-1-2) -- (clause-in-1-2 -| left-end);
    \path [external edge,<-] (clause-in-2-1) -- (clause-in-2-1 -| right-end);
    \path [external edge,<-] (clause-in-2-2) -- (clause-in-2-2 -| right-end);
    \path [external edge,->] (clause-in-3-1) -- (clause-in-3-1 -| right-end);
    \path [external edge,->] (clause-in-3-2) -- (clause-in-3-2 -| right-end);
    \begin{scope}[on background layer]
      \node [region,fit=(top) (bot) (left) (right)] {};
    \end{scope}
  \end{tikzpicture}}%
  \hspace*{\stretch{1}}%
  \\[\bigskipamount]
  \subcaptionbox{\label{fig:clause-widget-1-2-labelling}}{\begin{tikzpicture}
    \pic (clause) {clause boolean-1-2};
    \path
    (clause-top-left-ear-2)     +(90:0.5)  coordinate (top)   +(90:1)  coordinate (top-end)
    (clause-top-right-ear-2)    +(0:0.5)   coordinate (right) +(0:1)   coordinate (right-end)
    (clause-bottom-left-foot-2) +(270:0.5) coordinate (bot)
    (clause-bottom-left-ear-2)  +(180:0.5) coordinate (left)  +(180:1) coordinate (left-end);
    \path [external edge,<-] (clause-out-1)  -- (clause-out-1  |- top-end);
    \path [external edge,<-] (clause-out-2)  -- (clause-out-2  |- top-end);
    \path [external edge,->] (clause-in-1-1) -- (clause-in-1-1 -| left-end);
    \path [external edge,->] (clause-in-1-2) -- (clause-in-1-2 -| left-end);
    \path [external edge,->] (clause-in-2-1) -- (clause-in-2-1 -| right-end);
    \path [external edge,->] (clause-in-2-2) -- (clause-in-2-2 -| right-end);
    \path [external edge,<-] (clause-in-3-1) -- (clause-in-3-1 -| right-end);
    \path [external edge,<-] (clause-in-3-2) -- (clause-in-3-2 -| right-end);
    \begin{scope}[on background layer]
      \node [region,fit=(top) (bot) (left) (right)] {};
    \end{scope}
  \end{tikzpicture}}%
  \hspace{\stretch{1}}%
  \subcaptionbox{\label{fig:clause-widget-1-3-labelling}}{\begin{tikzpicture}
    \pic (clause) {clause boolean-1-3};
    \path
    (clause-top-left-ear-2)     +(90:0.5)  coordinate (top)   +(90:1)  coordinate (top-end)
    (clause-top-right-ear-2)    +(0:0.5)   coordinate (right) +(0:1)   coordinate (right-end)
    (clause-bottom-left-foot-2) +(270:0.5) coordinate (bot)
    (clause-bottom-left-ear-2)  +(180:0.5) coordinate (left)  +(180:1) coordinate (left-end);
    \path [external edge,<-] (clause-out-1)  -- (clause-out-1  |- top-end);
    \path [external edge,<-] (clause-out-2)  -- (clause-out-2  |- top-end);
    \path [external edge,->] (clause-in-1-1) -- (clause-in-1-1 -| left-end);
    \path [external edge,->] (clause-in-1-2) -- (clause-in-1-2 -| left-end);
    \path [external edge,<-] (clause-in-2-1) -- (clause-in-2-1 -| right-end);
    \path [external edge,<-] (clause-in-2-2) -- (clause-in-2-2 -| right-end);
    \path [external edge,->] (clause-in-3-1) -- (clause-in-3-1 -| right-end);
    \path [external edge,->] (clause-in-3-2) -- (clause-in-3-2 -| right-end);
    \begin{scope}[on background layer]
      \node [region,fit=(top) (bot) (left) (right)] {};
    \end{scope}
  \end{tikzpicture}}%
  \hspace{\stretch{1}}%
  \subcaptionbox{\label{fig:clause-widget-1-2-3-labelling}}{\begin{tikzpicture}
    \pic (clause) {clause boolean-1-2-3};
    \path
    (clause-top-left-ear-2)     +(90:0.5)  coordinate (top)   +(90:1)  coordinate (top-end)
    (clause-top-right-ear-2)    +(0:0.5)   coordinate (right) +(0:1)   coordinate (right-end)
    (clause-bottom-left-foot-2) +(270:0.5) coordinate (bot)
    (clause-bottom-left-ear-2)  +(180:0.5) coordinate (left)  +(180:1) coordinate (left-end);
    \path [external edge,<-] (clause-out-1)  -- (clause-out-1  |- top-end);
    \path [external edge,<-] (clause-out-2)  -- (clause-out-2  |- top-end);
    \path [external edge,->] (clause-in-1-1) -- (clause-in-1-1 -| left-end);
    \path [external edge,->] (clause-in-1-2) -- (clause-in-1-2 -| left-end);
    \path [external edge,->] (clause-in-2-1) -- (clause-in-2-1 -| right-end);
    \path [external edge,->] (clause-in-2-2) -- (clause-in-2-2 -| right-end);
    \path [external edge,->] (clause-in-3-1) -- (clause-in-3-1 -| right-end);
    \path [external edge,->] (clause-in-3-2) -- (clause-in-3-2 -| right-end);
    \begin{scope}[on background layer]
      \node [region,fit=(top) (bot) (left) (right)] {};
    \end{scope}
  \end{tikzpicture}}
  \caption{\nz{A sample of BPOs of a clause widget.  Arcs are smooth lines;
   edges are zigzag lines.} (a) 1-BPO. (b)~3-BPO. (c) 1-2-BPO. (d) 1-3-BPO.
   (e) 1-2-3-BPO.}
   \label{fig:clause-bpos}
\end{figure}
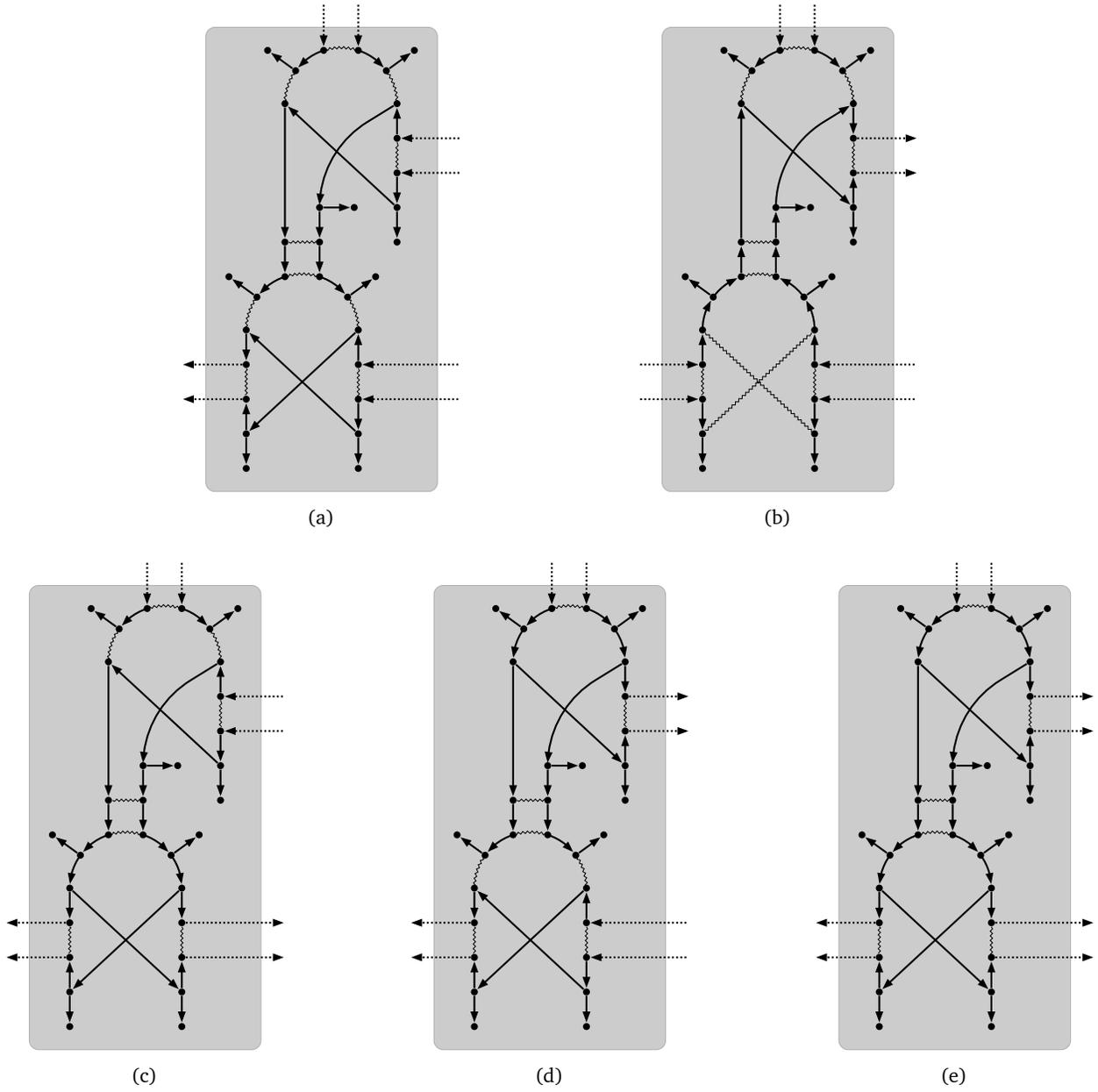
% ------------------------------------------------------------------------------

% ------------------------------------------------------------------------------
\begin{lem}
  \label{lem:clause-widget}
  Let $N$ be an unrooted binary network that contains a \nz{root widget and a}
  clause widget $C$ whose inputs $\e{x_1, x_2}$, $\e{y_1, y_2}$, and $\e{z_1,
  z_2}$ are joined to the outputs of other widgets using \nz{wire} widgets
  $L_1$, $L_2$, and $L_3$, respectively, and let $\vec{N}$ be an HPO of $N$ that
  makes the input of \nz{$C$} {\true}.  For $j \in \{1, 2, 3\}$, let $v_j'$ and
  $v_j\dprime$ be the two interior vertices of $L_j$, and \nz{let $\e{i_j',
  i_j\dprime}$ be its inputs (see \cref{fig:clause-widget}).}  Then there exists
  an index $j \in \{1, 2, 3\}$ such that $\a{v_j', i_j'}$ and $\a{v_j\dprime,
  i_j\dprime}$ \nz{are arcs of} $\vec{N}$.  If $\a{v_3', i_3'}$ and
  $\a{v_3\dprime, i_3\dprime}$ \nz{are arcs of} $\vec{N}$, then there exist
  semi-directed paths from $o_1$ and $o_2$ to $i_3'$ and $i_3\dprime$
  \nz{in~$\vec{N}$}.  If \nz{at least one of} $\a{v_3', i_3'}$ and
  $\a{v_3\dprime, i_3\dprime}$ \nz{is not an arc of} $\vec{N}$, then there exist
  semi-directed paths from $o_1$ and $o_2$ to $i_j'$ and $i_j\dprime$ in
  $\vec{N}$, for any index $j \in \{1, 2\}$ such that $\a{v_j', i_j'}$ and
  $\a{v_j\dprime, i_j\dprime}$ \nz{are arcs of} $\vec{N}$.
\end{lem}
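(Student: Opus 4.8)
The plan is to read off the clause widget's structure as two glued or-widgets and then chain \cref{lem:or-widget} with \cref{lem:connector-widget}. Recall (see \cref{fig:clause-widget}) that $C$ is built from a ``top'' or-widget $O_1$, whose output is the output $\e{o_1, o_2}$ of $C$ and whose two inputs are the $\gamma$-input $\e{z_1, z_2}$ (joined through the wire widget $L_3$) and the edge $\e{u_1, u_2}$, and a ``bottom'' or-widget $O_2$, whose output $\e{v_1, v_2}$ is joined to $\e{u_1, u_2}$ by the two edges $\e{u_1, v_1}$ and $\e{u_2, v_2}$ and whose two inputs are the $\alpha$-input $\e{x_1, x_2}$ and the $\beta$-input $\e{y_1, y_2}$ (joined through $L_1$ and $L_2$). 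As in the preceding widget lemmas, every vertex of every widget satisfies \cref{prop:in-degree,prop:out-degree}, since the root of $\vec N$ lies in the root widget, by \cref{lem:root-in-sheep}.

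First I would apply \cref{lem:or-widget} to $O_1$. Because $\vec N$ makes $\e{o_1, o_2}$ {\true}, at least one input of $O_1$ is satisfying, so $O_1$'s output drags a pair of arcs two hops out along that input, together with semi-directed paths from $o_1$ and $o_2$ to the two endpoints there. If the satisfying input is the $\gamma$-input, these arcs run into the interior of $L_3$; a short argument on the output vertices of $L_3$ with \cref{prop:in-degree} and \cref{obs:undirected-matching} shows the output of $L_3$ is {\true}, so \cref{lem:connector-widget} applied to $L_3$ gives that $\a{v_3', i_3'}$ and $\a{v_3\dprime, i_3\dprime}$ are arcs of $\vec N$ with semi-directed paths from the output vertices of $L_3$ to $i_3'$ and $i_3\dprime$; splicing these onto the paths from \cref{lem:or-widget} and crossing the middle edge $\e{v_3', v_3\dprime}$ of $L_3$ where needed yields semi-directed paths from $o_1$ and $o_2$ to $i_3'$ and $i_3\dprime$. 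This settles the ``some $j$'' part with $j = 3$ and the second assertion of the lemma, and the last assertion is vacuous in this case since both $\gamma$-side arcs are present.

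Otherwise the satisfying input of $O_1$ is $\e{u_1, u_2}$; the analogous degree argument on $v_1$ and $v_2$ shows that $\a{u_1, v_1}$ and $\a{u_2, v_2}$ are arcs, i.e.\ the output $\e{v_1, v_2}$ of $O_2$ is {\true}, and the paths from \cref{lem:or-widget} extend through these arcs to semi-directed paths from $o_1$ and $o_2$ to $v_1$ and $v_2$. Applying \cref{lem:or-widget} a second time, to $O_2$, picks out a satisfying input of $O_2$, say index $j \in \{1, 2\}$; the same use of \cref{prop:in-degree}, \cref{obs:undirected-matching} and \cref{lem:connector-widget} on $L_j$ gives that $\a{v_j', i_j'}$ and $\a{v_j\dprime, i_j\dprime}$ are arcs with semi-directed paths from $v_1$ and $v_2$ to $i_j'$ and $i_j\dprime$, and concatenation with the paths to $v_1, v_2$ completes the ``some $j$'' part. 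For the last assertion, observe that if at least one of the two $\gamma$-side arcs fails then the previous case is excluded, so $O_2$'s output is {\true}; and for any $j \in \{1, 2\}$ with $\a{v_j', i_j'}$ and $\a{v_j\dprime, i_j\dprime}$ arcs, a further degree/cycle argument (via \cref{lem:out-edge-pair}, \cref{prop:in-degree} and \cref{prop:out-degree}) forces the $j$-th input of $O_2$ to be satisfying, so \cref{lem:or-widget} supplies semi-directed paths from $v_1, v_2$ to the endpoints two hops out of that input, which \cref{lem:connector-widget} on $L_j$ extends to $i_j', i_j\dprime$; together with the paths from $o_1, o_2$ to $v_1, v_2$ this gives the required paths. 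The second assertion, whenever its hypothesis holds inside this case, follows from the same degree argument placing us in the ``$\gamma$-input satisfying'' disjunct of \cref{lem:or-widget} applied to $O_1$.

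I expect the main obstacle to be bookkeeping, not a single hard idea. Converting the ``satisfying input'' conclusion of \cref{lem:or-widget} (arcs two hops out of an input) into the hypothesis of \cref{lem:connector-widget} (the matching wire widget's \emph{output} being {\true}) needs a careful, repeated case analysis with \cref{prop:in-degree} and \cref{obs:undirected-matching}; and the semi-directed paths in the conclusion must be assembled by chaining the paths from the three lemma invocations through the interior vertices of the wire widgets, through $v_1$ and $v_2$, and across the middle edges $\e{v_j', v_j\dprime}$, keeping track of arc directions so that each concatenation remains a legal semi-directed path.
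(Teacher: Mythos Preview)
Your high-level plan—apply the or-widget lemma to the top or-widget, then descend to the bottom one—is exactly what the paper does, and your treatment of the $\gamma$-case is fine (indeed easier than you make it: \cref{lem:or-widget} already reaches two hops past the input, so its conclusion for the $\gamma$-side is \emph{literally} that $\a{v_3',i_3'}$ and $\a{v_3\dprime,i_3\dprime}$ are arcs, with the semi-directed paths; the detour through \cref{lem:connector-widget} is unnecessary).

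The gap is in the transition to the bottom or-widget. When the satisfying input of $O_1$ is $\e{u_1,u_2}$, \cref{lem:or-widget} gives you arcs at the \emph{double-prime} level, namely $\a{v_1,w_1}$ and $\a{v_2,w_2}$ (arcs \emph{out of} $v_1,v_2$ into the ears of $O_2$). It does \emph{not} give you $\a{u_1,v_1}$ and $\a{u_2,v_2}$, and your ``analogous degree argument'' does not force them either: from $\a{v_1,w_1}$ and $\a{v_2,w_2}$ alone, the configuration $\a{u_2,v_2}$, $\a{v_2,v_1}$, $\e{u_1,v_1}$ undirected is compatible with \cref{prop:in-degree,prop:out-degree} and \cref{obs:undirected-matching}, so the output of $O_2$ need not be {\true}. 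Consequently you cannot re-apply \cref{lem:or-widget} to $O_2$ as a black box.

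The paper handles this differently: it observes that the proof of \cref{lem:or-widget} only uses the hypothesis ``output is {\true}'' to derive the intermediate step ``$\a{o_1,u_1}$ and $\a{o_2,v_1}$ are arcs'' (in the or-widget's local labels), and everything afterwards runs from that step alone. Translated to $O_2$, that intermediate step is precisely $\a{v_1,w_1}$ and $\a{v_2,w_2}$—which is what \cref{lem:or-widget} on $O_1$ handed you. So the paper reaches \emph{into} the proof of \cref{lem:or-widget} for $O_2$ rather than invoking its statement; this is the missing idea in your proposal. The same trick is what yields, in the last assertion, semi-directed paths from $w_1'$ or $w_2'$ to $i_j',i_j\dprime$ for any $j\in\{1,2\}$ with the required arcs, without first proving that $O_2$'s output is {\true}.
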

% ------------------------------------------------------------------------------

\begin{proof}
  By applying \cref{lem:or-widget} to the top or-widget of $C$, we conclude that
  $\a{v_1, w_1}$ and $\a{v_2, w_2}$ \nz{are arcs of} $\vec{N}$ or $\a{v_3',
  i_3'}$ and $\a{v_3\dprime, i_3\dprime}$ \nz{are arcs of} $\vec{N}$.  In the
  latter case, we have found an index $j$ such that $\a{v_j', i_j'}$ and
  $\a{v_j\dprime, i_j\dprime}$ \nz{are arcs of} $\vec{N}$.

  In the former case, observe that the proof of \cref{lem:or-widget} starts by
  concluding that if $\a{o_1', o_1}$ and $\a{o_2', o_2}$ \nz{are arcs of}
  $\vec{N}$, then $\a{o_1, u_1}$ and $\a{o_2, v_1}$ \nz{are arcs of} $\vec{N}$,
  referring to the labels in \cref{fig:or-widget}.  The lemma then followed from
  this observation.
  
  The vertices $o_1$, $o_2$, $u_1$, and $v_1$ in \cref{fig:or-widget} are the
  vertices $v_1$, $v_2$, $w_1$, and $w_2$ of the bottom or-widget in
  \cref{fig:clause-widget}.  Thus, if $\a{v_1, w_1}$ and $\a{v_2, w_2}$ \nz{are
  arcs of} $\vec{N}$, the proof of \cref{lem:or-widget} shows that $\a{v_1',
  i_1'}$ and $\a{v_1\dprime, i_1\dprime}$ \nz{are arcs of} $\vec{N}$ or
  $\a{v_2', i_2'}$ and $\a{v_2\dprime, i_2\dprime}$ \nz{are arcs of} $\vec{N}$.
  Thus, we have an index $j$ \nz{such that} $\a{v_j', i_j'}$ and $\a{v_j\dprime,
  i_j\dprime}$ \nz{are arcs of} $\vec{N}$ in this case as well.

  Now assume that $\a{o_1', o_1}$, $\a{o_2', o_2}$, $\a{v_j', i_j'}$, and
  $\a{v_j\dprime, i_j\dprime}$ \nz{are arcs of} $\vec{N}$, \yuki{for some~$j\in\{1,2,3\}$}.  If $j = 3$, then
  \cref{lem:or-widget} applied to the top or-widget of $C$ shows that there
  exist semi-directed paths from $o_1$ and $o_2$ to $i_3'$ and $i_3\dprime$ in
  $\vec{N}$.

  If $j \ne 3$ and $\a{v_3', i_3'}$ or $\a{v_3\dprime, i_3\dprime}$ \nz{is not
  an arc of} $\vec{N}$, then \cref{lem:or-widget} applied to the top or-widget
  of $C$ shows that $\a{v_1, w_1}$ and $\a{v_2, w_2}$ \nz{are arcs of} $\vec{N}$
  and that there exist semi-directed paths from $o_1$ and $o_2$ to $w_1$ and
  $w_2$.  Thus, by \cref{prop:in-degree}, \nz{neither $\a{w_1', w_1}$ nor
  $\a{w_2', w_2}$ is an arc of $\vec{N}$,} which implies that there also exist
  semi-directed paths from $o_1$ and $o_2$ to $w_1'$ and $w_2'$ in $\vec{N}$.

  By the same argument as in the proof of \cref{lem:or-widget}, we also have
  semi-directed paths from $w_1'$ or $w_2'$ to $i_j'$ and $i_j\dprime$.
  Together with the semi-directed paths from $o_1$ and $o_2$ to $w_1'$ and
  $w_2'$, we obtain semi-directed paths from $o_1$ and $o_2$ to $i_j'$ and
  $i_j\dprime$ in $\vec{N}$.
\end{proof}

There are in total 7 BPOs for a clause widget, corresponding to the subset of
{\true} literals in the clause.  Accordingly, we call these labellings the
\emph{1-BPO}, \emph{2-BPO}, \emph{3-BPO}, \emph{1-2-BPO}, \emph{1-3-BPO},
\emph{2-3-BPO}, and \emph{1-2-3-BPO}, referring to the indices of the literals
in the clause that are {\true}.  The 1-BPO and 2-BPO are symmetric, as are the
1-3-BPO and 2-3-BPO.  \cref{fig:clause-bpos} shows the 1-, 3-, 1-2-, 1-3-, and
1-2-3-BPOs, respectively.  The following observation is easily verified by
inspection of \cref{fig:clause-bpos}.

% ------------------------------------------------------------------------------
\begin{obs}
  \label{obs:clause-widget-boolean}
  Let $N$ be an unrooted binary network that contains a clause widget $C$, and
  let $\vec{N}$ be a PO of $N$ whose restriction to $C$ is one of the BPOs of
  $C$.  Assume further that $\vec{N}$ make the output of $C$ {\true} and that it
  makes each input of $C$ {\true} or {\false} in accordance with the chosen BPO
  of $C$.  Then all vertices in $C$ satisfy
  \cref{prop:in-degree,prop:out-degree}, any semi-directed cycle in $\vec{N}$
  must contain a vertex not in $C$, and the root of $\vec{N}$ does not belong to
  $C$.
\end{obs}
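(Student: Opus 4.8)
The plan is to prove the observation by a direct, finite case analysis over the seven BPOs of a clause widget. First I would cut down the number of cases: the $1$-BPO and $2$-BPO are symmetric, as are the $1$-$3$-BPO and $2$-$3$-BPO (the symmetry exchanges the two inputs $\e{x_1,x_2}$ and $\e{y_1,y_2}$ of the top or-widget), so it suffices to inspect the five BPOs drawn in \cref{fig:clause-bpos}: the $1$-, $3$-, $1$-$2$-, $1$-$3$-, and $1$-$2$-$3$-BPO. In each such case the partial orientation on the vertex set of $C$ is fully determined by the figure together with the orientations of the external edges forced by the hypothesis: the output edge $\e{o_1,o_2}$ is {\true}, and each input edge $\e{x_1,x_2}$, $\e{y_1,y_2}$, $\e{z_1,z_2}$ is {\true} or {\false} exactly as the chosen BPO prescribes. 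The edges $\e{x_1,x_2},\e{y_1,y_2},\e{z_1,z_2},\e{o_1,o_2}$ themselves are drawn (directed or undirected) in each BPO figure, so every arc and edge incident to a vertex of $C$ is pinned down.

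For each of the five orientations I would then verify three things directly. (i) Every vertex of $C$ has in-degree exactly $1$ in $\vec{N}$, where the count at the boundary vertices $x_1,x_2,y_1,y_2,z_1,z_2,o_1,o_2$ includes the external arcs incident to them; this gives \cref{prop:in-degree} for all vertices of $C$, and since the root of an HPO has in-degree $0$, it simultaneously shows that no vertex of $C$ is the root of $\vec{N}$. (ii) Every vertex of $C$ has out-degree at least $1$; the clause widget contains no leaves of $N$, so every vertex of $C$ is a non-leaf vertex of $N$, and \cref{prop:out-degree} follows. (iii) The restriction of $\vec{N}$ to the vertex set of $C$ contains no semi-directed cycle; equivalently, every semi-directed cycle of $\vec{N}$ uses an edge leaving $C$ and hence a vertex not in $C$. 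For (iii), once (i) and (ii) hold and $N$ is binary, the undirected edges of $\vec{N}$ inside $C$ form a matching (a degree-$3$ vertex with both an in-arc and an out-arc has at most one further incident edge), so contracting these edges turns the restriction into an ordinary digraph; reading off the figure that this digraph respects the vertical levels of the drawing (all arcs directed downward) shows it is acyclic.

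The only real care needed --- and hence the ``main obstacle'', modest as it is --- is the boundary bookkeeping: one must include the fixed external arcs when tallying in- and out-degrees of $x_1,x_2,y_1,y_2,z_1,z_2,o_1,o_2$, and use the orientation of each input edge appropriate to the BPO under consideration. Once the external orientations are fixed, each of the five checks is a mechanical traversal of the small vertex set of the clause widget, and the remaining $2$-BPO and $2$-$3$-BPO cases are dispatched by the symmetry noted at the start, which completes the case analysis.
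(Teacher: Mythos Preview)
Your approach matches the paper's: the paper simply states that the observation ``is easily verified by inspection of \cref{fig:clause-bpos}'' and gives no further argument, so your explicit case analysis is exactly the intended verification, just spelled out in more detail.

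One factual slip to fix: the clause widget \emph{does} contain leaves of $N$. The tips of the ``ears'' and ``feet'' of each constituent or-widget (the pendant vertices $w_2$, $z_2$, and the ear tips in \cref{fig:or-widget}, and their counterparts in \cref{fig:clause-widget}) are degree-$1$ vertices. In every BPO these leaves have in-degree $1$ and out-degree $0$, so your claim in (ii) that ``every vertex of $C$ has out-degree at least $1$'' is false as stated. This does not break the argument---\cref{prop:out-degree} only constrains non-leaf vertices---but you should rephrase (ii) to say that every \emph{non-leaf} vertex of $C$ has out-degree at least $1$, and drop the incorrect assertion that $C$ contains no leaves. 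Your matching argument in (iii) is unaffected, since a degree-$1$ vertex trivially has at most one incident undirected edge.
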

% ------------------------------------------------------------------------------

% ------------------------------------------------------------------------------
\subsection{The Reduction}
% ------------------------------------------------------------------------------

\label{sec:reduction}

Let $F$ be a \nz{Boolean formula in} 3-CNF with $m$ clauses $C_1, \ldots, C_m$
and $n$ variables $x_1, \ldots, x_n$.  We denote the literals in clause $C_i$ as
$\lambda_{i,1}, \lambda_{i,2}, \lambda_{i,3}$.  For each literal $\lambda \in
\{x_1, \bar x_1, \ldots, x_n, \bar x_n\}$, let $k_\lambda$ be the number of
clauses that contain $\lambda$.

% ------------------------------------------------------------------------------
\begin{figure}[p]
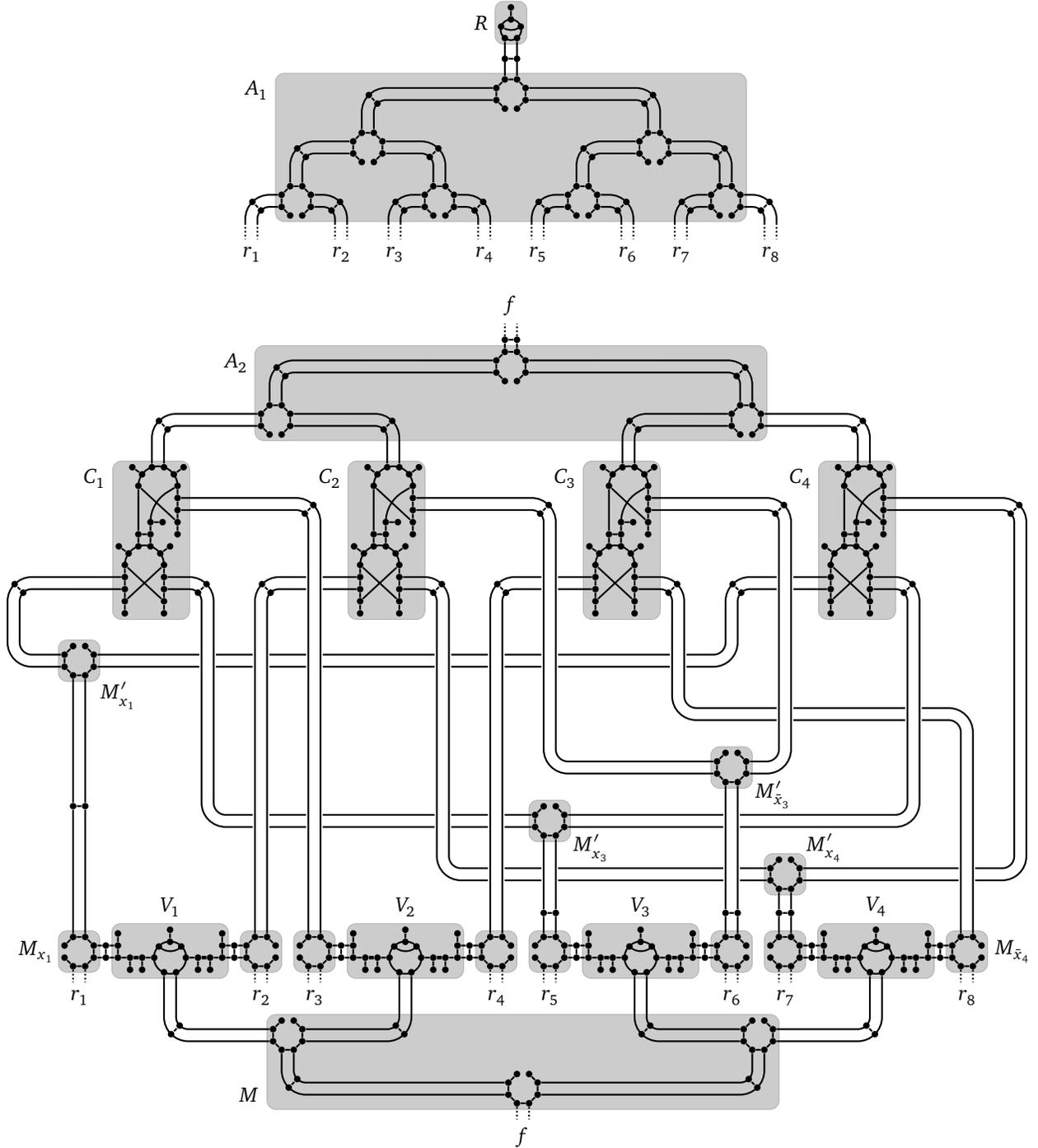

  \centering
  $F = (x_1 \vee x_3 \vee x_2) \wedge (\bar x_1 \vee x_4 \vee \bar x_3) \wedge
  (\bar x_2 \vee \bar x_4 \vee \bar x_3) \wedge (x_1 \vee x_3 \vee x_4)$
  
  \bigskip

  % [inline block 1: 1 envs, 46503 chars -> data_tex | \begin{tikzpicture}[scale=0.372,transform shape]     \path  let \n1={0.75},...]

  \caption{An illustration of the reduction from 3-SAT to \nz{\textsc{Orchard
   Orientation}.} For the sake of visual clarity, the connections between
   \nz{$A_1$ and the replicator widgets $M_\lambda$, for $\lambda \in \{x_1,
   \bar x_1, \ldots, x_4, \bar x_4\}$, are not drawn.} Instead, the outputs of
   \nz{these replicator widgets} are labelled $r_1$ through $r_8$ and should be
   understood to be connected to the inputs \nz{of $A_1$} with the same labels.
   Similarly, the output of \nz{$A_2$} is labelled $f$ and should be understood
   to be connected to the input \nz{of $M$, which has the same label.}}
  \label{fig:reduction}
\end{figure}
% ------------------------------------------------------------------------------

Given such a formula $F$, we construct an unrooted binary network $N = N(F)$
such that $N$ \nz{has an HPO} if and only if $F$ is satisfiable.  We construct
$N$ from the following widgets (see \cref{fig:reduction}):
\begin{itemize}[noitemsep]
  \item A root widget $R$,
  \item For every variable $x_j$, $1 \le j \le n$, a choice widget $V_j$,
  \item For every clause $C_i$, $1 \le i \le m$, a clause widget $C_i$,
  \item An and-widget $A_1$ with $2n$ inputs,
  \item An and-widget $A_2$ with $m$ inputs,
  \item A replicator widget $M$ with $n$ outputs,
  \item For every literal $\lambda \in \{x_1, \bar x_1, \ldots, x_n, \bar x_n\}$
  with $k_\lambda \ge 1$, a basic replicator widget $M_\lambda$, and
  \item For every literal $\lambda \in \{x_1, \bar x_1, \ldots, x_n, \bar x_n\}$
  with $k_\lambda \ge 2$, a replicator widget $M_\lambda'$ with $k_\lambda$
  outputs.
\end{itemize}
We join these widgets as follows (using \nz{wire} widgets), making sure that
every input is joined to exactly one output and vice versa:
\begin{itemize}[noitemsep]
  \item The output of $A_1$ to the input of $R$,
  \item For all $1 \le j \le n$ and $\lambda \in \{x_j, \bar x_j\}$,
  \begin{itemize}[noitemsep]
    \item If $k_\lambda = 0$, the output of $V_j$ corresponding to $\lambda$ to
    an input of $A_1$,
    \item If $k_\lambda \ge 1$, the output of $V_j$ corresponding to $\lambda$
    to the input of $M_\lambda$ and an output of $M_\lambda$ to an input of
    $A_1$, and
    \item If $k_\lambda \ge 2$, an output of $M_\lambda$ to the input of
    $M_\lambda'$,
  \end{itemize}
  \item For all $1 \le j \le n$, an output of $M$ to the input of $V_j$,
  \item The output of $A_2$ to the input of $M$,
  \item For all $1 \le i \le m$, the output of $C_i$ to an input of $A_2$,
  \item For all $1 \le i \le m$, $1 \le j \le 3$,
  \begin{itemize}[noitemsep]
    \item If $k_{\lambda_{i,j}} = 1$, then an output of $M_{\lambda_{i,j}}$ to
    the $j$th input of $C_i$, and
    \item If $k_{\lambda_{i,j}} \ge 2$, then an output of $M_{\lambda_{i,j}}'$
    to the $j$th input of $C_i$.
  \end{itemize}
\end{itemize}

% ------------------------------------------------------------------------------
\begin{lem}
  \label{lem:SAT-to-orchard}
  If a \nz{Boolean formula $F$ in $3$-CNF} is satisfiable, then the
  corresponding network \nz{$N = N(F)$ has an HPO}.
\end{lem}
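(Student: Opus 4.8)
The plan is to invoke \cref{lem:HGT-consistent}: it suffices to exhibit a partial orientation $\vec{N}$ of $N = N(F)$ satisfying \cref{prop:unique-root,prop:in-degree,prop:out-degree,prop:cycle-free}. Fix a satisfying assignment $\phi$ of $F$. I construct $\vec{N}$ by selecting, for each widget, one of its BPOs, and for each wire widget the \true-BPO or \false-BPO carrying the value of the output it is joined to. Concretely: the root widget $R$ gets its BPO; each choice widget $V_j$ gets its \true-BPO if $\phi(x_j) = \true$ and its \false-BPO otherwise; the and-widgets $A_1, A_2$ and the replicator widget $M$ get their BPOs (the \true-BPO for $M$); for each literal $\lambda$ with $k_\lambda \ge 1$, the basic replicator $M_\lambda$ gets its \true-BPO if $\phi(\lambda)=\true$ and otherwise its escape BPO, with the output feeding $A_1$ made \true\ and the output feeding $M_\lambda'$ (or a clause, if $k_\lambda = 1$) made \false; $M_\lambda'$ (if $k_\lambda \ge 2$) gets its \true-BPO if $\phi(\lambda) = \true$ and its \false-BPO otherwise; and each clause widget $C_i$ gets its $S_i$-BPO, where $S_i \subseteq \{1,2,3\}$ is the set of slots whose literal is satisfied by $\phi$, which is nonempty because $\phi$ satisfies $C_i$.

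The first step is to verify that this is well-defined and that truth values propagate consistently around the feedback circuit, so that the hypotheses of \cref{obs:root-widget-boolean,obs:connector-boolean,obs:choice-widget-boolean,obs:and-widget-boolean,obs:replicator-widget-escape,obs:replicator-widget-true,obs:replicator-widget-false,obs:clause-widget-boolean} all hold: the feedback input and both literal outputs of every choice widget are \true; hence every $M_\lambda$ has a \true\ input and its $A_1$-facing output is \true\ in whichever of its two BPOs was chosen; hence every input of $A_1$ is \true\ (also for literals with $k_\lambda = 0$, which feed $A_1$ directly), so $A_1$'s output, and therefore $R$'s input, is \true; for a \true\ literal the \true-BPOs of $M_\lambda$ and $M_\lambda'$ make all their outputs \true, while for a \false\ literal the escape BPO of $M_\lambda$ and the \false-BPO of $M_\lambda'$ make the clause-facing outputs \false, so slot $k$ of $C_i$ carries exactly $\phi(\lambda_{i,k})$, matching the chosen $S_i$-BPO and making $C_i$'s output \true; hence all inputs of $A_2$ are \true, so its output, $M$'s input, and all outputs of $M$ are \true, closing the loop. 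This is exactly where the satisfiability of $F$ enters: if some clause were unsatisfied, no BPO of its widget would make its output \true\ with consistent inputs.

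Granting these hypotheses, \cref{prop:unique-root,prop:in-degree,prop:out-degree} follow at once: every vertex of $N$ lies in some widget covered by one of the observations (the interior vertices of each wire widget by \cref{obs:connector-boolean}, the shared input/output vertices of a wire widget via the adjacent gate widget, whose observation treats all of its vertices, including those in its internal wires), each observation gives its vertices in-degree exactly $1$ and out-degree at least $1$ when they are not leaves, and by \cref{obs:root-widget-boolean} the vertex $r$ of $R$ has only out-arcs and no incident edge while \cref{prop:in-degree} rules out any other such vertex, so $r$ is the unique root.

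The main obstacle is \cref{prop:cycle-free}. Since $\vec N$ satisfies \cref{prop:unique-root,prop:in-degree,prop:out-degree}, its edges form a matching by \cref{obs:undirected-matching}; contracting them gives a directed graph $G$, and, as in the proof of \cref{lem:HGT-consistent}, a semi-directed cycle of $\vec N$ would yield a directed cycle of $G$, so it suffices to show $G$ is acyclic. The observations already forbid a semi-directed cycle inside a single widget, so the remaining task is to rule out a cycle that threads through the widget interfaces. I would do this by producing a level function on $G$ (equivalently a topological order): assign $\rho$ the lowest level and propagate along the arcs of $\vec N$, giving each widget's vertices consecutive levels according to the layering visible in its BPO figure and inserting a large offset whenever an arc crosses from one widget to another. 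The delicate point — and where I expect the real work — is that this propagation must terminate without conflict in the presence of the feedback path that runs $A_1 \to M_\lambda \to V_j \to M \to A_2 \to C_i \to M_\mu' \to M_\mu \to V_{j'}$ and would like to continue back into the feedback input of $V_{j'}$; it does not, because in the BPO chosen for a choice widget there is no directed path from the literal output of the \emph{satisfied} literal to the feedback input, so the propagation dead-ends at that side, and every route through the clause widgets reaches such a satisfied-literal side precisely because $\phi$ satisfies every clause. Making this rigorous — either via the explicit offset construction just sketched, or by taking a shortest semi-directed cycle and deriving a contradiction from the orientations forced inside the widgets it traverses — is the one genuinely technical part of the argument; everything else is bookkeeping with the widget observations.
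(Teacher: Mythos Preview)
Your setup---the BPO selections for each widget based on $\phi$, the consistency check that truth values propagate around the circuit, and the derivation of \cref{prop:unique-root,prop:in-degree,prop:out-degree} from the widget observations---matches the paper exactly. You also correctly isolate the crucial fact for \cref{prop:cycle-free}: in the chosen BPO of a choice widget, the output corresponding to the \emph{satisfied} literal can reach only leaves, so any semi-directed walk that arrives there cannot continue.

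Where you diverge is in the packaging of the acyclicity argument. You propose building a level function by ``propagation with offsets'' and checking that it closes up; but propagation of levels is only well-defined once you already know the graph is acyclic, so this framing risks circularity and, as you yourself note, is where the real work would lie. The paper avoids this entirely by doing what you list as your second option: take a hypothetical semi-directed cycle $H$ and eliminate widgets it could visit, one at a time. The argument is short and systematic: $H$ cannot enter $R$ (no inbound arcs); hence cannot visit $A_1$ (its only entry is from $R$); if $H$ visits a clause widget $C_i$ it must exit through a {\true} input, reach $M_{\lambda}$ for a satisfied literal $\lambda$, exit via $M_\lambda$'s input into the corresponding choice widget, and there dead-end at leaves---contradiction; hence $H$ avoids all clause widgets, which in turn excludes $M$, $A_2$, every $M_\lambda'$, and then every $M_\lambda$ (each would force $H$ into two already-excluded neighbours); so $H$ is confined to a single choice widget, contradicting \cref{obs:choice-widget-boolean}.

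So your proposal is not wrong, but the level-function route is the harder of your two suggestions to make rigorous; the direct cycle-chasing you mention parenthetically is both easier and is what the paper actually does. I would recommend you carry that one out in full rather than the offset construction.
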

% ------------------------------------------------------------------------------

\begin{proof}
  Assume that $\phi$ is a truth assignment to the variables $x_1, \ldots, x_n$
  in $F$ that satisfies $F$.  \nz{We construct an HPO $\vec{N}$ of $N$ as follows:
  We choose}
  \begin{itemize}[noitemsep]
    \item The BPO of $R$,
    \item The BPO of $A_1$,
    \item The BPO of $A_2$,
    \item The {\true}-BPO of $M$,
    \item For every variable $x_j$, $1 \le j \le n$,
    \begin{itemize}[noitemsep]
      \item If $\phi(x_j) = \true$, the {\true}-BPO of $V_j$,
      \item If $\phi(x_j) = \false$, the {\false}-BPO of $V_j$,
    \end{itemize}
    \item For every clause $C_i$, the BPO of $C_i$ corresponding to the subset
    of literals $\lambda_{i,j}$ in $C_i$ such that $\phi(\lambda_{i,j}) =
    \true$,
    \item For every literal $\lambda \in \{x_1, \bar x_1, \ldots, x_n, \bar
    x_n\}$ such that $\phi(\lambda) = \true$,
    \begin{itemize}[noitemsep]
      \item If $k_\lambda \ge 1$, the {\true}-BPO of $M_\lambda$,
      \item If $k_\lambda \ge 2$, the {\true}-BPO of $M_{\lambda'}$,
    \end{itemize}
    \item For every literal $\lambda \in \{x_1, \bar x_1, \ldots, x_n, \bar
    x_n\}$ such that $\phi(\lambda) = \false$,
    \begin{itemize}[noitemsep]
      \item If $k_\lambda \ge 1$, the escape BPO of $M_\lambda$,
      \item If $k_\lambda \ge 2$, the {\false}-BPO of $M_{\lambda'}$,
    \end{itemize}
    \item The {\true}-BPO of any \nz{wire} widget joining $A_1$ to another
    widget (input or output),
    \item The {\true}-BPO of any \nz{wire} widget joining $A_2$ to another
    widget (input or output),
    \item The {\true}-BPO of any \nz{wire} widget joining a choice widget $V_j$
    to another widget (input or output),
    \item For every literal $\lambda \in \{x_1, \bar x_1, \ldots, x_n, \bar
    x_n\}$ such that $\phi(\lambda) = \true$, the {\true}-BPO of any \nz{wire}
    widget joining $M_\lambda$ or $M_\lambda'$ to another widget (input or
    output),
    \item For every literal $\lambda \in \{x_1, \bar x_1, \ldots, x_n, \bar
    x_n\}$ such that $\phi(\lambda) = \false$,
    \begin{itemize}[noitemsep]
      \item The {\false}-BPO of any \nz{wire} widget joining $M_\lambda$ to
      $M_\lambda'$ or to a clause widget $C_i$, and
      \item The {\false}-BPO of any \nz{wire} widget joining $M_\lambda'$ to a
      clause widget $C_i$.
    \end{itemize}
  \end{itemize}

  It is easily verified that $\vec{N}$ satisfies all of the conditions of
  \cref{obs:root-widget-boolean,obs:connector-boolean,obs:choice-widget-boolean,obs:and-widget-boolean,obs:replicator-widget-escape,obs:replicator-widget-true,obs:replicator-widget-false,obs:clause-widget-boolean}.
  We use this fact to prove that $\vec{N}$ satisfies
  \crefrange{prop:unique-root}{prop:cycle-free} and, thus, is an HPO of $N$.
  \yuki{By \Cref{obs:HGT-consistent} and \Cref{lem:HGT-consistent},} this proves
  that $N$ is an orchard.

  % ------------------------------------------------------------------------------
  \paragraph{\labelcref{prop:unique-root}}
  % ------------------------------------------------------------------------------

  By
  \cref{obs:connector-boolean,obs:choice-widget-boolean,obs:and-widget-boolean,obs:replicator-widget-escape,obs:replicator-widget-true,obs:replicator-widget-false,obs:clause-widget-boolean},
  only vertices in $R$ can be roots of $\vec{N}$.  Since the vertex $r$ in $R$
  has in-degree $0$, out-degree $1$, and no incident edges, and it is the only
  vertex in $R$ \nz{without in-arcs}, \cref{prop:unique-root} holds.

  % ------------------------------------------------------------------------------
  \paragraph{\labelcref{prop:in-degree,prop:out-degree}}
  % ------------------------------------------------------------------------------

  By
  \cref{obs:root-widget-boolean,obs:connector-boolean,obs:choice-widget-boolean,obs:and-widget-boolean,obs:replicator-widget-escape,obs:replicator-widget-true,obs:replicator-widget-false,obs:clause-widget-boolean},
  all vertices of $\vec{N}$ satisfy \cref{prop:in-degree,prop:out-degree}.

  % ------------------------------------------------------------------------------
  \paragraph{\labelcref{prop:cycle-free}}
  % ------------------------------------------------------------------------------

  By
  \cref{obs:root-widget-boolean,obs:connector-boolean,obs:choice-widget-boolean,obs:and-widget-boolean,obs:replicator-widget-escape,obs:replicator-widget-true,obs:replicator-widget-false,obs:clause-widget-boolean},
  any semi-directed cycle $H$ in $\vec{N}$ involves vertices from more than one
  widget.
  
  It is easy to see that any \nz{semi-directed} cycle that enters a \nz{wire}
  widget at one end (its input or output) must leave the \nz{wire} widget at the
  other end, due to the orientations of the edges in the BPOs of \nz{wire}
  widgets.  We can therefore conclude that any \nz{semi-directed} cycle that
  leaves some non-\nz{wire} widget via one of its \nz{inputs or outputs} must
  also enter the non-\nz{wire} widget joined to this input or output via a
  \nz{wire} widget.

  $R$ has no outputs and one input.  This input is joined to $A_1$ via the
  {\true}-BPO of a \nz{wire} widget.  Thus, for every vertex $x \in R$ with
  external neighbour $x'$, $\vec{N}$ contains the \nz{arc} $(x,x')$.  This
  implies that there is no edge \nz{or arc} via which $H$ could enter $R$, that
  is, $H$ contains no vertex in $R$.

  Similarly, all \nz{wire} widgets used to join $A_1$ to other widges are
  directed according to their {\true}-BPOs.  Thus, if~$H$ were to visit a vertex
  in $A_1$, it would have to enter $A_1$ via its output, and thus would also
  have to visit $R$.  Since we just proved that $H$ does not visit $R$, it does
  not visit $A_1$ either.

  Now assume that $H$ contains a vertex in some clause widget $C_i$.  Then $H$
  must leave $C_i$ via \nz{its output or via one of its inputs.} Since $\vec{N}$
  contains the {\true}-BPO \nz{of} the \nz{wire} widget joining $C_i$ to $A_2$,
  \nz{$H$ cannot leave $C_i$ via its output and, therefore, must do so via an
  input corresponding to a {\true} literal $\lambda_{i,j}$.}  Thus, $H$ also
  visits $M_{\lambda_{i,j}}$ or~$M_{\lambda_{i,j}}'$.  Since
  $\phi(\lambda_{i,j}) = \true$ $\vec{N}$ contains the {\true}-BPOs of all
  \nz{wire} widgets joining $M_{\lambda_{i,j}}'$ to other widgets.  Thus, if $H$
  visits \nz{$M_{\lambda_{i,j}}'$,} it can leave $M_{\lambda_{i,j}}'$ only via
  its input, and thus must also visit $M_{\lambda_{i,j}}$.  In other words,
  whether $M_{\lambda_{i,j}}'$ exists or not, $H$~must visit
  $M_{\lambda_{i,j}}$.

  Since $\phi(\lambda_{i,j}) = \true$, $\vec{N}$ contains the {\true}-BPOs of
  all \nz{wire} widgets joining $M_{\lambda_{i,j}}$ to other widgets.  Thus, the
  only \nz{edges or arcs} through which $H$ can leave $M_{\lambda_{i,j}}$ are
  the \nz{ones incident to the input} of $M_{\lambda_{i,j}}$.
  
  Assume that $\lambda_{i,j} = x_h$, for some $1 \le h \le n$.  (The case when
  $\lambda_{i,j} = \bar x_h$ is analogous.) Then since $\phi(\lambda_{i,j}) =
  \true$, $\vec{N}$ includes the {\true}-BPO of $V_h$.  Now observe that the two
  output vertices of $V_h$ corresponding to $x_h$ can \nz{reach only} leaves of
  $V_h$ in the {\true}-BPO of $V_h$.  Thus, $H$ cannot get back to the vertex in
  $C_i$ that we assumed it contains.  This proves that there is no cycle $H$ in
  $\vec{N}$ that includes a vertex in a clause widget.

  $\vec{N}$ contains the {\true}-BPOs of all \nz{wire} widgets joining $M$ and
  $A_2$ to other widgets.  Thus, if $H$ contains a vertex in $M$ or $A_2$, it
  must leave the subgraph composed of $M$, $A_2$, and the \nz{wire} widget
  joining them via \nz{an edge incident to one of the inputs} of $A_2$.  It must
  therefore visit a clause widget, which we just proved it cannot do.  Thus, $H$
  does not visit $M$ or $A_2$ either.

  For every replicator widget $M_\lambda'$, $\vec{N}$ contains the {\true}-BPOs
  of all \nz{wire} widgets joining $M_\lambda'$ to other widgets, or the
  {\false}-BPOs of all such \nz{wire} widgets.  Thus, $H$ can only enter
  $M_\lambda'$ via one of its outputs and leave it via its input, or enter it
  via its input and leave it via one of its outputs.  In either case, $H$ must
  visit a clause widget, which we proved it cannot do.  Thus, $H$ does not visit
  any replicator widget $M_\lambda'$.
  
  If $H$ visits a widget $M_\lambda$, then it must also visit at least two of
  the widgets to which $M_\lambda$ is joined using \nz{wire} widgets.  These
  neighbouring widgets are $V_j$, $M_\lambda'$ or a clause widget, and $A_1$.
  We already proved that $H$ cannot visit $A_1$, $M_\lambda'$ or a clause
  widget.  Thus, $H$ cannot visit $M_\lambda$ either.

  We conclude that $H$ must be confined to some choice widget $V_j$ because
  leaving $V_j$ would force it to visit some widget other than $V_j$.  Since
  there is no cycle in $\vec{N}$ containing only vertices in a choice widget, by
  \cref{obs:choice-widget-boolean}, we conclude that there cannot exist any
  semi-directed cycle $H$ in $\vec{N}$.
\end{proof}

% ------------------------------------------------------------------------------
\begin{lem}
  \label{lem:orchard-to-SAT}
  If the network \nz{$N = N(F)$} corresponding to a \nz{Boolean formula $F$ in
  $3$-CNF} is an orchard, then $F$ is satisfiable.
\end{lem}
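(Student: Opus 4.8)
The plan is to take an HPO of $N = N(F)$ and read a satisfying assignment of $F$ off of it. By \cref{obs:HGT-consistent}, since $N$ is an orchard it has an orchard orientation and hence an HPO $\vec{N}$; and since $N$ contains the root widget $R$, every widget lemma of \cref{sec:widgets} applies to $\vec{N}$. First I would fix the assignment. For each variable $x_j$, look at the triangle on the vertices $v,w,y$ of the choice widget $V_j$ (notation as in \cref{lem:choice-widget}). The proof of \cref{lem:choice-widget} shows that exactly one of $w,y$ is a local source of $\vec{N}$ in this triangle; set $\phi(x_j) = \false$ if this source is the $x_j$-side vertex $w$, and $\phi(x_j) = \true$ if it is the $\bar x_j$-side vertex $y$. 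With this choice, the same proof yields, for the \emph{false} literal $\lambda$ of $V_j$, that $\a{c_\lambda, d_\lambda}$, $\a{d_\lambda, e_\lambda}$, $\a{e_\lambda, h_\lambda}$ are arcs of $\vec{N}$ and that there is a semi-directed path in $\vec{N}$ from $c_\lambda$ to the output of $M$ feeding $V_j$; and \cref{lem:choice-widget} also guarantees that this output of $M$ is made $\true$ by $\vec{N}$.

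Next I would propagate $\true$ values through the ``feedback'' part of the network. Since every output of $M$ is $\true$, \cref{lem:replicator-widget} applied to $M$ shows its input---and therefore the output of $A_2$---is $\true$, and supplies a semi-directed path from each $\true$ output of $M$ to the output of $A_2$. Then \cref{lem:and-widget} applied to $A_2$ shows every input of $A_2$ is $\true$, hence the output of every clause widget $C_i$ is $\true$, and supplies semi-directed paths from the output of $A_2$ to the output of each $C_i$. (Note that nothing about $A_1$ or $R$ is needed here beyond the mere presence of $R$, which is what makes the widget lemmas applicable.)

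To finish, I would argue by contradiction that $\phi$ satisfies $F$. Suppose a clause $C_i$ is falsified, so $\phi(\lambda_{i,1}) = \phi(\lambda_{i,2}) = \phi(\lambda_{i,3}) = \false$. As the output of $C_i$ is $\true$, \cref{lem:clause-widget} gives an index $j$ together with a semi-directed path from the output of $C_i$ to the output of the replicator widget feeding the $j$th input of $C_i$. Write $\lambda = \lambda_{i,j}$, which is false, and let $V_h$ be the choice widget with $\lambda \in \{x_h, \bar x_h\}$. Applying \cref{lem:replicator-widget} in turn to $M_\lambda'$ (when $k_\lambda \ge 2$) and then to $M_\lambda$---each of their relevant outputs being $\true$---extends this semi-directed path all the way to the vertex $c_\lambda$ of $V_h$. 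Since $\lambda$ is the false literal of $V_h$, I then splice on the semi-directed path from $c_\lambda$ to the output of $M$ feeding $V_h$ produced in the first step, followed by the path from that $\true$ output of $M$ to the output of $A_2$ and the path from the output of $A_2$ back to the output of $C_i$ from the second step. Concatenating these pieces yields a closed semi-directed walk in $\vec{N}$, i.e.\ a semi-directed cycle, contradicting \cref{prop:cycle-free} of an HPO (\cref{lem:HGT-consistent}). Hence no clause is falsified and $\phi$ satisfies $F$.

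The routine but lengthy part will be the bookkeeping: tracking, through each widget and each interposed wire widget, which edge or output is $\true$ and how the semi-directed path supplied by the corresponding widget lemma attaches to the next one so that the pieces genuinely concatenate into a single closed walk. The delicate point is the choice-widget step: the argument needs the internal structure exposed in the \emph{proof} of \cref{lem:choice-widget}---in particular that the semi-directed path leaving a choice widget emanates from the $c$-vertex of its \emph{false} literal, i.e.\ the literal whose edge $\e{d_\lambda, e_\lambda}$ is directed. Defining $\phi$ from the triangle's source vertex rather than directly from which literal edge is left undirected is exactly what makes this work even in the degenerate case where both literal edges of a variable happen to be directed.
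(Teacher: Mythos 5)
Your proof is correct, and it takes a genuinely different route from the paper's.  The paper defines the truth assignment $\phi$ \emph{indirectly}, from the clause widgets: for each clause $C_i$, \cref{lem:clause-widget} hands it a literal $\lambda_{i,j_i}$ (preferring $j_i=3$ when possible), and $\phi$ is defined to make each chosen $\lambda_{i,j_i}$ true, with the burden then falling on showing $\phi$ is well-defined.  The paper's cycle argument is triggered by a \emph{conflict}: if two clauses $C_a$, $C_b$ pick complementary literals, then one obtains semi-directed paths from the outputs of $M$ to both $c_{x_{h_a}}$ and $c_{\bar x_{h_a}}$, and whichever of these $c$-vertices the disjunctive conclusion of \cref{lem:choice-widget} applies to supplies the return path.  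This lets the paper use \cref{lem:choice-widget} purely as a black box.  You instead define $\phi$ \emph{directly} from the choice widgets (which of $w,y$ is the local source of the triangle in $V_j$), and your cycle argument is triggered by a \emph{falsified clause}: if $C_i$ is falsified, then whichever $j$ the clause widget lemma hands you, $\lambda_{i,j}$ is false under $\phi$, so the forward path lands at $c_{\lambda_{i,j}}$, and the return path from that same $c$-vertex comes precisely because you defined $\phi$ so that the false literal is the one whose $c$-vertex the proof of \cref{lem:choice-widget} exhibits as the source of the return path.  The trade-off you correctly flag is real: your argument opens up \cref{lem:choice-widget} and uses the correspondence between the triangle source and the returning $c$-vertex (information the lemma statement deliberately hides behind an ``or''), but in exchange you fuse the paper's two-part argument (well-definedness of $\phi$, then satisfaction) into a single contradiction, and your $\phi$ is well-defined by inspection.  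Both proofs compose the identical chain of path-producing lemmas ($M \to A_2 \to C_i \to$ replicators $\to V_h \to M$) to obtain the semi-directed cycle contradicting \cref{prop:cycle-free}.
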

% ------------------------------------------------------------------------------

\begin{proof}
  Let $\vec{N}$ be an HPO of $N$.  Since $N$ contains a root widget,
  \cref{lem:choice-widget} shows that every output of $M$ is {\true}.  By
  \cref{lem:replicator-widget}, this implies that the output of $A_2$ is
  {\true}.  By \cref{lem:and-widget}, this implies that the output of every
  clause widget $C_i$ is {\true}.

  By \cref{lem:clause-widget}, this implies that every clause $C_i$ contains a
  literal $\lambda_{i,j_i}$ such that the output of $M_{\lambda_{i,j_i}}$ or
  $M_{\lambda_{i,j_i}}'$ joined to the $j_i$th input of $C_i$ is {\true}. If
  $\lambda_{i,3}$ is such a literal, we choose $j_i = 3$; otherwise, we choose
  $j_i$ arbitrarily.  Let $h_i$ be the variable such that $\lambda_{i,j_i} \in
  \{x_{h_i}, \bar x_{h_i}\}$.

  If there are no two indices $1 \le a < b \le m$ such that $\lambda_{a,j_a} =
  \bar \lambda_{b,j_b}$, then we can use the literals $\lambda_{1,j_1},
  \ldots, \lambda_{m,j_m}$ to define a truth assignment $\phi$ that satisfies
  $F$.  In particular, for all $1 \le i \le m$, we set $\phi(x_{h_i}) = \true$
  if $\lambda_{i,j_i} = x_{h_i}$, and $\phi(x_{h_i}) = \false$ if
  $\lambda_{i,j_i} = \bar x_{h_i}$.  For any $h$ such that $h \notin \{h_1,
  \ldots, h_m\}$, we set $\phi(x_h) = \false$.

  This assignment $\phi$ is well-defined because if there are two indices $1 \le
  a < b \le m$ such that $h_a = h_b$, we must have $\lambda_{a,j_a} =
  \lambda_{b,j_b}$ because $\lambda_{a,j_a}, \lambda_{b,j_b} \in \{x_{h_a}, \bar
  x_{h_a}\}$ but $\lambda_{a,j_a} \ne \bar \lambda_{b,j_b}$.

  The assignment $\phi$ satisfies $F$ because its definition explicitly ensures
  that for each clause $C_i$, the literal $\lambda_{i,j_i} \in C_i$ satisfies
  $\phi(\lambda_{i,j_i}) = \true$.

  To finish the proof, we show that if there exist two indices $1 \le a < b \le
  m$ such that $\lambda_{a,j_a} = \bar \lambda_{b,j_b}$, then $\vec{N}$ contains
  a semi-directed cycle, contradicting \cref{prop:cycle-free}.  Thus, this case
  cannot arise.

  Since the outputs of $M$ are all {\true}, there exists a semi-directed path
  from each output vertex of $M$ to every output vertex of $A_2$, by
  \cref{lem:replicator-widget}.

  Since the output of $A_2$ is {\true}, there exists a semi-directed path from
  each output vertex of $A_2$ to each output vertex of each clause widget $C_i$,
  by \cref{lem:and-widget}.

  Since the output of each clause widget $C_i$ is true, and by the choice of
  $\lambda_{i,j_i}$, there exist semi-directed paths from the output vertices of
  $C_i$ to the output vertices of $M_{\lambda_{i,j_i}}$ or
  $M_{\lambda_{i,j_i}}'$ joined to the $j_i$th input of $C_i$, by
  \cref{lem:clause-widget}.

  If the $j_i$th input of $C_i$ is joined to an output of
  $M_{\lambda_{i,j_i}}'$, then this output of $M_{\lambda_{i,j_i}}'$ is {\true}.
  Thus, there exist semi-directed paths from these two output vertices to the
  two output vertices of $M_{\lambda_{i,j_i}}$, and the output of
  $M_{\lambda_{i,j_i}}$ joined to the input of $M_{\lambda_{i,j_i}}'$ is
  {\true}, by \cref{lem:replicator-widget}.

  Finally, since the output of $M_{\lambda_{i,j_i}}$ joined to $C_i$ or
  $M_{\lambda_{i,j_i}}'$ is {\true}, there exist semi-directed paths from these
  two output vertices of $M_{\lambda_{i,j_i}}$ to the two output vertices of
  $V_{h_i}$ that correspond to $\lambda_{i,j_i}$, by
  \cref{lem:replicator-widget} again.

  By composing the semi-directed paths we have identified so far, we obtain a
  semi-directed path from each output vertex of $M$ to each output vertex of
  $V_{h_i}$ corresponding to $\lambda_{i,j_i}$, for each $1 \le i \le m$.  In
  particular, since $\lambda_{a,j_a} = \bar \lambda_{b,j_b}$, we have
  semi-directed paths from each output vertex of $M$ to both $c_{x_{h_a}}$ and
  $c_{\bar x_{h_a}}$.

  By \cref{lem:choice-widget}, we have semi-directed paths from $c_{x_{h_a}}$
  or $c_{\bar x_{h_a}}$ to the two output vertices of $M$ joined to the input of
  $V_{h_a}$.  Together with the semi-directed paths from these two output
  vertices of $M$ to $c_{x_{h_a}}$ and $c_{\bar x_{h_a}}$, we obtain a
  semi-directed cycle in $\vec{N}$, which is the desired contradiction.
\end{proof}

% ------------------------------------------------------------------------------
\begin{thm}
  \label{thm:main}
  \textsc{Orchard Recognition} and \textsc{Orchard Orientation} are NP-hard.
\end{thm}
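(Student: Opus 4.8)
The plan is to combine \cref{lem:SAT-to-orchard,lem:orchard-to-SAT} with \cref{obs:HGT-consistent} and \cref{thm:orchard-orientation} into a polynomial-time many-one reduction from \textsc{3-SAT}. Given a Boolean formula $F$ in 3-CNF with $n$ variables and $m$ clauses, the reduction outputs the network $N = N(F)$ constructed in \cref{sec:reduction}, and the statement to be verified is that $F$ is satisfiable if and only if $N$ has an orchard orientation (equivalently, if and only if $N$ is an orchard).

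First I would confirm that $N(F)$ is a legitimate instance of both problems, namely a connected unrooted binary phylogenetic network, and that it is computable in polynomial time. Connectivity holds because $N(F)$ is obtained by gluing connected widgets along shared edges via wire widgets; simplicity and the degree condition (every non-leaf vertex has degree $3$) hold by inspection of the widget definitions in \cref{sec:widgets}; and the size is polynomial because there are $O(n+m)$ widgets, of which the only ones of non-constant size are the $2n$-input and-widget $A_1$, the $m$-input and-widget $A_2$, the $n$-output replicator $M$, and the replicators $M_\lambda'$ whose output counts satisfy $\sum_\lambda k_\lambda = 3m$ in total, so $|N(F)|$ is polynomial in $n+m$ and can be written down in polynomial time.

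Then I would assemble the chain of equivalences. If $F$ is satisfiable, \cref{lem:SAT-to-orchard} yields an HPO of $N(F)$, hence an orchard orientation of $N(F)$ by \cref{obs:HGT-consistent}, hence $N(F)$ is an orchard by \cref{thm:orchard-orientation}. Conversely, if $N(F)$ is an orchard, then $F$ is satisfiable by \cref{lem:orchard-to-SAT}. Therefore $F$ is satisfiable $\iff$ $N(F)$ is an orchard $\iff$ $N(F)$ has an orchard orientation, where the last equivalence is \cref{thm:orchard-orientation}. Since \textsc{3-SAT} is NP-hard and $F \mapsto N(F)$ is polynomial-time computable, it follows that both \textsc{Orchard Recognition} and \textsc{Orchard Orientation} are NP-hard.

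There is essentially no genuine mathematical difficulty left at this stage: the substantive work was done in establishing the widget properties of \cref{sec:widgets}, in \cref{lem:SAT-to-orchard,lem:orchard-to-SAT}, and in the preliminary characterizations (\cref{thm:orchard-orientation,lem:HGT-consistent,obs:HGT-consistent}). The only point requiring a little care in the final write-up is the routine bookkeeping that $N(F)$ is a connected, simple, binary phylogenetic network of polynomial size, which is immediate from the explicit constructions of the widgets and the description of how they are wired together.
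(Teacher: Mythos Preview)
Your proposal is correct and follows essentially the same approach as the paper: combine \cref{lem:SAT-to-orchard,lem:orchard-to-SAT} with \cref{obs:HGT-consistent} and \cref{thm:orchard-orientation} to obtain a polynomial-time reduction from \textsc{3-SAT}, noting that the construction of $N(F)$ is polynomial-size and polynomial-time. The paper's proof is terser about the polynomial-size bookkeeping and also cites \cref{thm:drawing} alongside \cref{obs:HGT-consistent}, but the logical structure is the same.
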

% ------------------------------------------------------------------------------

\begin{proof}
  By \cref{lem:SAT-to-orchard,lem:orchard-to-SAT}, the network $N(F)$ \nz{has an
  HPO} if and only if $F$ is satisfiable.  \nz{By
  \cref{thm:drawing,obs:HGT-consistent}, this shows that $N(F)$ has an orchard
  orientation if and only if $F$ is satisfiable.  Thus,} since the construction of
  $N(F)$ from $F$ is easily seen to take polynomial time, it constitutes a
  polynomial-time reduction from \textsc{3-SAT} to \nz{\textsc{Orchard Orientation}}.
  Since \textsc{3-SAT} is NP-hard, so is \nz{\textsc{Orchard Orientation}}.

  \nz{By \cref{thm:orchard-orientation}, $N(F)$ is an orchard if and only if it
  has an orchard orientation.  Thus, since \textsc{Orchard Orientation} is
  NP-hard, so is \textsc{Orchard Recognition}.}
\end{proof}

% ------------------------------------------------------------------------------
\section{Discussion}

\yuki{We have shown that the membership problem \textsc{Orchard Recognition} and
the orientation problem \textsc{Orchard Orientation} are equivalent and both
NP-hard, via reduction from \textsc{3-SAT}.  \textsc{Orchard Orientation}
belongs to a broader family of problems, called
\textsc{$\mathcal{C}$-Orientation}, where~$\mathcal{C}$ is some class of
phylogenetic networks.  Given an undirected graph, it asks whether this graph
can be oriented to produce a rooted network that belongs to $\C$.  As mentioned
in \Cref{sec:intro}, \textsc{$\C$-orientation} is NP-hard when $\mathcal{C}$ is
chosen to be the class of tree-based networks~\cite{huber2024orienting}, and our
paper shows that the same is true when $\C$ is the class of orchards.

Another important class of networks is the class of tree-child networks, which
are networks in which every non-leaf vertex has a child that is a tree vertex or
a leaf.  \textsc{Tree-Child Orientation} is known to be NP-hard when
\leo{the input unrooted network is allowed degree-$d$ nodes with~$d=1,3,5$}~\cite[Theorem 3.8]{docker2024existence}
% and}
%restricting the maximum degree to 5 and when
% restricting the root node at a specified position
(the original idea for the reduction allowed for degree-$d$ nodes with $1\le d \le 5$ and restricting the root node at a specified position~\cite[Corollary
5]{bulteau2023turning}. This reduction was shown to be incorrect~\cite{docker2024existence}).  
% \todo{LEO: I found the result. They do allow degree-2
% nodes though. Reformulated slightly. By the way, the version with designated
% root can easily be reduced to the version without designated root, I think,
% using our sheep to enforce the root location (I think it's tree-child
% orientable). YM: Thanks for reformulating this! I think you're right with the reduction; now we just need a way to deal with the degree.} 
Certain graph families are known not to admit tree-child
orientations, and necessary conditions for tree-child orientability have been
studied~\cite{maeda2023orienting}.  FPT algorithms based on the level of a
network~\cite{huber2024orienting}, the number of reticulations, and the maximal
size of so-called ``minimal basic cycles''~\cite{urata2024orientability} have
been proposed.  The computational complexity of \textsc{Tree-Child Orientation}
\leo{for binary networks is still} open.
% in general.
% On a related note, the question of picking $\cal{C}$ to be stack-free (no reticulations are adjacent) would also be an interesting question to consider.
Perhaps a modification of our reduction to \textsc{Orchard Orientation} could
provide new insights here.}

\nz{It is also interesting to ask what makes unrooted orchards harder to
recognize than rooted ones.  On the surface, the definitions of rooted and
unrooted orchards are virtually identical:  Cherries are defined analogously for
rooted and unrooted networks.  Ignoring the edge directions in a rooted network
turns every reticulated cherry into a 2-chain.  Reducing cherries and
reticulated cherries or 2-chains applies essentially the same changes to the
network whether it is rooted or unrooted.  Thus, it is rather surprising that
unrooted \textsc{Orchard Recognition} is NP-hard while the problem can be solved
in polynomial time for rooted
networks~\cite{erdHos2019class,janssen2021cherry}.
% \todo{NZ: Comment/question in
% source. YM: This is indeed the message I was going for. It is directed to people who have worked with orchards before. I think they will appreciate this key difference being laid out, though it may be obvious when one works it out. So I would like to keep this bit here.}

The primary reason for this difference in difficulty seems to be that for every
reducible pair $\p{x, y}$ of a rooted network $N$, any reducible pair $\p{u, v}
\ne \p{x, y}$ of $N$ is also a reducible pair of $\red{N}{\p{x, y}}$ (reducing
reducible pairs does not destroy other reducible pairs).  Thus, we can
effectively reduce reducible pairs in any order and still succeed in reducing
$N$ if $N$ is indeed an orchard.  This is at the heart of the polynomial-time
recognition of rooted orchards.

Unrooted networks do not enjoy the same ``stability'' of reducible pairs.
Consider a network $N$ obtained from a cycle $C = \s{v_0, v_1, \ldots, v_n =
v_0}$ by attaching a leaf $u_i$ to each vertex $v_i$, for all $i \in [n]$.  Then
$\p{u_{i-1}, u_i}$ is a 2-chain of~$N$, for all $i \in [n]$, but only
$\p{u_{i-2}, u_{i-1}}$ and $\p{u_i, u_{i+1}}$ are reducible pairs of
$\red{N}{\p{u_{i-1}, u_i}}$.  This suggests that the choice as to which reducible pair
to reduce can have non-local effects on the remainder of the reduction that are
difficult to predict at the time when this choice is made.  The example in
\cref{fig:real-sheep}---the original ``sheep'' that gave this paper its name
and which contains the root widget used in this paper as a subgraph---may
illustrate this issue more clearly.}

% ------------------------------------------------------------------------------
\begin{figure}[t]
  \centering
  \begin{tikzpicture}
    \pic (sheep) {real sheep};
    \path
    (sheep-rear-leg-3) node [anchor=north,xshift=-3pt,yshift=-1pt] {$a$}
    (sheep-rear-leg-4) node [anchor=north,xshift=-3pt]             {$b$}
    (sheep-in-3)       node [anchor=north,xshift=1pt,yshift=-1pt]  {$c$}
    (sheep-in-4)       node [anchor=north,xshift=1pt]              {$d$}
    (sheep-p-7)        node [anchor=east]                          {$e$}
    (sheep-p-8)        node [anchor=north,xshift=2pt]              {$f$}
    (sheep-beard-2)    node [anchor=north,yshift=-1pt]             {$g$};
  \end{tikzpicture}
  \caption{The actual sheep $S$.  This network has two reducible pairs $\p{a,
  b}$ and $\p{c, d}$.  Reducing $\p{a, b}$ renders $\p{c, d}$ irreducible and
  does not create any new reducible pairs.  Nevertheless, $S$ is an orchard, as
  the sequence $\s{\p{c, d}, \p{d, g}, \p{d, g}, \p{d, g}, \p{g, a}, \p{a, b},
  \p{b, c}, \p{c, e}, \p{c, e}, \p{c, e}, \p{e, f}}$ reduces it.}
  \label{fig:real-sheep}
\end{figure}
% ------------------------------------------------------------------------------

\nz{Given the NP-hardness of \textsc{Orchard Orientation}, an important question
to ask is whether it is fixed-parameter tractable.} \textsc{Orchard Orientation}
is known to be FPT with respect to the level of the network. Is it also FPT with
respect to other popular parameters such as treewidth?  \nz{Are there classes of
graphs for which \textsc{Orchard Orientation} can be solved in polynomial time?
A related question is whether there exists a characterization of
orchard-orientable graphs based on forbidden minors.  Notably, even for rooted
orchards, no characterization based on forbidden structures is known to date, so
any findings in this area would be beneficial.}

%\todo{LEO: other obvious open problems: is the problem FPT in treewidth? Are there restrictions on the input network that make the problem polynomial-time solvable? Is there a characterization in terms of forbidden minors? YM: Nice suggestions -- added!}

\yuki{Finally, we can consider relaxing one of the key assumptions for defining
unrooted orchards.  We defined 2-chains as a pair of leaves whose neighbours are
connected by an edge that is not a cut edge.  Suppose we remove this cut edge
condition. Then reducing a 2-chain can potentially disconnect the network into
two connected components.  We can then pose the following optimization problem.}

\yuki{%
\begin{problem}{Minimum Component Orchard Sequence (MCOS)}
  Find a sequence~$S$ of reductions for a given network~$N$, that minimizes the
  number of connected components in~$\red{N}{S}$ such that each connected component contains a
  single vertex.
\end{problem}

The proof of \cref{thm:orchard-orientation} shows that $S$ reduces an
unrooted orchard $N_u$ if and only if there exists an orchard orientation $N_d$
of $N_u$.  Thus, it follows immediately that an unrooted network~$N$ has a
MCOS~$S$ for which $\red{N}{S}$ has only one connected component if and only if~$N$ is an
orchard.}  In general, an optimal solution to \textsc{MCOS} can give an
indication of how far an unrooted network is from being an orchard.
The number of connected components represents the number of roots we would need to introduce in a generalized orchard orientation that allows for multiple roots.

\end{document}